\documentclass[11pt]{article} 
\usepackage{amsmath,amssymb,amsthm,mathrsfs}
\usepackage[margin=1.4in]{geometry}
\usepackage{enumerate}

\usepackage{graphics,geometry,epsfig}


\theoremstyle{plain}
\newtheorem{theorem}{Theorem}
\newtheorem{proposition}[theorem]{Proposition}
\newtheorem{lemma}[theorem]{Lemma}
\newtheorem{corollary}[theorem]{Corollary}

\theoremstyle{definition}
\newtheorem{definition}[theorem]{Definition}
\newtheorem{remark}[theorem]{Remark}

\numberwithin{theorem}{section} \numberwithin{equation}{section}

\newcommand{\nc}{\newcommand}

\nc{\be}{\begin{equation}}
\nc{\la}{\label}
\nc{\ba}{\begin{array}}
\nc{\ea}{\end{array}}
\nc{\bs}{\begin{split}}
\nc{\es}{\end{split}}

\newcommand{\TIMC}[1]{\textbf{#1}}

\newcommand{\R}{\mathbb{R}}
\newcommand{\C}{\mathbb{C}}
\newcommand{\bH}{\mathbb{H}}
\newcommand{\Z}{\mathbb{Z}}

\newcommand{\cC}{\mathcal{C}}

\newcommand{\cE}{\mathcal{E}}
\newcommand{\cF}{\mathcal{F}}

\newcommand{\cH}{\mathcal{H}}

\newcommand{\cK}{\mathcal{K}}
\newcommand{\cL}{\mathcal{L}}         
\newcommand{\cM}{\mathcal{M}}         
\newcommand{\cN}{\mathcal{N}}         

\newcommand{\cR}{\mathcal{R}}
\newcommand{\cS}{\mathcal{S}}
\newcommand{\cT}{\mathcal{T}}

\newcommand{\cV}{\mathcal{V}}

\newcommand{\E}{\mathcal{E}}


\nc{\G}{\Gamma}
\nc{\g}{\gamma}
\nc{\al}{\alpha}
\nc{\del}{\delta}
\nc{\Del}{\Delta}
\nc{\om}{\omega}
\nc{\Lam}{\Lambda}
\nc{\taul}{\lambda}
\nc{\lam}{\lambda}
\nc{\ka}{\kappa}

\nc{\w}{\tau}

\nc{\Om}{\Omega}
\nc{\ta}{\tau}
\nc{\io}{\iota}
\nc{\h}{\theta}
\nc{\z}{\zeta}
\nc{\s}{\sigma}
\nc{\Si}{\Sigma}
\nc{\vphi}{\varphi}

\renewcommand{\k}{\chi}

\renewcommand{\a}{\alpha}
\renewcommand{\b}{\beta}

\renewcommand{\d}{\delta}

\newcommand{\e}{\epsilon}
\newcommand{\eps}{\epsilon}

\nc{\si}{\sigma}

\nc{\Omt}{\Omega_\tau}
\newcommand{\Oms}{\Om^*_\tau}

\newcommand{\LAT}{\mathcal{L}} 
\newcommand{\LATt}{\mathcal{L}_\tau} 
\newcommand{\LATo}{\mathcal{L}_\omega}


\newcommand{\psio}{\psi_{\tau}}
\newcommand{\psit}{\psi_{\tau}}
\newcommand{\phik}{\phi_k}

\newcommand{\at}{a_{\tau}}
\newcommand{\lamt}{\lambda_{\tau}}

\newcommand{\rhoc}{\cR}

\newcommand{\ut}{u_{\tau}}

\newcommand{\Lt}{L_\tau}

\newcommand{\Lo}{L_\tau}

\newcommand{\Lk}{L_k}

\newcommand{\Lr}{L_{\tau}}

\newcommand{\Kt}{K_\tau}

\newcommand{\Kk}{K_k}

\newcommand{\cKk}{\cK_k}

\newcommand{\No}{N_\tau}

\newcommand{\Nt}{N_\tau}

\newcommand{\Pk}{P_k}

\newcommand{\hr}{h_{\tau}}

\newcommand{\hta}{h_{\tau}}

\newcommand{\Ggl}{G^{\rm glob}}

\renewcommand{\ta}{\tilde\alpha}

\newcommand{\ak}{a_k}
\newcommand{\bk}{b_k}
\newcommand{\ck}{c_k}

\newcommand{\vpp}{h}

\newcommand{\U}{\mathcal U}
\newcommand{\hrho}{\mathcal R}

\newcommand{\sH}{\mathscr{H}}

\newcommand{\Hil}{\mathcal H} 
\newcommand{\Hilk}{\mathcal H_k}
\newcommand{\cHk}{\mathcal H_k}


\newcommand{\fK}{\mathcal H^\oplus}

\newcommand{\hw}{\underbar v}

\newcommand{\uvk}{\underbar v_k}

\newcommand{\rf}{t^{\rm refl}}

\nc{\oP}{\overline P}

\nc{\bP}{\bar{P}}
\nc{\bQ}{\bar{Q}}
\nc{\bL}{\bar{L}}

\nc{\zb}{\underbar{z}}
\nc{\pb}{\underbar{p}}
\nc{\bb}{\underbar{b}}

\newcommand{\refl}{t^{\rm refl} }

\nc{\ran}{\rangle}
\nc{\lan}{\langle}
\nc{\rano}{\rangle_\Om}

\renewcommand{\Re}{\operatorname{Re}}
\renewcommand{\Im}{\operatorname{Im}}
\newcommand{\im}{\operatorname{Im}}
\newcommand{\re}{\operatorname{Re}}

\newcommand{\one}{\mathbf{1}}
\newcommand{\id}{{\bfone}}
\nc{\bfone}{{\bf 1}}


\newcommand{\p}{\partial}
\newcommand{\n}{{\nabla}}

\newcommand{\pa}{\partial_{a}} 
\newcommand{\pat}{{\partial_{a_{\tau}}}}

\newcommand{\nat}{{\nabla_{a_{\tau}}}}
\newcommand{\nao}{{\nabla_{a^0}}}

\newcommand{\db}{{\Delta_{a^0}}}

\newcommand{\Curl}{\operatorname{curl}}
\newcommand{\curl}{\operatorname{curl}}
\newcommand{\CURL}{\operatorname{curl}}
\newcommand{\divv}{\operatorname{div}}

\newcommand{\DIV}{\operatorname{div}}

\nc{\dA}{\nabla_A}

\newcommand{\COVGRAD}[1]{\nabla_{\!\!#1}}

\newcommand{\COVLAP}[1]{\Delta_{\!#1}}

\newcommand{\ra}{\rightarrow}


\newcommand{\NULL}{\operatorname{Null}}
\newcommand{\RANGE}{\operatorname{Ran}}
\newcommand{\Null}{\operatorname{Null}}
\newcommand{\Ran}{\operatorname{Ran}}

\newcommand{\diag}{\operatorname{diag}}
\newcommand{\dist}{\operatorname{dist}}

\newcommand{\Span}{\operatorname{Span}}

\newcommand{\ls}{\lesssim}
\newcommand{\gs}{\gtrsim}

\newcommand{\ThreeByOne}[3]{\left( \begin{array}{c} #1 \\ #2 \\ #3 \end{array} \right)}


\newcommand{\LA}[2]{\vec{\mathscr{L}}_{}^{}(\tau)}
\newcommand{\HA}[2]{\vec{\mathscr{H}}_{}^{}(\tau)}


\newcommand{\DETAILS}[1]{}

\numberwithin{equation}{section}

\pagestyle{myheadings}                         
\markboth{\hfill{ALrev: stability under finite energy perturbations}}{{AL: stability under finite energy perturbations, August 10, 2016}  
\hfill}  %


\begin{document}

\title{On stability of Abrikosov vortex lattices} 
\date{August 10, 2016}   

\author{Israel Michael Sigal
\thanks{ Dept.~of Math.,
Univ. of Toronto, Toronto, Canada; Supported by NSERC Grant No. NA7901.}\ 
\qquad \qquad
Tim Tzaneteas
\thanks{Math. Institut, Univ. T\"ubingen, T\"ubingen, Germany.} \\
}

\maketitle


\begin{abstract}
\DETAILS{  We consider Abrikosov-type vortex lattice solutions of the Ginzburg-Landau equations of superconductivity. 
We study stability of such solutions within the context of the time-dependent Ginzburg-Landau equations - the Gorkov-Eliashberg-Schmid equations. We consider  superconductors filling the entire space.  For  magnetic fields close to the second critical magnetic field  and for arbitrary lattice shapes, 

 The key groups solution of the Ginzburg-Landau equations of superconductivity  are  vortices and  Abrikosov vortex lattices. Existence theory for these solutions, as well as the stability theory - within the context of the time-dependent Ginzburg-Landau equations - the Gorkov-Eliashberg-Schmid equations and the standard cylindrical geometry - of vortices  is well developed. We study stability of Abrikosov vortex lattices. } 
 
  The Ginzburg-Landau equations 
play a key role in superconductivity and particle physics. 
They  inspired many imitations in other areas of physics.  
These equations  have two remarkable classes of solutions -   vortices and  (Abrikosov) vortex lattices. For the standard cylindrical geometry, the existence theory for these solutions, as well as the stability theory of vortices  are well developed. The latter is done within the context of the time-dependent Ginzburg-Landau equations - the Gorkov-Eliashberg-Schmid equations of superconductivity - and the abelian Higgs model of particle physics. 

We study stability of Abrikosov vortex lattices under finite energy perturbations satisfying a natural parity condition (both defined precisely in the text) for the dynamics given by the Gorkov-Eliashberg-Schmid equations.
  For  magnetic fields close to the second critical magnetic field  and for arbitrary lattice shapes, 
 we  prove that there exist two functions on the space of lattices, 
 such that  Abrikosov vortex lattice solutions are asymptotically stable, provided  
 the superconductor is of Type II and these functions are positive, and unstable, for  superconductors of Type I, or if one of these functions is negative. 
 \end{abstract}



\section{Introduction}\label{sec:introduction}

\subsection{Problem and results}\label{subsec:backgr} 
The macroscopic theory of superconductivity is a crown achievement of condensed matter physics, 
 presented in any book on superconductivity and solid state or condensed matter physics.  
It was developed 
along the lines of Landau's theory of the second order phase transitions before the microscopic theory was discovered.  At the foundation of this theory lie  the celebrated Ginzburg-Landau equations,
\begin{equation} \label{gle} \begin{cases}
    -\COVLAP{A}\Psi = \kappa^2(1 - |\Psi|^2)\Psi, \\
    \CURL^*\CURL A = \im(\bar{\Psi}\COVGRAD{A}\Psi),
\end{cases} \end{equation}
 which describe superconductors in thermodynamic equilibrium.  
Here $\Psi$ is a complex-valued function, called the order parameter, $A$ is a vector field (the magnetic potential), $\kappa$ is a positive constant,  called the  Ginzburg-Landau parameter,  $\COVGRAD{A} = \nabla - iA$ and $\COVLAP{A} = \COVGRAD{A}\cdot \COVGRAD{A}$ are the covariant gradient and Laplacian.
Physically, $|\Psi|^2$ gives the (local) density of superconducting electrons (Cooper pairs), $B=\CURL A$ is the magnetic field. The second equation is Amp\`ere's law with $J_S = \im(\bar{\Psi}\COVGRAD{A}\Psi)$ being the supercurrent associated to the electrons having formed  Cooper pairs.

We assume, as common, that superconductors fill in all of $\R^2$ (the cylindrical geometry in $\R^3$). In this case,  $\CURL A := \partial_{x_1}A_2 - \partial_{x_2}A_1$ and $\CURL^* f = (\partial_{x_2}f, -\partial_{x_1}f)$. 

By far, the most important and celebrated solutions of  the Ginzburg-Landau equations 
are magnetic vortex lattice solutions, discovered by Abrikosov (\cite{Abr}), and  known as (Abrikosov) vortex lattice solutions or simply {\it Abrikosov or vortex lattices}. Among other things, understanding these solutions is important for maintaining the superconducting current in Type II superconductors, i.e., for $\kappa> \frac{1}{\sqrt{2}}$.

Abrikosov lattices have been extensively studied in  the  physics literature. Among many rigorous results, we mention that the existence of these solutions was proven rigorously in  \cite{Odeh, BGT, Dut2, Al,  TS, ST2}. 
Moreover,  important and fairly detailed results on asymptotic behaviour of solutions, for $\kappa \to\infty$ and applied magnetic fields, $h$, satisfying $h\le \frac{1}{2}\ln \kappa$+const (the London limit), were obtained in \cite{as} (see this paper and the book \cite{ss} for references to earlier work). Further extensions to the Ginzburg-Landau equations for anisotropic and high temperature superconductors in the $\kappa \to\infty$ regime can be found in \cite{abs1, abs2}. (See \cite{GST, S} for reviews.) 

In this paper we are interested in dynamics of  of the Abrikosov lattices, as described by 
 the time-dependent generalization of  the Ginzburg-Landau equations proposed by Schmid (\cite{Sch}) and Gorkov and Eliashberg (\cite{GE})  (earlier versions are due to Bardeen and Stephen and Anderson, Luttinger and Werthamer, see \cite{Cyrot, Brandt} for reviews). These equations are of the form
\begin{equation}\label{GES}
\begin{cases}
    \chi \partial_{t,\Phi} \Psi = \COVLAP{A}\Psi + \kappa^2(1 - |\Psi|^2)\Psi, \\
 \sigma \partial_{t,\Phi} A   = -\CURL^*\CURL A  + \im(\bar{\Psi}\COVGRAD{A}\Psi).
\end{cases}
\end{equation}
Here $\Phi$ is the scalar (electric) potential, $\chi$, a complex number, and $\sigma$, a two-tensor, and $\partial_{t\Phi}$ is the covariant time derivative $\partial_{t,\Phi}(\Psi, A)  = ((\partial_t + i\Phi)\Psi, \partial_t A + \nabla\Phi)$.
The second equation is Amp\`ere's law,  $\CURL  B=J$, with $J=J_N +J_S,$ where $  J_N= -\sigma (\partial_t A + \nabla\Phi)$ (using Ohm's law) is the normal current associated to the electrons not having formed Cooper pairs, and $J_S = \Im(\bar{\Psi}\COVGRAD{A}\Psi)$, the supercurrent.  
\DETAILS{We use 
 the gauge transformation
$T_\eta$ (see \eqref{gauge-sym}), with  $\eta(x,t) = \int_0^t \Phi(x, s) ds, $ 
to achieve the gauge
\begin{equation}\label{Phi0}  \Phi (x, t)=0, \end{equation} 
which we assume from now on.

$$*********$$}
 \DETAILS{We use a gauge transformation (see \eqref{gauge-sym}), 
to achieve 
\begin{equation}\label{divA0}  \divv A (x, t)=0, \end{equation} 
which we assume from now on.}

Eqs \eqref{GES}, which we call the  Gorkov-Eliashberg-Schmid equations (also known as the Gorkov-Eliashberg or  the time-dependent Ginzburg-Landau equations), 
 have a much narrower range of applicability than  the Ginzburg-Landau equations (\cite{Tink}) and many refinements have been proposed. However, though improvements of these equations are, at least notationally, rather cumbersome, 
 they do not alter the mathematics involved 
 in an essential way.

The Abrikosov lattices are solutions, $(\Psi, A)$, to \eqref{gle}, i.e. static solutions to \eqref{GES}, whose physical characteristics, 
 $|\Psi|^2$, 
  $\CURL A$,  and 
  $J_S = \im(\bar{\Psi}\COVGRAD{A}\Psi)$ are double-periodic w.r. to a lattice $\LAT\subset \R^2$. 
   They are static solutions to \eqref{GES} and 
their stability w.r. to the dynamics induced by these equations is an important issue. 

In \cite{ST2},  we considered the stability of the Abrikosov lattices under the simplest perturbations, namely those preserving their periodicity 
 (we call such perturbations \emph{gauge-periodic}). For a lattice $\LAT$ 
of arbitrary shape,  
  and with the area, $|\Omega|$, of the fundamental domain, $\Om$ (or the torus $\R^2/\LAT$), close to $\frac{2\pi}{\kappa^2}$, 
we proved 
that, under gauge-periodic perturbations,
\begin{itemize}
\item[(i)] \emph{ Abrikosov vortex lattice solutions are asymptotically stable for}  $\kappa > \kappa_c(\LAT)$;
 \item[(ii)]   \emph{Abrikosov vortex lattice solutions are unstable for} $\kappa < \kappa_c(\LAT)$.
\end{itemize}
Here $\kappa_c(\LAT)$ is the function lattices given by $\kappa_c(\LAT) := \sqrt{\frac{1}{2}\left(1-\frac{1}{\beta (\LAT)}\right)}$, where $\beta(\LAT)$  is  the Abrikosov 'constant', defined in Remark 2)  below. 

Due to the magnetic flux quantization property - see Subsection \ref{sec:abr-lat},  the condition that $|\Om |$ is close to $\frac{2\pi}{\kappa^2}$ means that the average magnetic field (or magnetic flux),  per lattice cell, 
 $ b = \frac{1}{|\Omega|} \int_\Omega  \CURL A ,$ is close to the second critical magnetic field $h_{c2}=\kappa^2$.  
For the definitions of 
  various stability notions, see Subsection  \ref{sec:pert}. 

 This result 
 shows remarkable stability of  the Abrikosov vortex lattices and  it seems this is the first time the threshold 
$\kappa_c(\LAT)$  has been isolated.

Gauge-periodic perturbations are not a common type of perturbations occurring in superconductivity.
In this paper we address the problem of the stability of Abrikosov lattices under local, or  finite-energy, perturbations (defined in  Subsections \ref{sec:pert} below) satisfying a natural parity condition (see \eqref{parity0} below). 

 We consider  lattices, $\LAT$, of arbitrary shape and with the standard topology (see below) and denote by $u_\LAT=(\Psi_\LAT, A_\LAT)$ the Abrikosov lattice solution for a lattice $\LAT$.   
We denote by  $\LAT^*$  the lattice reciprocal to $\LAT$. 
  It consists of all vectors $s^* \in \R^2$ such that $s^* \cdot s \in 2\pi\Z$,  for all $s \in \LAT$. 
Assuming that  a lattice 
 $\LAT$ 
 has the area of fundamental cell 
 close to $\frac{2\pi}{\kappa^2}$, 
 we have
          \begin{itemize}
\item  There exist two families $\gamma_{k}(\LAT), k\in \R^2/\LAT^*,$ and $\mu (\LAT, \kappa)$ of real, smooth 
 functions on lattices $\LAT$,   s.t.  under  finite-energy perturbations, satisfying the parity condition, \eqref{parity0} below, the Abrikosov lattice solution $u_\LAT$   is 

\smallskip

\emph{asymptotically stable} for $\LAT$ and $\kappa$ satisfying   $\kappa> \frac{1}{\sqrt{2}}$, 
$\gamma_{k}(\LAT)>0,\ \forall k \ne 0,$ and $\mu (\LAT, \kappa)>0$; 
 and 
 
\smallskip

 \emph{energetically unstable} if either $\kappa < \frac{1}{\sqrt{2}}$, or $\inf_k\gamma_{k}(\LAT) <0$, or $\mu (\LAT, \kappa) < 0$. 
 \item  The functions  $\gamma_{k}(\LAT)$ and $\eta (\LAT, \kappa)$ 
 satisfy     $\gamma_{k}(\lam\LAT)=\gamma_{\lam^{-1} k}(\LAT),\ \forall\lam\ne 0,$ $\gamma_{k=0}(\LAT)= 0$,  and $\eta (\lam\LAT, \kappa)=\eta (\LAT, \kappa)$. 
 \end{itemize}
We have a decent understanding of the function $\gamma_{k}(\LAT)$, which is defined and discussed below, and only a partial understanding of the function $\eta (\LAT, \kappa)$. 
 By expressing $\g_k (\tau)$ as fast convergent series (see \eqref{gamk-series} below) and using numerical computations, we show that  (see \eqref{gamk-comp} for a precise statement) 
    
 $\gamma_{k}(\LAT)> 0\ \forall k \ne 0,$ for  $\LAT$ is hexagonal, 
  and $\inf_{k}\gamma_{k}(\LAT)<0$, if $\LAT$ is not hexagonal.

\smallskip
We can give an explicit form of  $ \eta (\LAT, \kappa)$ (see Remark \ref{rem:eta} below), but the derivation of the series representation for  $\eta (\LAT, \kappa)$ is substantially more complicated and is done in separate work \cite{OS}. Using these series and using numerical computations for $\LAT=\LAT_{\rm hex}$ hexagonal, it is shown in \cite{OS} that

$\eta (\LAT_{\rm hex}, \kappa)> 0$ for  $\kappa> \frac{1}{\sqrt{2}}$  
  and $\eta(\LAT_{\rm hex}, \kappa)<0$, if $\kappa < \frac{1}{\sqrt{2}}$. 
  
 

We explain the origin of 
the functions $\g_{k} (\LAT) $ and $\eta (\LAT, \kappa)$ entering the statement of our results above. 
%
%
 Let $ L^\LAT $ be the operator obtained by the (complex) linearization of the map on the r.h.s. of \eqref{GES} at the vortex lattice solution $u_\LAT = (\Psi_\LAT, A_\LAT)$. ($L^{\LAT}$ is the complex linear hessian, $\E''(u_\LAT)$, of the Ginzburg-Landau energy functional \eqref{gl-en} at $u_\LAT$, see Subsection \ref{sec:hessian}.) 
The key signature of stability of the static solution, $u_\LAT$, is  the behaviour of the low energy the spectrum of the operator $L^{\LAT}$:    
   $u_\LAT$ is likely to be unstable if  $ L^\LAT $ has some negative spectrum and stable, if $ L^\LAT \ge 0$, with $0$ being an isolated eigenvalue, i.e.  its continuous spectrum has a gap at $0$.  The difficult case is when  $ L^\LAT \ge 0$ and is gapless,  i.e.  its continuous spectrum begins at $0$.  In the latter case, the central role is played by the detailed nature of this continuous spectrum (the dispersion relation) 
    at $0$ (and its interaction with the nonlinearity).  
  
 We say that an operator $L$ has the ($\LAT-$) {\it band spectrum} iff there are functions $\nu_k^j$, $k\in \R^2/\LAT^*$,  s.t. $\s(L)=\cup_j \Ran \nu^j$. ($k$ is called quasimomentum.) Let   $\e$ be a small parameter proportional to $\sqrt{1-b/\kappa^2}$, with $b=\frac{2\pi}{|\Om|}$, defined in \eqref{eps} below. We show that the operator $L^{\LAT}$ 
 has the band spectrum 
 with two gapless 
bands of the form 
\begin{align} \label{gapless-branch1} 
\nu_k^1 (\LAT, \kappa) = & c_1 (\LAT)  \frac{\e ^2|k|^{2}}{\e ^2 + |k|^{2}} [(\kappa^2-\frac12) \g_k (\LAT)+ \eta(\LAT, \kappa)\e^2] + O(\e^4 |k|^{2}),\\
 \label{gapless-branch2} & |k|^2 \ls \nu_{k}^2  (\LAT, \kappa)\ls  |k|^2, 
\end{align} 
where 
 $c_1 (\LAT) \gtrsim 1$. 
 The remaining bands have a gap of the order, at least, $\e^2$. We see that the second branch, $\nu_{k}^2$, is always positive. 

 
 By the definition (see \eqref{gamk-LAT} below), $\g_k(\LAT)=O(|k|^2)$ and therefore $(\kappa^2-\frac12) \g_k(\LAT)$ is the leading term in \eqref{tilde-gam1-expan} for $|k|\gg \e$ and $\eta(\LAT, \kappa)\e^2$, for $|k|\ll \e$, provided $\kappa^2 > \frac12$.  (Numerics show that $\g_k(\LAT)=O(|k|^4)$, in which case the inequalities become, $|k|\gg \sqrt\e$ and $|k|\ll \sqrt\e$.)
Checking out \eqref{gapless-branch1}, we conclude that 

 \begin{itemize} \item $\nu^{1}_k>0,\ \forall k\ne 0$, if $ (\kappa^2-\frac12) \g_k(\LAT)>0, \forall k\ne 0,$ and $\eta(\LAT, \kappa)>0$; 
\item   $\inf_k\nu^{1}_k<0$, if either $ (\kappa^2-\frac12) \inf_k\g_k(\LAT)<0$, or/and $\eta(\LAT, \kappa)<0$. 
 \end{itemize}
When either $ \g_k(\LAT)=0 $, for some $k\ne 0$, or $\eta(\LAT, \kappa)=0$, one has to go to the higher order. 

  



The gapless spectral branches 
\eqref{gapless-branch1} - \eqref{gapless-branch2} are due to breaking of 
 translational symmetry (by $u_\LAT$) and represent, what is known in particle physics as, the  Goldstone excitation spectrum. (For more detail see Subsection \ref{sec:approach}.) 
Breaking of the global gauge 
 symmetry leads to the zero eigenvalue for all $k$. 
 
 %
\DETAILS{ The function  $\gamma_{\e \kappa}(\tau)$ is defined as 
 \begin{align} \label{gam-epskapp}\gamma_{\e \kappa}(\tau):=\inf_k \g_{k \e \kappa}^- (\tau).\end{align}} 
We define the function $\g_{k} (\LAT) $. Let $\lan f \ran_\LAT$ stand for the average, $\lan f \ran_\LAT := \frac{1}{|\Omega|} \int_{\Omega} f ,$ of a function $f$ over a fundamental domain $\Om$ of $\LAT$ (or  $\R^2/\LAT$) and $\phi_k\equiv \phi_k^\LAT$ be the unique (up to a factor, of course) solutions of the equation
\begin{align}\label{phik-eqs-LAT} (-\COVLAP{a^0}-1)\phi =0,\ \quad \phi (x+s) =e^{i g_s(x)} 
e^{i k\cdot s}\phi (x),\ \quad \forall s\in \LAT,
	\end{align}
 with 
$a^0(x):= \frac{\pi }{|\Om|}   x^\perp$, $x^\perp:= (- x_2,  x_1)$, and   $ g_s(x):= \frac{\pi }{|\Om|}  s^\perp\cdot x+ c_s,$  where $c_s$ are numbers satisfying $ c_{s+t} - c_s - c_t  + \frac{\pi }{|\Om|}  s \wedge t \in 2\pi\Z$,  normalized as $\lan|\phi_k|^2\ran_{\LAT} =1$.   
 We define  
\DETAILS{ \begin{align}\label{gam-del} \gamma_{\del}(\tau)\equiv \gamma_{\del}(\LAT_\tau):=\inf_{\dist(k, \LAT^*_\tau)\ge \del} \g_{k} (\tau),\end{align}
  where the function  $\gamma_{k}(\LAT), k\in \C/\LAT^*_\tau,$ $\im\tau>0$, is given by}
\begin{align}\label{gamk-LAT}\g_k (\LAT) :=& 2\lan|\phi_0|^2|\phi_k|^2\ran_{\LAT} - |\lan \phi_0^2 \bar{\phi}_{- k}\bar{\phi}_{k} \ran_{\LAT}| - \lan|\phi_0|^4\ran_{\LAT}. 
\end{align}

 The definition of $\phi_k$ and the periodicity of the character $\chi(s) = e^{ik\cdot s}$ in  \eqref{phik-eqs-LAT} imply that 
 \begin{align}\label{phik-per-k} \phi_{k+\xi} = \phi_k,\  \text{ and therefore }\ \g_{k+\xi}(\LAT)=\g_k(\LAT),\ \forall \xi\in \LAT^*. 
 	\end{align} 
Hence we can take $k$ in $\R^2$, or in $\R^2/\LAT^*$, or in 
an elementary cell of the dual lattice. 

Properties of the  functions $\g_{k} ( \LAT)$, as well a series representation for it, are described in Subsection \ref{sec:main-res} and Section \ref{sec:phik-etc}.

 We identify $\R^2$ with $\C$, via the map $(x_1, x_2)\ra x_1+i x_2$,  and, for $\tau\in \C,$ $\im\tau>0$, introduce the normalized lattice
   \begin{equation}\label{LATtau}\LATt:=\sqrt{\frac{2\pi}{\im\tau} }  (\Z+\tau\Z). 
     \end{equation} 
We call $\tau$ the shape parameter.
 We denote  
 \begin{align}\label{gamk-tau} \gamma_{k}(\tau)\equiv \gamma_{k}(\LATt)\ \text{ and }\  \eta (\tau, \kappa)\equiv \eta (\LATt, \kappa)).\end{align}  
 By expressing $\g_k (\tau)$ as a fast convergent series (see \eqref{gamk-series} below) and using numerical simulations (with MATLAB with  a meshwidth of $0.01$), we show that 
 \begin{align}\label{gamk-comp}\g_k (e^{i\pi/3}) > 0\ \quad \forall k\ne 0\ 
 \text{ and }\  \inf_k\g_k ( e^{i\pi\al}) < 0,\  \text{ for }\  
  \al \notin [0.28, 0.41],  
\end{align} 
 which 
  implies the second statement above. Moreover, we show that $\sup_k\g_k (\tau) > 0$. 

If we define $\g_k (\tau)$ on the entire Poincar\'e half plane $\bH$, then, since $\LAT_{g\tau}=\LATt, \forall g\in SL(2, \Z)$, it is invariant under the action of the modular group $SL(2, \Z)$,
\[\g_k (g\tau) = \g_k (\tau)\ \quad  \forall g \in SL(2, \Z).\]
 Hence, it can be defined on the fundamental domain, $\bH/SL(2, \Z)$, of this group acting on $\bH$. ($\bH/SL(2, \Z)$ is given explicitly as
     $\bH/SL(2, \Z)=\{\tau\in \C: \im\tau > 0,\ |\tau| \geq 1,\ -\frac{1}{2} < \re\tau \leq \frac{1}{2} \}$, 
   see Fig. \ref{fig:PoincareStrip}.) Apart for this, we show in Section \ref{sec:phik-etc} that 
   \begin{align}\label{gamk-prop}  \g_{\bar k}(- \bar\tau)= \g_{k}( \tau),\end{align}  
   so that it suffices to consider $\g_{k}( \tau)$ ion the half of the fundamental domain $\bH/SL(2, \Z)$. 

 %
 
   
%
\DETAILS{the lattice shape parameter,  $\tau\in \C$, $\Im\tau > 0$,  is defined by identifying $\R^2$ with $\C$, via the map $(x_1, x_2)\ra x_1+i x_2$ and  bringing a lattice (using translations,  rotations and rescaling, if necessary) into the form $\LAT_\tau=\sqrt{\frac{2\pi}{\im\tau} }  (\Z+\tau\Z)$,}

\begin{figure}[h!]
	\centering 

  \includegraphics[width=2.5in]{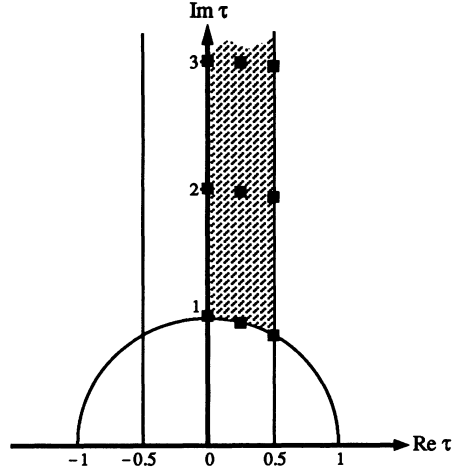} 
  \caption{Fundamental domain  of $\gamma(\tau)$. }\label{fig:PoincareStrip}
\end{figure}  

\begin{remark} \label{rem:eta}   (a)          
An explicit expression for $\eta (\tau, \kappa)$ is derived in Lemma \ref{lem:tilde-gam12-expan} in Appendix \ref{sec:prop-Lk-spec-pf}. While $\g_k (\tau)$ is expressed in terms of solutions of the linear problem \eqref{phik-eqs-LAT}, see \eqref{gamk-LAT}, 
  the function $\eta$ involves the subleading terms in the expansion of the solutions of the GLEs \eqref{gle} in $\e$. Again these can be expressed  in terms of $\theta-$series, but the expressions involved become  cumbersome, with some limited results announced in \cite{OS}.

(b) The  the Abrikosov constant, $\beta(\tau)$ (=$\beta(\LAT)$), is defined as  \begin{align} \label{AF} \beta(\tau) =  \lan |\phi_0|^4 \ran_{\LATt}\end{align} and is related to $\g_k ( \tau)$ as  $\beta(\tau)= \frac12 \g_0( \tau)$. 
   The term  \TIMC{} Abrikosov constant  comes from the physics literature, where one often considers only equilateral triangular or square lattices.
   
(c)
   The definition \eqref{AF} implies that 
    $\beta(\tau)$ is manifestly independent of $b$. Our definition differs from the standard one by rescaling: the standard definition uses the function   $ \phi_0(\sqrt b x)$, instead of  $\phi_0(x) $.

 (d) 
  While $\beta(\tau) $ is defined in terms of the standard theta function,  $\g_{k}(\tau) $ is defined in terms of theta functions with finite characteristics,  see Section \ref{sec:phik-etc} below.
 %
\end{remark} 

We believe that the methods we develop are fairly robust  and can be extended - at the expense of significantly more technicalities - to substantially wider classes of perturbation. 
 Moreover, the same techniques could be used in other problems of pattern formation, which are ubiquitous in applications.

In the rest of this section we  introduce some basic definitions, 
present our results and  sketch the approach and possible extensions.


\subsection{
Ginzburg-Landau 
energy}


 The Ginzburg-Landau equations are the Euler-Lagrange equations for the Ginzburg-Landau energy functional 
\begin{equation}
\label{gl-en}
    \E_Q(\Psi, A) = \frac{1}{2} \int_Q \left\{ |\COVGRAD{A}\Psi|^2 + |\CURL A|^2 + \frac{\kappa^2}{2}(1 - |\Psi|^2)^2 \right\},
\end{equation}
where $Q$ is any domain in $\R^2$.  ($\E_{Q}$ is the difference in (Helmhotz) free energy between the superconducting and normal states.)

 The Gorkov-Eliashberg-Schmidt equations have the structure of a gradient-flow equation for $\E_Q(\Psi, A)$.
Indeed, they can be put in the form
\begin{equation}\label{grad-flow}
    \partial_{t\Phi}( \Psi,  A) =  - \lam\E'(\Psi, A),
\end{equation}
where $\lam:=\diag (\g^{-1}, \sigma^{-1})$, $\partial_{t\Phi}( \Psi,  A) := ((\partial_t + i\Phi)\Psi, \partial_tA + \nabla\Phi)$ and $\E'$ is the $L^2-$gradient of $ \E(\Psi, A) \equiv \E_{\R^2}(\Psi, A) $, defined as $\lan v, \E'(u) \ran_{\R^2}=
d \E(u)v$, with  $d$ being  the G\^{a}teaux derivative, $d F(u)v:= \frac{\partial}{\partial s} F(u + s v )\big|_{s=0}$. This definition implies that
\begin{equation}\label{gradient-E}
    \E'(\Psi, A) = (-\COVLAP{A}\Psi - \kappa^2(1 - |\Psi|^2)\Psi,\ \CURL^*\CURL A - \Im(\bar{\Psi}\COVGRAD{A}\Psi)).
\end{equation}

 We note that the symmetries above restrict to symmetries of the Ginzburg-Landau equations by considering time-independent transformations.

\subsection{Symmetries}

The Gorkov-Eliashberg-Schmidt equations~\eqref{GES} admit several 
 symmetries, that is, transformations which map solutions to solutions.

{\it Gauge symmetry}:  for any sufficiently regular function $\gamma : \R^+ \times \R^2 \to \R$,
\begin{equation}\label{gauge-sym}
    (\Psi(t,x), A(t,x), \Phi(t,x)) \mapsto (e^{i\gamma(t,x)}\Psi(t,x), A(t,x) + \nabla\gamma(t,x), \Phi(t,x) - \p_t\gamma(t,x));
\end{equation}

{\it Translation symmetry}: for any $h \in \R^2$,
\begin{equation}\label{transl-sym}
    (\Psi(t,x), A(t,x), \Phi(t,x)) \mapsto (\Psi(t,x+h), A(t,x+h), \Phi(t,x+h));
\end{equation}

{\it Rotation symmetry}: for any $\rho \in SO(2)$, 
\begin{equation}\label{rot-sym}
   (\Psi(t,x), A(t,x), \Phi(t,x)) \mapsto (\Psi(t,\rho^{-1} x), \rho^{-1}A(t, (\rho^{-1})^T x), \Phi(t,\rho^{-1} x)),
\end{equation}

{\it Reflection symmetry}:
\begin{equation}\label{refl-sym}
    (\Psi(t,x), A(t,x), \Phi(t,x))  \mapsto (\Psi(t, -x), - A(t, - x), \Phi(t,- x)).
\end{equation}

{To show that the Gorkov-Eliashberg-Schmidt equations~\eqref{GES} (or  the Ginzburg-Landau equations,
\eqref{gle}) are covariant under the rotation symmetry, one first passes the Coulomb gauge $\divv A=0$, so that $\CURL^*\CURL A=-\Delta A$, and then, after the rotation transformation, reverts to the original $\Psi$ and $A$.)
\medskip


\subsection{Abrikosov lattices}\label{sec:abr-lat}

As was mentioned above,  Abrikosov vortex  lattices (or just Abrikosov lattices), are solutions, whose 
 physical characteristics, density of  Cooper pairs,  $|\Psi|^2$, the magnetic field,  $\CURL A$,  and the supercurrent,  $J_S = \Im(\bar{\Psi}\COVGRAD{A}\Psi)$, are double-periodic w.r. to a lattice $\LAT\subset \R^2$.

 We note that the symmetries \eqref{gauge-sym} -  \eqref{refl-sym} map Abrikosov lattices to Abrikosov lattices. 
Moreover, for Abrikosov states, for $(\Psi, A)$,  the magnetic flux, $
\int_\Omega \Curl A$, through a lattice cell, $\Om$, is quantized,
\begin{equation}\label{flux-quant}
    \int_\Omega \Curl A = 2\pi n,
\end{equation}
for some integer $n$.  Indeed, the periodicity of $n_s=|\Psi| ^2$ and $J=\im(\bar \Psi \n_A \Psi)$ imply that $\nabla\theta - A$, where $\Psi = |\Psi|e^{i\theta}$, is periodic, provided $\Psi \neq 0$ on $\partial\Omega$.  This, together with Stokes's theorem, $\int_\Omega \Curl A = \oint_{\partial\Omega} A = \oint_{\partial\Omega} \nabla\theta$ and the single-valuedness of $\Psi$,   implies \eqref{flux-quant}. 
 Using the reflection symmetry of the problem, one can easily check that we can always assume $n \geq 0$.

 Equation \eqref{flux-quant} implies the relation between the average magnetic field (or magnetic flux), $b$, per lattice cell, defined as,
 \begin{equation}\label{av-magn-flux}b = \frac{1}{|\Omega|} \int_\Omega  \CURL A , \end{equation}
  and the area of a fundamental cell, namely,
\begin{equation}\label{quant-cond}
     b = \frac{2\pi n}{|\Omega|}.
\end{equation}
Due to the quantization relation \eqref{quant-cond}, the parameters $\tau$, $b$, and $n$ determine the lattice $\LAT$ up  to a rotation and a translation. 
 Applying a rotation, if necessary, any lattice $\LAT$ can be brought to the form 
 \begin{equation}\label{LATom}\LAT_\om=r  (\Z+\tau\Z),\ \text{ where }\ \om=(\tau, r),\ r>0,\ \tau\in \C,\ \Im\tau > 0.\end{equation} 
Due to the condition, \eqref{quant-cond}, $r$ and $b$ are connected as $r^2 = \frac{2\pi n}{b\im \tau}$.
As the equations are invariant under rotations and translations, we can always assume that the underlying lattice is of the form \eqref{LATom}. 

In what follows, we {\it restrict ourselves to the case \eqref{flux-quant} with} $n=1$. 
\DETAILS{We will say that a gauge-periodic pair $(\Psi, A)$ is of type $\om=
(\tau, b)$, if the underlying lattice has the shape parameter $\tau$ and the scale parameter $r = \sqrt{\frac{2\pi }{b \Im\tau}}$ (and the average magnetic flux per lattice cell is equal to $b$, and there are $n$ quanta of magnetic flux per lattice cell $r^2 = \frac{2\pi }{b \Im\tau}$).}
We denote by $u_\om=(\Psi_\om, A_\om), \om=(\tau, r),$ the Abrikosov lattice solution for the lattice $\LAT_\om$.  Such a solution has the average magnetic flux per lattice cell equal to $b = \frac{2\pi }{r^2 \Im\tau}$.

Recall the definition of the Ginzburg - Landau parameter threshold $ \kappa_c(\tau)$ given in 
\begin{equation} 
     \kappa_c(\tau) \equiv \kappa_c(\LAT_\tau):= \sqrt{\frac{1}{2}\left(1-\frac{1}{\beta (\tau)}\right)} ,
     \end{equation}
where, recall, $\beta(\tau) \equiv \beta(\LAT_\tau)$ is the  the Abrikosov constant, defined as  $\beta(\tau) =  \lan |\phi_0|^4 \ran_{\LATt}=\frac12 \g_0( \tau)$. We have the following existence theorem (see \cite{TS2}, for the first result and its further elaborations see \cite{Odeh, BGT, Dut2, TS}).
\begin{theorem} \label{thm:existence}
 For any 
 $\tau\in \C,\ \im\tau>0,$ and 
  for any $b>0$, such that 
   \begin{equation} \label{b-cond}
	|1 - b/\kappa^2|\ll (2\kappa^2 - 1)\beta(\tau) + 1,	 
\end{equation} 
 and 
 \begin{equation}\label{b-cond-exist}
  ( \kappa - \kappa_c(\tau))( \kappa^2 -b ) >0, \end{equation}  
there exists a smooth Abrikosov lattice solution $u_\om = (\Psi_\om, A_\om)$, $\om =(\tau, r)$, with  $r = \sqrt{\frac{2\pi }{b \Im\tau}}$. 
\end{theorem}
More detailed properties of these solutions are given in Subsection \ref{sec:rescaling} below. As we deal only with the case $n = 1$, we now assume that this is so and drop $n$ from the notation. 

It is key to realize that a state $(\Psi, A)$ is an Abrikosov lattice if and only if  $(\Psi, A)$ is gauge-periodic or  gauge-equivariant (with respect to the lattice $\LATt$)in the sense that
there exist (possibly multivalued) functions $g_s : \R^2 \to \R$, $s \in \LAT$, such that
\begin{equation}\label{gauge-per'}
(\Psi(x+s), A(x+s))=(e^{i g_s(x)}\Psi(x), A(x) + \nabla g_s(x)).\end{equation}
 Indeed, if state  $(\Psi, A)$ satisfies \eqref{gauge-per'}, then all associated physical quantities are $\LAT-$periodic, i.e. $(\Psi, A)$ is an Abrikosov lattice. 
In the opposite direction, if $(\Psi, A)$ is an Abrikosov lattice, then $\curl A(x)$ is periodic w.r.to $\LAT$,  and therefore   $A(x + s) = A(x) +\n g_s(x)$,  for some functions $g_s(x)$. 
Next, we write $\Psi(x)=|\Psi(x)|e^{i \phi(x)}$. Since $|\Psi(x)|$ and $J(x)= |\Psi(x)|^2 (\n \phi(x)-  A(x))$ are  periodic w.r.to $\LAT$,  we have that   $\n \phi(x + s) = \n \phi(x) +\n \tilde g_s(x)$, which implies that   $\phi(x + s) =\phi(x) + g_s(x)$, where $g_s(x)=\tilde g_s(x)+ c_s$, for some constants $c_s$. 


\subsection{Finite-energy ($H^1-$) perturbations}\label{sec:pert}

We now wish to study the stability of these Abrikosov lattice solutions under a class of perturbations that have finite-energy.
More precisely, we fix an Abrikosov lattice solution $u_\om$ and consider perturbations $v : \R^2 \to \C \times \R^2$ that satisfy
\begin{equation}\label{Lambda}
\Lambda_{u_\om}(v) = \lim_{Q\ra\R^2} \big(\E_{Q}(u_\om + v) - \E_{Q}(u_\om)\big) < \infty.
\end{equation}
Clearly, $\Lambda_{u_\om}(v)<\infty$, for all vectors of the form $v=T^{\rm gauge}_\gamma  u_\om -  u_\om$, where $T^{\rm gauge}_\gamma : 
    (\Psi(x), A(x)) \mapsto (e^{i\gamma( x)}\Psi( x), A( x) + \nabla\gamma( x))$ and $\g \in  H^2(\R^2;\R)$.

 In fact, we will be dealing with the smaller class,  $ H^1_{\textrm{cov}} $,  of perturbations, where $ H^1_{\textrm{cov}} $ 
 is the Sobolev space of order $1$ defined by the covariant derivatives, i.e., $ H^1_{\textrm{cov}}:=\{v\in L^2(\R^2, \C\times \R^2)\ |\ \|v\|_{H^1}< \infty \}$, where the norm $\|v\|_{H^1}$ is determined by
 the covariant inner product
\begin{align*}
    \langle v, v' \rangle_{H^1}^{\rm Re} = \Re \int \bar{\xi}\xi' + \overline{\COVGRAD{A_\om}\xi} \cdot \COVGRAD{A_\om}\xi' + \alpha \cdot \alpha' + \sum_{i=1}^2 \nabla\alpha_i\cdot\nabla\alpha'_i, 
\end{align*}
where $v=( \xi, \al),\ v'=( \xi', \al')$, 
 while  the $L^2-$norm  is given by 
\begin{equation} \label{L2-inner-product}
    \langle v, v' \rangle_{L^2}^{\rm Re} = \Re \int \bar{\xi}\xi' + \alpha \cdot \alpha'.
\end{equation}
In Lemma \ref{lem:En-fluct-expr} 
of  Appendix \ref{sec:fluct-en}, we will find an explicit representation of  $\Lambda_{u_\om}(v)$. 

\medskip
 We define the gauge transformation
\begin{align}\label{gauge-transf-orig}
 \tau^{\rm gauge}_\chi :\ &  (\Psi( x),A( x)) 
  \mapsto (e^{i\chi(x)}\Psi( x),  A( x) + \nabla\chi( x)).
\end{align}
To formulate the notion of  asymptotically stability we define the manifold (equivalence class) $$\mathcal{M}_\om = \{ \tau^{\rm gauge}_\chi u_\om : \chi \in H^1 (\R^2, \R) \}$$ of gauge equivalent Abrikosov lattices and the $H^1-$distance, $\dist_{H^1}$, to this manifold.

  \begin{definition}\label{def:stability} We say that the Abrikosov lattice $u_\om$ is {\it asymptotically stable} under $H^1_{\textrm{cov}} -$ perturbations, 
  if there is $\del>0$ s.t. for any initial condition $u_0$ satisfying $\dist_{H^1} (u_0, \mathcal{M}_\om)\le \del$ 
there exists $g(t)\in H^1$, s.t.   the solution $u(t)=  (\Psi(t), A(t))$ of \eqref{GES}  satisfies 
   \[ \| u(t)- \tau^{\rm gauge}_{g(t)} u_\om\|_{H^1} \ra 0,\]    as $t \ra \infty$.  We say that $u_\om$ is {\it energetically} unstable if 
 $\inf_{v\in H^1_{\textrm{cov}}}\lan v, \E''(u_\om)v\ran<0$ for  the hessian, $\E''(u_\om)$, of $\E(u)$ at $u_\om$. 
 \end{definition}
The hessian, $ \E''(u)$ of the energy functional $\E$, -  at $u\in u_\om +  H^1_{\textrm{cov}}$ - is defined as $ \E''(u) = d\E'(u )$ (the G\^{a}teaux derivative of the $L^2-$gradient map), where $d$ and $'$ are  the G\^{a}teaux derivative and  $L^2-$gradient map defined in the paragraph preceding \eqref{gradient-E}. 
 Although $\E (u)$ is infinite 
on  $u_\om +  H^1_{\textrm{cov}} $, the hessian $\E''(u)$ is 
well defined as a differential operator 
explicitly and is given in \eqref{Lc-expl} of Appendix \ref{sec:hess-expl}. 
We restrict the initial conditions $(\Psi_0, A_0)$ for \eqref{GES} 
 satisfying 
\begin{equation} \label{parity0}
(\Psi_0 (-x), - A_0 ( - x))=(\Psi_0 (x), A_0 (x)).	 
\end{equation}
\DETAILS{T^{\rm refl} : (\Psi(t,x), A(t,x), \Phi(t,x))  \mapsto (\Psi(t, -x), - A(t, - x), \Phi(t,- x))

Note that, by \TIMC{} uniqueness,  the Abrikosov lattice solutions $u_\om = (\Psi_\om, A_\om)$ satisfy $T^{\rm refl} u_\om=u_\om$ and therefore so are the perturbations, $v_0:= u_0-u_\om$, where $u_0:= (\Psi_0, A_0)$:
\begin{equation} \label{v0-parity}
	T^{\rm refl} v_0=v_0.	 
\end{equation} }
%



\subsection{Main results}\label{sec:main-res}
Recall that the functions  $\gamma_{k}(\tau)$ and $\mu (\tau, \kappa) 
$ is defined in \eqref{gamk-tau}.
 \begin{theorem}\label{thm:stability} 
Let $b$ be sufficiently close to $\kappa^2$,  in the sense of   \eqref{b-cond}.
Then, under 
$H^1-$perturbations, satisfying \eqref{parity0},   the Abrikosov lattice $u_\om$, with  $\om =(\tau, r)$, $r = \sqrt{\frac{2\pi }{b \Im\tau}}$, is
  \begin{itemize} \item asymptotically stable 
 for all 
  for $\tau$ and $\kappa$ satisfying  
$\gamma_{k}(\tau)>0,\ \forall k \ne 0,$  $\kappa> \frac{1}{\sqrt{2}}$ and $\eta (\tau, \kappa)>0$; 
 and 
 
\smallskip

 \emph{energetically unstable} if either $\kappa < \frac{1}{\sqrt{2}}$, or $\inf_k\gamma_{k}(\tau) <0$, or $\eta (\tau, \kappa) < 0$. 
 \DETAILS{$\tau$ s.t.  
 $\g_{\del} (\tau) >0,\ 
 \forall \del > 0$;   
  \item {\it energetically} unstable 
   for all $\tau$  s.t.   $\gamma_{\del=0}(\tau) <0$.}  
  \end{itemize} 
 \end{theorem}
\DETAILS{ Concerning the  function  $\g( \tau)$, we have the following
 \begin{proposition}\label{prop:gamma}   
  \begin{itemize} 
 \item $\g( \tau),\ \tau\in \Pi^+/SL(2, \Z),$ is 
 symmetric w.r.to the imaginary axis. 
 \item  $\g( \tau)$ has critical points at   $\tau=e^{i\pi/2}$ and $\tau=e^{i\pi/3}$, provided it is differentiable at these points.
  \end{itemize}  
 \end{proposition}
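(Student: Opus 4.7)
The plan is to derive both bullets from symmetries of the defining formula~\eqref{gamma}: the first from the reflection symmetry~\eqref{refl-sym} of the ambient Ginzburg-Landau problem, the second from the combination of the modular invariance of $\gamma$ (already recorded after \eqref{phichi-eqs} via rotational covariance) with the fact that the special lattices have a nontrivial stabilizer in $SL(2,\Z)$.

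For the first bullet, let $R:=\operatorname{diag}(-1,1)$, i.e.\ the reflection $z\mapsto-\bar z$ on $\C\cong\R^2$. A direct check gives $R\LAT_\tau=\LAT_{-\bar\tau}$ and $R^{T}a^0(Rx)=-a^0(x)$ (the last using $RJR=-J$, since $J$ is a rotation by $-\pi/2$). Consequently, if $\phi$ solves \eqref{phichi-eqs} on $\LAT_\tau$ with character $\chi$, then $\tilde\phi(x):=\overline{\phi(Rx)}$ solves \eqref{phichi-eqs} on $\LAT_{-\bar\tau}$ with character $\tilde\chi(s'):=\overline{\chi(Rs')}$: the equation itself is preserved because complex conjugation sends $-\Delta_{a^0}$ to $-\Delta_{-a^0}$ and the reflection $R$ sends $-a^0$ back to $a^0$, and the quasi-periodicity law transforms as claimed by the identity $s\cdot Ja^0(Rx)=-s'\cdot Jx$ for $s'=Rs$. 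By the uniqueness (normalized) solution to \eqref{phichi-eqs}, $\phi_{\tilde\chi}^{(-\bar\tau)}$ and $\tilde\phi$ agree up to a phase, which is irrelevant to \eqref{gamma} because of the moduli. A change of variable $y=Rx$ in the three integrals in \eqref{gamma} then yields $\g_{\tilde\chi}(-\bar\tau)=\g_\chi(\tau)$, and since $\chi\mapsto\tilde\chi$ is a bijection $\hat\LAT_\tau\to\hat\LAT_{-\bar\tau}$, taking infima gives $\g(-\bar\tau)=\g(\tau)$.

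For the second bullet, modular invariance $\g\circ M=\g$ for every $M=\bigl(\begin{smallmatrix}a&b\\c&d\end{smallmatrix}\bigr)\in SL(2,\Z)$ gives, at any differentiable fixed point $\tau_0$ of $M$, the relation $d\g(\tau_0)\circ dM(\tau_0)=d\g(\tau_0)$. Under the identification $T_{\tau_0}\C\cong\C$, $dM(\tau_0)$ is multiplication by the complex number $M'(\tau_0)=(c\tau_0+d)^{-2}$, i.e.\ the rotation of $T_{\tau_0}\C$ by $\arg M'(\tau_0)$; so whenever $M'(\tau_0)\neq 1$ the invariant real-linear functional $d\g(\tau_0)$ must vanish. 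At $\tau_0=i$, take $S:\tau\mapsto-1/\tau$, which fixes $i$ with $S'(i)=1/i^2=-1$ (rotation by $\pi$), forcing $d\g(i)=-d\g(i)=0$. At $\tau_0=\rho:=e^{i\pi/3}$, take $M=\bigl(\begin{smallmatrix}1&-1\\1&0\end{smallmatrix}\bigr)\in SL(2,\Z)$: one checks $M\rho=(\rho-1)/\rho=1-\bar\rho=\rho$ (using $|\rho|=1$ and $\re\rho=1/2$), and $M'(\rho)=\rho^{-2}=e^{-2\pi i/3}$, a rotation by $-2\pi/3$; since the only real-linear functional on $\C$ invariant under a nontrivial rotation is zero, $d\g(\rho)=0$.

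The main obstacle is the careful bookkeeping in the first bullet: verifying that the composition of the reflection $R$ and complex conjugation intertwines the normalized Bloch-type problem \eqref{phichi-eqs} on $\LAT_\tau$ with the same problem on $\LAT_{-\bar\tau}$, establishing the precise relation between the quasi-periodicity characters, and tracking the phases in the triple-product $\phi_0^2\bar\phi_\chi\bar\phi_{\chi^{-1}}$ so that its modulus is preserved. Once this is in place, the critical-point statement in the second bullet is the standard consequence of modular invariance and the linear algebra fact that a real-linear functional fixed by a nontrivial rotation vanishes.
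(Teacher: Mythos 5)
Your proof is correct. For the first bullet you take essentially the paper's route: the paper packages the same reflection argument into the covariance relation \eqref{phik-rot} and the resulting parity identities \eqref{phik-parity}, which yield $\g_{\bar k}(-\bar\tau)=\g_k(\tau)$ and hence $\g(-\bar\tau)=\g(\tau)$ after taking the infimum; your version is more explicit (and, usefully, makes clear that the orientation-reversing map must be composed with complex conjugation so that the sign flip of the magnetic potential, $R^{T}a^0(Rx)=-a^0(x)$, is undone). For the second bullet your argument genuinely differs from the paper's. The paper works with the two real partial derivatives separately: it uses the anti-holomorphic symmetries $\g(n-\bar\tau)=\g(\tau)$ to kill $\p_{\tau_1}\g$ at the two fixed points, and then differentiates $\g(\tau_1,\tau_2)=\g(n-\tau_1/|\tau|^2,\tau_2/|\tau|^2)$ to obtain \eqref{gamq-der-tau2} and kill $\p_{\tau_2}\g$. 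You instead invoke only the holomorphic modular invariance (which the paper does establish in \eqref{gam-propert}, and which gives $\g\circ M=\g$ for all $M$ since $T$ and $S$ generate the group) together with the observation that $e^{i\pi/2}$ and $e^{i\pi/3}$ are elliptic fixed points: $dM(\tau_0)$ is a nontrivial rotation there, and a rotation-invariant real-linear functional vanishes. This is cleaner, avoids the reflection symmetry entirely in the second step, and generalizes immediately to any elliptic fixed point of the modular group; the paper's computation, by contrast, exhibits the explicit identity \eqref{gamq-der-tau2}, which could be reused for other purposes. Both arguments correctly carry the caveat that differentiability at the fixed points must be assumed.
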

The first property implies that it suffices to consider $ \g( \tau)$ on the $\Re\tau \geq 0$ half of the fundamental domain, 
$\Pi^+/SL(2, \Z),$ (the heavily shaded area on the Fig. \ref{fig:PoincareStrip}).}
%

%
\DETAILS{ Recall that the properties of  the modular functions  $\g_\del ( \tau)$ are described in Proposition \ref{prop:gamma} above.}  
\DETAILS{  The  function $\g (\tau)$ is studied numerically in Appendix \ref{sec:gammak-numer}, where 
it   is shown that it 
becomes  negative for $\im\tau\ge 1.81$. The explicit representation of $\g_k(  \tau)$ below and the numerics suggest also that for fixed $\re \tau\in [0, 1/2]$, $\g( \tau)$ is a decreasing function of  $\im \tau$.  Moreover, it is computed that
  \begin{align}\label{gam-value} |\g ( \tau) - c|\le   7.5 \cdot 10^{-3}  \ \mbox{where}\   
c =  0.64\ \mbox{for}\  \tau=e^{i\pi/3}  \quad \mbox{and}\   \quad c= 0.4\ \mbox{for}\  \tau=e^{i\pi/2}.  \end{align}

The numerics mentioned above are based on the  explicit expression for the functions} 
This theorem is proven in Sections \ref{sec:hess-spec}--\ref{sec:Pf-StabThm}, with some technical details given in Appendices \ref{U-prod-transf}--\ref{sec:nonlin}. 
The proof 
 consists of two parts: the linear and nonlinear analysis. In the next two subsections we sketch main steps of the proof. 
 
Next, we turn to the function $\gamma_{k}(\tau)$. 
 To prove the properties \eqref{gamk-comp} 
   of this function we use numerics. These numerics  are based on the  explicit expression for the functions $\g_k(  \tau) ,\   k\in \Om^*_\tau,$ 
which we describe now.  
Then we have the following explicit representation of  the function    $\g_k( \tau)$, as a fast convergent series   (cf \cite{ABN, NV}),  
\begin{theorem}\label{thm:gammak}  Let $\LAT_{\tau}':=\frac{1}{2}i \LAT_{\tau}^*=\sqrt{\frac{\pi}{\im\tau}}(\Z+\tau \Z)$. 
 For the function  $\g_k ( \tau) ,\ \im\tau>0, k\in \C/\LAT'_\tau,$ 
defined in  \eqref{gamk-tau},   have the explicit representation 
\DETAILS{ \begin{align}\label{gamk-series}
\g_k( \tau)& =2\sum_{t\in \tilde\LAT_{\tau}^*} e^{- 
\frac12  |t|^2  } \cos [  \im (  \bar k t)]  - |\sum_{t\in \tilde\LAT_{\tau}^*} e^{- 
\frac12   |t+k| ^2     +   i  \im (  \bar k t) }|- \sum_{t\in \tilde\LAT_{\tau}^*} e^{- 
\frac12 |t|^2  } .
\end{align}}
 \begin{align}\label{gamk-series}
\g_k( \tau)& =2\sum_{t\in \LAT_{\tau}'} e^{-  |t+q|^2  }  - |\sum_{t\in \LAT_{\tau}'} e^{-   |t+q| ^2     - 2 i  \im (  \bar q t) }|- \sum_{t\in \LAT_{\tau}'} e^{-  |t|^2  } ,
\end{align}
where $k$  and $q$ are related as $ k=-\frac{2\pi}{\im\tau}    i q$. \end{theorem}
This theorem  
is proven in Appendix \ref{sec:gammaq-series} using results of Section \ref{sec:phik-etc}  (cf \cite{ABN, NV}).
\DETAILS{{\bf Our computations show that ??} 
 
\begin{itemize}

\item $ \g_k( \tau)$ is minimized at  $k \approx \sqrt{\frac{2\pi}{\im\tau}}   (\frac{1}{2} - \frac{1}{2\sqrt{3}} i)$ at the point $\tau = e^{i \pi /3}$, and a value of
 $k \approx \sqrt{\frac{2\pi}{\im\tau}} (\frac{1}{2} + i\frac{1}{2})$ for $\tau = e^{i \pi/2}$, which corresponds to vertices of the corresponding Wigner-Seitz cells. 
\end{itemize}}
%
Interestingly,  in Proposition \ref{thm:gamk-cp2}  below, we show that  the points  $k\in \frac12\cL^*_\tau$ 
  are critical points of the function $\g_{k}( \tau)$ in $k$.

Moreover, 
the  functions $\gamma_{\del}(\tau)\equiv \gamma_{\del}(\LAT_\tau):=\inf_{\dist(k, \LAT^*_\tau)\ge \del} \g_{k} (\tau).$
have the following properties 
 proven in Section \ref{sec:phik-etc}. 
\begin{proposition}\label{prop:gamma}   
  \begin{itemize} 
\item  
$\g_{\del}( \tau)$ 
  is invariant under the action of  the modular group  $SL(2, \Z)$. 
 \item $\g_\del( \tau)$ 
is symmetric w.r.to the imaginary axis. 
 \item  $\g_\del( \tau)$ has critical points at   $\tau=e^{i\pi/2}$ and $\tau=e^{i\pi/3}$, provided it is differentiable at these points.
  \end{itemize}  
 \end{proposition}
 The first property says that $\g_\del( \tau)$ is independent of the choice of  
 a basis in $\LAT_\tau$ and $\LAT_\tau^*$ (see also Remark 5 below).  It implies that it suffices to consider  $\tau$ in  the fundamental domain,  $\bH/SL(2, \Z)$, of   $SL(2, \Z)$. By the second property,  it suffices to consider $ \g_\del( \tau)$ on the $\Re\tau \geq 0$ half of the fundamental domain, 
$\bH/SL(2, \Z),$ (the heavily shaded area on the Fig. \ref{fig:PoincareStrip}).  

What distinguishes the points  $\tau=e^{i\pi/2}$ and $\tau=e^{i\pi/3}$ is that they 
  are the only points in $\bH/SL(2, \Z)$, which are fixed points under the maps from $SL(2, \Z)$, other than identity, namely, under 
  \[\tau \ra - \bar\tau,\  \tau \ra -\tau^{-1}\  
 \text{   and  }\ \tau \ra 1 - \bar\tau,\ \tau \ra 1 -\tau^{-1},\]
  respectively. This fact is used to prove the third statement above.  
  



\subsection{ Goldstone spectrum}


To formulate the result we introduce some definitions and notation. 
In order to unify the exposition and 
reduce the number of related definitions, we, at the outset, rescale the equations and treat $\bar \Psi$ as an independent unknown. The former is done in order to eliminate the perturbation parameter $b$ from the spaces, 
and the latter, for the linearized problem to be complex-linear.
\paragraph{Rescaling.}
\label{sec:rescaling}

As the underlying lattice $\LAT_\om$ for  the Abrikosov lattice solution $(\Psi_\om, A_\om)$ depends on the parameter $b$, see Theorem \ref{thm:existence}, 
it is convenient (especially, in the study of the linearized problem) to rescale the problem so that the resulting lattice is $b-$independent: 
 $U_\si : (\Psi(x), A(x)) \mapsto (\si \Psi(\si x), \si A(\si x))$, where $\si=\frac{1}{\sqrt b}
 $, for a given lattice $\LATo, \om=(\tau, r),$ and a 
 magnetic flux per a fundamental cell, $b$. 

 We denote the rescaled Abrikosov lattice solution by  $(\psit, \at) := U_\si (\Psi_\om, A_\om)$.  
 It satisfies the rescaled Ginzburg-Landau equations
\begin{equation}\label{gle-resc}
\begin{cases}
    (-\COVLAP{a} - \lam)\psi + \kappa^2 |\psi|^2\psi=0, \\
    \CURL^*\CURL a - \Im(\overline{\psi}\COVGRAD{a} \psi)=0,
\end{cases}
\end{equation}
with $\lam = r^2\kappa^2=\frac{\kappa^2}{b}$ 
 and $|\psi|^2$,  $\CURL a$,  and  $ \Im(\bar{\psi}\COVGRAD{a}\psi)$ double-periodic w.r. to the lattice $\LATt:=\sqrt{\frac{2\pi}{\im\tau} }  (\Z+\tau\Z)$. 

\paragraph{Extending the system of equations.}
\label{sec:rescaling} 
We consider $\psi$ and $\bar\psi$ as independent unknowns and extend the rescaled 
  Gorkov-Eliashberg-Schmid equations 
to obtain
\begin{equation}\label{GESresc-c}
\begin{cases}
\chi \partial_{t,\phi} \psi =     (\COVLAP{a} + \lam)\psi - \kappa^2 |\psi|^2\psi, \\
\chi \overline{\partial_{t,\phi} \psi} =   (\overline{\COVLAP{a}} + \lam)\bar\psi - \kappa^2 |\psi|^2\bar\psi, \\ 
2\sigma \partial_{t,\phi} a   = -  2 \CURL^*\CURL a + 2\Im(\overline{\psi}\COVGRAD{a} \psi),
\end{cases}\end{equation}
where $\phi$ is the rescaled scalar (electric) potential and the remaining notation is explained after \eqref{gle} and \eqref{GES} (in particular,  
 $\partial_{t,\phi}\psi  =  (\partial_t + i\phi)\psi$ and  $\partial_{t,\phi}a = \partial_t a + \nabla\phi$). (The role of $2$ in the second equation will become clear later.) These equations are invariant under the rigid motions (including the reflections) and the gauge transformations, which we write out explicitly,
\begin{align}\label{gauge-transf-resc-c}
&  \psi(t,x), 
a(t,x), \phi(t,x) 
 \mapsto e^{i\gamma(t,x)}\psi(t,x), 
 a(t,x) + \nabla\gamma(t,x), \phi(t,x) - \p_t\gamma(t,x).
\end{align}

 A fully complex form of the GES equations is given in Subsection \ref{sec:GES-fully-compl}.


\paragraph{Hessian.}\label{sec:hessian}

A key role in our analysis will be played by the linearization of the equations \eqref{GESresc-c} around the rescaled Abrikosov lattice solution $\ut:=(\psit,  \bar\psit, \at) $. 
 Unlike  Section \ref{sec:pert}, 
in what follows $u$ stands for $u=(\psi, \bar \psi, a)$.  (The old designation  $u=(\Psi,  A)$ will not be used from now on.) 

Let $J(u)$ be the map on the r.h.s. of  the Gorkov-Eliashberg-Schmid equations \eqref{GESresc-c}  and $d J(u)$ $:=(d_{\psi} J_{\psi}(u), d_{\bar\psi} J_{\bar\psi}(u), d_{a} J_{a}(u))$, where $d_{\psi} J_{\psi}(u), d_{\bar\psi} J_{\bar\psi}(u), d_{a} J_{a}(u)$ are the partial complex G\^ateaux derivatives, $d_{\psi} $ is the G\^ateaux derivative in $\psi$, i.e. $d_{\psi} :=\frac12 (d_{\psi_1}-i d_{\psi_2})$, etc..  Since we vary $\psi$ and $\bar \psi$ independently, we define the operator  $d J(u)$  on a dense subset (namely, on the Sobolev space of order $2$)  of the Hilbert space  $\Hil:=L^2(\R^2;\C)\oplus L^2(\R^2;\C)\oplus L^2(\R^2;\C^2)$, with  the inner product 
\begin{equation}\label{ip-cc}   \lan v, v' \ran_{L^2} = \int \bar{\xi}\xi' + \bar{\eta}\eta' + \bar\alpha\cdot \alpha' , \end{equation}
 where  $v = (\xi, \eta, \alpha),\  v' = (\xi', \eta', \alpha')  \in \cH$.  It is  self-adjoint  on  $\Hil$ . 
 
Let $\Lt$  denote the linearization of $J(u)$ at $\ut$,   
 $\Lt:= d J(\ut)$. It is given explicitly in \eqref{Lc-expl} of Appendix \ref{sec:hess-expl}. 
 The spectrum of $\Lt$ gives the excitation spectrum of  the Abrikosov lattice solution $ (\psi_{\w}, a_{\w})$.  
In Appendix \ref{sec:hess-expl}, we compare $\Lt$ with the linearization of the standard rescaled Ginzburg-Landau equations \eqref{gle-resc}.

The operator $\Lo$ is a hessian the rescaled Ginzburg-Landau energy functional \eqref {gl-en}, 
\begin{equation} \label{gl-en-c}
 E (\psi, \bar \psi, a)\equiv   E_Q(\psi, \bar \psi, a) = 
     \int_Q \left\{ |\COVGRAD{a}\psi|^2 +  |\CURL a|^2 -  \lam |\psi|^2+ \frac{\kappa^2}{2} |\psi|^4 \right\},
\end{equation}
where, as above,  $\n_a:= \n-  i a$ and  $\psi$ and $\bar\psi$  are considered as independent variables,  at $\ut=(\psit, \bar \psit, \at)$. To explain this, we begin with the notation. Let  $ 
\n_{\psi_j}^{L^2}E (u),\ \p_{a}E (u)\equiv \n_{a}^{L^2}E (u)$ be the real, partial $L^2-$gradients (in particular, with the real $L^2-$product, $ \lan v, v' \ran_{L^2}^{\rm Re} = \re\int \bar{\xi}\xi' + \bar{\eta}\eta' + \bar\alpha\cdot \alpha' $) and  
\begin{align}\label{dE-barpsi}\p_{\bar\psi}E (u):=\frac12 ( \n_{\psi_1}^{L^2} E (u) + i \n_{\psi_2}^{L^2}E (u)),\end{align}
 and similarly for $\p_{\psi}E (u)$. 
 Then $E' (u):=(\p_{\bar\psi}E (u),\ \p_{\psi}E (u),\  \p_{a}E (u))$ is the exactly the map $J(u)$ on the r.h.s. of  the Gorkov-Eliashberg-Schmid equations \eqref{GESresc-c} and the rescaled Ginzburg-Landau equations \eqref{gle-resc} are the Euler - Lagrange equations for this functional, $\p_{\bar\psi}E (u)=0,\ \p_{a}E (u)=0$.

The linearization operator $\Lt$ 
 is the hessian of the rescaled  Ginzburg-Landau energy functional, $E(\psi, \bar\psi, a)$, with $\psi$ and $\bar\psi$  considered as independent variables, 
\begin{align}\label{E''} 
E''(u)&:= \big( d_{\vphi_i} \p_{\vphi_j}E (u)\big), 
	\end{align}
 where 
$(\vphi_1, \vphi_2, \vphi_3)= (\psi, \bar\psi, a)$, 
evaluated at $\ut$: 
\begin{align}\label{Lt} \Lt=d J(\ut) = E''(\ut). \end{align} 

\paragraph{Symmetry breaking and zero modes.} 
We describe the symmetry zero modes of the hessian \eqref{Lt}, 
due to  breaking of  the gauge, translational and rotational  symmetry of the equation \eqref{gle-resc} by the solution $(\psit,  \at)$. In the proof we use the 
 gauge and translation transformations of the rescaled and extended unknowns, 
\begin{align}\label{Tgauge}
 T^{\rm gauge}_\gamma :\ &  (\xi( x), \eta( x), \a( x)) 
  \mapsto (e^{i\gamma(x)}\xi( x), e^{- i\gamma( x)}\eta( x),  \a( x) + \nabla\gamma( x)),
\end{align}
 \begin{equation}\label{Ttrans}
    T^{\rm trans}_s : (\xi( x), \eta( x), \al ( x))  \mapsto (\xi( x+s), \eta(x+s),  \al (x+s)).
\end{equation}
Recall the notation $r^\perp:=(- r_2, r_1)$ for $r=(r_1, r_2)$. 
\begin{lemma}\label{lem:zero-modes}
	The operator $\Lt$ has the  gauge, (gauged) translational and rotational 
	 zero modes,  $\Lt  G_{ \chi} =0$,  $\Lt S_{r} =0$ and  $\Lt  R_{ \varphi}  =0$, where $G_{ \chi}$, $S_{r},\ r\in \R^2,$ and $ R_{ \varphi} $ are given in 
\begin{align} \label{gauge-modes}  & G_{\chi} := (i\chi\psit, - i\chi\bar\psit, \nabla\chi),\\
 \label{transl-modes}
 &  S_{r} = ( ({r}\cdot\n_{\at} )\psit, 
 \overline{({r}\cdot\n_{\at} )\psit}, (\curl  \at)  r^\perp) ,\\
\label{rot-zero-mode}
    &   R_{ \varphi} = { \varphi}( J x \cdot\nabla\psit,   J x \cdot\nabla\bar\psit,  - \at^\perp + x^\perp \cdot\nabla \at).
\end{align}	 	 
\end{lemma}
\begin{proof} 
To derive the relations for the gauge zero mode, substitute $T^{\rm gauge}_\gamma \ut$ into the Ginzburg-Landau equations \eqref{gle-resc} to obtain $E'(T^{\rm gauge}_\gamma \ut )=0$, where  $E'(u )$ is the r.h.s. of \eqref{GESresc-c}. 
 Then we differentiate the equation $E'(T^{\rm gauge}_\gamma \ut )=0$ w.r.to $\g$ 
 at $\g=0$, we find $d_u E'(T^{\rm gauge}_\g \ut)d_\g T^{\rm gauge}_\g \ut=0$. Since $d_u E'(\ut) =\Lt$ and $d_\g T^{\rm gauge}_\g \ut\big|_{\g=0}\g'=G_{ \g'} $, this gives $\Lt  G_{ \g'} =0$. 
 
 The situation with the translational zero mode is slightly more subtle. Proceeding as with the gauge  zero mode, we arrive at the translational zero mode $\tilde S_{r'} = ( ({r'}\cdot\n)\psit, \overline{({r'}\cdot\n)\psit}, ({r'}\cdot\n)  \at) ,\ r'\in \R^2$. However, $\tilde S_{r'}$ is not gauge covariant (or equivariant). Hence we modify it by subtracting $G_\chi$, with $\chi:=r' \cdot \at$, to obtain $\tilde S_{r'}- G_{r' \cdot \at}= S_{r'}$.
 
 The derivation of the relations for the rotational zero mode is the same as for the gauge zero mode and we omit it here.
 \end{proof} 
\DETAILS{The solution $u_\om$ of \eqref{gle} breaks also the translational and rotational 
invariance, and therefore the operator $L_\om$ has also the   translational and rotational  zero modes 
\begin{align} \label{transl-zero-mode}
 &  S_{h'} = ( ({h'}\cdot\n_{A_\om} )\Psi_\om, 
(\curl  A_\om) Jh') ,\ h'\in \R^2,\\
\label{rot-zero-mode}
    &   R_{ \varphi'} = { \varphi'}( J x \cdot\nabla\Psi_\om , - J A_\om + J x \cdot\nabla A_\om ),
\end{align}
 i.e.  $L_\om S_{h'} =0$ and  $L_\om  R_{ \varphi'}  =0$.  
 The   translational zero modes are not square integrable but  bounded and indicate about the presence of the essential spectrum at $0$.  The   rotational zero modes grows at infinity  and are not used in our analysis.}
 %
 Only $G_{ \chi}$ with $\chi\in H^1(\R^2, \R)$ are true eigenfunctions, the other zero modes are generalized ones. The translational zero modes and global gauge zero mode, 
\begin{align} \label{G-glob} \Ggl:=( \psit, - \bar\psit, 0)= - i G_{\chi=1}, \end{align}
are bounded and, as we mentioned above, are connected to the two Goldstone massless spectral bands. 

\paragraph{Result.} We say that a self-adjoint operator $L$  on a Hilbert space $\cH$ has the ($\LAT-$) {\it band decomposition} iff there are functions $\nu^j(k)$, $k\in \Om^*=\R^2/\LAT^*$, and an orthogonal decomposition $\cH=\oplus_j \cH_j$ into invariant subspaces $\cH_j$, s.t. $\s(L)=\cup_j \Ran \nu^j$ and the spectrum of $L$ on $\cH_j$ is $\Ran \nu^j$. 

  \begin{theorem}\label{thm:energyband}      
On the subspace orthogonal to $G_{ \chi},\ \chi\in H^1$,
 
 (i)  the hessian $\Lt\equiv \E''(\ut)$ 
 has has the $\LATt-$band decomposition;
 
  (ii) its two lowest bands are gapless and are of the form \eqref{gapless-branch1} - \eqref{gapless-branch2} 
  and the remaining bands have a gap  $\gs \e^2$;
  
    (iii) The lowest band is $>0$ for $k\notin \LAT^*_\tau$ (resp. its  infimum $<0$) if $(\kappa^2-\frac12)\g_{k} (\tau)>0$ for $k\notin \LAT^*_\tau$ and $\eta (\tau, \kappa)>0$ (resp. either $\inf_k \g_{k} (\tau)<0$ or  $\eta (\tau, \kappa)<0$).
 \end{theorem}
 This theorem is proven in Section \ref{sec:Lk-spec} and Appendix \ref{sec:prop-Lk-spec-pf}.

The gapless spectral branches \eqref{gapless-branch1} - \eqref{gapless-branch2} are due to breaking of the global 
translational symmetry (by $\ut$). 
 We explain this. 

\DETAILS{, unlike the  local  gauge  zero modes, the global ones, as well as the translational ones, are not square integrable. 
\DETAILS{since the solution $u_\om$ of \eqref{gle} breaks the gauge, translational and rotational 
invariance, the operator $L^\om$ has  the gauge,  translational and rotational  zero modes  
\begin{align} \label{Ggam'}  & G_{\gamma'} := (i\gamma'\Psi_\om, \nabla\gamma'),\\
 \label{transl-zero-mode'}
 &  S_{h'} = ( ({h'}\cdot\n_{A_\om} )\Psi_\om, 
(\curl  A_\om) Jh') ,\ h'\in \R^2,\\
\label{rot-zero-mode'}
    &   R_{ \varphi'} = { \varphi'}( J x \cdot\nabla\Psi_\om , - J A_\om + J x \cdot\nabla A_\om ),
\end{align}	 	 
 i.e. $L^\om  G_{ \g'} =0$,  $L_\om S_{h'} =0$ and  $L^\om  R_{ \varphi'}  =0$ (see Lemma \ref{lem:zero-modes} below). 
For instance differentiating $\cE'(\tau^{\rm gauge}_\g u_\om)=0$ w.r. to $\g$ at $\g=0$, we find $d_u \cE'(\tau^{\rm gauge}_\g u_\om)d_\g \tau^{\rm gauge}_\g u_\om=0$. Since $d_u \cE'(\tau^{\rm gauge}_\g u_\om)\big|_{\g=0}=d_u \cE'(u_\om)\big|_{\g=0}=L^\om$ and $d_\g \tau^{\rm gauge}_\g u_\om\big|_{\g=0}\g'=G_{ \g'} $, this gives $L^\om  G_{ \g'} =0$.

For  the   gauge  zero modes, $G_{\gamma'}$, we differentiate between the (local)   gauge  zero modes, $G_{\gamma'}, \g'\in H^1$,  and the global one, $G_{\g'=1}=(i \Psi _\om, 0)$.  
The local  gauge  zero modes are square integrable, while the global ones, as well as the translational and rotational ones, are not.  However,  the global gauge  and the  translational  zero modes}
  Since they are bounded, it shows the presence of the essential spectrum at $0$.  
 
 (The  rotational zero modes grow at infinity  and are not used in our analysis. However, they, as well as the  translational zero modes, would be important in its extensions to bounded and growing at infinity perturbations.) 

\DETAILS{In consequence, we should check 
the sign of $\inf_{v \in H^1_\perp, \|v\|_{L^2}=1}   \lan v, L^\om v \ran_{L^2}$, where $H^1_\perp$ is  the orthogonal complement of the (local) gauge symmetry zero modes of $L^\om$.
In our case, 
\DETAILS{this is
the sign of the 
infimum $\mu (\om, \kappa):= \inf_{v \in H^1_\perp}   \lan v, L^\om v \ran_{L^2}/ \|v\|_{L^2}^2$, 
on the space orthogonal to symmetry zero modes of $L^\om$.}
 $\inf_{v \in H^1_\perp, \|v\|_{L^2}=1}   \lan v, L^\om v \ran_{L^2}=0$ and we have to dig deeper into the spectral structure of $L^\om$. The reason that the spectrum of $L^\om$ has no gap lies in the spontaneous breaking of the translational symmetry by $u_\om$.}

Thus}

 As was mentioned above, the spontaneous breaking of the global gauge and translational symmetries by $\ut$ produces  the translational and global gauge  zero modes,  $S_{r'}$ and 
$G_{\g'=1}$, which are bounded but not square integrable. 
As it turns out,  the global gauge zero mode, $G_{\g'=1}$, leads to the zero eigenvalue for all $k$. So we concentrate on the translational zero modes.

 These modes are gauge $\LAT-$translationally invariant ($\LAT-$equivariant). Hence it is natural to look for almost generalized eigenfunctions of $\Lt$ in the form of the Bloch waves, $e^{ik\cdot x}S_{r'}, 
 k\in \Omt^* $. Then, since \[e^{-ik\cdot x}\Lt e^{ik\cdot x}=\Lt+O(|k|),\] for $|k|$ small, 
 we have $\Lt e^{ik\cdot x}S_{r'} =e^{ik\cdot x} O(|k|) S_{r'} $. 
 This suggests the presence of two gapless branches of the spectrum of  $\Lt$ on $\Hil_{}$, starting at $0$. 
%
 
  This phenomenon is well known in particle physics and goes under the name of the Goldstone theorem and the resulting gapless branch of the spectrum is called the Goldstone excitations and Goldstone particles. 

\paragraph{Sketch of the proof of Theorem \ref{thm:energyband}.}
 As was mentioned above, 
  the Abrikosov lattice solution $(\psit, \at)$ is gauge-periodic (with respect to the lattice $\LATt$) in the sense that
there exist (possibly multivalued) functions $g_s : \R^2 \to \R$, $s \in \LATt$, such that
\begin{equation}\label{gauge-per-resc}
\begin{cases}	\psit(x+s) = e^{i  g_s(x)}\psit(x),\\ \at(x+s)=\at (x) + \n g_s(x),
\end{cases}\end{equation}
for  $s \in \LATt$. We can rewrite these conditions (omitting the subindex $\tau$) as
 
\begin{equation}\label{gauge-per'}
T^{\rm trans}_s (\psi, \bar\psi, a) =T^{\rm gauge}_{g_s} (\psi, \bar\psi,  a),\end{equation}
 where  $s \in \LATt$ and $T^{\rm gauge}_\gamma $ and  $T^{\rm trans}_s$ are defined in \eqref{Tgauge} and \eqref{Ttrans}, respectively.
\DETAILS{ Indeed, {\bf if state  $(\Psi, A)$ satisfies \eqref{gauge-per'}, then all associated physical quantities are $\LAT-$periodic, i.e. $(\Psi, A)$ is an Abrikosov lattice. 
In the opposite direction, if $(\Psi, A)$ is an Abrikosov lattice, then $\curl A(x)$ is periodic w.r.to $\LAT$,  and therefore   $A(x + s) = A(x) +\n g_s(x)$,  for some functions $g_s(x)$. 
Next, we write $\Psi(x)=|\Psi(x)|e^{i \phi(x)}$. Since $|\Psi(x)|$ and $J(x)= |\Psi(x)|^2 (\n \phi(x)-  A(x))$ are  periodic w.r.to $\LAT$,}  we have that   $\n \phi(x + s) = \n \phi(x) +\n \tilde g_s(x)$, which implies that   $\phi(x + s) =\phi(x) + g_s(x)$, where $g_s(x)=\tilde g_s(x)+ c_s$, for some constants $c_s$.} 
Since $\LATt$ is a group, we see that  the family of functions $g_s$ has the important cocycle property
\begin{equation}\label{cocycle-cond}
    g_{s+t}(x) - g_s(x+t) - g_t(x)\in 2\pi\Z,\ \forall s\in \LATt.
\end{equation}
This can be seen by evaluating the effect of translation by $s + t$ in two different ways. We call $g_s(x)$ the {\it gauge exponent}. 

 The key idea of the proof of the first part of Theorem \ref{thm:stability} stems from the observation that since the  Abrikosov lattice solution $\ut = (\psit, \bar\psit, \at)$ 
 is gauge periodic (or equivariant) w.r.to  the lattice $\LATt$, 
 i.e. it satisfies \eqref{gauge-per'},  the linearized map $\Lt$ commutes with gauged (magnetic) translations,  
 \begin{align}\label{rhot'}\hrho_s = (\tilde T^{\rm gauge}_{g_s})^{-1} T^{\rm trans}_s,
  \quad  \forall s\in \LATt,\end{align}
 where  $T^{\rm trans}_s$ is defined in \eqref{Ttrans}, 
and $ \tilde T^{\rm gauge}_\gamma $ is given by 
\begin{align} \label{tildeTgauge} \tilde T^{\rm gauge}_\chi : (\xi, \eta, a) \mapsto (e^{i\chi}\xi, e^{-i\chi}\eta, \a ).\end{align}
Due to \eqref{cocycle-cond}, $\hrho_s$ gives a unitary representation of the group $\LATt $. 
 Therefore $\Lt$ is unitary equivalent to a fibre integral over the dual group, $\hat\LATt$, of the group of lattice translations,  $\LATt$, which, as was already mentioned, can be identified with a fundamental domain, $\Omt^*$, of the reciprocal lattice, \begin{align} \label{fiber-deco}\Lt\approx \int_{\Oms}^\oplus \Lk \hat{d k}\ \quad \mbox{acting on}\ \quad  \int_{\Oms}^\oplus \cH_{ k} \hat {d k},\end{align}
where $\hat {d k}$ is the usual Lebesgue measure on $\Oms$ normalized so that $\int_{\Oms} \hat {d k} = 1$, $\Lk$ is the restriction of $\Lt$ to $\cH_{ k},\ k\in \Oms,$ and $\cH_{ k}$  is the set of all functions, $v$, from $L^2(\R^2;  \C\times \C)$, 
which satisfy 
	\begin{equation}\label{bc'}\hrho_s v (x) =e^{ik\cdot s} v (x),\ \forall s\in\ \LATt. 
	\end{equation} 
The  inner product in  $\cH_{ k}$  is given by $  \lan v, v' \ran_{L^2} = \frac{1}{|\Omt|}\re\int_{\Omt} \bar{\xi}\xi' + \bar{\eta}\eta' + \bar{\alpha}\cdot \alpha' ,$ where   $v = (\xi, \eta,  \alpha),\  v' = (\xi', \eta', \alpha')$, 
and in 	  $\int_{\Oms}^\oplus \cH_{ k} d k$, by $\lan v , w \ran_{\cH }:=\frac{1}{|\Oms|} \int_{\Oms}\lan v_k, w_k\ran_{\cH_k} \hat{d k}$.     (The normalization used will be useful later on.)

The 
fiber decomposition \eqref{fiber-deco} reduces the analysis of the operator $\Lt$ to that of the operators $\Lk$, which have purely discrete spectrum and therefore much easier to study.  As  $k$ varies in $\Oms$, 
 the eigenvalues of $\Lk $ sweep the spectral bands of $\Lt $. 
This shows the band structure of the spectrum of the operator $\Lt$ stated in the item (i) of Theorem \ref{thm:energyband}. 

Since each operator $\Lk$  has purely discrete spectrum, we can apply to it the standard perturbation theory in $\e$ and, for small $|k|$, in $k$. After a somewhat lengthy analysis using the Feshbach-Schur map, we obtain the statements (ii) and (iii). $\Box$ 

\DETAILS{We now define the rescaled hessian to be $L^{\rm resc}_{\om'} : =\frac{1}{r^2} U_\si  L_\om {U_\si}^{-1}=\frac{b \Im\tau }{2\pi} U_\si  L_\om {U_\si}^{-1}$. The explicit expression for $L^{\rm resc}_{\om'}$ is given in Appendix \ref{sec:hess-expl}, the equation \eqref{Lresc-expl}.  The symmetry zero modes of $L^{\rm resc}_{\om'}$ are obtained by rescaling of the zero  modes, given in \eqref{Ggam}-\eqref{rot-zero-mode}, i.e. :
\begin{align} \label{Ggam-r}  & G_{\gamma'}^{\rm resc} := (i\gamma'\psio, \nabla\gamma'),\\
 \label{transl-zero-mode-r} &  
 S_{h'}^{\rm resc} = ( ({h'}\cdot\n_{a_{\om'}} )\ \psio, 
(\curl  a_{\om'}) Jh') ,\ h'\in \R^2,
\end{align}
 i.e. $\Lr^{\rm resc}  G_{ \g'}^{\rm resc} =0$ and  $\Lr^{\rm resc} S_{h'}^{\rm resc} =0$ (see Lemma \ref{lem:zero-modes} below)}

  


%
\DETAILS{and let $\g_k( \tau) =\g_{\chi_k}(  \tau) ,\   
 \chi_k(s) = e^{ik\cdot s},\   k\in \Omega^*_\tau$, 
where $\Omega^*_\tau$ is an elementary cell of the dual lattice $\LAT^*_\tau$.  
Finally, recalling that  $\LAT_\tau =\sqrt\frac{2\pi}{|\cT_\tau|}\LAT$, with  $\LAT\in [\tau]$,  we take  \begin{equation}\label{LATtau}\LAT_\tau:=\sqrt{\frac{2\pi}{\im\tau} }  (\Z+\tau\Z) \end{equation} (see Supplement I). 

 introduce 
$e^{ik\cdot s},\   k\in \Omega^*_\tau$, 
where   

6)   Here $\hat\LAT_\tau$ stands for the dual group of the group of lattice translations,  $\LAT_\tau =\sqrt\frac{2\pi}{|\cT|}\LAT$, 
 $\cT_\tau:= \R^2/\LAT_\tau$,  

$ \chi(s) = e^{i k\cdot s}$}
%





%
 \subsection{The key ideas of the proof of Theorem \ref{thm:stability}} 
 \label{sec:approach}

As was already mentioned above, the stability of the static solution $\ut$ 
 is decided by the 
 nature of the low energy the spectrum of  the (complexified) linearization operator, $\Lt$, for the map on the r.h.s. of \eqref{GES}, see \eqref{Lt}. 
 Namely, whether $ \Lt $ is non-negative or has some negative spectrum. Due to the symmetry breaking, it has always the eigenvalue $0$. Hence, if $ \Lt \ge 0$, one would like to know whether, in the former case, 
 $0$ is an isolated eigenvalue (we say the spectrum has a gap) or the continuum extends all the way to zero (the spectrum is gapless). In the latter case, the important point is played by the nature of continuous spectrum at $0$ (the dispersion relation of the gapless modes).

As Theorem \ref{thm:energyband} shows (see also the comments after it), due to the spontaneous breaking of the global gauge and translational symmetries by $\ut$, 
 the linearization operator  $\Lt$ has two gapless spectral branches, 
 %
  %
\DETAILS{ $$********$$
\paragraph{Discussion.}  $e^{ik\cdot x}\tilde{T}_i  $, for $|k|$ small, are almost zero modes of $K_k$. Indeed, $e^{-ik\cdot x}K e^{ik\cdot x}=K+O(|k|)$ and therefore $K e^{ik\cdot x}\tilde{T}_i = 
O(|k|) e^{ik\cdot x}\tilde{T}_i $. Hence there is a positive 
 branch of the spectrum of  $K_{}$ on $\mathcal{K}_{}$, starting at $0$, corresponding to translations of the lattice.
This branch  is described in Proposition \ref{prop:KkEVs}. 
By \eqref{tildeSj}, 
it originates from the subspace 
\begin{align}  
\{f_k w_{k}^0 &: f_k \in L^2(\Om^*, \C)\},\end{align}
where $w_{k }^0= w_{k \|}^0$ are described in Corollary \ref{cor:K0k-spec}.}
%
   \eqref{gapless-branch1} - \eqref{gapless-branch2}, with the remaining spectral branches are separated from $0$ by the gap  of the order $\e^2$. 
This result gives the {\it marginal} linearized (or energetic) stability of $\ut$ for all $\ta$ and $\kappa$ s.t. $\kappa^2 > \frac12$, 
$\g_k(\tau)>0,\ \forall k \notin \LATt^*$ (recall that $\g_{k}(\tau)=0,\ \forall k \in \LATt^*$) and $\eta(\tau, \kappa)>0$ and the instability, otherwise. 

If  $\ut$ turns out to be marginaly stable, then we would like to show that it is asymptotically stable. To this end, 
 we introduce the equivalence class $\mathcal{M}_\tau = \{ T^{\rm gauge}_\chi  \ut : \chi \in H^1 (\R^2, \R) \}$ of the Abrikosov $\cL-$lattice solution,  $\ut= (\psit, \bar\psit, \at)$. The tangent space, $T_{\ut}\cM$ at $\ut  \in \mathcal{M}_\tau$ is spanned by the gauge zero modes $G_{\chi}$, given in \eqref{gauge-modes}.
 We write a solution $u$ to the Gorkov-Eliashberg-Schmid equations \eqref{GESresc-c}, which is in a neighbourhood of $\mathcal{M}_\tau$ in the form
\begin{align}	\label{u-deco-gauge'} & u =T_{\g}^{\rm gauge} ( \ut +v), \quad \mbox{with}\,\  v\perp G_{\chi}\  \forall  \chi.  
\end{align}
We call $v$ the {\it fluctuation} of $ \ut$.
Now, we split the fluctuation $v$ as $v=v'+v''$, where $v'$ is the projection of $v$ onto the subspace of the  two lowest (gapless) spectral branches of $\Lt$, described above, and $v''$ is the orthogonal complement. For $v''$, we use the fact that, on  $v''$'s, $\Lt$ has a gap of oder $\e^2$. 
This allows us to use the  method of differential inequalities for the Lyapunov functionals. We define the functionals   \[\Lambda_{1}(v'')= \frac12 \lan v'', \Lt v''\ran\ \text{ and }\  \Lambda_2(v'') = \frac{1}{2}\lan \bar P x v'', \Lt\bar P x  v'' \ran_{L^2},\]  where $\bar P$ is the orthogonal projection to the orthogonal complement of the spectral subspace of the two gapless branches. By the choice of $v''$ and $\bar P$, these functionals are bounded below by Sobolev norms of the corresponding vectors. This and differential inequalities for $\Lambda_{1}(v'')$ and $\Lambda_2(v'')$ (which follow from the equation for $v$) allow us to estimate these norms. 
 (The resulting estimate of $v''$ will, of course, depend on some information about $v'$.)  

For $v'$, we use the evolution equation, which follows from the equation for $v$. In this equation, we pass to the spectral representation of the corresponding two spectral branches of $\Lt$ ($v' \leftrightarrow f$, with $f : \Om^* \ra \C^2$). The resulting equation for $f$ is of the form
\begin{align}	\label{f-eq'}
		\partial_t f = - A f + 
\cN_{v''} (f), 
	\end{align}
where $A$ is the operator of multiplication  on $L^p(\Om^*, \C^2)$, given, on vector-functions $f=(f_1, f_2) $, by \[(A f) (k) := (\nu_k^1 f_1 (k), \nu_k^2 f_2 (k)),\] $v'':= P'' v$ and $ \cN_h (f)=O(f^2)$ (as $f\ra 0$) is the nonlinearity.  

Clearly, the behaviour of the bands $\nu_k^1$ and $\nu_k^2$ plays a crucial role here.  $\nu_k^2$ is a positive function and, in the leading order at $k=0$, is a positive definite quadratic form. The behaviour of $\nu_k^1$ is determined by the functions $\g_k(\tau)$ and $\mu(\tau, \kappa)$:\\ 
$\nu_k^1$ is a positive function and, in the leading order at $k=0$, is a positive definite quadratic form iff $\g_k(\tau)>0, \forall k\notin \LATt^*,$ and $\mu(\tau, \kappa)>0$.  

In the latter case, equation \eqref{f-eq'} is roughly of the form of a nonlinear heat equation in two dimensions, with a quadratic nonlinearity.  To the authors' knowledge, the long-time behaviour of such equations for small initial data is not understood presently. However, under the condition \eqref{parity0}, the function $f$ is odd which allows us to eke out an extra decay, provided we can control derivatives of $f$ in $k$ (and assuming some information about $v''$!). Bootstrapping the estimates on $v'$ and $v''$, we arrive at the long-time estimates of $v$ and consequently, the stability result. 

The above discussion shows in particular the role of the signs of the functions $\g_k(\tau)$ and $\mu(\tau, \kappa)$ in the stability of vortex lattices.

\DETAILS{The above shows that to determine which Abrikosov lattice solutions are stable and which are not, we have to investigate the function $\g_k(\tau)$. To understand this function, we use its representation as the fast convergent series \eqref{gamk-series}.  {\bf Using Matlab, Daniel Sigal has shown that  $\g_k(\tau)>0, \forall k\in \Om^*/\{0\},$ for $\tau=e^{2\pi/3}$ and $\inf_k \g_k(\tau)< 0,$ for $|\tau-e^{2\pi/3}|\ge ??$. For $|\tau-e^{2\pi/3}|\le ??$, but $\tau\ne e^{2\pi/3}$, the function is rather flat and it is hard to determine whether $\inf_k \g_k(\tau)< 0,$ or not.}} 

\DETAILS{\paragraph{Remark 6.} It is more natural to work with the complex hessian, which, unlike real-linear $L^\om$, is complex-linear.  However, this requires another level of notation, which we would like to avoid at this stage, and some of the estimates involved become more complicated.  However, to develop the spectral theory, we do need  the complex hessian, which we introduce in Section \ref{sec:L-spec-est}. }

\medskip

Finally, we point our the relation between the gauge functions in \eqref{phik-eqs-LAT} and \eqref{gauge-per'}. To begin with, we mention that some important properties of $g_s$ entering \eqref{gauge-per'}: 
\begin{itemize}

\item[(a)]   If $(\psi, \bar\psi, a)$ satisfies \eqref{gauge-per'},
 with  $g_s(x)$, then $T^{gauge}_\chi (\psi, \bar\psi, a)$ satisfies  \eqref{gauge-per'} with  $g_s(x)$ 
replaced by $   g'_s(x) = g_s(x) + \chi (x + s) - \chi(x).$ 

\DETAILS{\item[(b)] 	
The functions  \begin{equation}\label{gs-spec-n} 
g_s(x)  = \frac{\pi n}{|\cT|} s \wedge x + c_s, \end{equation} 
  where, recall,  $\cT:=\C/\LAT$ 
  and $c_s$ are numbers satisfying 
$ c_{t+s} - c_s - c_t - \frac{\pi n}{|\cT|}  s \wedge t \in 2\pi\Z, $ 
satisfy  \eqref{cocycle-cond}.}

\item[(b)]   Every exponential  $g_s$ satisfying the cocycle condition 
\begin{equation}\label{cocycle-cond-LAT}
    g_{s+t}(x) - g_s(x+t) - g_t(x)\in 2\pi\Z,\ \forall s\in \LAT,
\end{equation}
  is gauge-equivalent to  
  \begin{equation}\label{gs-spec-LAT} 
g_s(x)  = \frac{\pi n}{|\Om|} s \wedge x + c_s, \end{equation} 
for some  $n\in \Z$,  where, recall,  
$\Om$ is a fundamental domain of the lattice $\LAT$  and $c_s$ are numbers satisfying 
  \begin{equation}\label{cs-cond-LAT} 
 c_{s+t} - c_s - c_t - \frac{\pi n}{|\Om|}  s \wedge t \in 2\pi\Z. \end{equation} 
 (The functions \eqref{gs-spec-LAT},  with $c_s$ satisfying \eqref{cs-cond-LAT}, obey  \eqref{cocycle-cond-LAT}.) Moreover, $c_s$ can be chosen to be even in $s$ (by replacing $c_s$ by $\frac12 (c_s+c_{-s})$, if necessary), so that $g_s(x)$ satisfies
\begin{equation}\label{gs-spec-even} g_{-s}(- x)  = g_s(x). \end{equation}
\item[(c)]  For $\LAT=\LATt$, we can choose 
$c_1=c_\tau=0$ and therefore (by solving \eqref{cs-cond-LAT}) 
\begin{align}\label{gs-spec} c_s=\frac{ 1}{2\im \tau}(\im s)\im(\bar \tau s)\ \text{ and }\ g_s(x):=
\frac12 s\cdot x^\perp - 
\frac{ 1}{2\im \tau}(\im s)\im(\bar \tau s).\end{align}\end{itemize} 

Indeed, the first 
statement is  straightforward.  
For  the second property, see e.g. \cite{Eil, Odeh, Takac, TS2}, though in these papers it is formulated differently. In the present formulation it was shown  by A. Weil and generalized in  \cite{Gun1}.

\begin{remark} \label{rem:cocycle} {\rm Relation \eqref{cocycle-cond} for Abrikosov lattices was used in \cite{ST2}, where it played an important role. This condition is well known in algebraic geometry and number theory (see e.g. \cite{Gun2}).}  
\end{remark} 

\subsection{Possible extensions} \label{sec:exten} 
The next step would be to extend the results to more general perturbations.
Firstly, one would like to remove the restrictive condition \eqref{parity0}. This condition simplifies the treatment of the gapless branch of the spectrum of $\Lt$ (the branch starting at $0$). 
\DETAILS{Indeed, since the solution $u_\om$ of \eqref{gle} breaks the translational invariance, the operator $L_\om$ has the translational zero mode 
\begin{align}\label{transl-zero-mode}
   S_{h'} = ( ({h'}\cdot\n_{A_\om} )\Psi_\om, 
(\curl  A_\om) Jh') ,\ h'\in \R^2,
\end{align}
i.e. $L_\om S_{h'} =0$ (see Subsection \ref{sec:zero-modes} and Supplement II). 
 Since $S_{h'}$ is only bounded and not $L^2$, we have that $0\in \s_{ess}(L_\om)$. Moreover,

(Since the solution $u_\om$ breaks also the rotational invariance the operator $L_\om$ has the  rotational zero mode, $R_{ \varphi'}$, 
i.e. $L_\om R_{ \varphi'}  =0$, - see Subsection \ref{sec:zero-modes} - but this mode is growing at infinity.)}
%

Though removing  condition \eqref{parity0} would be technically cumbersome, we expect this  would not change the result above.

Secondly, one would like to consider 
non-local perturbations,  say, 
 perturbations of the form $\tau_g u_\om +v -u_\om$, with $v\in L^\infty (\R^2, \C\times\R^2)$, where $G$ is the full symmetry group 
\begin{align}\label{G} G = H^2(\R^2;\R) \rtimes \R^2 \rtimes SO(2)\end{align} and 
  $\tau_g = \tau^{\rm gauge}_\gamma \tau^{\rm trans}_h \tau^{\rm rot}_\rho$ is  the action  of $ G$ on pairs $u=(\Psi, A)$.  Here $\tau^{\rm rot}_\rho : (\Psi (x), A (x)) \ra (\Psi (\rho^{-1} x), \rho^{-1} A ((\rho^{-1})^T x))$. (\eqref{G} is a semi-direct product, with elements $g = (\gamma, h, \rho) \in G$,  and the  composition law given by $g g' =(\g +\tau^{\rm trans}_h \tau^{\rm rot}_\rho\g' , {\rho'}^{-1} (\rho^{-1} h+h'), \rho\rho')$.) For such perturbations we would have to generalize the notion of asymptotic stability 
 by replacing $\tau^{\rm gauge}_\g $ by $\tau_g $.  
 Specifically, 

  \begin{definition}\label{def:stability-gen} We say that the Abrikosov lattice $u_\om$ is {\it asymptotically stable} under finite-energy perturbations if there is $\del>0$ s.t. for any initial condition $u_0$, whose $H^1$- distance to  the infinite-dimensional manifold   $\mathcal{M} = \{ \tau_g u_\om : g \in G \}$ is $\le \del$,
   the solution $u(t)$ of \eqref{GES}  satisfies $ \|\tau_{g(t)}^{-1} u(t)-  u_\om\|_{L^\infty} \ra 0$ as $t \ra \infty$, for some path,  $g(t)$, in $G$.  
 \end{definition}
 Next, one would like to prove the stability results for the time-depedent relativistic Ginzburg-Landau equation (see \cite{GST}). 

\subsection{Basis independent definitions} \label{sec:invar-not} 
 The definitions of $\g_k$ and $\phi_k$ above depend on the choice of 
  the basis in $\LAT_\tau$ and $\LAT_\tau^*$. The  bases independent definitions  
are given as follows.  

Let $\hat \LAT$ be  the dual group, of  the group of lattice translations,  $\LAT$, 
i.e. the group of characters, $\chi : \LAT \to U(1)$,  and let $\cT:= \R^2/\LAT$, its quotient.
 We define $\g_\chi(\LAT)$ 
  as
\begin{equation}\label{gamma-chi} 
\g_\chi(\LAT) := 2\lan|\phi_0^\LAT|^2|\phi_\chi^\LAT |^2\ran_{\cT} - |\lan (\phi_0^\LAT)^2\bar{\phi}_\chi^\LAT\bar{\phi}_{\chi^{-1}}^\LAT\ran_{\cT} | - \lan |\phi_0^\LAT|^4 \ran_{\cT}, 
 \end{equation}
where $ \lan f \ran_{\cT}:= \frac{1}{|\cT|} \int_{\cT} f$. Here the functions  $\phi_\chi^\LAT,\  \chi\in \hat \LAT,$  are unique solutions of the equations 
\begin{align}\label{phichi-eq-gen} 
(-\COVLAP{a^0}-1)\phi =0,\ \quad \phi (x+s) =e^{ig_s(x)} 
\chi(s)\phi (x),\ \quad \forall s\in \LAT,
	\end{align}
 normalized as $\lan|\phi_\chi^\LAT|^2\ran_{\cT} =1$. Here $a^0(x)$ and $g_s(x)$ are as in \eqref{phik-eqs-LAT}, but with $|\Om|$ replaced by $|\cT|$, 
  e.g.
 \begin{equation}\label{gs-spec-gen} g_s(x)  = \frac{\pi }{|\cT|} s \wedge x + c_s,\ \text{ with $c_s$ satisfying }\ c_{s+t} = c_s + c_t - \frac{\pi}{|\cT|}  s \wedge t \text{ mod }  2\pi\Z.\end{equation} 
 
Furthermore, the linearization $L^\LAT$ is unitary equivalent to a fiber integral over the dual group, $\hat\LAT$, of the group of lattice translations,  $\LAT$, or $\C/\LAT$,
\begin{align*}L^\LAT\approx \int_{\hat\LAT}^\oplus L^\LAT_{\k} \hat{d \k}\ \quad \mbox{acting on}\ \quad  \int_{\hat\LAT}^\oplus \cH_{ \k} \hat {d\k},\end{align*}
where $\hat {d\k}$ is the usual Lebesgue measure on $\hat\LAT $ normalized so that $\int_{\hat\LAT} \hat {d\k} = 1$, $L^\LAT_{\k}$ is the restriction of $L^\LAT$ to $\cH_{ \k},\ \k\in \hat\LAT,$ and $\cH_{ \k}$  is the set of all functions, $v$, from $L^2(\R^2;  \C\times \C)$, 
which are gauge-periodic,
	\begin{equation}\label{gauge-BF-gen}\hrho_s v (x) = \chi (s) v (x),\ \forall s\in\ \LAT, 
	\end{equation} 
with $\chi (s)v =(\chi(s)\xi,  \chi(s)\alpha)$, for $v =(\xi,  \alpha)$.   
The  inner product in  $\cH_{ \k}$  is given by $  \lan v, v' \ran_{L^2} = \frac{1}{|\cT |}\re\int_{\cT } \bar{\xi}\xi' + \bar{\alpha}\alpha' ,$ where   $v = (\xi,  \alpha),\  v' = (\xi', \alpha')$  and   $\cT :=\R^2/\LAT $, and in 	 
 $\int_{\hat\LAT}^\oplus \cH_{ \k} d\k$, by $\lan v , w \ran_{\cH }:=\frac{1}{|\hat\LAT|} \int_{\hat\LAT}\lan v_\k, w_\k\ran_{\cH_\k} \hat{d \k}$.     (The normalization used will be useful later on.)

As  $\chi$ varies in $\hat\LAT$, the eigenvalues of $L^\LAT_{\k} $ sweep the spectral bands of $L^\LAT$.

The dual group, $\hat \LAT$, can be identified with a fundamental cell, $\Om^*$, of the dual lattice  $\LAT^*$, and the torus   $\cT:= \R^2/\LAT$, with a fundamental cell, $\Om$, of $\LAT$. The first identification is given explicitly by  $\chi(s)\ra   \chi_k(s) = e^{ik\cdot s}\leftrightarrow k$. (See the review \cite{S} for more discussions.)   
 Identifying $\cT:= \R^2/\LAT$, with a fundamental cell, $\Om$, of $\LAT$  gives \eqref{gamk-LAT}.   
 
\subsection{Fully complex form of the GES equations} \label{sec:GES-fully-compl}
 For computations, it is convenient to pass from the real vector-fields $\al=(\al_1, \al_2)$ to the corresponding complex functions, $\al_c:=\al_1 - i \al_2$ and use the complex operators $\partial = \partial_{x_1} - i\partial_{x_2}, \bar\partial = \partial_{x_1} + i\partial_{x_2}$ and $\partial_{a_c} = \partial - i a_c$. (This definitions differ from the {\it standard}  ones by the factor $2$.)
As a result we obtain the fully complex GES equations 
 ({\it omitting the subscript} $c$): 
\begin{equation}\label{GESresc-cc}
\begin{cases}
\chi \partial_{t,\phi} \psi =     (\COVLAP{a} + \lam)\psi - \kappa^2 |\psi|^2\psi, \\
\chi \overline{\partial_{t,\phi} \psi} =   (\overline{\COVLAP{a}} + \lam)\bar\psi - \kappa^2 |\psi|^2\bar\psi, \\ 
 \sigma \partial_{t,\phi} a   =  \frac12 \p\bar\p a -  \frac12 \p^2 \bar a + \frac12  i (\overline{\partial_{a}^*\psi})\psi +\frac12  i (\partial_{a }\psi)\bar{\psi}, \\ 
 \sigma \partial_{t,\phi} \bar a   =   \frac12 \p\bar\p \bar a - \frac12  \bar\p^2 a - \frac12  i ({\partial_{a }^*\psi})\bar\psi - \frac12 i \overline{(\partial_{a }\psi)} \psi.
\end{cases}\end{equation}
We should also remember that $\COVLAP{a}=-\frac12 (\partial_{a}\partial_{a}^*+ \partial_{a}^*\partial_{a}).$ \eqref{GESresc-cc} follow from \eqref{GESresc-c} and the equations 
\begin{align}\label{compl1} 
 &(\curl^*\curl \al)_c =- \frac12 \p\bar\p\al_c +\frac12 \p^2 \bar \al_c,\ \divv \vec \al_c = \re (\bar \p \al_c),\\   
\label{compl2}  &\Im(\bar{\xi}\COVGRAD{ a}\eta)_c = \frac{1}{2 i}[(\overline{\partial_{a_c }^*\eta})\xi + (\partial_{a_c}\eta)\bar{\xi}].
\end{align}

 The fully complexified GES equations, \eqref{GESresc-cc}, 
  are invariant under 
 the rigid motions (including the reflections) and the gauge transformations, which we write out explicitly,
\begin{align}\label{gauge-transf-resc-cc}
&  \psi,\ 
a,\  \phi 
 \mapsto e^{i\chi}\psi,\ 
 a  + \p\chi,\  \phi  - \p_t\chi .
\end{align}
Here $\chi$ is a real differentiable function on the space-time. 
Since the vortex lattice solution $(\psit, \bar{\psit}, \at, \bar\at)$ breaks these symmetries, this leads to the 
 gauge and gauged translational modes 
 \begin{equation}\label{compl-gauge-mode-transf} 
 \vec G_\chi^\# = (i\chi\psit, -i\chi\bar{\psit}, \partial\chi, \bar{\partial}\chi),
    \end{equation}
\begin{equation} \label{Sj-transf} 
   S_1^\# = (\partial_{\at}\psit, -\overline{\partial^*_{a_{\tau}}\psit}, 0,  2 i b_{\tau}), \
   S_2^\# = (\partial^*_{\at}\psit, -\overline{\partial_{a_{\tau}}\psit},  2 i b_{\tau},  0), 
\end{equation}
where, as before, $b_{\tau}:=\curl  \at $ (considering $\at$ as a real vector field), and the rotational modes which we do not display here.
Clearly, $G_\chi^\#$ and $S_j^\#$ satisfy $T^{\rm transl}_s G_\chi^\# = \tilde T^{\rm gauge}_{g_s} G_\chi^\#$ and $T^{\rm transl}_s S_j^\# = \tilde T^{\rm gauge}_{g_s} S_j^\#\  \forall s\in \LAT.$ 

 The gauged translational modes are obtained as complex combinations of $(\partial_{j}\psit,$ $ \partial_{j}\bar\psit,$ $ \partial_{j}\at,  \partial_{j}\bar{a}_{\tau})- G_{a_{\tau j}}^\#,\  j=1, 2.$ \eqref{compl-gauge-mode-transf} and \eqref{Sj-transf} can be also obtained from \eqref{gauge-modes} and \eqref{transl-modes} by using the transformation $(\xi, \bar\xi, \al) \longrightarrow  (\xi, \bar\xi, \al^\C, \bar \al^\C).$ 

By differentiating the negative of the r.h.s. of the fully complexified GES equations, \eqref{GESresc-cc}, w.r.to $\psi, \bar \psi, a, \bar a$ and using that $\COVLAP{a}=-\frac12 (\partial_{a}\partial_{a}^*+ \partial_{a}^*\partial_{a})$, we obtain the (transformed) complex hessian $\Lt^\#$ (see \eqref{Lt'} below). This hessian and its `shift', rather than $\Lt$, are investigated  
in the most technically complicated Appendix \ref{sec:prop-Lk-spec-pf}. 


\subsection{Organization of the paper} \label{sec:organ}
 In Section \ref{sec:phik-etc}, we prove the existence and uniqueness of solutions of the equation \eqref{phik-eqs-LAT} and prove Proposition \ref{prop:gamma}. Results of this section are used in Appendix \ref{sec:gammaq-series} in order to prove  Theorem \ref{thm:gammak}. In Section \ref{sec:hess-spec}, we study the hessian of the energy functional \eqref{gl-en}, which is the same as the linearization of the map on the r.h.s. of the  Gorkov-Eliashberg-Schmid equations \eqref{GES}. The main result of this section is used in Section \ref{sec:Pf-StabThm} to prove main Theorem \ref{thm:stability}. Various technical computations are carried out in the appendices. 
 
 \DETAILS{We prove 
 Theorem \ref{thm:energyband} 
 in Section \ref{sec:proof-thmgamma} and Appendix \ref{sec:gammaq-comp}.  Theorem \ref{thm:stability} is proven in Section \ref{sec:Pf-StabThm}, with technical results proven in Section \ref{sec:L-spec-est} and Appendices \ref{sec:matrF-comp}-\ref{sec:hess-expl}. 
We prove Theorem \ref{thm:gammak} 
 in Section \ref{sec:proof-thm:energyband} and Appendix \ref{sec:gammaq-comp}.  Numerical investigation of functions $\g (\tau)$ and $\g_k (\tau)$ quoted in Subsections \ref{subsec:backgr} and \ref{sec:main-res} is done in Appendix \ref{sec:gammak-numer}.} 
For the reader's convenience some known facts about theta functions and the Feshbach-Schur perturbation theory are presented in Supplements I and II.

\paragraph{Notation.} In this paper we use two types of  lattices, the original one, \eqref{LATom}, and normalized one, \eqref{LATtau}. We denote by
 $\LAT_\om^*$ and $\LAT_\tau^*$, the corresponding dual lattices and by $\Omega_\om$, $\Omega^*_\om$,  $\Omega_\tau$ and $\Omega^*_\tau$, elementary cells of the corresponding lattices.   
 
 In estimates of functions and operators, we use the notation $O(1)>0$ and $o(1)>0$ to stand for some $C=O(1), C>0$ and $c=o(1), c>0$, respectively.
 \bigskip

\noindent \textbf{Acknowledgement.} 
IMS is grateful to  
Dmitri Chouchkov,  Gian Michele Graf, Lev Kapitanski, Peter Sarnak,  and Tom Spencer, and especially 
 Li Chen,  J\"urg Fr\"ohlich,  Stephen Gustafson, and Yuri Ovchinnikov, for useful discussions, and to the anonymous referee, for many constructive remarks which led to improvement of the paper.  J\"urg Fr\"ohlich  and  the anonymous referee emphasized that the gapless branch of the spectrum of the hessian are the Goldstone excitations.
%

\section{Functions  $\phi_k(x), $ and $\g_k ( \tau)$}   \label{sec:phik-etc} 
 We begin with the functions $\phi_k(x)$. Recall that these functions solve \eqref{phik-eqs-LAT}, with $\LAT=\LAT _\tau$, used in the definition \eqref{gamk-tau} of $\g_k(\tau)=\g_k(\LATt)$.
 For $\LAT=\LAT _\tau$, Eq  \eqref{phik-eqs-LAT} becomes
\begin{align}\label{phik-eqs-tau} (-\COVLAP{a^0}-1)\phi =0,\ \quad \phi (x+s) =e^{i g_s(x)} e^{i k\cdot s}\phi (x),\ \quad \forall s\in \LAT_\tau,
	\end{align}
normalized as $\lan|\phi_k|^2\ran_{\LATt} =1$.  Here 
    \begin{equation}\label{gs-spec-tau} a^0(x):= \frac12  x^\perp\   \text{ and }\  g_s(x)  = \frac{1}{2}s \wedge x + c_s, \end{equation} 
 where $x^\perp:= (- x_2,  x_1)$ and  $c_s$ are numbers satisfying 
  \begin{equation}\label{cs-cond-tau} 
 c_{s+t} - c_s - c_t + \frac{1}{2} s \wedge t \in 2\pi\Z. \end{equation} 
  
     We consider $\phi_k(x)$, for 
      $\tau$ in the entire  Poincar\'e half-plane $\bH$, rather than just for the fundamental domain $\Pi^+/SL(2, \Z)$. 
  Note that an interesting formula relating the function $\phi_k(x)$ to $\phi_0 (x)$ is proven in in Appendix \ref{sec:prop-Lk-spec-pf}, see \eqref{phik-expr}.
 
  An explicit form of the functions $\phi_k(x),\ k\in \C$, is described in the following
	 \begin{proposition} \label{prop:phik} The solution $\phi_k(x)$ of the problem \eqref{phik-eqs-tau} - \eqref{gs-spec-tau} is unique (up to a constant factor).	 Moreover, the functions  $\phi_k,\ k\in \C,$ 
are of the form
\begin{align}\label{phik-thetaq} 
\phi_k(x)= 
c' e^{\frac{\pi}{2 \im\tau}(z^2 -|z|^2)} \theta_{q}(z, \tau), \quad 
 x_1+i x_2=\sqrt{\frac{2\pi}{\im\tau}}    z,\ 
 k=- \sqrt{\frac{2\pi}{\im \tau}}   i q,   
 \end{align}
 where  $c'$ is fixed by requiting that 	 $\lan |\phi_k|^2 \ran_{\Om_\tau} = 1$,  and 
 $\theta_q$ 
are entire functions  
given by the series
	 \begin{align}\label{thetaq}\theta_{q}(z, \tau):= 
e^{ -2\pi i a z +i c_1 z} \sum_{m=-\infty}^{\infty} e^{2\pi i  ( \frac12 m^2\tau +q m + m z)+ic_\tau'm},	\end{align}
 where $a, b$ are real numbers defied by $q=-a\tau +b$ and  $c_\tau':=c_1 \tau - c_\tau$, with $c_s$, the constants entering \eqref{gs-spec-tau}. 
\end{proposition}
 Above and in the rest of this section, $b$ stands for a {\it component} of $q$, and  {\it not the average magnetic flux per cell}, which is not used here.   We begin with the following
\begin{lemma} \label{lemma:phik} The functions $\phi_k (x),\ k\in \C,$ solve the problem 
\eqref{phik-eqs-tau}  if and only if they	are  of the form \eqref{phik-thetaq}
 where  $\theta_q (z, \tau)$ are entire functions satisfying the periodicity relations  
 \begin{align}\label{thetaq-per-gen}	&\theta_q (z+s, \tau) = e_q(z , s, \tau) \theta_q (z, \tau),\ \forall s\in \LAT.	\end{align}
where $e_q (z , s, \tau) := e^{-\frac{2\pi i}{\im\tau} (  (\im s) z  -\im (\bar s  q)+\frac12  (\im s) s)  +i c_s} $ and  $c_s$ are the constants which enter \eqref{gs-spec-tau}. 
  \end{lemma}
%
\begin{proof}	  Standard methods show that  the operator $-\COVLAP{a^0}$ on $L^2(\Om;\C)$ with the periodicity conditions in \eqref{phik-eqs-tau} 
is positive self-adjoint with discrete spectrum. To find its eigenvalues, we define the harmonic oscillator annihilation and creation operators, $c$ and $c^*$, with  
	\begin{equation}	\label{annih-creat-ops}
		c = \partial^*_{a^0} 
\quad \text{ and } \quad c^* :=\partial_{a^0}, \text{ where } \partial_{a}:=\p_{a 1} -i \p_{a 2}=\partial - i  a_c.	\end{equation}
Here, recall, $a_c:=a_1-i  a_2$, the complexification of $a $  and we use the definitions differing from the {\it standard}  ones by the factor $2$: $\p:= p_1-i \p_2 \equiv  \p_{x_1}-i \p_{x_2}$.  For  $a^0(x):= \frac12  (x_2, - x_1)$, we have  $a^0_c=\frac12  (x_2 + i  x_1)=\frac12 i \bar x^c $, with $x^c: = x_1+i x_2$. 
  These operators satisfy the relations 
	\begin{align}	\label{cc*-prop} [c, c^*] = 2,\ -\COVLAP{a^0} - 1 = c^*c. 
	\end{align}
		The representation $-\COVLAP{a^0} - 1 = c^*c$ implies that $\NULL (-\COVLAP{a^0} - 1) = \NULL c$ and so we study the latter.  
	Hence $\phi_k$ satisfies the equation
	\begin{equation}	\label{cphik=0}
		c\phi_k = 0.
	\end{equation}
		The relations \eqref{phik-thetaq} and  \eqref{cphik=0}	imply  the Cauchy-Riemann equation $\bar\p\theta_q (z, \tau)= 0$, i.e. $\theta_q (z, \tau)$ are entire functions.

Next,  it is straightforward to verify the periodicity relation in \eqref{phik-eqs-tau}  implies that the functions   $\theta_q(z, \tau)$ defined in \eqref{phik-thetaq},  satisfy the periodicity relations \eqref{thetaq-per-gen}.	 
 Indeed, by \eqref{phik-thetaq} and \eqref{phik-eqs-tau}, we have 
 \[\theta_q (z+s, \tau) =e^{-\frac{\pi}{2 \im\tau}((z+s)^2 -|z+s|^2)} \phi_k (x+
  s)\]\[=e^{\frac{\pi}{\im\tau}\al_s+i c_s}e^{-\frac{\pi}{2 \im\tau}(z^2 -|z|^2)} \phi_k  (x),\] 
  where  $c_s$ are the constants which enter \eqref{gs-spec-tau} and $\a_s:= -z s +\re(\bar s z)-\frac12 s^2+\frac12|s|^2-i 
 s\cdot x^\perp+ \frac{\im\tau}{\pi} s\cdot  k$, with $x^\perp:= (-x_2, x_1)$. Using the relations  $s\cdot x^\perp=- \Im (\bar s z)$ and $k\cdot s=\Re (\bar k s)=\frac{\pi}{\im\tau}\im (\bar s q )$, 
we find furthermore, 
\[\al_s=  -z s +\re(\bar s z) - i (\im s) s+ i\im (\bar s z+2\bar s q)=-2i [  (\im s) z  +\frac12  (\im s) s-\im (\bar s q)],\] which, together with the previous relation, gives \eqref{thetaq-per-gen}. 
\end{proof}
\begin{proof}[Proof of Proposition \ref{prop:phik}] Eq   \eqref{thetaq-per-gen} implies that $\theta_q(z, \tau)$   satisfy the periodicity relations 	
	\begin{align}\label{thetaq-per1}	&\theta_q(z + 1, \tau) = e^{- 2\pi i a +i c_1 } \theta_q(z, \tau), \\
	&\theta_q(z + \tau, \tau) = e^{- 2\pi i (z+b+\frac12 \tau) + i c_\tau} \theta_q(z, \tau), \label{thetaq-per2}	\end{align}
where 
$a, b$ are real numbers defied by $q=-a\tau +b$. Consequently, 
 a standard argument (see 
 Lemma I.2 of  Supplement I) 
then shows	 that  $\theta_q(z, \tau)$ is given by 	 
\eqref{thetaq}.  \end{proof}


\begin{remark} \label{rem:theta} 
 a) The family of functions   $\theta_q(z, \tau),\ q\in   \Omega'_\tau:=\frac{\im\tau}{2\pi}  i \Omega_\tau^*$,  
 $z\in \C,\ \im\tau >0,$ defined  in \eqref{thetaq}, are the   theta functions with finite characteristics (see \cite{Mum}), appearing in number theory, while 
$\theta_{q=0}= \theta$ is the standard theta function.  
Unlike the number theory, where $a, b\in \ell^{-1}\Z$, for some positive integer $\ell$, in our case $q\in \Omega'_\tau$, 
which is a rescaled and rotated fundamental cell the dual lattice $\cL_\tau^*$. 

  One can define the  theta functions  (with finite characteristics),  $\theta_q^\LAT(z),\ q\in  \C,$ without reference to a specific representation of the underlying lattice, as the entire functions satisfying the quasi-periodicity condition \eqref{thetaq-per-gen}.


b)  
  In the terminology of Sect 13.19, eqs 10-13 of \cite{Erdel}, our theta function $\theta (z, \tau)$ is called  $\theta_{3}(z, \tau)$. The choice of the original  theta function determines the location of zeros of $\phi_k (z)$: The zeros of   $\theta_{3}(z, \tau)$ are located at the points of $\Z +\tau \Z+\frac12 + \frac12 \tau$, while the zeros of   $\theta_{1}(z, \tau)$ (in the terminology of \cite{Erdel}) are located at the points of $\Z +\tau \Z$.  To compare, $\theta_{1}(z, \tau)$ is defined as 
$\theta_{1}(z, \tau):= 
\sum_{m=-\infty}^{\infty} e^{\pi i   m} e^{\pi i (m-\frac12 )^2\tau}  e^{\pi i (2m -1)z} .$

c) Proposition \ref{prop:phik} implies that the products $|\phi_0|^2|\phi_k|^2, \phi_0^2\bar{\phi}_k\bar{\phi}_{-k}$ 
are periodic on $\Omega_\tau$.
%
%
\end{remark} 

We display the dependence of  the function  $\phi_k(x)$ on the lattice $\LAT$: $\phi_k(x) = \phi_{ k}^{\LAT}(x)$. Since,  as can be easily verified,  the function  $\psi_k(x) = \phi_{g k}^{g\LAT}(g x),\  g\in O(2),$  satisfies  \eqref{phik-eqs-tau}, the uniqueness for  \eqref{phik-eqs-tau} gives 
\begin{align}\label{phik-rot} \phi_{g k}^\LAT(g x)= \phi^{g^{-1}\LAT}_{k}(x),\ \text{ for any }\  g\in O(2). \end{align}	
 Since $g: x\ra -x$ and (assuming) $g: x\ra \s x$ , where $\s (x_1, x_2)= (-x_1, x_2)$ (or $\s x= - \bar x$, in the complex form),  leave the lattice $\LAT$ invariant,  the equation \eqref{phik-rot} implies
\begin{align}\label{phik-parity} \phi_k (-x)= \phi_{-k} (x),\ \text{ and }\ \phi_k (\s x)= \phi_{\s k} (x). 
\end{align}
  
\paragraph{Symmetries of $\g_k( \tau)$.} \label{sec:gamk}

The function $\g_{ k}( \tau)$ defined in \eqref{gamk-tau} 
 has the properties  \begin{align}\label{gamq-propert'} & \g_{k+s^*}( \tau)= \g_{ k}( \tau),\  \forall s^*\in \LAT^*, \quad  \g_{\bar k}(- \bar\tau)= \g_{k}( \tau), \quad
\g_{-k}( \tau) = \g_{k}( \tau). \end{align}
 The first property was already stated in \eqref{phik-per-k}, the second one follows from the  definition \eqref{gamk-tau} of $\g_{ k}( \tau)$ and  
  the second equation in \eqref{phik-parity} 
  and the last relation in  \eqref{gamq-propert'}, from the first equation in \eqref{phik-parity}.   
 
 These relation also follow the explicit representation \eqref{gamk-series}.   To show the second relation in  \eqref{gamq-propert'}, we notice that  the transformation $\tau\ra - \bar\tau,\ k\ra \bar k$  is equivalent to  mapping  $(m, n)\ra (m, - n)$ 
 and taking the complex conjugate of $\g_k( \tau)$. Since the sum in   \eqref{gamk-series} is invariant under the latter mapping and since $\g_k( \tau)$ is real, the r.h.s. of  \eqref{gamk-series} is invariant under the transformation $\tau\ra - \bar\tau,\ k\ra \bar k$.  
  To prove the third relation in \eqref{gamq-propert'} we observe that the r.h.s. of \eqref{gamk-series} is invariant under the transformation $k\ra - k,\ t\ra - t$.


%

\begin{proposition}\label{thm:gamk-cp2} The points  $k\in  \frac12 \cL_\tau^*$ 
   are critical points of the function $\g_{k}( \tau)$ in $k$. 
   \end{proposition}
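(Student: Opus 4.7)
The plan is to prove this purely from the two symmetries of $\g_k(\tau)$ recorded in \eqref{gamq-propert3}, namely periodicity in $k$ with respect to the dual lattice and invariance under $k \mapsto -k$. No use of the explicit series representation \eqref{gamq-series} should be needed, though one could alternatively check vanishing of each term's gradient directly from it.

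First I would fix $\tau$ and view $f(k) := \g_k(\tau)$ as a smooth function on $\R^2 \cong \C$. From the first two identities in \eqref{gamq-propert3}, $f$ is $\LAT_\tau^*$-periodic, i.e.\ $f(k+\kappa) = f(k)$ for every $\kappa \in \LAT_\tau^*$. Differentiating gives $\nabla f(k+\kappa) = \nabla f(k)$ for all $\kappa \in \LAT_\tau^*$. From the third identity in \eqref{gamq-propert3}, $f(-k) = f(k)$; differentiating yields $\nabla f(-k) = -\nabla f(k)$.

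Now I would evaluate at $k_0 \in \tfrac12 \LAT_\tau^*$. Writing $2k_0 \in \LAT_\tau^*$ and combining the two gradient identities,
\begin{equation*}
	\nabla f(k_0) \;=\; -\nabla f(-k_0) \;=\; -\nabla f(-k_0 + 2k_0) \;=\; -\nabla f(k_0),
\end{equation*}
so $\nabla f(k_0) = 0$. Hence every $k_0 \in \tfrac12 \LAT_\tau^*$ is a critical point of $\g_k(\tau)$ in $k$, which is the claim of Proposition \ref{thm:gamk-cp2}.

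There is no real obstacle here: the result is a direct consequence of combining a discrete translation symmetry with a reflection symmetry, a standard mechanism that forces stationarity at the fixed points of the induced $\Z_2$-action on the torus $\R^2/\LAT_\tau^*$, whose four fixed points are precisely the image of $\tfrac12 \LAT_\tau^*$. The only thing to verify carefully is that the needed symmetries in \eqref{gamq-propert3} have already been established in full generality (not just on $\Pi^+/SL(2,\Z)$), which is indeed the setting adopted in Subsection \ref{sec:gamk} where $\g_k(\tau)$ is viewed as a function on all of $\Pi^+$.
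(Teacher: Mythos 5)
Your proof is correct and is essentially the paper's own argument: the paper combines the dual-lattice periodicity and the $k\mapsto -k$ symmetry from \eqref{gamq-propert3} into the single relation $\g_{t-k}(\tau)=\g_k(\tau)$ and differentiates at the fixed points $k\in\frac12\cL_\tau^*$ of $k\mapsto t-k$, which is the same computation you perform by chaining the two gradient identities.
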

\begin{proof} 
 We will use that, due to \eqref{gamq-propert'}, $\g_{t-k}( \tau)= \g_{ k}( \tau),$ for any $t\in  \cL_\tau^*$.  Differentiating this relation  w.r. to $\re k$ and  $\im k$ at  $k\in  \frac12  \cL_\tau^*$  and using  that the 
  points  $k\in  \frac12  \cL_\tau^*$ are fixed points under the maps $k \ra t-k$, where $t\in  \cL_\tau^*$, we find $\p_{\re k} \g_{k}( \tau)=0$ and $\p_{\im k} \g_{k}( \tau)=0$  for $k\in  \frac12  \cL_\tau^*$.\end{proof}

 %
\paragraph{Proof of Proposition  \ref{prop:gamma}.} \label{sec:gamk}

In this proof we omit the subindex 
$\del$ in $\g_\del(\tau)$. The first two properties follow directly from the corresponding properties of $\g_k (\tau)$. We summarize these properties as
 \begin{align}\label{gam-propert} &\g( \tau+1)= \g( \tau),\  \g( -\tau^{-1})= \g( \tau),\ \g (- \bar\tau)= \g( \tau). \end{align}

  To prove the third statement, we will use that the points  $\tau=e^{i\pi/2}$ and $\tau=e^{i\pi/3}$ are fixed points under the maps $\tau \ra - \bar\tau$,  $\tau \ra -\tau^{-1}$ and  $\tau \ra 1 - \bar\tau$,  $\tau \ra 1 -\tau^{-1}$, respectively. By the first and third relations in \eqref{gam-propert},    we have that $\g (n- \bar\tau)= \g ( \tau)$, for any integer $n$. Remembering the definition  $\tau =\tau_1 +i \tau_2$, differentiating the relation $\g (n- \bar\tau)= \g ( \tau)$ w.r. to $\tau_1$,  and using  that the points  $\tau=e^{i\pi/2}$ and $\tau=e^{i\pi/3}$ are fixed points under the maps $\tau \ra - \bar\tau$ and  $\tau \ra 1 - \bar\tau$, respectively, we find $\p_{\tau_1}\g ( \tau)=0$  for $\tau=e^{i\pi/2}$ ($n=0$) and  for $\tau=e^{i\pi/3}$ ($n=1$).

Next, we find the derivatives w.r. to $\tau_2$. We consider the function $\g (\tau)$ as a function of two  real variables, $\g (\tau_1, \tau_2)$. Then the relation   $\g (\tau)= \g (n-\tau^{-1})$, where $n$ is an integer, which follows from   the first two relations in \eqref{gam-propert},      can be rewritten as $\g (\tau_1, \tau_2)= \g (n-\frac{\tau_1}{|\tau|^2}, \frac{\tau_2}{|\tau|^2})$. Differentiating the latter relation w.r. to $\tau_2$, we find
\begin{align}\label{gamq-der-tau2}(\p_{\tau_2}\g )( \tau_1, \tau_2)=2\frac{\tau_1\tau_2}{|\tau|^4}(\p_{\tau_1}\g)(n -\frac{\tau_1}{|\tau|^2}, \frac{\tau_2}{|\tau|^2})+2\frac{\tau_1^2-\tau_2^2}{|\tau|^4} (\p_{\tau_2}\g)(n -\frac{\tau_1}{|\tau|^2}, \frac{\tau_2}{|\tau|^2}).\end{align} Since  $\p_{\tau_1}\g ( \tau)=0$  for $\tau=e^{i\pi/2}$ and  $\tau=e^{i\pi/3}$ and since the points  $\tau=e^{i\pi/2}$ and $\tau=e^{i\pi/3}$ are fixed points under the maps  $\tau \ra -\tau^{-1}$ and   $\tau \ra 1 -\tau^{-1}$, respectively, this gives  $(\p_{\tau_2}\g)( 0, 1)=0$  for $\tau=e^{i\pi/2}$ ($n=0$) and for $\tau=e^{i\pi/3}$ ($n=1$).     \qquad $\Box$
%

 

\section{Hessian and its spectrum} 
\label{sec:hess-spec} 

In 
this section, we study the spectrum of the hessian $\Lt$ defined in \eqref{Lt}. We fix the subindex $\tau$ and omit it at $\Lt, \LAT_\tau, \Om_\tau, \Om_\tau^*$ and the subidex $\Om$ at $\lan \cdot, \cdot \ran_\Om$, i.e. we write $L, \LAT, \Om, \Om^*$ and$\lan \cdot, \cdot \ran$ for $\Lt, \LAT_\tau, \Om_\tau, \Om_\tau^*$ and $\lan \cdot, \cdot \ran_\Om$.




\subsection{Map $J(u)$}\label{sec:J}
Let $u=(\psi, \bar\psi, a)$  ({\it recall that the scalar potential $\phi$ is determined by $u$ and is not displayed}). 
Recall that $J(u)$ denotes the map on the r.h.s. of  the Gorkov-Eliashberg-Schmid equations \eqref{GESresc-c}, let \[\p_{t, \phi}  v=((\p_{t}+i \phi)\xi, (\p_{t}-i \phi)\eta, \p_{t}\a+\n\phi),\]
 for $v=(\xi, \eta, \a)$, and assume for simplicity $\chi=1$ and $\s=\one$, for the parameters on the l.h.s. of \eqref{GESresc-c}, so that the latter equation can be written as
 \begin{align}\label{u-eq}
 &\p_{t, \phi} u=J (u).
\end{align}
(Recall that $J (u)$ is the gradient of the Ginzburg-Landau energy \eqref{gl-en-c} as explained after \eqref{dE-barpsi}.) The above symmetries follow from the covariance of this map under the corresponding transformations. 
For instance, we have
 \begin{equation}  \label{J-gauge-transl-inv}
   J( T^{\rm gauge}_\gamma u) =  \tilde T^{\rm gauge}_\gamma J(u), \quad   J( T^{\rm trans}_s u) =  T^{\rm trans}_s J(u),\end{equation}
for every $\gamma$ and $s$, where  the gauge transformations of the rescaled and extended unknowns (written for the time-independent fields), $T^{\rm gauge}_\gamma$ and $ \tilde T^{\rm gauge}_\gamma $, and the translation transformations, $T^{\rm trans}_s$, are defined in \eqref{Tgauge}, \eqref{tildeTgauge} and \eqref{Ttrans}, respectively.
\DETAILS{$T^{\rm gauge}_\gamma$ is the gauge transformations of the rescaled and extended unknowns (written for the time-independent fields),
\begin{align}\label{gauge-transf}
 T^{\rm gauge}_\gamma :\ &  (\xi( x), \eta( x), \a( x)) 
  \mapsto (e^{i\gamma(x)}\xi( x), e^{- i\gamma(t,x)}\eta( x),  \a( x) + \nabla\gamma( x)).
\end{align}
and $ \tilde T^{\rm gauge}_\gamma $ is given by 
\begin{align} \label{tildeT-gauge} \tilde T^{\rm gauge}_\gamma : (\xi, \eta, a) \mapsto (e^{i\gamma}\xi, e^{-i\gamma}\eta, \a ),\end{align}
and  $T^{\rm trans}_s$ denotes translation by $s$,
 \begin{equation}\label{Ttrans-v}
    T^{\rm trans}_s : (\xi( x), \eta( x), \al ( x))  \mapsto (\xi( x+s), \eta(x+s),  \al (x+s)).
\end{equation}}
As a demonstration, we show the gauge invariance in the following lemma needed later on:
 \begin{lemma}\label{lem:u-eq-moving-frame}
Let  $\g$ depend on $t$, and $\dot\g=\p_t \g$. If $u$ solves \eqref{u-eq} then $\tilde u=(T_{\g}^{\rm gauge})^{-1} u$ satisfies the equation
 \begin{align}\label{Tgauge-u-eq}
 &\p_{t, \phi+\dot\g}  \tilde u=J (\tilde u).
\end{align}
 \end{lemma}
\begin{proof} We write the equations  \eqref{GESresc-c} in the form
  $ \p_{t\phi }u = J(u).$ 
 We  use \eqref{J-gauge-transl-inv},  the definition $\tilde u=T_{g}^{-1} u$, introduced earlier, and
  \begin{align}  \label{dtTgaugeu}
 &  \p_{t,\phi} T^{\rm gauge}_\gamma u  =  \tilde T^{\rm gauge}_\gamma \p_{t,\phi+ \dot{\gamma}} u.\end{align}
 to obtain the equation \eqref{Tgauge-u-eq}.
\end{proof}
There are additional the reflection and `particle-hole' symmetries which play an important role in our analysis. The reflection transformation is  
 \begin{equation}\label{Trefl}
    T^{\rm refl} : (\xi( x), \eta( x), \al ( x))  \mapsto (\xi(  - x), \eta( - x), - \al ( - x)),
\end{equation}
and 
the `particle-hole' (real-linear) transformation is defined as 
\begin{equation}\label{CS}
\tilde\rho:=\mathcal{C}\cS,\ \quad	 
\text{ where }\ \cS =  \left( \begin{array}{ccc}	0 & 1 & 0  \\1 & 0 & 0  \\ 0 & 0 & 1 \end{array} \right), 
\end{equation} 
 and, recall, that $\cC$ denotes the complex conjugation. As can be easily checked from the definitions, $\tilde\rho u = u$ and the map $J(u)$  is covariant under 
 these transformations, 
\begin{align}\label{refl-tilderho-sym}
&  
T^{\rm refl} J(u)=  J( T^{\rm refl} u), \quad  
  \tilde\rho J(u)=  J( \tilde\rho u). \end{align}


The equation \eqref{refl-tilderho-sym}, 
 and the relations $T^{\rm refl} \ut=\ut$ and $\tilde\rho \ut=\ut$ imply that the hessian $L\equiv \Lt$ defined in \eqref{Lt} satisfies
\begin{align}\label{L-refl-rho-sym}
T^{\rm refl} L=  L T^{\rm refl},\ \quad   \text{ and }\ \quad \tilde\rho L=  L \tilde\rho. 
\end{align}
This can be also verified directly, using the explicit expression \eqref{Lc-expl} of Appendix \ref{sec:hess-expl} for $L$.
The first relation above is somewhat subtle as $L$ has odd as well as even entries. The latter is compensated by the fact that $T^{\rm refl}$ treats the order parameter and vector fields components differently.
\begin{remark} \label{rem:rhou}  We could have extended the GES equations so that $u$ would satisfy $\rho u=u$, instead of $\tilde\rho u=u$.
\end{remark} 

\subsection{The Bloch - Floquet - Zak decomposition for gauge-periodic operators} \label{sec:Bloch-deco}

The key tool in analyzing the hessian is to exploit the gauge-periodicity of the Abrikosov lattice. 
As a result of this periodicity, the space $\cH$ decomposes as the direct fiber integral of spaces on a compact domain in such a way that the operator $\Lt $ is decomposed as the direct integral of operators on these spaces. 

Recall the definition \begin{align}\label{rhot}\cR_s = \tilde T^{\rm gauge}_{- g_s} T^{\rm trans}_s, 
\end{align}
 for each $s \in \LAT$, where the maps  $\tilde T^{\rm gauge}_\gamma$ and  $T^{\rm trans}_s$ are defined in \eqref{tildeTgauge} and \eqref{Ttrans}, and the function  $g_s$ satisfies the co-cycle condition \eqref{cocycle-cond}. 
\DETAILS{This map can be also written as
\begin{align}\label{rhot}\rho_t = T_t \oplus \bar{T}_t \oplus T^{\rm trans}_t , 
\end{align}
where the \emph{magnetic translation operator} $T_t$ on $L^2(\R^2;\C)$ to act as
\begin{align}\label{Tt}T_t\xi(x) = e^{-ig_t(x)}\xi(x+t).\end{align}
Recall the notation $\bar A:= \mathcal{C} A\mathcal{C}$ and recall that  $T^{\rm trans}_t$ denotes translation by $t$. We now let 
$\cR_t$ be the 
  operator on $\Hil$ defined by}
Clearly, this map is related to  the \emph{magnetic translation operator} 
\begin{align}\label{Tt}T_s\xi(x) = e^{-ig_s(x)}\xi(x+s).\end{align}
\begin{proposition}
	(i) $\cR_s$ is a unitary group representation of $\LAT $ in $\cH$ (i.e. $\cR_s$ is unitary and $\cR_s \cR_t = \cR_{s+t}$, for all $s, t \in \LAT $) and (ii)  if  functions $g_s : \R^2 \to \R$, $s \in \LAT $, in \eqref{rhot} are the same as those entering  the gauge-periodicity condition \eqref{gauge-per-resc} for the Abrikosov lattice solution  $\ut$, then $\cR_s$ commutes with the operator $L$ (i.e. 
	$\cR_s L  = L \cR_s$, for all $s \in \LAT$).
\end{proposition}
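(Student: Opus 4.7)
The plan is to establish the two claims separately: first that $\rho$ is a unitary group representation of $\LAT$, then that it commutes with $K$. For the first claim, I would check each of the four summands in $\rho_t = T_t \oplus \bar T_t \oplus T^{\rm trans}_t \oplus T^{\rm trans}_t$ independently, since a direct sum of unitary representations is unitary. The ordinary translation $T^{\rm trans}_t$ on $L^2(\R^2)$ is manifestly a unitary representation. Each magnetic translation $T_t$ is unitary because $|e^{-ig_t(x)}|=1$ and ordinary translation is an $L^2$ isometry. To verify the group law $T_tT_s = T_{t+s}$, I would compute directly
\begin{equation*}
(T_tT_s\xi)(x) = e^{-ig_t(x)}e^{-ig_s(x+t)}\xi(x+s+t),\qquad (T_{t+s}\xi)(x) = e^{-ig_{t+s}(x)}\xi(x+s+t),
\end{equation*}
which agree exactly when $g_{t+s}(x) - g_s(x+t) - g_t(x) \in 2\pi\Z$, i.e.\ the cocycle condition \eqref{cocycle-cond}. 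The same computation handles $\bar T_t$ because complex conjugation preserves congruence modulo $2\pi$.

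For the commutation $\rho_t K = K \rho_t$, the strategy is to exploit the gauge-periodicity \eqref{gauge-per-resc} of $(\psi_{\om'}, a_{\om'})$ entry by entry in the matrix form \eqref{K} of $K$. The relation $a_{\om'}(x+s) = a_{\om'}(x) + \nabla g_s(x)$ yields the intertwining
\begin{equation*}
T_s^{-1}\COVGRAD{a_{\om'}}T_s = \COVGRAD{a_{\om'}},
\end{equation*}
so $T_s$ commutes with $-\COVLAP{a_{\om'}}$; and since $|\psi_{\om'}|^2$ is $\LAT$-periodic, $T_s$ also commutes with it as a multiplication operator. The subtler pieces are the off-diagonal entries of $K$, namely multiplication by $\kappa^2\psi_{\om'}^2$ (from the $\bar\xi$-slot into the $\xi$-slot), by $\kappa^2\bar\psi_{\om'}^2$ (reverse), and the first-order terms of the form $\bar\psi_{\om'}\COVGRAD{a_{\om'}}$ (between the $\xi$-slot and the $\alpha$-slot). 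For these I would carry out the direct check that $\psi_{\om'}^2(x+s) = e^{2ig_s(x)}\psi_{\om'}^2(x)$ is precisely the factor needed to convert the phase of $\bar T_s$ into that of $T_s$, and similarly $\bar\psi_{\om'}(x+s) = e^{-ig_s(x)}\bar\psi_{\om'}(x)$ is what converts an ordinary $T^{\rm trans}_s$ on $\alpha$ into $T_s$ on $\xi$.

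The main obstacle is bookkeeping rather than anything conceptual: one must carefully track which slot each off-diagonal entry of $K$ maps between, and verify that the phase supplied by the corresponding power of $\psi_{\om'}$ (via its gauge-periodicity) equals the phase difference between the two components of $\rho_s$ acting on those slots. This also highlights why the specific choice of $g_s$ in \eqref{Tt} must coincide with the one appearing in \eqref{gauge-per-resc}; any other cocycle would still give a unitary representation by the first paragraph, but would fail to commute with $K$. Once one off-diagonal check is done, its complex conjugate and the analogous checks for the transposed entries follow by the built-in symmetry of $K$ under the involutions preserving $V_{\pm}$ in the decomposition \eqref{Vpm}.
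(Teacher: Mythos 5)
Your proposal is correct and follows essentially the same route as the paper: the unitarity is immediate, the group law is verified by exactly the same cocycle computation, and the commutation with $K$ — which the paper dismisses as "a simple verification" — is precisely the entrywise check you describe, using $T_s^{-1}\COVGRAD{a_{\om'}}T_s = \COVGRAD{a_{\om'}}$ and the transformation law $\psi_{\om'}(x+s)=e^{ig_s(x)}\psi_{\om'}(x)$ to match the phases between slots. Nothing is missing; you have merely written out the step the paper leaves implicit.
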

\begin{proof} Clearly, the operators 
$ \tilde T^{\rm gauge}_{\chi}$ and $ T^{\rm trans}_t$ are unitary. To show the group property, we write 
\begin{align*}\cR_t \cR_s =& \tilde T^{\rm gauge}_{- g_t} T^{\rm trans}_t \tilde T^{\rm gauge}_{- g_s} T^{\rm trans}_s\\&=\tilde T^{\rm gauge}_{- g_t}  \tilde T^{\rm gauge}_{- T^{\rm trans}_t g_s}T^{\rm trans}_t T^{\rm trans}_s=T^{\rm gauge}_{- g_t- T^{\rm trans}_t g_s}T^{\rm trans}_{t+s}.\end{align*} 
Since, by the cocycle condition \eqref{cocycle-cond}, $g_t + T^{\rm trans}_t g_s-g_{t+s}\in 2\pi\Z$, this gives  $\cR_t \cR_s = \cR_{t+s}$.
Thus $T_t$ is homomorphism from $\LAT $ to the group of unitary operators on $L^2(\R;\C)$. 

To prove that  $\rho$ commutes with the operator $L$, we use \eqref{J-gauge-transl-inv} to obtain $J( T^{\rm gauge}_{- g_s} T^{\rm trans}_s u)$  $ =  \tilde T^{\rm gauge}_{- g_s} T^{\rm trans}_sJ(u)$. Differentiating this relation w.r.to $u$ and using $d ( T^{\rm gauge}_{\chi}u) ( \xi)= \tilde T^{\rm gauge}_{\chi}  \xi$ gives $d J( T^{\rm gauge}_{- g_s} T^{\rm trans}_s u)  \tilde T^{\rm gauge}_{- g_s} T^{\rm trans}_s \xi=  \tilde T^{\rm gauge}_{- g_s} T^{\rm trans}_s d J(u) \xi$, which, together with $T^{\rm gauge}_{- g_s} T^{\rm trans}_s \ut= \ut$ and the definition of $L$, implies $L \cR_t = \cR_t L$. This relation also follows by  a simple verification. \end{proof}
 We extend the character  $\chi_k(s)= e^{ik\cdot s}$ to act on $v = (\xi, \eta, \alpha)$ as the multiplication operator
\begin{align*}e^{ik\cdot s} v = 
(e^{ik\cdot s}\xi, e^{ik\cdot s}\eta, e^{ik\cdot s}\alpha). 
 \end{align*}
Note that the subspace $\Ran\pi =\{(\xi, \bar\xi, \al) \in L^2(\R^2;\C\times \C\times \C^2)\}$, we started with, is not invariant under this operator. 

We now define the direct integral Hilbert space 
$    \fK = \int^{\oplus}_{\Omega^*} \Hilk \ \hat {dk}, $ 
where $\hat {dk}$ is the usual Lebesgue measure on $\Omega^*$,  
divided by $|\Omega^*|$, and
$\cH_k$ is  the set of functions, $v$, from $L^2_{\rm loc}(\R^2;\C\times \C\times \C^2)=L^2_{\rm loc}(\R^2;\C)\times L^2_{\rm loc}(\R^2;\C)\times L^2_{\rm loc}(\R^2; \C^2)=L^2_{\rm loc}(\R^2;\C)^4$,  satisfying 
  \begin{equation}\label{gauge-per-vk} 
  \cR_s v (x) = e^{i k\cdot s}	 v (x),\ \forall s\in\ \LAT, 
  \end{equation}
   a.e. 	and   endowed with the  
    inner product 
   \begin{equation} \label{L24-inner-product}    \lan v, v' \ran_{L^2} = 
   \int_{\Om} \bar{\xi}\xi' + \bar{\eta}\eta' + \bar{\alpha}\cdot \alpha' , 
\end{equation}
where  $\Om$ is the fundamental cell  of the lattice  $\LAT$, identified with  $\cT:=\R^2/\LAT$, and $v = (\xi, \eta, \alpha),\  v' = (\xi', \eta', \alpha')$. 
The  inner product in $\fK $ is given by $\lan v , w \ran_{\fK }:=
\int_{\Omega^*}\lan v_k, w_k\ran_{\Hilk} \hat {dk}$.
 We write $f= \int_{\Omega^*}^\oplus f_k \hat {dk}$, where $f_k$ for the $k$-component of $f$, and by the symbol $ \int_{\Omega^*}^\oplus T_{k} \tilde d k$ we understand the operator $T$ acting on $\fK$  as \begin{equation*}Tf= \int_{\Omega^*}^\oplus T_{ k}f_k \hat {dk}.  \end{equation*}


\begin{remark} 
{ One can think  of $\fK$ as  $L^2_{\rm loc}(\R^2\times \R^2/\LAT^*;\C)^4$, or as the set of functions, $v_k(x)$, from $L^2_{\rm loc}(\R^2\times \R^2;\C)^4$,  satisfying \eqref{gauge-per-vk} and $v_{k+s^*}(x)=v_k(x),\ \forall s^*\in \LAT^*,$ a.e. and endowed with the $L^2(\Omega\times \Om^*;\C)^4$ inner product.}
\end{remark}

For $k \in \Omega^*$, let $L_{ k}$ be the operator $L $ acting on $\Hilk$ with the domain $\cH_k\cap H^2_{\rm loc}(\R^2;\C)^4$.  It is easy to check that the operator $L $ leaves the (gauged Bloch-Floquet) conditions \eqref{gauge-per-vk} invariant.	  Note also that $\mathcal{C} \Lk= L_{  - k}\mathcal{C}$. We have (cf. \cite{PST})
\begin{proposition} \label{prop:bloch}
		Define $U : \mathcal{H} \to \fK$ on smooth functions with compact supports by the formula
	\begin{align} \label{U}
		(U v)_k(x) = \sum_{t \in \LAT} e^{- ik\cdot t} \cR_t v(x).
	\end{align}
	Then $U$ extends uniquely to a unitary operator satisfying 
	\begin{equation}\label{K-decomposition}
		UL U^{-1} = \int_{\Omega^*}^\oplus \Lk \hat {dk}.
	\end{equation}
Each $\Lk$ is a self-adjoint operator with compact resolvent (and therefore purely discrete spectrum), and
	\begin{equation}\label{KKkspecrelat}
		\sigma(L ) = \overline{\bigcup_{k\in\Omega^*} \sigma(\Lk)}.
	\end{equation}
\end{proposition}
\begin{proof} We begin by showing that $U$ is an isometry on smooth functions with compact domain. Using Fubini's theorem and the property $|\Om ||\Om^*|=1 $, we calculate
	\begin{align*}
		\| Uv \|_{\fK}^2
		&= 
		\int_{\Omega^*} \left\| (U v)_k \right\|_{L^2} \hat {dk} = \int_{\Omega^*} \int_{\Omega}
			\left| \sum_{t \in \LAT} e^{- ik\cdot t} \cR_t v(x) \right|^2 dx \hat {dk} \\
		&=  \int_{\Omega} \left(
			\sum_{t,s \in \LAT} \cR_t v(x) \overline{\cR_s v(x)}
			\int_{\Omega^*} e^{- ik\cdot t} e^{ ik\cdot s} \hat{dk} \right) dx.
	\end{align*}
We compute, after writing $k$ and $s-t$ in the coordinate form,  $\int_{\Omega^*} e^{- ik\cdot t} e^{ik\cdot s} \tilde d k=\del_{s, t}$. Using this, we obtain furthermore
	\begin{align*}
		\| Uv \|_{\fK}^2
		&=  \int_{\Omega} \sum_{t \in \LAT} |\rhoc_t v(x)|^2 dx = \int_{\R^2} |v(x)|^2 dx = \| v \|_{\mathcal{K}}^2.
	\end{align*}	
	Therefore $U$ extends to an isometry on all of $\Hil$. To show that $U$ is in fact a unitary operator we define $U^* : \fK \to \Hil$ by the formula
	\begin{align} \label{U*}
		 (U^* g)(x) = 
		 \int_{\Omega^*} e^{ ik\cdot t}  (\rhoc_t^* g_k)(x) \hat {dk},	\end{align}
for any 
 $x \in \Om +t$  and $t \in \LAT$. Straightforward calculations show that $U^*$ is the adjoint of $U$ and that it too is an isometry, proving that $U$ is a unitary operator.

 For completeness we show that  $UU^*=\id$. Using that the definition of  implies that $\rhoc_t (U^* g)(x) = \int_{\Omega^*} e^{ ik\cdot t} g_k(x) dk$, we compute
 \begin{align*}
		(U U^* g)_k(x) = \sum_{t \in \LAT} e^{- ik\cdot t} \int_{\Omega^*} e^{ ik' \cdot t} g_{k'}(x) \hat {dk}'.
	\end{align*}
Furthermore, using  the Poisson summation formula,
\begin{align}\label{Poisson'}
\sum_{t \in \LAT}  e^{- ik\cdot t}  
= |\Om^*| 
\sum_{t^*\in \LAT^*}  \del(t^*- k),	\end{align}
we find, on a dense set of functions $g_k(x)$ vanishing on the boundary $\p\Omega^*$,
\begin{align*}
		(U U^* g)_k(x) = \int_{\Omega^*}   |\Om^*| 
		 \sum_{t^*\in \LAT^*}  \del(t^*- k+k') g_{k'}(x) \hat {dk}' =g_{k}(x).
	\end{align*}

	Next, we show that  $(Uv)_k$ satisfies the gauge-periodicity conditions \eqref{gauge-per-vk}: 
		\begin{align*}
		\rhoc_t (Uv)_k(x)
			&= \sum_{s \in \LAT} e^{- ik\cdot s} \rhoc_t \rhoc_s v(x) = \sum_{s \in \LAT} e^{- ik\cdot s} \rhoc_{t+s} v(x) \\
			&= e^{ ik\cdot t} \sum_{s \in \LAT} e^{- ik\cdot (s+t)} \rhoc_{t+s} v(x),
	\end{align*}
	which gives that
	\begin{align*}
		\rhoc_t (Uv)_k(x)	&= e^{ ik\cdot t} (U v)_k(x).
	\end{align*}
  Treating $\Lk$ as a differential expression applied to differentiable function of $x$, we compute	
  \begin{align*}
		(\Lk (U v)_k)(x)
		&= \sum_{t \in \LAT} e^{- ik\cdot t} \Lk\rhoc_t v(x) 
		= \sum_{t \in \LAT} e^{- ik\cdot t} \rhoc_t L  v(x)	\end{align*}
	and therefore
	\begin{align*}
		(\Lk (U v)_k)(x)		&= (U L v)_k(x),
	\end{align*}	which establishes \eqref{K-decomposition}.

The self-adjointness of the operators   $\Lk$ and the  compactness of their resolvents follow by standard arguments. We now turn to the relation \eqref{KKkspecrelat}. 
We first prove the $\supseteq$ inclusion. Suppose that $\lambda \in \sigma(\Lk)$ for some $k \in \Omega^*$. Then there exists a smooth eigenfunction $v \in \mathcal{D}(\Lk)$ solving  $\Lk v = \lambda v$. 
By the definition of  $L $, the function $v$ solves the equation  $L v = \lambda v$. Therefore, by Schnol-Simon theorem (see e.g. \cite{GS}), $\lambda$ must be in the essential spectrum of $\Lt$. 
	
	As for the $\subseteq$ inclusion, suppose that $\lam \not\in \overline{\bigcup_{k\in\Omega^*} \sigma(\Lk)}$. Then the operators $(\Lk - \lambda)^{-1}$ are uniformly bounded, and therefore $(L  - \lam)^{-1} = \int^{\oplus}_{\Omega^*} (\Lk - \lambda)^{-1} \hat {dk}$ is also bounded and therefore $\lam \not\in \sigma(L )$.
\end{proof}
  We call the map $U$  the Bloch - Floquet - Zak (or BFZ) operator. 
 We collect general statements related to the map  $U$, 
  we use below.  An additional property of $U$ is given in Appendix \ref{U-prod-transf}.   Let  $\n' =\n_k $ and $D$ is either $\nat \oplus \one\oplus \one$ or  $\one \oplus \overline{\nat}\oplus \one$ or $\one\oplus \one \oplus \n$. We have 
\begin{equation}\label{Uv-k-period} (U v)_{k+s^*}(x) =
	(U v)_k(x),\ \forall s^*\in \LAT^*, \end{equation}
	\begin{align} \label{Trefl-tilderho-fibration}
 T^{\rm refl}(U v)_{k}(x)=	(U T^{\rm refl} v)_{- k} (x),\  \tilde\rho(U v)_{k}(x)= (U \tilde\rho v)_{-k}(x),
	\end{align}  
 \begin{align} \label{k-deriv-Uv} &(i \n' + x ) U =   U x,\ \quad  D U  =  U D,\\ \label{norm-rel1} &
\| D^m U v\|_{L^2}  
= \|D^{m}  v \|_{L^2},\ \quad  \|\p_k^m U v\| \ls \sum_{m'\le m}\| x^{m'} v \|_{L^2},\\ 
\label{norm-rel2}   &\| 
D^m U^{-1} g\|_{L^2}  
=\|D^{m} g \|_{\sH}. \end{align}  
Above  $T^{\rm refl}$ and $\tilde\rho$ act  
on $\cH$ and from $\cH_k$ to $\cH_{-k}$ (we use the same notation in both cases). 
 The property \eqref{Uv-k-period}  follows from the definition of $U$ and the fact that $e^{-(k+s^*)\cdot t}=e^{- k\cdot t}, \forall t\in \LAT$. 
 The property \eqref{Trefl-tilderho-fibration}  follows from the property  $g_{-s}(-x)=g_{s}(x)$ (see \eqref{gs-spec-even}). 

Next, by \eqref{U}, we have $\n_k (U v)_k(x) =- i  \sum_{t \in \LAT} e^{- ik\cdot t} t  (\rhoc_t  v)(x)$. Now, using $(x+t) \rhoc_t= \rhoc_t x$ gives \[\n_k (U v)_k(x)  =- i  \sum_{t \in \LAT} e^{- ik\cdot t}  (\rhoc_t x v) (x) + i x \sum_{t \in \LAT} e^{- ik\cdot t}  (\rhoc_t  v)(x),\]
  which implies the first relation in \eqref{k-deriv-Uv}. To derive the second relation in \eqref{k-deriv-Uv} we use that 
 $(\n_{\at (x)} \oplus \one \oplus \one) \rhoc_t= \rhoc_t (\n_{\at (x-t) +\n_x g_t (x-t)} \oplus \one\oplus \one)$, $\at (x-t) = \at (x) + i\n_x g_{-t} (x)$, $g_{-t} (x)=- g_{t} (x)$ and $g_{t} (x-t)=- g_{t} (x)$. Similarly we deal with $\one \oplus \overline{\nat}\oplus \one$.  The proof for $\one\oplus \one \oplus \n$ in \eqref{k-deriv-Uv} is even simpler.

Using  the relations \eqref{k-deriv-Uv} and the fact that $x$ on the l.h.s. of \eqref{k-deriv-Uv} stands for the multiplication operator by $x\in \Om$ (while on the r.h.s., by $x\in \R^2$) and  the unitarity of $U$, we find \eqref{norm-rel1} - \eqref{norm-rel2}.

Finally, the equations \eqref{L-refl-rho-sym} and \eqref{Trefl-tilderho-fibration} imply that $\Lk$ has the following symmetries: 
	\begin{align} \label{commut-Trefl-tilderho-Lk}
	 T^{\rm refl}\Lk=	L_{- k} T^{\rm refl},\  \quad \tilde\rho \Lk= L_{- k}  \tilde\rho.
	\end{align} 
The symmetries $\tilde\rho$ and $T^{\rm refl}$ do not fibre, i.e. do not descends to $\cHk$, 
but their combination 
\begin{align}\label{rho}
&\rho: =T^{\rm refl} \tilde\rho \equiv T^{\rm refl} \mathcal{C}  \cS,
  \end{align} 
 does. Indeed, let 
  $\refl: f(x)\ra f(-x)$. Using $\refl e^{i g_s}= e^{i g_{- s}} \refl$ (due to \eqref{gs-spec-even}) and $\refl \overline{e^{i k \cdot x}}= e^{i k \cdot x} \refl$ and remembering the definitions of  $\tilde T^{\rm gauge}_\chi$, 
 we have that 
 \[ T^{\rm transl}_{s} T^{\rm refl} = T^{\rm refl} \tilde T^{\rm transl}_{-s},\ \tilde T^{\rm gauge}_{g_s} T^{\rm refl} = T^{\rm refl} \tilde T^{\rm gauge}_{g_{- s}}\ \text{ and  }\ \refl\cC e^{i k \cdot x}= e^{i k \cdot x} \refl\cC.\] 
 This implies that if $v$ satisfies \eqref{gauge-per-vk}, then so does $\rho v$, i.e.  $\rho$ maps $\cH_k$ into itself. Thus we have
 \begin{align}\label{commut-L-Lk-rho}
&[ L, \rho ] = 0,\  \quad  [ \Lk, \rho ]=0.
  \end{align}

We conclude this subsection with a general property of the eigenvalues of $\Lk$, which is used below.
 \begin{lemma}\label{lem:evLk-smooth-k} 
Simple eigenvalues of $\Lk$ are smooth in $k$. \end{lemma} 
\begin{proof} Consider the operator $\tilde \Lk:= e^{- i k\cdot x} \Lk e^{i k\cdot x}$ defined on $\cH_{k=0}$. Since the map $v\ra e^{i k\cdot x} v$ maps $\cH_{k=0}$ unitarily into $\cH_{k}$, it has the same eigenvalues as the operator $\Lk$. It is easy to show, using the relation  $ e^{- i k\cdot x} \n_a e^{i k\cdot x}=\n_{a-k}$ that the operator $\Lk$ depends on $k$ smoothly (say, as an operator from $H^2$ to $L^2$). Hence the statement follows from the standard perturbation theory (see e.g. \cite{RSIV, GS}). \end{proof}

\begin{remark}  
 While the global gauge and translational zero modes  \eqref{G-glob} and \eqref{transl-modes} are generalized eigenfunctions of the operator $\Lt$, they are now in the space $\cH_{k=0}$ and are standard eigenfunctions of $L_{k=0}$.
 \end{remark}  
 

\subsection{The fiberization of the gauge zero modes} 
 \label{sec:fiberiz}

Define the Bloch - Floquet (-  Fourier) transform $U^{\rm bf} : L^2(\R^2) \to L^2(\Om^*, L^2(\Om, \C))$ on smooth functions with compact supports by the formula
	\begin{align} \label{U}
		(U_{\rm bf} \chi)_k(x) = \sum_{t \in \LAT} e^{- ik\cdot t} T^{\rm trans}_t \chi(x),
	\end{align}
where $T^{\rm trans}_t$ acts now on scalar functions, and its adjoint
	\begin{align} \label{U*}
			 (U_{\rm bf}^* \eta)(x) = 
		 \int_{\Omega^*} e^{ ik\cdot t_x}  (T^{\rm trans}_{-t_x} \eta_k)(x) \hat {dk},	\end{align}
where $t_x \in \LAT$ is defined by the relation $x - t_x \in \Om $. Note that the  Bloch - Floquet (-  Fourier) transform $ \chi_k$ of  $ \chi$ satisfies $T^{\rm trans}_t \chi_k= e^{i k\cdot t} \chi_k$. 
Thus $\chi_k\in H^1_k$, where 
\begin{align} \label{H1k-space} H^1_k:=\{\chi_k\in H^1: \chi_k(x+s) = e^{ik\cdot s}\chi_k(x), \forall s\in \LAT, \chi_k(- x)= \chi_{ -k}(x)\}. \end{align} 
 (The condition $\chi_k(- x)= \chi_{ -k}(x)$ comes from the fact that the original functions $\chi (x)$ are real.) We begin with 
\begin{proposition}  \label{prop:Gk} Let $\chi_k$ be the Bloch - Floquet-Fourier transform of $\chi$. We have
 \label{prop:Lk-spec}\begin{align} \label{UG} & (U G_\chi)_k(x) = G_{\chi_k}=(i \chi_k\psit, - i \chi_k\overline{\psit}, \n \chi_k),\\
 \label{Lk-Gk}  &\Lk G_{\chi_k}=0,\  \cR_s G_{\chi_k} = e^{i k\cdot s} G_{\chi_k},\ \forall s\in\ \LAT,\\  
 &  \label{Trefl-rho-Gk} T^{\rm refl} G_{\chi_k}= G_{t^{\rm refl} \chi_k},\  \tilde\rho G_{\chi_k}=- G_{\bar\chi_k},\  
 \rho G_{\chi_k}=- G_{t^{\rm refl} \bar\chi_k}.\end{align}
\end{proposition}
\begin{proof} We apply the BFZ transform,
\begin{align} \label{U}
		(U v)_k(x) = \sum_{t \in \LAT} e^{- ik\cdot t} \cR_t v(x),
	\end{align}
	 to the gauge zero mode $G_\chi$, with $\chi\in H^1$. Using that $T^{\rm trans}_t (\chi \psit)=e^{ i \g_t}T^{\rm trans}_t (\chi) \psit$, we obtain \eqref{UG}.
	In opposite direction, we have, $G_\chi=i \int^\oplus_{\Om^*} G_{\chi_k} dk$ for $\chi\in H^1$.
Furthermore, since $\Lt G_\chi=0,\ T^{\rm trans}_t \chi_k= e^{i k\cdot t} \chi_k$, we have that \eqref{Lk-Gk} - \eqref{Trefl-rho-Gk} follow.  
\end{proof}

\begin{proposition}[The fiberization of the gauge orthogonality]  \label{prop:gauge-orth-cond-expl} 
\begin{equation}\label{Gk-orthog-expl} v \perp  {G}_\chi  \quad \forall \chi\in H^1  \quad \Longleftrightarrow  \quad \G^* v_k\equiv - i \bar\psit \xi_k +  i \psit \eta_k - \divv \al_k =0 \quad  \forall \ k\in \Om^*, \end{equation}
where  $v_k= ( \xi_k, \eta_k, \al_k)$ is the BFZ transform of $v$. 
\end{proposition}
\begin{proof}
The orthogonal projection operator, $P_g$, onto the space spanned by the gauge zero modes is given explicitly  as 
\begin{equation}\label{gauge-modes-proj} P_g = \G  h^{-1}\G^* ,\ \text{   where   }\ \G \gamma': =G_{\gamma'}\ \text{   and  }\ \G^* v=- i \bar\psit \xi +  i \psit \eta - \divv \al, 
\end{equation}
 for $v=(\xi, \eta, \al)$, and $h:=- \Delta + 2 |\psit|^2$  (see the proof of Proposition \ref{prop:decomp-gauge}).  	The latter operators 
 satisfy $\G^* \G=h $ and  $ \G G_{\chi}=G_{h \chi}$ and therefore $P_0 G_{\chi}=G_{\chi}$.  Using that $t^{\rm trans}_t (\chi \psit)=e^{ i \g_t}t^{\rm trans}_t (\chi) \psit$, we obtain 
\begin{align} \label{GamU}
		\G^* U^* = U^*  \G^* ,\  U  \G^* =\G^* U.	\end{align}

Eq. \eqref{gauge-modes-proj} implies that the condition $P_g v =0$ is equivalent to the condition $\G^* v =0$. Moreover, by \eqref{GamU}, this relation can be used fiberwise, \eqref{Gk-orthog-expl}. 
 \end{proof}

\subsection{Spectral properties of $L_{ k}$ and their consequences} 
\label{sec:Lk-spec}

We now turn to analysis of the spectrum of the fibre operators $\Lk$.  We define the perturbation parameter
  \begin{equation}\label{eps} 
	\e = \sqrt{\frac{\kappa^2 - b}{\kappa^2[(2\kappa^2 - 1)\beta(\tau) + 1]}}.
\end{equation}
  The term $(2\kappa^2 - 1)\beta(\tau) + 1$ in the denominator of \eqref{eps} is necessary in order to have a positive expression under the square root and to regulate the size of the perturbation domain.

 Let $H^1_k$ be the space of $H^1_{\rm loc}(\R^2)$ functions satisfying $T^{\rm trans}_t \chi_k= e^{i k\cdot t} \chi_k$. 
 By Proposition \ref{prop:Gk} below,   $0$ is an eigenvalue of  $\Lk$  of infinite multiplicity with the eigenspace $\cV_k  :=\{G_{\chi_k}: \chi_k\in H^1_k\}$. 
 (Recall that $G_{\chi_k}, \chi_k\in H^1_k$ appear as fibers of the  gauge zero modes, $G_\chi$ of $\Lt$.)   
We will see below that this zero eigenspace is eliminated as the solution we are seeking is orthogonal to $\int^\oplus_{\Om^*} \cV_k \hat{d k}$. Hence we consider $\Lk$ on the orthogonal complement (in the Sobolev space $\cH^2$ based on $\cH$, see below) of the space, $\cV_k $, 
 Thus we introduce $\Lk^\perp:= \Lk\big|_{\cV_k^\perp}$.

 The  following proposition is the main result of this section. 
 \begin{proposition} \label{prop:Lk-spec'} The operator $\Lk^\perp$ is self-adjoint and has the following properties
 
 
 (A)  
$\Lk^\perp$   has purely discrete spectrum, with all, but three, eigenvalues 
 are  $\gs 1$;
 
 (B)  
 $\Lk^\perp$ has three eigenvalues, $\nu_{k}^i\equiv \nu_{k}^i(\tau), i=1, 2, 3,$ of order $o(1)$; these eigenvalues are simple and of the form 
\begin{align} 
 \label{nuk1-form}&\nu_{k}^1  = c_1 (\tau) \frac{\e ^2|k|^{2}}{\e ^2 + |k|^{2}} [(\kappa^2-\frac12) \g_k (\tau)+ \eta(\tau, \kappa)\e^2] + O(\e^4 |k|^{2}),\\ 
 \label{nuk2-form}&\nu_k^2 = c_2 (\tau)  |k|^2+ O(\e^2|k|^2),\ \text{ for } |k|\ll 1,\  \text{ and }\ \nu_k^2  \gs |k|^2,\\ 
\label{nuk12-est}	
& |\p^\al \nu_{k }^j| \ls |k|^{2-|\al|},\ j= 1, 2,\ |\al|\le 2,\\
	\label{nuk3-form}&\nu_k^3 \ge c_3 (\tau) \e^2 ,\end{align} 
 where $c_i (\tau)\gs 1$,  and  $\g_{k}(\tau)$ and $\eta(\tau, \kappa)$ are given in \eqref{gamk-LAT} and \eqref{gamk-tau} and Lemma \ref{lem:tilde-gam12-expan}, respectively; 

  
(C)   The eigenfunctions  $v_{k }^j$   of $\Lk$,   corresponding to the eigenvalues $\nu_{k}^j, j=1, 2, 3$, satisfy   
 \begin{align}
   \label{vkj-est} & v_{k  }^j(x)\in C^\infty(\R^2 \times \R^2, \C^4),\ \forall j.   \end{align} \end{proposition}


This proposition is proven in Appendix  \ref{sec:prop-Lk-spec-pf}. 
 Propositions \ref{prop:bloch} and \ref{prop:Lk-spec'} imply Theorem \ref{thm:energyband}.

Proposition \ref{prop:Lk-spec'} shows that the hessian $\Lo$ has exactly two gapless spectral branches (bands) $\nu_{k }^i, i=1, 2.$ 
 The gapless band eigenfunctions, $v_{k }^1$ and $v_{k }^2$, originate from the longitudinal and transverse translational zero modes, $S_{\hat k^\perp}$ and $S_{\hat k }$, and are due to breaking the translational symmetry. 

The proposition above implies inequalities on the quadratic form of $\Lk$, which 
 play an important role in our analysis below. We begin with defining  the Sobolev space of order $1$ defined by the covariant derivatives, i.e., $ H^1_{\textrm{cov}}:=\{v\in L^2(\R^2, \C\times \C\times \C^2)\ |\ \|v\|_{H^1}< \infty \}$, where the norm $\|v\|_{H^1}$ is determined by
 the covariant inner product
\begin{align*}
    \langle v, v' \rangle_{H^1} = \Re \int \bar{\xi}\xi' &+  \bar{\eta}\eta' + \overline{\nat\xi} \cdot \nat\xi' + \overline{\nat\eta} \cdot \nat\eta' + \bar\alpha \cdot \alpha' + \sum_{i=1}^2 \nabla\bar \alpha_i\cdot\nabla\alpha'_i, 
\end{align*}
where $v=( \xi, \eta, \al)$ and $v'=( \xi', \eta', \al')$ (cf. Subsection \ref{sec:pert}).

 Let  $P_g$ be the orthogonal projection operator onto the space, $\cV_g:=\{G_\chi: \chi\in H^1\}$, spanned by the gauge zero modes, given explicitly in \eqref{gauge-modes-proj} of Appendix \ref{sec:fiberiz}, and $\bar P_g :=\one  - P_g$  the orthogonal projection onto  the orthogonal complement, $\cV_g^\perp$, of the subspace spanned by the gauge zero modes. We define the orthogonal projections
 \begin{equation} \label{P'P''} 
 P':= U^{-1}\int^\oplus_{\Om^*} P_k' \hat{dk}U, \quad 
  P'' :=\bar P_g - P',\end{equation}    
 where $P_k'$ is the orthogonal projection onto the span of the eigenspaces of the operator $\Lk$, corresponding to the eigenvalues $\nu_{k }^i, i=1, 2, $ described in Proposition \ref{prop:Lk-spec}.   These projections form  the partition of unity $P'+P''=\bar P_g$. 
 Proposition \ref{prop:Lk-spec} implies 
\begin{equation}	\label{L-L2-lower-bound}
	\lan v, \Lo v \ran_{L^2} \ge \mu_\tau \|v\|_{L^2}^2 \gs \e^2 \|v \|_{L^2}^2,\  \quad   \forall v\in \Ran P''.
	\end{equation}
where $\mu_\tau:=\inf_k \nu_{k}^3(\tau) \gs \e^2$. The bound  \eqref{L-L2-lower-bound} implies
\begin{align}\label{L-lowbnd2}  \| \Lt  v \|^2\ge \mu_\tau \lan v,  \Lt v \ran \gs \e^2  \|v \|_{L^2}^2,\  \forall v\in \Ran P''.\end{align} 
Indeed, ones observes that  \eqref{L-L2-lower-bound} gives $\mu_\tau\|  v \|^2 \lan  v,  \Lo v  \ran\le  \lan v,  \Lo v  \ran^2\le \|  v \|^2 \| \Lo  v \|^2 $, which implies \eqref{L-lowbnd2} (the second inequality follows from \eqref{L-L2-lower-bound}).

  The lower bound \eqref{L-L2-lower-bound} is upgraded  to the one  involving the $H^1_{\textnormal{cov}}$ norm on the r.h.s., which together with the standard elliptic upper bound, gives
 \begin{corollary}\label{lem:H1-bound'}
        \begin{align}        \label{L-H1-lowbnd}
&   \|v \|_{H^1}^2 \gs 	\lan v, \Lo v \ran_{L^2} \gs 
\e^2 \|v \|_{H^1}^2,\  \forall v\in \Ran P'',\\ 
        \label{H2-L-est}
& \|v\|_{H^2}\ls \| \Lo v \|_{L^2},\ 
\quad 
\forall v\in \Ran P''. \end{align} 
\end{corollary}
\begin{proof}
To begin with, using the explicit expression for $\Lo$ in \eqref{Lc-expl}, integration by parts and Schwartz and Sobolev inequalities, 
 we show that  
       \begin{equation}    \label{H1-lower-bound}
	\langle v, \Lo v \rangle_{L^2} \geq  \|v\|_{H^1}^2 - C \|v\|_{L^2}^2,
    \end{equation}
    for some positive constant $C$.

    Now let $\delta \in [0,1]$ be arbitrary and denote $\mu\equiv \mu_\tau$.     We combine \eqref{H1-lower-bound} with the bound
   \eqref{L-L2-lower-bound}, to obtain 
      \begin{align*}
        \lan v, \Lo v \ran_{L^2}
        &= (1 - \delta) \lan v, \Lo v \ran_{L^2} + \delta \lan v, \Lo v \ran_{L^2} \\
        &\geq (1 - \delta) \mu \|v\|_{L^2}^2 + \delta (\|v\|_{H^1}^2 - C \|v\|_{L^2}^2) \\
        &= ((1 - \delta) \mu - \delta C ) \|v\|_{L^2}^2 + \delta   \|v\|_{H^1}^2.
    \end{align*}
    \eqref{L-H1-lowbnd} 
    now follows by choosing $\delta = \frac{\mu }{1 +\mu + C}$ and using that $\mu \gs \e^2$.

 Next, 
 the bound \eqref{H2-L-est} follows from $\| \Lo v \|_{L^2} \ge \del \| \Lo v \|_{L^2}+ (1-\del) \mu \| v \|_{L^2} \ge \del \| \n v \|_{L^2}-  \del C \|  v \|_{L^2} + (1-\del) \mu \| v\|_{L^2}$. 
\end{proof}

Now we consider $\Ran P'$. Let $\nu_k^j$ be the eigenvalues, described in Proposition \ref{prop:Lk-spec'}, which are $\LAT^*-$periodic functions on $\R^2$. We define the operator of multiplication  $A$ on $L^p(\Om^*, \C^2)$, given by 
\begin{align}   \label{A}(A f) (k) := (\nu_k^1 f_1 (k), \nu_k^2 f_2 (k)),\end{align}
 where $f=(f_1, f_2) $. Furthermore, recall $\rf f(k)= f(-k)$. The following lemma describes the spectral decomposition map for the operator $\Lt P'$, 
\begin{lemma}\label{lem:V} There are maps $V: L^2(\Om^*, \C^2)\ra L^2(\R^2;\C\times \C\times \C^2)$ and $V^{*}: L^2(\R^2;\C\times \C\times \C^2)\ra L^2(\Om^*, \C^2)$, s.t. (a)  $V^{*} V=\one$, (b)  $V V^{*}=P'$, (c)  $V^*  \Lt =A V^{*} $,  (d)  $\|V f\|_{H^1}\ls \|f\|_{L^2}$.  Moreover, $V$ and $V^*$ satisfy  
\begin{align}	\label{V-par} V^* T^{\rm refl} v = - \rf  V^* v,\  T^{\rm refl} V f =-V \rf  f,  \end{align}
 \begin{align}   \label{V-bnd5} 
&\|V f\|_{L^\infty}\ls  \| f\|_{L^1},\  \|x V f\|_{L^\infty}\ls   \| f\|_{L^1}+\|\n f\|_{L^1}.\end{align} 
 \end{lemma} 
Note that (i) \eqref{V-par} implies that, if $v$ is even, as in our case, see \eqref{v-parity},  then the function $f:=V^* v$ must be odd and 
 (ii) $V^{*}$ gives the spectral decomposition of the operator $ \Lo P' $, namely, $\Lo P' =V A V^{*}$.
 
   The maps $V$ and $V^{*}$ give the isomorphism 
 between the spaces $\Ran P$ 
and $L^2(\Om^*, \C^2)$.
We define these maps explicitly. 
For the eigenfunctions $v_{k }^j$  of $\Lk$,  corresponding to the eigenvalues $\nu_{k}^j, j=1, 2, 3$, described in Proposition \ref{prop:Lk-spec'}, we define $\uvk:=(v_{k }^1, v_{k }^2)$. We extend our standard operators to act on $\uvk$. The facts that $T^{\rm refl}\Lk = L_{-k} T^{\rm refl}$  and that the eigenvalues $v_{k }^j$ are simple
 and equation 
 \eqref{vkj-est} imply that these functions have the following properties
\begin{equation} \label{hwk-parity} 
  T^{\rm refl} \hw_{k }=  \hw_{-k }\ \text{ and }\  \hw_{k}\in H^m(\Om^* \times \Om, \C^2),\ \forall m.  
 \end{equation}	

 Now, we define the maps $V: L^2(\Om^*, \C^2)\ra L^2(\R^2;\C\times \C\times \C^2)$ and $V^{*}: L^2(\R^2;\C\times \C\times \C^2)\ra L^2(\Om^*, \C^2)$ by 
 \[V^{*}: v\ra \int_{\Omega^*}^\oplus (\lan v_{k 1}, (U v)_k\ran, \lan v_{k 2}, (U v)_k\ran) \hat{dk} \text{ and  } V: f \ra U^{-1}\int_{\Om^*}^\oplus \hat{dk}  f_k \cdot \hw_k.\]  
\begin{proof}[Proof of Lemma \ref{lem:V}] The properties  (a) - (c)  of the maps $V$ and $V^*$ stated in Lemma \ref{lem:V} follow from the definition (in particular, to see that $V$ is an isometry it is useful to represent it as  $V=U^{-1} W$, where $W f:=\int_{\Om^*}^\oplus \hat{dk}  f_k \hw_k$, and use the unitarity of $U$ and $W$). 

Recall $\tilde T^{\rm gauge}_\g := e^{ i\g} \oplus e^{- i\g} \oplus \one$. To show 
\eqref{V-par}, we use that, by the definition of $V$, we have the formula
\begin{align}\label{V} 
(V f) (x+t)= \int_{\Om^*} dk   \chi_k(t) f_k  \tilde T^{\rm gauge}_{\g_t} \hw_k(x),\ x \in \Om.\end{align}
Since $\g_{-t}(-x)= \g_{t}(x)$ and $T^{\rm refl} \hw_{k }= \hw_{- k}$, we have 
\[(T^{\rm refl} V f )(x+t)= \int_{\Om^*} \hat{dk}    \chi_k(-t) f_k   \tilde T^{\rm gauge}_{\g_t} T^{\rm refl} \hw_{k } \]\[=- \int_{\Om^*} \hat{dk}    \chi_{-k}(t) f_k   \tilde T^{\rm gauge}_{\g_t} \hw_{-k} = -\int_{\Om^*} \hat{dk}    \chi_{k}(t) f_{-k}   \tilde T^{\rm gauge}_{\g_t} \hw_{k} .\] This gives the second relation in \eqref{V-par}. The first relation in \eqref{V-par} is proven similarly. 


Before proceeding to the remaining statements in Lemma \ref{lem:V}, we prove the following  properties of the maps $V$ and $V^*$: 
\begin{align} \label{xV-Vk-deriv} 
 x_j V = V i \n'_j +  V_j,\   \qquad V^* x_j = i \n'_j V^* -   V^*_j, \end{align}
where $V_j$ and $V_j^*$ are adjoint operators satisfying  $\| V_j f\|_{L^2} \ls  \|f \|_{L^2}$ and $ \| V_j^* v\|_{L^2} \ls  \|v \|_{L^2}$, 
 By \eqref{k-deriv-Uv}, we have $ x V f = U^{-1}(i \n_k  + x )\int_{\Om^*}^\oplus \hat{dk}  f_k  \hw_k =\U^{-1}\int_{\Om^*}^\oplus \hat{dk}[\n_k f_k  \hw_k+   f_k (i \n_k  + x )\hw_k]$. This gives the first relation in \eqref{xV-Vk-deriv}, with  
$ V_j$ and $V_j^*$ adjoint operators  given by 
\begin{align}\notag  V_j f:= U^{-1}\int_{\Om^*}^\oplus \hat{dk}  f_k  (i \p_{k_j}  + x_j ) \hw_k,\ \text{ and }\ 
 V_j^* v:=\int_{\Om^*}^\oplus \hat{dk} \lan  (i \p_{k_j}  + x_j )\hw_k,   (U v)_k\ran _\Om.\end{align}
    The second relation is adjoint of the first. It can be also derived independently, similarly to the first one. 
The boundedness of $V_j$ and $V_j^*$  follows from the second relation in \eqref{hwk-parity}.

Next, using the representation \eqref{V} and the estimates on $\hw_k$ in \eqref{hwk-parity} 
 and writing $\sup_{x\in \R^2}=\max_{t\in \LAT} \sup_{x\in \Om+t}$, we find the first estimate in \eqref{V-bnd5}. 
  Using  the first estimate in \eqref{V-bnd5} and and  the second relation in \eqref{xV-Vk-deriv}, we find the second estimate in \eqref{V-bnd5}. 

Proceeding similarly as in the previous paragraph and 
 and using the second relation in \eqref{k-deriv-Uv} and  the fact that  the Sobolev space $H^1$ is defined in terms of the covariant derivatives, $D$, we obtain
\begin{align}   \label{V-bnd6} 
&\|D^m  V f\|_{L^\infty}\ls   \| f\|_{L^2}, 
\end{align}  
where, recall that $D$ is either $\n_{\at} \oplus \one$ or $\one \oplus \n$. This inequality implies 
 the property  (d).

Finally, we prove the estimates, used in the estimates of the nonlinearity in Appendix \ref{sec:nonlin}: 
\begin{align}	\label{V*est}  
 \left\|V^* v\right\|_{L^\infty}\ls \| U v\|_{H^{-1}_{x}L^\infty_{k}},\end{align} 
where the notation $\| w\|_{X_{k}Y_{x}}$, for a function $w_k(x)$, stands for the norm $\| \| w_k\|_{Y_{x}}\|_{X_{k}}$, and 
 \begin{align}   \label{V-bnd-L2} 
 &\|D^m  V f\|_{L^2}\ls   \| f\|_{L^2},\ 
\|x V f\|_{L^2}\ls   \| f\|_{L^2}+\|\n f\|_{L^2},\end{align}

\eqref{V*est} follows from the definition of $V^*$ given above and the properties \eqref{hwk-parity} of $\hw_k$. Indeed, we have  \begin{align*}\left\|V^* v\right\|_{L^\infty}&\le \sup_{k\in\Omega^*} |\lan \hw_k,  (U v)_k\ran| \\ & \le \sup_{k\in\Omega^*} \|\hw_k\|_{H^{1}}  \sup_{k\in\Omega^*} \|(U v)_k\|_{H^{-1}}  \ls \| U v\|_{H^{-1}_{x}L^\infty_{k}}.\end{align*}
The inequalities in
\eqref{V-bnd-L2} follow from the first relation in \eqref{xV-Vk-deriv}, the definition of $V$ and \eqref{norm-rel2}. \end{proof} 


\section{Asymptotic stability: Proof of Theorem \ref{thm:stability}}\label{sec:Pf-StabThm} 

\subsection{Reparametrization of solutions} 
\label{sec:deco}

Our goal in this subsection is to reparametrize a neighbourhood of the equivalence class $\mathcal{M}_\tau = \{ T^{\rm gauge}_\chi  \ut : \chi \in H^1 (\R^2, \R) \}$  of the Abrikosov $\cL-$lattice solution,  $\ut= (\psit, \bar\psit, \at)$. The tangent space, $T_{\ut}\cM$ at $\ut  \in \mathcal{M}_\tau$ is spanned by the 
gauge zero modes $G_{\chi}$, 
given in \eqref{gauge-modes}.

Below we consider orthogonal complements the tangent spaces, $T_{\ut}\cM$. 
 We define, for $\delta > 0$, its tubular neighbourhood, 
\begin{equation} \label{tubular-nbhd}
    U_\delta = \{ T^{\rm gauge}_\chi  (\ut + v) : \chi \in  H^1 (\R^2, \R),\  v \in H^1_{\textrm{cov}},\  \|v\|_{H^1} < \delta \}
\end{equation}
and prove the following decomposition for $u$ close to the manifold $\mathcal{M}$. 
Recall the notation $u= (\psi, \bar\psi, a)$. 
\begin{proposition}\label{prop:decomp-gauge}
	There exist $\delta_* > 0$ (depending on $\e$) 
	and a map $\eta : U_{\delta_*} \to H^2(\R^2, \R)$ such that $v:=(T^{\rm gauge}_{\eta(u)})^{-1} u - \ut \perp G_{\chi}$ for all $\chi\in H^2(\R^2, \R)$. Moreover, if $u$ satisfies $T^{\rm refl} u=u$, then $\eta(u)$ satisfies $\rf \eta(u)  = \eta(u)$, where $\rf: \chi(x) \ra \chi(-x)$, and therefore $v:=(T^{\rm gauge}_{\eta(u)})^{-1} u - \ut \perp G_{\chi}$ satisfies $T^{\rm refl} v= v$.
\end{proposition}
\begin{proof}
We omit the superindex $"gauge"$ in $T^{\rm gauge}_\gamma $.   Our goal is to  solve the equation   $\lan G_{\chi},  u - T_{-\g}\ut \ran_{L^2}=0,\ \forall \chi \in  H^1 (\R^2, \R),$ for $\g$ in terms of $u$.   Define the affine space $X = \ut + H^1_{\textrm{cov}}$ and let $\G \chi: =G_{\chi}$. Then,  $\lan G_{\chi},  u - T_{-\g}\ut \ran_{L^2}=\lan \chi,   \G^*  (u - T_{-\g}\ut) \ran_{L^2}=0,\ \forall \chi \in  H^1 (\R^2, \R),$ is equivalent to $\G^*  T_\g^{-1} (u - T_{\g} \ut) =0$. Hence our problem can be reformulated as solving the equation $f(\g, u) =0$ for $\g$ in terms of $u$, where
 the map $f$     is given by 
  \begin{equation*}
        f(\g, u) =  \G^*  T_\g^* (u - T_{\g} \ut) .
    \end{equation*}
Here we used that $T_\g^{-1} =T_\g^{*} $. To solve this equation, we use the Implicit Function Theorem. 
From the definition, it is clear that  $f : H^2(\R^2, \R) \times X \to L^2(\R^2, \R)$,  is a $C^1$ map and  $f(\g, T_{\g} \ut) = 0$.  
       Finally, we   calculate the linearized map:
$      d_\g f(\g, u) |_{u=T_{\g} \ut}  \chi= - \G^*  T_\g^*  d_\g   T_{\g} \ut  \chi$
and $d_\g   T_{\g} \ut    \chi= T_\g g_{\dot\gamma }\ut \ =T_\g \G\chi ,  $  where $ \chi\in H^1(\R^2, \R)$, which gives $      d_\g f(\g, u) |_{u=T_{\g} \ut}  \chi= - \G^* \G  \chi.$     We compute $\G^* v=- i \bar\psit \xi +  i \psit \eta - \divv \al$ for $v=(\xi, \eta, \al)$. Hence, we have $\G^* \G=- \Delta + 2 |\psit|^2$. The last two relations give   \begin{equation*} d_\g f(\g, u) \big |_{u=T_{\g} \ut}=  \Delta - 2 |\psit|^2.  \end{equation*} 
For $|\psit|^2$ periodic, $-\Delta + 2 |\psit|^2$ is self - adjoint and, as easy to see using uncertainty principle near zeros of $ |\psit|^2$, is strictly positive, $-\Delta + 2 |\psit|^2\ge \del>0$, with $\del$ depending on $\e$. Therefore it is invertible.
    The Implicit Function Theorem then gives us a neighbourhood $V$ of $\ut$ in $X$ and a neighbourhood $W$ of $0$ in $H^2(\R^2, \R)$ and a map $H : V \to W$ such that $f(\g, u) = 0$ for $(\g, u) \in W \times V$ if and only if $\g = H(u)$. We can always assume that $V$ is a ball of radius $\delta_0$.

    We can now define the map $\eta$ on $U_\delta$ for $\delta < \delta_0$ as follows. Given $u \in U_\delta$, choose $\g$ such that $u = T_\g(\ut + v)$ with $\ut + v \in V$. We define $\eta(u) =   \g H(\ut + v)$.
To show that $\eta$ is well defined, we first show that if $g$ is sufficiently close to the identity, then $H(T_\g u) = \g H(u) $. To begin with, we note for all $\g$, $T_\g (V) \subset V$. One can easily verify, by the definition of $f$, that $f(\g \del, T_\g u) = f(\del, u)$. Indeed, we have $f(\g h, T_\g u) = \G^*  T_{\g \del}^* (T_\g u - T_{\g \del} \ut)  = \G^*  T_{\del}^* ( u - T_{\del} \ut) = f(\del, u)$.   Hence $f(\g H(u), T_\g u) = f(H(u), u) = 0$, and therefore by the uniqueness of $H$, it suffices to show that $\g H(u)  \in W$, but this can easily be done by taking $V$ to be smaller if necessary.

    Suppose now that we have also $u = T_{\g'}(\ut + v')$. Then $T_{\g^{-1}\g'}(\ut + v') = \ut + v$. Therefore, by the relation $H(T_\g u) = \g H(u) $, we  have
    \begin{equation*}
       \g H(\ut + v)   =\g H(T_{{\g}^{-1} \g'}(\ut + v'))=\g {\g}^{-1}\g' H(\ut + v')
        ={\g'}  H(\ut + v'),
    \end{equation*}
    so $\eta$ is well-defined. Finally, since $ T^{\rm refl} \G = \G r$ and $r\G^* = \G^*  T^{\rm refl} $, we have that the function $f(\g, u)$ satisfies $f(r\g, T^{\rm refl} u)=f(\g, u)$, which implies the second statement follows and the proof is complete.
\end{proof}

\begin{remark} \label{rem:gaug-fix}  Going from $u$ to $u':=(T^{gauge}_{\eta(u)})^{-1} u$, with $\eta(u)$ s.t. $u' - \ut \perp G_{\chi},\ \forall \chi$, fixes the gauge. Another gauge fixing would be choosing $\eta(u)$ so that $\divv a' =0$, where $a'=a-\n \eta(u)$. 
 One cannot do both as they are incompatible.
Both ways to fix the gauge coincide in the leading order in $\eps$. So in the leading order, one can take $\divv a =0$, which eliminates the null space of $\curl$. 
\end{remark}

\subsection{GES equations in the moving frame}\label{sec:mov-frame}

 Since, by the definition $\tilde \rho u=u$ and $\tilde \rho T_{\chi}^{\rm gauge}  \ut=T_{\chi}^{\rm gauge} \ut$, we have that $v$ in Proposition \ref{prop:decomp-gauge} satisfies $\tilde \rho v=v$.    With this in mind, we reformulate the result of Proposition \ref{prop:decomp-gauge} 
        as
\begin{align}	\label{u-deco-gauge} & u =T_{\g}^{\rm gauge} ( \ut +v), \quad \mbox{with}\,\  v\perp G_{\chi}\  \forall  \chi\in H^1, \,\   \mbox{and} \quad  \tilde \rho v=v,\  \mbox{and some}\  \gamma,\\ 
 \label{gam-parity} &\rf\gamma   = \gamma\  \text{ and }\  T^{\rm refl} v=v,\ \text{ if, in addition,  $u$ satisfies }\  T^{\rm refl} u=u.  
\end{align}
\begin{proposition}\label{prop:v-eq}
If 
$u=(\psi, \bar\psi, a)$ is a solution to the Gorkov-Eliashberg-Schmidt equation \eqref{u-eq} (which is equivalent to  \eqref{GESresc-c}), then 
$\g$ and $v$, defined in the equation \eqref{u-deco-gauge}, satisfy the equation
	\begin{align}	\label{v-eq}
		\partial_t v = - \Lo  v -V_{\s}    v- \Nt (v) - G_{\s},\  \quad \s:=\phi+\dot\gamma, 
		\end{align}	
 and $ \tilde \rho v=v$.	Here 
$\Lt=d J(\ut)=\E''(\ut)$ is the hessian defined in \eqref{Lt}, 
$\dot\gamma=\p_t \g$, 

\begin{equation} \label{Vgam} 
  V_{\chi} v:= (i\chi\xi, - i\chi\eta, 0),
 \end{equation}
for $v = (\xi, \eta, \al)$, and $\Nt (v)$ is the  nonlinearity,
\begin{align}	\label{N} \Nt(v)=J(\ut+v)- J(\ut)- \Lt v.\end{align}
 (The terms $\Lt$ and  $\Nt(v)$ 
     are given explicitly by expressions \eqref{Lc-expl} and   \eqref{N-expl} of Appendices \ref{sec:hess-expl} and \ref{sec:nonlin}.) 

In the opposite direction, if $\g$ and $v$ satisfy the equations \eqref{v-eq} and 
   $ \tilde \rho v=v$, then $u$, defined by \eqref{u-deco-gauge},  is a solution to \eqref{GESresc-c}. 
   
   In addition, $T^{\rm refl} u=u \longleftrightarrow T^{\rm refl} v = v,\  \rf\s = \s$. \end{proposition}
\begin{proof} Plugging $\tilde u :=(T_{\g}^{\rm gauge})^{-1} u=\ut + v$ into \eqref{Tgauge-u-eq}, using $\p_{t, \s}  (\ut + v)=\partial_t v + V_{\s}    v +G_{\s}$, with $\s:=\phi+\dot\g$, and expanding $J(\ut + v)$ in $v$ and using that $d J(\ut)=E''(\ut)=\Lt$, where $d J(u)=(d_{\bar\psi}J (u),\ d_{\psi}J (u),\  d_{a}J (u))$ (see Section \ref{sec:hessian} for the notation), 
 gives \eqref{v-eq}.

Proceeding in the reverse order, one can easily prove that $u$ satisfies the generalized version of the equation \eqref{u-eq}, i.e. with $J$ extended to functions of the form $(\psi, \theta, a)$, which becomes \eqref{u-eq} (which is, of course, equivalent to  \eqref{GESresc-c}), once we know that $ \tilde \rho u=u$.
\end{proof}
Eq \eqref{v-eq} is for the unknowns $v\  \mbox{and}\ \g$. For $\g=-\eta (u)$ and $v$, defined in Proposition \ref{prop:decomp-gauge}, this equation is supplemented by the conditions $ v\perp G_{\gamma'},$  $\forall  \gamma'$.  
Projecting it onto  the subspace spanned by the gauge zero modes (tangent vectors) and its orthogonal complement leads to two coupled equations for  $v\  \mbox{and}\ \g$. To derive these equations we need the following definitions. Recall that $\bar P_g :=\one  - P_g$ be  the orthogonal projection onto  the orthogonal complement, $\cV_g^\perp$, of the subspace, $\cV_g$, spanned by the gauge zero modes. 
Recall that the real-linear operator $\rho$ is defined in \eqref{rho}.  We have 
 \begin{proposition} \label{prop:v-sig-eqs}  If  $u(t) = (\psi(t), \bar\psi(t), a(t))$ satisfies  \eqref{u-eq}, then the functions $v$ and $ \s:=\phi+\dot\gamma$, defined by $u$ in \eqref{u-deco-gauge}, satisfy 
 the following equations
\begin{align}	\label{v-eq-proj}
		\partial_t v = - \Lt^\perp  v - \bar P_g (V_{\s}    v+  \Nt(v)), 
		\end{align}	
	 \begin{align}\label{sig-eq} 
(-   \Delta+ 2 |\psit|^2 &+ \bar\psit\xi+ \psit\eta ) \s  =- i \bar\psit  N_\xi(v) - i \psit  N_\eta(v)+\divv  N_\al(v),
\end{align}
 where $\Lt^\perp:=\Lt \bar P_g$,  $ \Nt (v)= (N_\xi(v),   N_\eta(v),  N_\al(v))$, and $\tilde\rho v_{}= v$. Moreover, if $T^{\rm refl} u=u$, then $v$ and $\s$ obey
\begin{equation} \label{v-parity-gam} 
T^{\rm refl} v=v\  
 \quad \text{ and }\  \quad  \rf \s=\s. 
 \end{equation} 
 
 Conversely,  given  $v$ and $\s$  solving the equations \eqref{v-eq-proj} and \eqref{sig-eq} and  satisfying $\tilde\rho v_{}= v$, the function $u$, defined by \eqref{u-deco-gauge}, satisfies \eqref{u-eq} (and $\tilde\rho u= u$). Moreover, if $v$ and $\s$ obey  \eqref{v-parity-gam}, then  $T^{\rm refl} u=u$.\end{proposition}
\begin{proof}
By Proposition \ref{prop:v-eq}, Eq \eqref{u-eq}  is equivalent to Eqs  \eqref{v-eq} and $\tilde \rho v_{}= v$. Projecting the equation \eqref{v-eq} onto  the orthogonal complement of the subspace spanned by the gauge zero modes (tangent vectors) 
gives \eqref{v-eq-proj}.

Now, we project \eqref{v-eq} onto  the tangent vectors, $G_{\chi}$  $\forall  \chi$ (see \eqref{gauge-modes}).
Multiplying \eqref{v-eq} scalarly by $G_{\chi} = (i\chi\psit, - i\chi\psit, \nabla\chi)$ and using $\lan G_{\chi}, v\ran=0,\ \Lt G_{\chi}=0$ and that $\Lt$ is a symmetric operator,  we find
\begin{align}\label{projGchi}\lan G_{\chi}, V_{\s} v+\Nt(v)+ G_{\s}\ran=0. \end{align}
 Remembering the definitions \eqref{Vgam} and  \eqref{N},  
and using that  $\lan G_{\chi}, V_{\g'} v \ran=
\lan \chi, (\bar\psit\xi+ \psit\eta) 
\g'\ran$, $\lan G_{\chi},  \Nt (v)\ran=\lan \chi, i \bar\psit  N_\xi(v)+ i \psit  N_\eta(v) 
-\divv N_\al(v)\ran$ and 
\begin{align}\label{GG-ip}\lan G_{\chi}, G_{\g}\ran=\lan \chi, (-\Delta +2 |\psit|^2) \g\ran,\end{align} 
 we see that \eqref{projGchi} can be rewritten in the form $\lan \chi, f\ran=0$, 
where  $f:=(-   \Delta+ 2 |\psit|^2 + 
\bar\psit\xi+ \psit\eta) \s + i \bar\psit  N_\xi(v)+ i \psit  N_\eta(v)-\divv  N_\al(v)$, which,  since $\chi\in H^2$ is arbitrary and $\divv \at=0$, 
implies the equation \eqref{sig-eq} (see also \eqref{gauge-modes-proj} of Appendix \ref{sec:fiberiz}).

By the last part of Proposition \ref{prop:v-eq}, $T^{\rm refl} u=u$ iff  $v$ and $\s$ obey \eqref{v-parity-gam}. 

In the opposite direction,  since $J(\tilde \rho u)=\tilde \rho J(u)$ (see \eqref{refl-tilderho-sym} and \eqref{rho}), we see that \eqref{v-eq} and therefore also \eqref{v-eq-proj} and \eqref{sig-eq} are 
 invariant under the transformation $v\ra \tilde \rho v$. Hence, it follows that,  given  $v$ and $\s$  solving the equations \eqref{v-eq-proj} and \eqref{sig-eq}, the function $u$, defined by \eqref{u-deco-gauge}, satisfies \eqref{u-eq}. Moreover, by \eqref{u-deco-gauge}, if $v$ and $\s$ satisfy \eqref{v-parity-gam}, then  $u$, defined by \eqref{u-deco-gauge}, satisfies $T^{\rm refl} u=u$. \end{proof}

\DETAILS{In conclusion of this subsection, we mention that the orthogonal projection operator, $P_0$, onto the space spanned by the gauge zero modes is given explicitly  as 
\begin{equation}\label{gauge-modes-proj} P_0 = \G  h^{-1}\G^* ,\ \text{   where   }\ \G \gamma': =G_{\gamma'}\ \text{   and  }\ \G^* v=- i \bar\psit \xi +  i \psit \eta - \divv \al, 
\end{equation}
 for $v=(\xi, \eta, \al)$, and $h:=- \Delta + 2 |\psit|^2$  (see the proof of Proposition \ref{prop:decomp-gauge}).  	The latter operators 
 satisfy $\G^* \G=h $ and  $ \G G_{\chi}=G_{h \chi}$ and therefore $P_0 G_{\chi}=G_{\chi}$. }

%
\DETAILS{ Note that $G_{\gamma'} =  g_{\gamma'} u_\om$, where  $\gamma'$ are tangent vectors to the group $ H^2(\R^2;\R)$ and $g_{\gamma'}$, are the infinitesimal gauge transformation (i.e. the G\^{a}teaux derivative  of the transformation \eqref{gauge-sym} in $\g$), defined, for  $u= (\psi, \bar\psi, a)$, as 
\begin{equation} \label{ggam'} 
  g_{\gamma'} u:= (i\gamma'\Psi, \nabla\gamma').
 \end{equation}}
 %


\subsection{Asymptotic stability}\label{sec:as-stab}

In this subsection, 
assume that  $\e$ is sufficiently small and $\tau$ and $\kappa$ are such that
 \begin{align}	\label{posit-cond}
 \g_{k} (\tau) >0 \qquad \forall k\in (\R^2/\LAT^*)/\{0\},\ \kappa> 1/\sqrt 2,\ \eta (\tau, \kappa)>0. 
 \end{align}
These inequalities and \eqref{nuk1-form} -  \eqref{nuk2-form}  imply that the first two eigenvalues of $\Lk$ are positive for $k\ne 0$ and satisfy the estimate 
 \begin{align} \label{nuk12-lwb}\nu_{k}^i\equiv \nu_{k}^i(\tau) \gs |k|^2, i=1, 2. \end{align}
The remaining eigenvalues are always $ \gs \e^2$.

\DETAILS{Multiplying the third equation in \eqref{GESresc-c} by $\sigma^{-1}$
and taking $\divv$ of the result, we obtain
\begin{equation}\label{Phi-eq}
\Delta \phi=
\divv\sigma^{-1}
[\im(\overline{\psi}\nabla_a \psi)- \CURL^2 a].
\end{equation}
This gives an equation for
$\phi$, which can be easily solved. The solution is determined up to a \textit{harmonic function} on $\R^2$. We \textit{fix the solution} (up to a constant) by demanding that $\phi$ is bounded.
In what follows we always
assume that $\phi$ is a bounded solution of the equation \eqref{Phi-eq} and, in
particular, is a function of $\psi$ and $a$, and we do not list it
among unknowns and use the notation $u=(\psi, \bar\psi, a)$.}

 We  consider the Gorkov-Eliashberg-Schmidt equations \eqref{GESresc-c} (or \eqref{v-eq}) 
 with an initial condition 
  $(\psi_0, \bar\psi_0, a_0)\in U_{\delta_0}$, with $\del_0<\del_*$ ($U_{\delta}$ is defined in \eqref{tubular-nbhd} and $\del_*$ is given in Proposition \ref{prop:decomp-gauge}), 
 satisfying 
 \begin{equation} \label{parity0-resc}
(\psi_0 (x), \bar\psi_0 (x), a_0 (x))=(\psi_0 (- x), \bar\psi_0 (- x), - a_0 (- x)).	 	
\end{equation} Then, by the local existence there $T>0$ s. t.  \eqref{GESresc-c} has a solution, 
$(\psi(t), \phi(t), a(t)) \in C^1([0,T]; U_\delta)$ for some $\delta < \delta_*$. Moreover,  by the uniqueness, this solution 
satisfies 
\begin{equation} \label{parity}
(\psi (t, x), \bar\psi (t, x), a (t, x))=(\psi (t, - x), \bar\psi (t, - x), - a (t, - x)).	 		 
\end{equation}

 Since $u(t) = (\psi(t), \bar\psi(t), a(t)) \in C^1([0,T]; U_\delta)$ for some $\delta < \delta_*$, by  
   Propositions \ref{prop:v-eq} and \ref{prop:v-sig-eqs}, 
if  $u(t)$ satisfies   the Gorkov-Eliashberg-Schmidt equations \eqref{u-eq}, then  
 the functions $v$ and $\s$, defined by $u$ in \eqref{u-deco-gauge} and by $\s:=\dot\g+\phi$, satisfy the equations \eqref{v-eq-proj}, \eqref{sig-eq} and  \eqref{v-parity-gam}.  

It turns out that the equations \eqref{v-eq-proj} and \eqref{sig-eq} for $v$ and $\s$ are more convenient for  analysis than \eqref{GESresc-c} and we concentrate on them. 
 We supplement the latter equations with the initial condition $v_0$. 
Since, by the uniqueness,  the Abrikosov lattice solutions $\ut = (\psit, \bar\psit, \at)$ satisfy $T^{\rm refl} \ut=\ut$ and, trivially, $T^{\rm refl} u=u$, where $T^{\rm refl}$ is defined in \eqref{Trefl}, 
 $v_0$ 
  must satisfy
 \begin{equation} \label{v0-parity}
	T^{\rm refl} v_0=v_0,\ \quad 
	\text{ and }\  \quad  \rho v_{0}= v_{0}.	 
\end{equation} 
By the reflection invariance of the equation \eqref{v-eq},  Proposition \ref{prop:decomp-gauge}, 
the invariance of \eqref{v-eq-proj} under the transformation $v\ra \rho v$, and the condition \eqref{v0-parity}, the solutions $v$ and $\s$ to  the equations \eqref{v-eq-proj} and \eqref{sig-eq} satisfy
\begin{equation} \label{v-parity}
T^{\rm refl} v=v,\ \quad 
\rf\s =\s,\  \quad \text{ and }\  \quad  \rho v= v.	 
\end{equation}
(We can also appeal to Proposition \ref{prop:v-sig-eqs}.) 

Finally, by Proposition \ref{prop:decomp-gauge}, 
 we can assume from now on that $v$ belongs to the space
\begin{equation} \label{H1-perp}
	H^1_\perp:=\{v\in H^1_{\rm cov}: v\perp G_{\chi}\  \forall  \chi\}\equiv (\Ran P_g)^\perp.
\end{equation}

Recall that $P'$ and $P''$ are the projections defined in \eqref{P'P''} and satisfying $P'+P''=\bar P_g$.  Since by Proposition \ref{prop:v-eq}, $\bar P_g v=v$, we can write $v=v'+v''$, where $v':= P' v$ and $v''=P''   v$, 
 and split the  the equation \eqref{v-eq-proj} into the two equations
	\begin{align}	\label{v'-eq}
		\partial_t v' &= - \Lo'  v' +\tilde N' (v),\\ 
	\label{v''-eq}	\partial_t v'' &= - \Lo''  v'' +\tilde N''(v), 	\end{align}	
where $\Lo' $ and $\Lo''$ are the restrictions of the operator $\Lo$ to the subspaces $\Ran P'$ and $\Ran P''$ 
 and   
$\tilde N' (v):=P' \tilde N(v)$ and  $\tilde N'' (v):=P''  \tilde N (v)$, with   $\tilde N (v):=V_{\s} v + \Nt (v)$. (Note that $P' \bar P_g=P' $ and $P'' \bar P_g=P'' $, where, recall, $\bar P_g:=\one - P_g$.) 

Relation \eqref{L-L2-lower-bound} implies that the 
restriction of the operator $\Lo$ to the subspace $ \Ran P''$   has a gap $\gs \e^2$ in the spectrum, 
 and therefore $v''$ can be estimated in terms of $v'$ using the differential inequalities for appropriate Lyapunov functionals. The 
restriction of the operator $\Lo$ to the subspace $ \Ran P'$ is the multiplication operator by   $\nu_{k }^1\oplus \nu_{k }^2$, which is of a simple form and behave as $O(|k|^2)$ at $k=0$ and therefore the equation \eqref{v'-eq} can be handled directly by standard techniques.  Of course, outcoming estimates in one subspace are incoming into the other. 

Our first goal is to prove {\it a priori bounds} on $v'$ and $v''$. In what follows, we denote   \[<t>:=(1+t)^{1/2}\ \text{ and }\ 
\|f\|_{X \cap Y} := \| f\|_{X}+\| f\|_{Y}.\] 

%
\DETAILS{Clearly, the structure of  the nonlinearity $\tilde N (v)$ plays an important role in our analysis.  It is shown in Appendix \ref{sec:est-RN} that it can be  written as $\tilde N (v)= \tilde N_2 (v, v)+ \widehat N (v)$, where $N_2 (v, v)$ is bilinear contribution to $N_\om (v)$, while $\widehat N (v)$ collects 
 trilinear  and higher order terms and enjoys better estimates.  To concentrate on the essentials and keep notation from running amok, we 
present estimates, proven in Appendix \ref{sec:est-RN},   only for the quadratic term $N_2 (v, v)$. 
The term $\widehat N (v)$ is treated similarly. (The  estimates for $\tilde N_2 (v, v)$ hold also for $\widehat N (v)$, provided the norms involved are less than some constant.)}
 %
  To concentrate on the essentials and keep notation from running amok, in what follows, we keep only the second order terms in the nonlinear estimates, omitting thus the third order ones. These terms give the main contributions if the norms involved are less than some constant.

\paragraph{Control of $v''$: the Lyapunov functionals.}\label{sec:as-stab.}

 We introduce the norms $\|v''(t)\|_{H^1_1}^2 := \|v''(t)\|_{H^1}^2 + \|P'' x v''(t)\|_{H^1}^2$, $\| v' \|_{L^\infty_1}^2:= \| v' \|_{L^\infty}^2+\|x v' \|_{L^\infty}^2$ and 
$\|v'\|_{X_{\del T}'}:=\sup_{t\in [0, T]}[ <t>^{\frac12\del} \|v'(t)\|_{H^1}+ <t>^{\del-\frac12} \| v'(t)\|_{L^\infty_1}].$ 
 With these definitions, 
 the main result of this paragraph is the following 
\begin{lemma}\label{lem:v''-control}
Let $\mu=\mu_{\tau}$ be the same as the one in \eqref{L-L2-lower-bound}. There is $\e > 0$ such that if $
\|v(t)\|_{H^1} \le  \e,\ \forall t \in [0,T]$, then, for all $t \in [0,T]$, 
 \begin{align}\label{v''-control}\|v''(t)\|_{H^1_1} &\lesssim e^{- \frac{1}{4}c \mu t} \|v_0''\|_{H^1_1}+  <t>^{- 11/8}\|v'\|_{X_{3/2 T}'}^{3/2}.
 \end{align} 
\end{lemma}
%
%
\begin{proof} 
We begin with some auxiliary statements. 
First, we define the Lyapunov functional 
	$\Lambda_1(v) = \frac{1}{2}\lan v, \Lo v \ran_{L^2}$ 
 and derive for it a differential inequality.  We compute $\partial_t \Lambda_1 (v)	=\lan \Lo v,  \partial_t  v\ran$.  Now using the equation \eqref{v''-eq} to express $\partial_t  v$, 
	we obtain
\begin{align}\label{Lam-eq} \partial_t \Lambda_1 (v'') = & \lan \Lo v'', - \Lo v'' + \tilde N'' (v)\ran.	\end{align}
 Recall that, due to the assumptions $\kappa > 1/\sqrt 2$ and  \eqref{posit-cond}, $\mu=\mu_{\tau}$ entering \eqref{L-L2-lower-bound}, is positive, $\mu>0$. 
 \DETAILS{ we use following bound 
 \begin{align}\label{L-lowbnd2}  \| \Lo  v'' \|^2\ge \mu \lan v'',  L_\om v''  \ran \ge c \mu^2 \|v''(t)\|_{L^2}^2.\end{align} 
To derive  \eqref{L-lowbnd2}, ones observes that  \eqref{L-L2-lower-bound} gives $\mu\|  v \|^2 \lan  v,  L_\om v  \ran\le  \lan v,  L_\om v  \ran^2\le \|  v \|^2 \| L_\om  v \|^2 $, which implies \eqref{L-lowbnd2} (the second inequality follows from \eqref{L-L2-lower-bound}).}

We begin with the estimate of the second  term on the r.h.s..  We use the rough inequality $|\lan \Lo v'',  \tilde N'' (v)  \ran|\le  \| \Lo v'' \|_{L^2}\| \tilde N'' (v))\|_{L^2}$. To treat the second factor on the r.h.s., we 
use the  estimate 
 \begin{align}\label{tildN-est1} 
\| \tilde N (v)\|_{L^2}\ls   \|v'\|_{H^1} \|v'\|_{L^\infty}+ \|  v'' \|_{H^2}  \|v\|_{H^1},\end{align}
which follows from \eqref{tildeN-est'}, shown in Appendix \ref{sec:nonlin}, if one assumes $ \|v\|_{H^1}\ls 1$. (This assumption is shown later to be superfluous and is made to simplify the expressions involved; it cab also built into our spaces.) 
Next, we use  the bounds \eqref{L-H1-lowbnd} and \eqref{H2-L-est}, together with the estimates above
\DETAILS{need the bounds
\begin{align}\label{H2-L-est}
 \|v''\|_{H^2}\ls \| \Lo v'' \|_{L^2},\ 
\quad \|v''\|_{H^1}\ls \lan v'',  L_\om v''  \ran, \end{align} 
the first of which follows from $\| \Lo v'' \|_{L^2} \ge \del \| \Lo v'' \|_{L^2}+ (1-\del) \mu \| v'' \|_{L^2} \ge \del \| \n v'' \|_{L^2}-  \del C \|  v'' \|_{L^2} + (1-\del) \mu \| v'' \|_{L^2}$ and the second one is derived similarly.}
 and use  $\| \Lo v'' \|_{L^2}\ \|v'\|_{H^1} \|v'\|_{L^\infty} \le  \frac14 \| \Lo v'' \|_{L^2}^2 + C \|v'\|_{H^1}^2 \|v'\|_{L^\infty}^2$, to find
\begin{align*} |\lan \Lo v'',  \tilde N'' (v)  \ran|\le   \frac14 \| \Lo v'' \|_{L^2}^2 + 
C\| \Lo v'' \|_{L^2}^2\|v\|_{H^1}+ C \|v'\|_{H^1}^2 \|v'\|_{L^\infty}^2. \end{align*}
 This estimate, together with  the relations \eqref{Lam-eq} 
 and $ \| \Lo v'' \|_{L^2}^2\ge  \frac12 \| \Lo v'' \|_{L^2}^2+  \mu  \Lambda_1(v'')$ (this follows from \eqref{L-lowbnd2} and $\lan v'',  \Lo v''  \ran=2 \Lam_1(v'')$), 
gives, for some $ C>0$, 
\begin{align}\label{Lyap-diff-ineq2}
\partial_t \Lambda_1(v'') &\leq -  \mu  \Lambda_1(v'') - (\frac{1}{4}  - C\|v\|_{H^1} )\| \Lo v'' \|_{L^2}^2  +
C  \|v'\|_{H^1}^2 \|v'\|_{L^\infty}^2.\end{align}

Now, to control  $ \| P'' x v''(t)\|_{H^1}^2$, we use the Lyapunov functional
\begin{equation}	\label{Lambda2}
	\Lambda_2(v'') = \frac{1}{2}\lan \Lo P'' x v'', P'' x  v'' \ran_{L^2}. 
	\end{equation}		
Similarly to \eqref{Lam-eq}, we derive a differential inequality for $\Lambda_2(v'')$. As in \eqref{Lam-eq}, on the first step, we  obtain
\begin{align}\label{dtLambda2}\partial_t \Lambda_2 (v'') = & \lan  \Lo  P'' x v'',  P'' x (- \Lo v'' + \tilde N'' (v)) 
 \ran. 
 \end{align}
\DETAILS{ We estimate the terms on the r.h.s.. We use that $\lan  \Lo  \bar P x v'',  \bar P x L v''  \ran= \lan \Lo   \bar P x v'',  \Lo \bar P x v''\ran +\lan    v'',  ([\bar P x,  [\bar P x,   L_\om]] v''\ran$ and use $|\lan    v'',  [\bar P x,  [\bar P x,   L_\om]] v''\ran|\le \|   v''\|_{H^1}^2 ??$. {\bf (use that $xP \approx P x$)} Next, by  \eqref{L2-lower-bound'}, we have $\mu \|  \bar P x v'' \|^2_{H^1}\le  \lan  \bar P x v'', \Lo \bar P x v''  \ran$ and $\mu \lan  \bar P x v'', \Lo \bar P x v''  \ran\le  \| \Lo \bar P x v''  \|^2$. 
Finally, using $\tilde N''_\om(v) =\tilde N''_\om(v')+ d \tilde N''_\om(v')v''+\dots$, $\| \bar P x \tilde N''(v')\|_{H^{-1}}\ls \|x^{1/2} v'\|_{H^1}^2+\dots $ and $\| \bar P x d \tilde N''_\om(v')v''\|_{H^{-1}}\ls \|x^{1/2} v'\|_{H^1} \|x^{1/2} v''\|_{H^1}+\dots $, 
 to obtain 
$$|\lan   \bar P x \vpp,  \bar P x \tilde N''_\om(v) \ran|  \le \frac\mu4 \| \bar P x \vpp\|_{H^1}^2 +\| \bar P x \tilde N''(v')\|_{H^{-1}}^2+ \| \bar P x \vpp\|_{H^1}\|   v''\|_{H^1}\| v'\|_{H^1}.$$}
 We estimate the terms on the r.h.s.. We use 
  $ P'' x \Lo = \Lo P'' x + P''  [ x,   \Lo] $  and the estimates $\| [ x,   \Lo] v''\|_{L^2}\ls  \|   v''\|_{H^1}$ and  \eqref{L-lowbnd2} with $v''\ra  P'' x v''$, to obtain 
\begin{align}\label{ineq1}\lan  \Lo  P'' x v'',  P'' x \Lo v''  \ran &\ge \lan \Lo   P'' x v'',  \Lo P'' x v''\ran- \|    \Lo   P'' x v''\|_{L^2}\|   v''\|_{H^1}\notag\\  
&\ge  \mu \Lambda_2(v'')+ \frac14 \|\Lo P'' x v''\|^2-\|   v''\|_{H^1}^2. \end{align}
Let $\| v\|_{L^p_1}:=\| v\|_{L^p} + \| x v\|_{L^p}$. In Appendix \ref{sec:nonlin}, 
we prove the following estimate 
 \begin{align}\label{xtildNv'-est} \| x \tilde N(v)\|_{L^2}\ls  \| v'\|_{L^\infty_1}\| v'\|_{H^1} 
&+\|v'\|_{H^1} \|P'' x v''\|_{H^2} \notag\\  &+\|v'\|_{L^\infty} \|P'' x v''\|_{H^1}. 
\end{align}
By the definition \eqref{P'P''}, $[x, P'']=-[x, P'] - [x, P_g]$, where, recall, $P_g$ is the orthogonal projection onto the span of the gauge modes, defined in \eqref{gauge-modes-proj}.  Furthermore, the property (b) ($V V^{*}=P'$) 
  and the relation \eqref{xV-Vk-deriv}  imply the relations 
\begin{align} \label{Px}  P' x_j =x_j P' - V_j V^* - V V_j^*. 
  \end{align}
Then the relations  \eqref{Px} and \eqref{R-bnds} of Appendix  
 \ref{sec:nonlin} show that  $[x, P'']$ is a bounded operator on $L^2$. We use this fact,  the relation $x \tilde N'' (v) = P''x \tilde N (v) 
  + [x, P'']\tilde N (v)$, 
and the estimates \eqref{tildN-est1} and \eqref{xtildNv'-est}, 
 we find 
\begin{align} \label{xN''v-est} 
&\| x \tilde N''(v)\|_{L^2}\ls  \| v'\|_{L^\infty_1}\| v'\|_{H^1} 
+\|v'\|_{H^1} \|\bar P x v''\|_{H^2}+\|v'\|_{L^\infty} \|\bar P x v''\|_{H^1}. 
\end{align}
 Using this and using $|\lan  \Lo \bar P x v'',  \bar P x \tilde N'' (v) \ran|  \le \|\Lo \bar P x v''\|_{L^2} \|x \tilde N'' (v)\|_{L^2}$, the triangle inequality $ab\le \frac18 a^2 + 2 b^2$ and  $ \|\bar P x v''\|_{H^2}\ls \|\Lo \bar P x v''\|_{L^2}$,  we obtain 
 $$|\lan  \Lo \bar P x v'',  \bar P x \tilde N'' (v) \ran|  \ls
\|\Lo \bar P x v'' \|_{L^2} \big(\|x v'\|_{L^\infty}\| v'\|_{H^1}+  \|\bar P x v''\|_{H^2}\|v\|_{H^1}\big)$$
$$ 
\qquad \qquad  \le  \frac18 \|\Lo \bar P x v''\|_{L^2}^2 +C\|x v'\|_{L^\infty}^2 \| v'\|_{H^1}^2+ C\|v\|_{H^1} \|\Lo \bar P x v''\|_{L^2}^2,$$ 
where, recall, $\|f\|_{X \cap Y} := \| f\|_{X}+\| f\|_{Y}$.
This together with \eqref{dtLambda2} and \eqref{ineq1}, implies 
\begin{align}\label{Lyap2-diff-ineq2}
\partial_t \Lambda_2(v'') \leq -  \mu   \Lambda_2(v'') - (\frac18 &- C_1 \|v\|_{H^1}) \|\Lo \bar P x v''\|_{L^2}^2 
\notag \\ &+  \|   v''\|_{H^1}^2+ C\|x v'\|_{L^\infty}^2 \|v'\|_{H^1}^2. 
\end{align}
Adding this inequality times $\del$ 
to \eqref{Lyap-diff-ineq2} and denoting $\Lambda (v'')=\Lambda_1(v'') + \del\Lambda_2(v'')$ and using the estimate $\| v''\|_{H^{1}}^k\le \| v''\|_{H^{1}}^2\| \Lo v'' \|_{L^2}^{k-2}$ 
 and choosing $\del>0$ so that $\del \|   v''\|_{H^1}^2\le \frac\mu2  \Lambda_1(v'') $
 gives
\begin{align}\label{Lyap-diff-ineq3}
\partial_t \Lambda (v'') \leq -   \mu  \Lambda (v'') &- (\frac{1}{4}  - C\|v\|_{H^1} )\| \Lo v'' \|_{L^2}^2 
 \notag\\
& -  (\frac18- C_1 \|v\|_{H^1}) \|\Lo \bar P x \vpp\|_{L^2}^2  +C_2 \| v'\|_{L^\infty_1}^2\|v'\|_{H^1}^2. 
\end{align}

Now, to complete the proof, we pick $\e$ so that $ C\e \le \frac{1}{8}  $ and  $C_1\e\le \frac{1}{16}$, 
where $C$ and $C_1$ are the same as in \eqref{Lyap-diff-ineq3}. Then 
the assumption  $\|v'(t)\|_{H^1\cap L^\infty} < \e$, for all $t \in [0,T]$, and the inequality \eqref{Lyap-diff-ineq3}  imply that
  $\partial_t \Lambda (v'') 	\leq -  \mu  \Lambda (v'') +C 
g(v')$,    where $g(v) := \| v\|_{L^\infty_1}^2 \|v \|_{H^1}^2$,  which yields $\Lambda (v'') \lesssim e^{- \frac{1}{2} \mu t} \Lambda (v_0'')+ \int_0^t e^{- \frac{1}{2} \mu (t-s)}C  g(v'(s)) ds$, where $v_0=v|_{t=0}$, and therefore, by  the estimate  $\|v''(t)\|_{H^1_1}^2\ls \Lambda (v'')$ (recall the definition $\|v''(t)\|_{H^1_1}^2:=\|v''\|_{H^1}^2+ \|\bar P x v''(t)\|_{H^1}^2$), we have 
\begin{align}\label{v''-control'}\|v''(t)\|_{H^1_1}^2 &\lesssim e^{- \frac{1}{2}c \mu t} 
 \|v_0''\|^2_{H^1_1}+  \int_0^t e^{- \frac{1}{2} \mu (t-s)}g(v'(s)) ds, 
 \end{align} 
for all $t \in [0,T]$,   where, recall, $g(v) := \| v \|_{L^\infty_1}^2 \|v \|_{H^1}$. 
Using the definition of the spaces $X_\del'$, we bound  $ g(v'(s))  \ls   <s>^{- 11/4} \|v'\|_{X_{3/2 T}'}^3$, which gives for the second term on the r.h.s. the estimate \[ \int_0^t e^{- \frac{1}{2} \mu (t-s)}g(v'(s))  ds\ls  \int_0^t e^{- \frac{1}{2} \mu (t-s)}<s>^{- 11/4} \|v'\|_{X_{3/2 T}'}^3ds\]\[\ls <t>^{- 11/4}\|v'\|_{X_{3/2 T}'}^3.\]
This, together with \eqref{v''-control'}, the statement of the lemma. \end{proof}

\paragraph{Equation \eqref{v'-eq}.}\label{sec:v'-eq}	Now we obtain bounds on $v'$. To this end  we investigate the equation \eqref{v'-eq} for $v'$, in which we pass to the spectral representation for the corresponding bands of $\Lo$, described in Lemma \ref{lem:V}. 
\DETAILS{We have
\begin{lemma}\label{lem:v'-control} Assume $v''$ is even and obeys the estimate  $\|v''\|_{H^1_{1/2}}\ls <t>^{-3/4}$.  Then $v'$ obeys the estimates  
\begin{align}   \label{v'-control} <t>^{3/4}\|v'\|_{H^1}+  <t>\| v' \|_{L^\infty_1}\ls \|v'_0\|_{H^1}+  \| v'_0 \|_{L^\infty_1}+ \| v''\|_{Y_{3/4}}^2.\end{align} 
 $\|v'\|_{H^1}\ls <t>^{-3/2}$. $\| v'\|_{X_{3/2 T}'}\ls \|f_0\|_{W^{\infty, 1} \cap L^1}  + \| h\|_{Y_{3/4 T}}^2$
\end{lemma}
To prove this lemma, we investigate the equation \eqref{v'-eq} for $v'$.}
Below, we say that a function $v: \R^2 \ra \C\times  \C\times  \C^2$ is even/odd iff $T^{\rm refl} v = v/T^{\rm refl} v = -v$. 

Recall the definition of the operators $A,\ V$ and $V^{*}$, given in \eqref{A} and in Lemma \ref{lem:V}. We  define $f:=V^* v$. Applying the map $V^* $ to \eqref{v'-eq} and using that $V^*  \Lo =A V^{*} $, we rewrite the equation \eqref{v'-eq} as 
\begin{align}	\label{f-eq}
		\partial_t f = - A f + 
\cN_h (f),\ \text{ where } 
h=v'':= P'' v\ \text{ and }  \cN_h (f)=V^{*} \tilde N (Vf + h).	\end{align}

 Let  $W^{p, s}=\{f \in L^p(\Om^*, \C^2):\ \p_k^s f \in L^p(\Om^*, \C^2) \}$. (For $s<1$,  $W^{\infty, s}$ are identified with the H\"older spaces.)  We consider this equation in the Banach spaces $X_{\del T}:=\{f: [0, T] \ra \C^2: f \text{ is odd and }\ \|f\|_{X_\del T}<\infty \}$, with the norm 
\begin{align} \label{Xdel}\|f\|_{X_{\del T}}:=\sup_{t\in [0, T]}\sum_{s=0, 1}[ <t>^{\del-s/2} \|f(t)\|_{W^{1, s}}+ \| f(t)\|_{W^{\infty, s}}].\end{align}
Now, for $v''$, we take the space  $Y_\del:=\{h: [0, T]\ra H^1(\R^2, \C\times\R^2): \|h\|_{Y_\del}<\infty \}$, with the norm $\|h\|_{Y_\del}:=\sup_{t\in [0, T]} <t>^{\del} \|h(t)\|_{H^1_{1/2}}$. Recall the notation $\|f\|_{X \cap Y} := \| f\|_{X}+\| f\|_{Y}$. 
In this paragraph we prove 
\begin{lemma}\label{lem:f-control} Assume $v''$ is even (i.e. $T^{\rm refl} v'' = v''$) 
 and obeys the estimate  $\|v''\|_{H^1_{1/2}}\ls <t>^{-3/4}$. Then  
any solution to the equation 
\eqref{f-eq},  with an initial condition $f_0\in W^{\infty, 1} \cap L^1$ which is odd, 
 satisfies the following bound 
\begin{align} \label{f-control}\| f\|_{X_{3/2 T}} & \ls\|f_0\|_{W^{\infty, 1} \cap L^1} + \| f\|_{X_{3/2 T}}^2 + \| h\|_{Y_{3/4 T}}^2. 
\end{align}
\end{lemma}
\begin{proof} 
In this proof we set $h=v''$.  By the construction 
\begin{align} \label{f-eq-rhs}- A f + \cN_h (f)= V^* P'J(\ut + V f +h)=: J_h (f).\end{align} 
The equations \eqref{refl-tilderho-sym} and \eqref{V-par} and the relations $T^{\rm refl} P'= P' T^{\rm refl}$ and $T^{\rm refl}  (\ut + V f +h) = \ut  - V \rf f +T^{\rm refl}  h$ show that 
\[\rf V^* P'J(\ut + V f +h)=-  V^* T^{\rm refl} P'J(\ut + V f +h)\]\[=-  V^* P'J( \ut  - V \rf f +T^{\rm refl}  h),\] 
which implies that 
\begin{align} \label{f-eq-rhs-parity} J_{T^{\rm refl} h} (- \rf f)=- J_h (f).\end{align}
 
By the equations \eqref{f-eq-rhs} and \eqref{f-eq-rhs-parity},  the equation \eqref{f-eq} has the parity symmetry in the sense that if $f$ is a solution, then so is $-\rf f$.  Thus if it has a unique solution  and $f_0$ is odd, then this solution is odd.

 Next, we address the nonlinearity $\cN_{\vpp } (f)$.  Let $H^1_{s}=\{v \in H^1(\R^2;\C\times \R^2),\  \bar P |x|^s v \in H^1(\R^2;\C\times \R^2) \}$.   
   We claim that 
   the map $\cN_\vpp (f)=\int_{\Om^*}^\oplus \hat{dk}   \cN_{\vpp, k} (f) $ satisfies  
\begin{align} \label{cNk-even} & 
\cN_{T^{\rm refl} h, k} (- T^{\rm refl}_k f)=\cN_{h, -k} (f). 
\end{align}
To show \eqref{cNk-even}, we use that the second relation in \eqref{V-par}  implies
\begin{align}\label{Vf-h-parity}   T^{\rm refl} (V f + \vpp) = -V T^{\rm refl}_k f+ T^{\rm refl} h.\end{align}
Furthermore, by the  definitions   $\tilde N (v):=V_{\s} v + \Nt (v)$ and $ \Nt(v)=J(\ut+v)- J(\ut)- \Lt v$ (see \eqref{N}) and the relations $T^{\rm refl} J(u) = J(T^{\rm refl} u), T^{\rm refl} V_{\s} = V_{\s} T^{\rm refl}, T^{\rm refl} \Lt = \Lt T^{\rm refl} $ and $T^{\rm refl} \ut = \ut$,
the nonlinearity $ \tilde N  (v)$ satisfies 
\begin{equation} \label{N-parity}T^{\rm refl}  \tilde N  (v)= \tilde N  (T^{\rm refl} v).\end{equation}
 By the definition, we have $\cN_\vpp(f)=V^* \tilde N (Vf+ h)$. 
 This together with \eqref{Vf-h-parity} and  \eqref{N-parity} gives \eqref{cNk-even}.

 By \eqref{cNk-even}, if $h$ is even, then $\cN_{\vpp, k} (f)$ is odd in the sense that $\cN_{\vpp, k} (- T^{\rm refl}_k f)=\cN_{\vpp, -k} (f)$. Furthermore,  if $\cN_{\vpp, k} (f)$ is differentiable in $k$ and if $h$ is even and $f$ is odd, then  $\cN_{\vpp, k=0} (f)=0$ and is of the form $\cN_{\vpp, k} (f)=k \cdot \cN_{\vpp, k 1} (f)$, where $\cN_{\vpp, k 1} (f):=\n_k  \cN_{\vpp, \bar k} (f)$, for some $\bar k \in \Om^*$. 

In what follows we use the following estimate on the nonlinearity 
  \begin{align}  \label{cN-est}
&\left\|\p_k^m\cN_{h, k } (f)\right\|_{L^\infty}\ls 
 \| f\|_{H^{m/2}}^2  +  \| h\|_{H^1_{m/2}}^2,\ m=0, 1, 
 \end{align}
which follows from Proposition \ref{prop:sec:nonlin}, shown in Appendix \ref{sec:nonlin}, if one assumes $\| f\|_{H^{m/2}}\ls 1$ and $\| h\|_{H^1_{m/2}}\ls 1$ (see the parenthetical remark after \eqref{tildN-est1}). 

Now, using the Duhamel principle, we rewrite \eqref{f-eq} as
\begin{equation} \label{Duham-eq}
f(t)=e^{- tA}f_0+\int_0^t e^{-(t-s)A}\cN_{h(s)}  (f(s))\, ds.
\end{equation}
We estimate the propagator $e^{- tA}$. In the rest of the proof we omit the subindex $T$ in $X_{\del T}$ and $Y_{\del T}$. We claim that, for $0 <s\le 1$ and $\frac12 <\del\le \frac34$, 
 \begin{equation}\label{semigrbnd} 
\left\|e^{-tA}   f\right\|_{X_{1+s/2}}\ls \|f\|_{W^{\infty, s} \cap L^{1}}, 
\end{equation}
where, recall, $\|f\|_{X \cap Y} := \| f\|_{X}+\| f\|_{Y}$, 
and
\begin{align} \label{Duham-est}
\left\|\int_0^t e^{-(t-s)A}\cN_{h(s)}  (f(s))\, ds   \right\|_{X_{2\del}} & \ls \| f\|_{X_{2\del}}^2 + \| h\|_{Y_{\del}}^2. 
\end{align}
The desired estimate \eqref{f-control} follows from  the estimates \eqref{semigrbnd} and \eqref{Duham-est} and the integral equation \eqref{Duham-eq}. To be specific we take $s= 1$ and $\del = \frac34$, which suffices for our purposes. 

To  prove the $L^1-$part of  the first bound, we consider first $t\ge 1$. We write $f_k=k \cdot f_{k 1}$, where $f_{k 1}:=\n_k  f_{\bar k}$, for some $\bar k \in \Om^*$, and estimate 
\begin{equation}\label{L1-semigrbnd'} 
\left\|e^{-tA} f\right\|_{L^1}\le\sum_i \|\p_k f_i \|_{L^\infty} \int_{\Om^*}  e^{-t \nu_k^i} |k| \hat{dk} .
\end{equation}
Using the estimate \eqref{nuk12-lwb} on $\nu_k^i$ and changing the variable of integration as $k'=\sqrt t  k$ gives  
\begin{align} \label{Int-est} \int_{\Om^*}  e^{-t\nu_k^i}|k|^\g  \hat{dk}\ls t^{-1-\g/2}\int_{\sqrt t\Om^*}  e^{-|k|^2/2}|k|^\g  \hat{dk}\ls t^{-1-\g/2},\ \g > -2,\ i=1, 2.\end{align}
 This, together with the previous estimate,  gives  
$\left\|e^{-tA} f\right\|_{L^1}\ls t^{-3/2} \|\p_k f\|_{L^\infty} .$ 
Next,  the $L^1-$part of \eqref{semigrbnd}  for  $t\ls 1$ follows from  the elementary estimate
\begin{equation}\label{semigrbnd'''} 
\left\|e^{-tA} f\right\|_{L^1}\le \max_i\sup_{k\in\Om^*} ( e^{-t\nu_k^i})  \|f\|_{L^1} \ls  \|f\|_{L^1} .
\end{equation} 
The last two estimates give
\begin{equation}\label{L1-semigrbnd} 
\left\|e^{-tA} f\right\|_{L^1}\ls <t>^{-3/2} (\|\p_k f\|_{L^\infty} +\| f\|_{L^1} ).
\end{equation}

 Now, using $\p_k e^{-(t-s)\nu_k^i} =-t(\p_k \nu_k^i )e^{-t\nu_k^i} +  e^{-t\nu_k^i}\p_k $ and $| \p_k \nu_k^i|\ls  |k|$ 
 and using the representation $f_k=k \cdot f_{k }'$ and \eqref{Int-est}, we obtain $\left\|\p_k e^{-tA} f\right\|_{L^1}\ls t^{-1} \|\p_k f\|_{L^\infty}.$
We estimate $\left\|\p_k e^{-tA} f\right\|_{L^1}$ for $t\ls 1$ similarly to \eqref{semigrbnd'''}. 
 This shows
 \begin{equation}\label{L1-semigrbnd2} 
\left\|\p_k e^{-tA} f\right\|_{L^1}\ls 
<t>^{-1} \|\p_k f\|_{L^\infty} +  \|f\|_{L^1},
\end{equation}
 which completes the proof of the $L^1-$part of  \eqref{semigrbnd}. 

Similarly,  we have $\left\|e^{-tA} f\right\|_{L^\infty}\le 
\max_i \sup_{k\in\Om^*} ( e^{-t\nu_k^i})  \|f\|_{L^\infty} $ and similarly  $\left\|\p_k e^{-tA} f\right\|_{L^\infty}\ls \left\|t|k|^2 e^{-tA} f'\right\|_{L^\infty}+ \left\|e^{-tA} \p_k f\right\|_{L^\infty}\ls \left\| \p_k f\right\|_{L^\infty}$, where $f':=\int^\oplus_{\Om^*} f_{k }' \hat{dk}$, and therefore the $L^\infty-$part of \eqref{semigrbnd}, i.e. $\left\|e^{-tA} f\right\|_{W^{\infty, 1} }$ $ \ls   \|f\|_{W^{\infty, 1} }$, holds. 

Next, we show the estimate \eqref{Duham-est}.
Using that  $\cN (f, h)$ is odd 
 and using \eqref{L1-semigrbnd}, 
 we obtain
\begin{align} 
\left\|\int_0^t e^{-(t-s)A}  \cN_{h(s)}  (f(s))\, ds\right\|_{L^1} & \ls \int_0^t <t-s>^{-3/2}\|\cN_{h(s)}  (f(s))\|_{W^{\infty, 1}}\, ds. \notag \end{align}
Using this estimate, the bounds \eqref{cN-est} and   
$ \|f\|_{L^2_{1/2}}^2\le \|f\|_{L^\infty_{1}}\|f\|_{L^1_{}}$, 
  we find 
\begin{align} \label{cN-est'}  \|\cN_{h(s)}  (f(s))\|_{W^{\infty, 1} }\ls &\left\| f(s)\right\|_{L^\infty_1}\|f(s)\|_{L^1}+  \| h(s)\|_{H^1_{1/2}}^2\notag \\
&  \ls  <s>^{-3/2} (\left\| f\right\|_{X_{3/2}}^2 +  \| h\|_{Y_{3/4}}^2).\end{align} 
This gives 
\begin{align} 
\bigg\|\int_0^t e^{-(t-s)A} & \cN_{h(s)}   (f(s))\, ds\bigg\|_{L^1}  
\notag \\
& \ls \int_0^t <t-s>^{-3/2} <s>^{-3/2}\, ds (\left\| f\right\|_{X_{3/2}}^2 +  \| h\|_{Y_{3/4}}^2). \notag\end{align}
Next, using \eqref{semigrbnd} and \eqref{cN-est}, 
we have
\begin{align} 
\left\|\int_0^t \p_k e^{-(t-s)A}  \cN_{h(s)}  (f(s))\, ds\right\|_{L^1} 
& \ls \int_0^t <t-s>^{-1}\|\p_k\cN_{h(s)}  (f(s))\|_{L^\infty}\, ds\notag . 
\end{align}
\DETAILS{\begin{align} 
\left\|\int_0^t \p_k^s e^{-(t-s)A}  \cN (f(s), h(s))\, ds\right\|_{L^1} & \ls \int_0^t <t-s>^{-1- r/2}\|\p_k^{r+s}\cN_{ 1} (f(s), h(s))\|_{L^\infty}\, ds \notag \\
& \ls \int_0^t <t-s>^{-1- r/2}(\left\|\p_k^{s/2} f(s)\right\|_{L^\infty}\|\p_k^{s/2}f(s)\|_{L^1}+  \| h(s)\|_{H^1}^2)\, ds \notag \\& \ls \int_0^t <t-s>^{-1- r/2} <s>^{-1- r/2}\, ds (\left\| f\right\|_{X_{1+ r/2, s}}^2 +  \| h\|_{Y_{3/4}}^2). \notag \end{align}}
By \eqref{cN-est'} of Appendix \ref{sec:nonlin} and $\int_0^t <t-s>^{-\al} <s>^{-\beta}\, ds\ls <t>^{-\al}$, for $\beta \ge \al$ and $\beta >1$, this gives 
$$\left\|\int_0^t e^{-(t-s)A}\cN_{h(s)}  (f(s))\, ds   \right\|_{W^{\infty, 1} }  \ls <t>^{-3/2} (\left\| f\right\|_{X_{3/2}}^2 +  \| h\|_{Y_{3/4}}^2),$$ which is the $L^1-$part of the estimate in \eqref{Duham-est}. Similarly to above, using that  $\cN_{\vpp, k } (f)=k \cdot \cN_{\vpp, k, 1} (f)$, we have
\begin{align} 
\bigg\|\int_0^t   e^{-(t-s)A} & \cN_{h(s)}  (f(s))\,  ds\bigg\|_{L^\infty} \notag\\
& \le \int_0^t \max_i  \sup_{k\in\Om^*} ( e^{-(t-s)\nu_k^i})\|\cN_{h(s)}  (f(s))\|_{L^\infty}\, ds. 
\end{align}
Now, using 
\eqref{cN-est'}, we conclude
\begin{align} 
\left\|\int_0^t  e^{-(t-s)A}  \cN_{h(s)}  (f(s))\, ds\right\|_{L^\infty} &   \ls  \left\| f\right\|_{X_{3/2}}^2 +  \| h\|_{Y_{3/4}}^2,  \notag 
\end{align}
Next, we have
\begin{align} 
\bigg\|\int_0^t \p_k e^{-(t-s)A} & \cN_{h(s)}  (f(s))\, ds\bigg\|_{L^\infty} \notag \\
& \le \int_0^t  \max_i \sup_{k\in\Om^*} ( e^{-(t-s)\nu_k^i}(t-s)|k|^2)\|\cN_{h(s), 1}  (f(s))\|_{L^\infty}\, ds \notag \\
&+ \int_0^t \max_i  \sup_{k\in\Om^*} ( e^{-(t-s)\nu_k^i})\|\p_k\cN_{h(s)}  (f(s))\|_{L^\infty}\, ds, \notag 
\end{align}
which, by \eqref{cN-est'},  gives $ \left\|\int_0^t e^{-(t-s)A}\cN_{h(s)}  (f(s))\, ds    \right\|_{L^\infty_1}  \ls   \left\| f\right\|_{X_{3/2 }}^2 +  \| h\|_{Y_{3/4}}^2$. This is the $L^1-$part of the estimate in  \eqref{Duham-est}.
\end{proof}

\begin{proof}[Proof of asymptotic stability.]  Let $V X_{\del T}:=\{v'=V f: f\in X_{\del T}\}$. We consider the equations \eqref{v'-eq}-\eqref{v''-eq} in the spaces $V X_{3/2 T}\times Y_{3/2 T}$ for $(v', v'')$. By the standard local theory, there is $T>0$ s.t.  these equations are well-posed on the interval $[0, T]$.

Now, using that  $v'=V f$,  we show that
\begin{align}   \label{X-X'-bnd}\| v'\|_{X_{\del T}'}\ls \| f\|_{X_{\del T}}.\end{align} 
Indeed,  by the statement (d) of Lemma \ref{lem:V},  
 we have
 \[\|v'\|_{H^1}\ls \|f\|_{L^2}\ls (\| f\|_{L^{\infty}}\| f\|_{L^{1}})^{1/2}\le <t>^{-\del/2}\| f\|_{X_{\del T}}.\] Next, 
the inequality \eqref{V-bnd5}
implies \[\| v' \|_{L^\infty_1}\ls \| f\|_{W^{1, 1}}\le <t>^{-\del+\frac12}\| f\|_{X_{\del T}},\] 
which gives \eqref{X-X'-bnd}. 

 By taking initial condition sufficiently small, we can attain that $ \|v'\|_{X_{3/2 T}'}+\|v''\|_{Y_{3/4 T}}\le \e$, for given $\e>0$. 
Next, we observe $\sup_{0\le t\le T}\|v\|_{H^{1}} \le \|v'\|_{X_{2\del T}'}+\|v''\|_{Y_{\del T}}$ for any $\del\ge 0$. Hence the estimates of Lemmas \ref{lem:v''-control} and \ref{lem:f-control} hold. Combining these estimates and using \eqref{X-X'-bnd}, we obtain
\begin{align} \label{combin-est}\| f\|_{X_{3/2 T}}+\|v''\|_{Y_{11/8 T}} & \ls\|f_0\|_{W^{\infty, 1} \cap L^1}+  \|v_0''\|_{H^1_1} + \| f\|_{X_{3/2 T}}^{3/2} + \| v''\|_{Y_{3/4 T}}^2. 
\end{align}
This estimate, for  sufficiently small $\|f_0\|_{W^{\infty, 1} \cap L^1} + \|v''_0\|_{H^1_1}$, implies that $\| f\|_{X_{3/2 T}} + \|v''\|_{Y_{11/8 T}} \ls\|f_0\|_{W^{\infty, 1} \cap L^1} + \|v''_0\|_{H^1_1}$. This bound can be bootstrapped to 
  $T=\infty$.
\end{proof}

\subsection{Instability}

Recall, $\g_{  \del }(\tau):=\inf_{\dist(k, \LATt^*)\ge \del}\g_{ k }(\tau)$, with $\g_{ k }(\tau)$ given in \eqref{gamk-tau}.  Assume that 
either $\kappa^2<1/2$ or $\g_{ \del }(\tau_*) <0$ for some  $\tau_*$ and $\del>0$. 
 Then there is $k_*\in \Om^*_{\tau_*}$ in whose neighbourhood, $(\kappa^2- 1/2)\g_k (\tau_*) <0$. (From the definition,  $\sup_k\g_k (\tau) > 0$.) Now, by Theorem \ref{thm:energyband} (see also Proposition \ref{prop:Lk-spec}), the lowest spectral branch, $\nu^2_k$, of 
  the hessian $\Lt$ is negative for $\tau = \tau_*$ and these $k$'s, provided $\e$ is sufficiently small. 
Then  the energetic instability of $\ut$ for such a $(\tau_*, \kappa_*)$ follows directly from  the definition. 
$\Box$

\appendix

\section{Product transformation under $U$} \label{U-prod-transf}  
In this section, we develop the product transformation formula for the magnetic Bloch-Fourier-Zak transform $U$, which we use in estimating the nonlinearities in Appendix \ref{sec:nonlin}.  

We consider maps 
$p_k: \underbrace{\C^4 \times \dots \times \C^4}_{k  \text{ times}} \times \C\ra  \C^4$, written as $p_k(v_1, \dots, v_k, \psi)$, with 
  the properties that 
\[\text{$p_k$ is linear in the first $k$ arguments,} \]  \[p_k(\tilde t^{\rm gauge}_\chi v_1, \dots, , \tilde t^{\rm gauge}_\chi v_k, e^{i \chi} \psi)=\tilde t^{\rm gauge}_\chi p_k(v_1, v_2, \psi),\] 
where  $\tilde t^{\rm gauge}_\chi : (\xi, \eta, \al) \ra (e^{i \chi}\xi, e^{- i \chi}\eta, \al)$, with $\chi\in \R$ (cf. \eqref{tildeTgauge}), and 
\[|p_k(v_1, \dots, v_k, \psi)|\ls (1+|\psi|)|v_1| \dots |v_k|.\] 
 Examples of such products are  $p_2(v_1, v_2, \psi)= (v_1^\al v_2^\xi, \bar\psit  v_1^\xi v_2^\eta, \bar\psit  v_1^\al v_2^\xi)$,  $p_2(v_1, v_2, \psi)= (\bar\psit v_1^\xi v_2^\xi , v_1^\eta v_2^\al, \bar v_1^\xi v_2^\xi)$ and $p_3(v_1, v_2, v_3)= (v_1^\al v_2^\al v_3^\xi,  v_1^\xi v_2^\eta  v_3^\eta, v_1^\al v_2^\xi v_3^\eta)$.  
 Here and below, we use the super-indices $\xi$ and $\al$ to distinguish the $\xi-$ and $\al-$ components of the vectors in $\cH:=L^2(\R^2, \C)\oplus L^2(\R^2, \C)\oplus L^2(\R^2, \C)$ and the operators acting on these components.
  
\DETAILS{ We also consider  the trilinear maps $p_3: \C^4 \times \C^4 \times \C^4\ra  \C^4$, written as $p_3(v_1, v_2, v_3)$, with 
  the properties 
\[\text{$p_3$ is linear in each argument,} \]   
  \[p_3(\tilde t^{\rm gauge}_\chi v_1, \tilde t^{\rm gauge}_\chi v_2, t^{\rm gauge}_\chi v_3)=\tilde t^{\rm gauge}_\chi p_3(v_1, v_2, v_3),\]
\[|p_3(v_1, v_2, v_3)|\ls |v_1| |v_2| |v_3|,\] 
such as  $p_3(v_1, v_2, v_3)= (v_1^\al v_2^\al v_3^\xi,  v_1^\xi v_2^\eta  v_3^\eta, v_1^\al v_2^\xi v_3^\eta)$. } 
 
\begin{lemma}\label{lem:BFZT-prod-est} If $T^{\rm trans}_{t} \psi=e^{ i g_s} \psi$ and $\|\psi\|_{L^\infty}<\infty$, then
\begin{align} \label{BFZT-prod2-est}
\|U{p_k(v_1, \dots,  v_k, \psi)} \|_{L^\infty_k L^r_x}\ls  |\Om^*| (1+\|\psi\|_{L^\infty_x}) \prod_j\|U v_{j} \|_{L^{p_j}_k L^{q_j}_x}, 
 \end{align}
for $ 
\sum_j p_j^{-1} =1,\ \sum_j q_j^{-1}  =r^{-1},\ 1\le r \le \infty$.
\DETAILS{, and  
\begin{align} \label{BFZT-prod3-est}\|U{p_3(v_1, v_2, v_3)} \|_{L^\infty_k L^r_x}\ls  |\Om^*|  \|U v_{1} \|_{L^p_k L^q_x}  \|U v_{2} \|_{L^{p'}_k L^{q'}_x} \|U v_{3} \|_{L^{p''}_k L^{q''}_x}, \end{align}
for $ p^{-1}+ (p')^{-1} + (p'')^{-1} =1,\  q^{-1} + (q')^{-1}  + (q'')^{-1} =r^{-1},\ 1\le r \le \infty$.} \end{lemma} 
\begin{proof} Denote $\hat v:= U v$. For the sake of simplicity of notation, take $k=2$. Writing $v_i = U^* \hat v_{i}$ and using the definition of the map $U^*$ in \eqref{U*}, the bi-linearity of $p_2$, the property $T^{\rm trans}_{t} \psi=e^{ i g_s} \psi$ and the second property of $p$, we find 
\[p_2(v_1, v_2, \psi) (x)=p(\int_{\Omega^*}   \hat {dk'}  e^{ i k' \cdot t}  T^{\rm trans}_{- t} \tilde T^{\rm gauge}_{g_t}\hat v_{1 k'} (x), \int_{\Omega^*}   \hat {dk''}  e^{ i k'' \cdot t}  T^{\rm trans}_{- t} \tilde T^{\rm gauge}_{g_t}\hat v_{2 k''} (x), \psi (x))\]
\[=\int_{\Omega^*}   \hat {dk'}  e^{ i k' \cdot t}   \int_{\Omega^*}   \hat {dk''}  e^{ i k'' \cdot t}  T^{\rm trans}_{- t} p( \tilde T^{\rm gauge}_{g_t}\hat v_{1 k'} (x), \tilde T^{\rm gauge}_{g_t}\hat v_{2 k''} (x), T^{\rm trans}_{t} \psi (x))\]
\[=\int_{\Omega^*}   \hat {dk'}   \int_{\Omega^*}   \hat {dk''}  e^{ i (k'+k'') \cdot t}  T^{\rm trans}_{- t} p( \tilde T^{\rm gauge}_{g_t}\hat v_{1 k'} (x), \tilde T^{\rm gauge}_{g_t}\hat v_{2 k''} (x), e^{ i g_s}  \psi (x))\]
\[=\int_{\Omega^*}   \hat {dk'}   \int_{\Omega^*}   \hat {dk''}  e^{ i (k'+k'') \cdot t}  T^{\rm trans}_{- t}  \tilde T^{\rm gauge}_{g_t} p_2(\hat v_{1 k'} (x), \hat v_{2 k''} (x),\psi (x)).\] 
This, together with  the definition of the map $U$ in \eqref{U},  gives
\begin{align*}(U{p_2(v_1, v_2, \psi)})_k (x) = \int_{\Omega^*}   \hat {dk'}   \int_{\Omega^*}   \hat {dk''} \sum_{t \in \LAT}  e^{ i (k'+k''-k) \cdot t}   p_2(\hat v_{1 k'} (x), \hat v_{2 k''} (x),\psi (x)).
\end{align*} 
Using the Poisson summation formula 
$\sum_{t \in \LAT} e^{- ik\cdot t} = |\Om^*| \del(k)$ on periodic functions, this gives 
\begin{align*}(U{p(v_1, v_2, \psi)})_k (x) =  |\Om^*|  \int_{\Omega^*}   \hat {dk'}   p(\hat v_{1 k'} (x), \hat v_{2 k-k'} (x),\psi (x)). \end{align*}
Now, by the third property of $p$ and by the H\"older and Hausdorf-Young inequalities, we have 
\eqref{BFZT-prod2-est} for $k=2$. 
The general $k$ is done in exactly the same way.  \end{proof}
\DETAILS{
\begin{align} \label{BFT-prod-est}
\|\widehat{v_1v_2} \|_{L^\infty_k L^r_x}\le  |\Om^*|  \|\hat v_{1} \|_{L^p_k L^q_x}  \|\hat v_{2} \|_{L^{p'}_k L^{q'}_x},\ p^{-1}+ (p')^{-1} =1,\  q^{-1}+ (q')^{-1} =r^{-1}, \end{align}
for $1\le r \le \infty$. We demonstrate the general principle of the proof of \eqref{BFT-prod-est} on the term $v_1v_2$, with $(v_1v_2)^\xi= v_1^\al v_2^\xi 
$. We compute $(\widehat{v_1v_2})^\xi_k$:
\begin{align*}(\widehat{v_1v_2})^\xi_k=  \sum_{t \in \LAT} e^{- ik\cdot t} e^{-i\g_t}  \int_{\Omega^*}   \hat {dk'}  e^{ i k' \cdot t} \hat v_{1 k'}^\al (x) \int_{\Omega^*}   \hat {dk''} e^{i\g_t} e^{ i k'' \cdot t} \hat v_{2 k''}^\xi (x) .\end{align*} 
Using the Poisson summation formula 
$\sum_{t \in \LAT} e^{- ik\cdot t} = |\Om^*| \del(k)$ on periodic functions, this gives 
\begin{align*}(\widehat{v_1v_2})^\xi_k=  |\Om^*|  \int_{\Omega^*}   \hat {dk'}   \hat v_{1, k'}^\al (x)  \hat v_{2, k-k'}^\xi (x). \end{align*}
Now, as usual, $\|(\widehat{v_1v_2})^\xi \|_{L^\infty_k L^r_x}\le  |\Om^*|  \|\hat v_{1}^\al \|_{L^p_k L^q_x}  \|\hat v_{2}^\xi \|_{L^{p'}_k L^{q'}_x},$ 
for the indices specified above, which conforms with \eqref{BFT-prod-est}.}

\section{The shifted hessians $\Lt^{\rm shift}$ and $\Kt$ and their fibers} 
 \label{sec:hessLshift}

In computation of the spectrum of $\Lt$,  the infinite dimensional subspace, $\cV_g:=\{G_\chi: \chi\in H^1\}$, of zero modes $G_{\chi},\ \chi\in H^1$, presents a considerable headache. To eliminate this subspace, we pass to the operator 
 \begin{align} \label{Lshift}\Lt^{\rm shift}:= \Lt + \frac12 \G \G^*,\end{align}
  where the operators $\G$ and $\G^*$ are defined in \eqref{gauge-modes-proj}. 
 We call $\Lt^{\rm shift}$ the {\it shifted} hessian. 
A standard result yields that it is self-adjoint.  It has the two advantages: (i) $G_{\chi},\ \chi \in H^1$ are not zero modes of $\Lt^{\rm shift}$ anymore, while  $\Lt^{\rm shift}\big|_{\cV_g^\perp} =  \Lt\big|_{\cV_g^\perp}$ (since, as can be readily checked, $\G^*= 0$ on $\cV_g$); (ii) $\Lt^{\rm shift}$ has a simpler explicit form than $\Lt$. 
 
 To elaborate (i), 
 we  define the subspace $\cH_\perp:=\{v\in\mathcal{H}: v\perp G_\gamma\}$ 
 and denote by $\cH_\perp^s$  the corresponding Sobolev spaces. Then 
\begin{equation}\label{specLsh-L}  \Lt^{\rm shift}\big|_{\cH^2_\perp} =  \Lt\big|_{\cH^2_\perp}, 
\end{equation}
 \begin{align} \label{LshGchi}
   \Lt^{\rm shift}  G_\chi = G_{h\chi}, \quad \mbox{where }\ \quad h\equiv \hr:=-\Delta +|\psit|^2.\end{align}
This leads to the following 
\begin{lemma}\label{lem:spec-Lshift-L} \begin{align}\label{specLsh-L}  \s(\Lt^{\rm shift})\cap (-\infty,  c\e^2)=  \s(\Lt)\cap (-\infty, c\e^2),\ c\gs 1. 
\end{align}  
\end{lemma}
\begin{proof} Since $ \lan  G_\chi, \Lt^{\rm shift}    G_{\chi'} \ran= \lan \chi,  h\chi' \ran$, the equation \eqref{LshGchi} implies 
\begin{equation}\label{GLshG} \lan  G_\chi, \Lt^{\rm shift}   G_\chi \ran= \| h\chi\|^2. \end{equation} 
Since $h\gs \e^2$ and, by the above, $\| G_\chi\| = \| \sqrt h \chi\|$, this gives $\lan  G_\chi, \Lt^{\rm shift}   G_\chi \ran \gs \e^2 \| G_\chi\|^2$ and therefore $\Lt^{\rm shift}\big|_{\cV_g}\subset [c \e^2, \infty),\ c\gs 1$. Hence, by the invariance of $\cV_g^\perp$ under $ \Lt $ and $ \Lt^{\rm shift}$, \eqref{specLsh-L} holds. \end{proof} 
The translational modes $S_j$  are not zero modes of $\Lt^{\rm shift} $ but certain of their combinations with the gauge modes $G_\chi$  gives zero modes.
Indeed, we have the following 

\begin{lemma}\label{lem:transl-zero-mode} The operator $\Lt^{\rm shift} $ has the following `gauged' translational zero modes 
\begin{align} \label{Tj} 
T_j  :=  S_{j}  - G_{h^{-1}\g_j},  \quad \mbox{where}\ 
 \g_j:= - i \bar\psit  (\n_{\at j} )\psit + i\psit  \overline{(\n_{\at j} )\psit} - \n^\perp_j  (\curl  \at). \end{align}
Here, recall, $h:=-\Delta +|\psit|^2 $,  $\n^\perp:=(-\n_2, \n_1)$, $ S_{j}$ and $  G_{\g}$ are the translational and gauge modes given above. The zero modes $T_i $ are gauge periodic w.r.to $\cL$. 
\end{lemma}
\begin{proof}  First,  by the definition and the equation $\Lt S_{j} = 0$, we have $\Lt^{\rm shift} S_{j} =\G \G^*  S_{j} = G_{\g_j}$, where $\g_j:=\G^*  S_{j}$ (see \eqref{Lshift} and \eqref{gauge-modes-proj}). 
Computing $\g_j:=\G^*  S_{j}$, we find the expression in \eqref{Tj}. %
\DETAILS{The definitions give 
 \begin{align} \label{LshS} 
   \Lt^{\rm shift}  S_{j} = G_{\g_j},  \quad \mbox{where}\  \g_j:= - i \bar\psit  (\n_{\at j} )\psit + i\psit  \overline{(\n_{\at j} )\psit} - \n^\perp_j  (\curl  \at), 
\end{align} 
where $\n^\perp:=(-\n_2, \n_1)$.  Using \eqref{LshS}} Using $\Lt^{\rm shift}  S_{j} = G_{\g_j}$ and \eqref{LshGchi},  we obtain $\Lt^{\rm shift}  (S_{j} -  G_{h^{-1}\g_j}) = G_{\g_j}  - G_{\g_j} =0$.
\end{proof}

Similarly to the fiber decomposition of the operator $\Lt$, given in Section \ref{sec:Bloch-deco}, one constructs the fiber decomposition,  $U \Lt^{\rm shift} U^{-1} = \int_{\Omega^*}^\oplus \Lk^{\rm shift} \hat {dk}$, of the operator $\Lt^{\rm shift}$.  By a standard result, the fibers are self-adjoint.  Repeating the arguments above, one sees that the relation analogous to \eqref{specLsh-L} holds also for $\Lk^{\rm shift}$:
 \begin{equation}\label{specLksh-L}  \s(\Lk^{\rm shift})\cap (-\infty,  \e^2)=  \s(\Lk)\cap (-\infty, c \e^2),\ c\gs 1. 
\end{equation}
Eq \eqref{specLksh-L} shows that, if we are interested in the spectrum of $\Lk^{\rm shift}$ in the interval $(-\infty, c \e^2),\ c\gs 1$, then it suffices to find the spectrum of $\Lk^{\rm shift}$ in this interval. Because of the property (i) this task is much simpler than the original one for $\Lk$ and is done  in Appendix \ref{sec:prop-Lk-spec-pf}.

\begin{remark}\label{rem:spurious-spec}   By \eqref{LshGchi}, the spurious spectrum of $\Lk^{\rm shift}\big|_{\cV_g}$ is given by 
\[\s(\Lk^{\rm shift}\big|_{\cV_g})=\s(h\big|_{H^1_k})\]
where $H^1_k$ is defined in \eqref{H1k-space}.  
 The corresponding eigenfunctions of $\Lk^{\rm shift}$ are given by $G_{\chi_j}$, where $\chi_j$ are the eigenfunctions of $h$ on $H^1_k$.
\end{remark}
One might be able to distinguish the spurious eigenfunctions, $G_{\chi_j}$, by their symmetry (see  \eqref{commut-Trefl-tilderho-rho-Kk}). However, this is not necessary, since the corresponding eigenvalues $\gs \e^2$ and play no separate role in our analysis.

\medskip

We note that since the zero modes $T_i $ are gauge periodic w.r.to $\cL$, they belong to $\cH_{k=0}$ and therefore are eigenfunctions of  the fiber operator $L_{k=0}^{\rm shift}$ with the eigenvalue $0$. 

Finally,  we mention the following general result:
  \begin{lemma}\label{lem:evLk-smooth-k} 
The operators  $\Lk^{\rm shift}$ have purely discrete spectrum. Simple eigenvalues of $ \Lk^{\rm shift}$ are smooth in $k$, with the corresponding eigenfunctions, $v_{k  }^j$, smooth in $x$ and $k$,
 \begin{align}   \label{vkj-est} & v_{k  }^j \in C^\infty(\R^2 \times \R^2, \C^4),\ \forall j.  \end{align}   \end{lemma} 
\begin{proof} The first statement is a standard result. A hands on proof is given in Proposition \ref{prop:Kk-spec-gen}below. 

Consider the operator $\tilde  \Lk^{\rm shift}:= e^{- i k\cdot x}  \Lk^{\rm shift} e^{i k\cdot x}$ defined on $\cH_{k=0}$. Since the map $v\ra e^{i k\cdot x} v$ maps $\cH_{k=0}$ unitarily into $\cH_{k}$, it has the same eigenvalues as the operator $\Lk^{\rm shift}$. It is easy to show, using the relation  $ e^{- i k\cdot x} \n_a e^{i k\cdot x}=\n_{a-k}$ that the operator $\Lk^{\rm shift}$ depends on $k$ smoothly (say, as an operator from $H^2$ to $L^2$). Hence the statement follows from the standard perturbation theory (see e.g. \cite{RSIV, GS}). 

 Estimate \eqref{vkj-est} follows by  the elliptic regularity and the perturbation theory arguments. \end{proof}



In what follows, we denote the perturbations (or fluctuations) of the magnetic potential $a$ by $\vec\al =(\al_1, \al_2): \R^2\ra \C^2$ and the corresponding full fluctuations, by 
$\vec v = (\xi, \phi, \vec\alpha)$. 
For computations, it is convenient to pass from $\vec\al$ to the complex vector-fields $\vec\al^\#$ defined by
$ \vec\alpha =(\alpha_1, \alpha_2) 
\longrightarrow \vec \al^\#:=(\alpha, \beta),\ \alpha:=\alpha_1 - i\alpha_2,\ \beta:=\alpha_1 + i\alpha_2.$ This leads to the transformation
\begin{align}\label{vec-v-transf} \#: \vec v = (\xi, \phi, \vec\alpha) \longrightarrow \vec v^\# = (\xi, \phi, \alpha, \beta),\ \alpha:=\alpha_1 - i\alpha_2,\  \beta:=\alpha_1 + i\alpha_2.\end{align}
 Now, we reserve the notation $v$ for the vectors $\vec v^\#: v \equiv  \vec v^\# = (\xi, \phi, \alpha, \beta)$. The corresponding space is denoted by $\mathcal{K}=L^2(\R^2;\C\times \C\times \C\times \C)\equiv L^2(\R^2;\C)^4$.
On this space, we consider the usual $L^2$ inner product {\it (different from \eqref{ip-cc}??)}
\begin{equation*}
    \lan v, v' \ran_{L^2} = \frac12 \int \bar{\xi}\xi' + \bar{\phi}\phi' + \bar{\alpha}\alpha' + \bar{\omega}\omega',
\end{equation*}
where  $v = (\xi, \phi, \alpha, \omega),\  v' = (\xi', \phi', \alpha', \omega')  \in L^2(\R^2;\C)^4$. 

 The transformed shifted hessian 
 is denoted by $\Kt$. 
\DETAILS{To pass to the hessian $\Kt$ 
 acting on 
 vectors of the form $v = (\xi, \phi, \alpha, \omega)\in \mathcal{K}$, we perform the transformation
\begin{align}\label{vec-al-transf} \vec\alpha =(\alpha_1, \alpha_2) \longrightarrow \vec\alpha^\# = (\alpha, \beta),\ \alpha:=\alpha_1 - i\alpha_2,\  \beta:=\alpha_1 + i\alpha_2.\end{align} 
 For 
 $v = (\xi, \phi, \vec\alpha)$, we denote 
  $v^\# = (\xi, \phi, \alpha, \beta)$, where $\alpha:=\alpha_1 - i\alpha_2$ and $ \beta:=\alpha_1 + i\alpha_2$,
 and define $\Kt$ by}
 Formally, it is defined by $\Kt \vec v^\#=(\Lt^{\rm shift} \vec v)^\#$. One can show that 
 \begin{align} \label{Kt}\Kt =(\Lt^{\rm shift})^\# =(\Lt^\#)^{\rm shift},\ \text{ where }\ A^\# :=\#  A \#^{-1}.\end{align}
    The hessian $\Kt$ can be also obtained by differentiating the fully complexified Ginzburg-Landau (or Gorkov-Eliashberg-Schmidt) equations and then performing the shift, see Subsection \ref{sec:GES-fully-compl} and Appendix \ref{sec:hess-expl}. 

Denote by $\Kk$ the $k-$fibers of $\Kt$. They are related to $\Lk^{\rm shift}$ and $\Lt^\#$ as in \eqref{Kt}.  As for $\Lt$ and $\Lk$, the operators $\Kt$ and $\Kk$ have the same form (given explicitly in Appendix \ref{sec:hess-expl}) and differ only by the constraints on the vectors on which they are defined. 
By the definition, we have  that 
\begin{align} \label{K-Lsh-equiv} 
\Kt/\Kk\ \text{ is unitary equivalent to }\ \Lt^{\rm shift}/\Lk^{\rm shift}. 
\end{align}
Relations  \eqref{specLksh-L} and \eqref{K-Lsh-equiv} imply in particular that
 \begin{equation}\label{specK-L}  \s(\Kk^{\rm shift})\cap (-\infty,  \e^2)=  \s(\Kk)\cap (-\infty, c \e^2),\ c\gs 1. 
\end{equation}

As for $\Lt^{\rm shift}$, the (transformed) gauge and gauged translational modes $G_\chi^\#$ and $ S_j^\#$, given in \eqref{compl-gauge-mode-transf} and \eqref{Sj-transf}, are not zero modes of $\Kt$, but their combinations, as in \eqref{Tj}, give the zero modes, 
\begin{align} \label{Tj-transf} 
K_\# T_j^\#  = 0,\ T_j^\#  :=  S_{j}^\#  - G_{h^{-1}\g_j}^\#,  \quad K_\# =\Kt K_0, 
\end{align}
where, recall, $h:=-\Delta +|\psit|^2 $ and (with $b_\tau;= \curl \at$) \[\g_1= - \bar\g_2 = -  i (\bar \psit \pat \psit +  \psit \overline{\pat^* \psit} + 2\p b_\tau).\]   
The zero modes $T_i^\# $ are gauge periodic w.r.to $\cL$  and therefore belong to $\cK_{k=0}$ and are eigenfunctions of  the fiber operator $K_{k=0}$ with the eigenvalue $0$. 

Moreover, we have the following asymptotical behaviour 

\begin{align}  \label{Tj-exp} 
 T_{j}^\#  =  S_{j}^\# + O(\e^2),\  S_{j}^\#  =  S_{j}^0 + O(\e),\ 
  S_{1}^0 = ( 0, 0, 1, 1),\    S_{2}^0 = ( 0, 0, i, -i).
\end{align} 
Indeed, using the expansion $\psit = O(\e ), \	 \at = a^0 + O(\e^2),$ where $a^0:=\frac{1}{2} x^\perp,\  x^\perp:= (-x_2, x_1)$, 
we obtain  \eqref{Tj-exp}. 

Finally, we translate some important notions and statements to the new representation.  The $\#-$transformations of  the operators $ \G$ and $\G^*$ defined in \eqref{gauge-modes-proj} are given by
\begin{equation}\label{Gam-Gam*}  \G^\# \chi: =G_{\chi}^\#\ \text{   and  }\ (\G^\#)^* v=- i \bar\psit \xi +  i \psit \eta  - \bar \p \al  -\p \beta, 
\end{equation}
 for $v=(\xi, \eta, \al, \beta)$.

For the transformed symmetry operations,  
the reflection transformation is   \begin{equation}\label{Trefl}
    T^{\rm refl} : (\xi( x), \eta( x), \al ( x), \beta ( x))  \mapsto (\xi(  - x), \eta( - x), - \al ( - x), - \beta ( - x)),
\end{equation}
and 
the `particle-hole' (real-linear) transformation is defined as 
\begin{align}\label{CS}
&\tilde\rho:=\mathcal{C}\cS\ 	 \text{ and }\ \rho: =T^{\rm refl} \tilde\rho \equiv T^{\rm refl} \mathcal{C}  \cS,\\ 
 &	\text{ where }\ \cS = \left( \begin{array}{cc}	\s   & 0 \\ 0 & \s \end{array} \right),\ \s = \left( \begin{array}{cc}	0 & 1  \\ 1 & 0\end{array} \right), 
\end{align} 
 and, recall, that $\cC$ denotes the complex conjugation. As can be easily checked from the definitions, $T^{\rm refl} \ut=\ut$ and $\tilde\rho \ut=\ut$ and
	\begin{align} \label{commut-Trefl-tilderho-rho-Kk}
	 T^{\rm refl} \Kk= K_{- k} T^{\rm refl},\  \quad \tilde\rho  \Kk=  K_{- k}  \tilde\rho,\   \quad  [ \Kk, \rho ]=0.
	\end{align} 
The vectors $S_{j}^\#,  G_{\chi_k}^\#,\  \chi_k\in H^1_k$ satisfy
\begin{align}\label{SG-parity} \rho S_{j}^\# = - S_{j}^\#,\ \rho G_{\chi_k}^\# = - G_{\chi_k}^\#,\    T^{\rm refl} S_{j}^\# = - S_{j}^\#,\ T^{\rm refl} G_{\chi_k}^\#=  G_{\chi_k}^\#.\end{align}

  Since the maps $\rho$, $\tilde\rho$ and $T^{\rm refl}$ satisfy $\rho^2 = 1$, $\tilde\rho^2 = 1$ and $(T^{\rm refl})^2 = 1$, their spectra consist of the eigenvalues $\pm 1$. The multiplication by $i$ maps between the subspaces $\{\rho = 1\}$ and $\{\rho = - 1\}$.  In view of \eqref{commut-Trefl-tilderho-rho-Kk}, we can restrict $\Kk $ to the invariant subspace $\{\rho = - 1\}$. 

\section{Explicit expressions of 
 various hessian}\label{sec:hess-expl}

In this appendix we present the explicit expressions for various hessians. 
  In what follows, $(A \psi)$ denotes the function obtained by applying the operator $A$ is applied to the function $\psi$, while, $A \psi$,  the product of two operators one of which is multiplication by $\psi$, etc.

To begin with, it is a straightforward to show that the complex hessian $\Lt$, 
defined in \eqref{Lt} is given explicitly by
 \begin{equation}\label{Lc-expl} 
	 \Lt = \left( \begin{array}{ccc} \hta'' & \kappa^2 \psit^2  & e \\
\kappa^2 \bar{\psi}_{\tau}^2 & \overline{\hta''}  & 
 \bar e \\ 
 e^* &  
 \bar e^* &  2 \hta'''
		\end{array} \right),
\end{equation} 
where,  for an operator $A$, we let $\bar{A} := \mathcal{C} A \mathcal{C}$, with $\mathcal{C}$ standing for the complex conjugation, and 
    \begin{align}\label{h'} \hta'' &:= -\COVLAP{a_{\w}}  + 2\kappa^2|\psi_{\w}|^2 - \lambda_{\w},\  \hta''' := \curl^*\curl  +  |\psi_{\w}|^2,\\ 
    \label{e}   e &:=
 i((\COVGRAD{a_{\w}}\psi_{\w})\cdot + \COVGRAD{a_{\w}} \psi_{\w} \cdot)= 
 2 i (\COVGRAD{a_{\w}}\psi_{\w})\cdot + 
 i \psi_{\w}\n  \cdot .\end{align} 
  In the last equality, we used that $\COVGRAD{a_{\w}}\cdot \psi_{\w}=(\COVGRAD{a_{\w}}\psi_{\w}) +\psi_{\w}\n  \cdot$.   The operator $L^{ c} $ is symmetric. 
A straightforward calculation shows that
\begin{align}\label{compl3}  	& 2i\vec\alpha\cdot\COVGRAD{\vec a}\psi
		= -i(\partial_{a}^*\psi)\alpha + i(\partial_{a}\psi)\bar{\alpha}.
\end{align}

Using relations \eqref{compl1}, \eqref{compl2} and \eqref{compl3}, we transform $\Lt $ according to \eqref{vec-v-transf} to find the explicit expression 
 for  $\Lt^\#$ (see \eqref{Kt}):  
  \begin{align}\label{Lt'}
	\Lt^\# = & \left( \begin{array}{cccc}
\hta'' & \kappa^2 \psit^2  & b & c\\
\kappa^2 \bar{\psi}_{\tau}^2 & \overline{\hta''} & \bar c & \bar b \\
b^* & \bar c^* & \tilde\hta  &  \frac12 \p^2 \\
 c^*  & \bar b^* &  \frac12 \bar\p^2 &   \tilde\hta 
		\end{array} \right),
\end{align}
where $\hta'' := -\COVLAP{\at}  + 2\kappa^2|\psit|^2 - \lamt$, $ \tilde\hta:=-\frac12 \Delta +  |\psit|^2, $    $\p:= \p_1-i \p_2 \equiv  \p_{x_1}-i \p_{x_2}$ (differing from the {\it standard}  ones by the factor $2$, see  \eqref{annih-creat-ops})
\[b:= -\frac12 i (\partial_{\at}^*\psit) - \frac12 i \partial_{\at}^*\psit =- i(\partial_{\at}^*\psit) + \frac12 i \psit \bar \p,\]
\[ c:=  \frac12  i(\partial_{\at}\psit) + \frac12 i \partial_{\at}\psit=   i(\partial_{\at}\psit) + \frac12 i \psit\p. \] 
Here $\partial_{a}:=\p_{a 1} -i \p_{a 2}=\partial - i  a_c$, with $a_c:=a_1-i  a_2$, the complexification of $a $,   and we used, in passing to the latter equations, $\p^*=-\bar \p$ and
\begin{align}\label{der-rel}  	& \pa \psi +(\pa \psi) = 2 (\pa \psi) + \psi \p,\  \quad \pa^* \psi +(\pa^* \psi) =  2 (\pa^* \psi) + \psi \p^*. 
\end{align}
(Not to confuse  $c$ and $c^*$ in \eqref{Lt'}, which are used only in this section, with the annihilation and creation operators, $c$ and $c^*$,  given in \eqref{annih-creat-ops}, and which are not used in this section.)

Note that the transformed hessian $\Lt^\#$ can be also obtained by differentiating the negative of the r.h.s. of the fully complexified GES equations, 
 \eqref{GESresc-cc}, w.r.to $\psi, \bar \psi, a, \bar a$ and using that $\COVLAP{a}=-\frac12 (\partial_{a}\partial_{a}^*+ \partial_{a}^*\partial_{a})$.

Finally, we find the explicit form of the operator $\Kt$. 
  \begin{lemma}\label{lem:K-expl} 
The operator $\Kt$ is given explicitly as 
\begin{equation}\label{K}
	\Kt = \left( \begin{array}{cccc}
	\hta'   & (\kappa^2 - \frac12)\psit^2  & -i(\partial_{\at}^*\psit) & i(\partial_{\at}\psit) \\
(\kappa^2 - \frac12)\bar{\psi}_{\tau}^2 & \overline{\hta' } & -i(\overline{\partial_{\at}\psit}) & i(\overline{\partial_{\at}^*\psit}) \\
i(\overline{\partial_{\at}^*\psit}) & i(\partial_{\at}\psit) & \hta & 0 \\
		-i(\overline{\partial_{\at}\psit})
			& -i(\partial_{\at}^*\psit)
			& 0 & \hta
		\end{array} \right),
\end{equation}
where $\hta' := -\COVLAP{\at}  + (2\kappa^2 + \frac12 )|\psit|^2- \lamt $ and $\hta:= -\Delta +  |\psit|^2$.    
  \end{lemma} 
\begin{proof} 
We can do this by using either formula  $\Kt =(\Lt^{\rm shift})^\#$, or  $\Kt  =(\Lt^\#)^{\rm shift}$, where, recall $\Lt^{\rm shift}$ and $\Lt^\#$ are defined in \eqref{Lshift} and \eqref{Kt}, respectively. We proceed in the second way. We note that $\bar{\partial} \alpha = \DIV \vec \alpha - i\CURL \vec \alpha$ and $\curl\vec \al = \frac12 \p\al +\frac12 \bar\p \beta$. 
 (In addition, we have $\curl^*\curl = -\Delta + \n\divv$ and $(\n\divv\vec \al)^\C =\frac12 \p\bar\p\al +\frac12 \p^2 \bar \al$.) 

Now, using expressions, \eqref{Gam-Gam*} we compute 
\begin{align}\label{GamGam*}
	\G \G^* = & \left( \begin{array}{cccc}
|\psit|^2 & - \psit^2  & -i\psit \bar\p  & -i\psit \p  \\
-\bar{\psi}_{\tau}^2 & |\psit|^2 &  i\bar \psit \p   &  i\bar\psit \p  \\
  -i \p \bar\psit &   i\p\psit  & -\bar \p\p & -\p^2 \\
  -i \p \bar\psit &   i \p\psit  & -\bar\p^2 & -\bar \p\p
		\end{array} \right).
\end{align}
Since $\Kt  =(\Lt^\#)^{\rm shift}=\Lt^\# +\frac12 \G \G^*$, \eqref{Lt'}, \eqref{compl3}, \eqref{GamGam*} and the relation $-\bar \p\p=\Delta$ give \eqref{K}. 
\end{proof}  
%

\section{Proof of Proposition \ref{prop:Lk-spec'}}\label{sec:prop-Lk-spec-pf}
 Proposition \ref{prop:Lk-spec'} follows from   the fact that $\Lk^{\rm shift}$ is unitary equivalent to $\Kk$, see, in particular, relation  \eqref{specK-L} and the following two propositions: 
 \begin{proposition} \label{prop:Kk-spec-gen} 
  The operator $\Kk$ is self-adjoint and   has purely discrete spectrum, with all but four eigenvalues $\gs 1$. \end{proposition}

  \begin{proposition} \label{prop:Kk-spec}    Let $\kappa^2>\frac12$. Then  the four lowest eigenvalues, $\nu_{k}^i\equiv \nu_{k}^i(\tau), i=1, 2, 3, 4,$  of the operator $\Kk$ (the ones of the order $o(1)$), 
  are simple and of the form \eqref{nuk1-form} - \eqref{nuk12-est} and 
\begin{align} 
\DETAILS{ \label{nuk1-form'}&\nu_{k}^1  =c_1 (\tau) \frac{\e ^2|k|^{2}}{\e ^2 + |k|^{2}} [(\kappa^2-\frac12) \g_k (\tau)+ \eta(\tau, \kappa)\e^2] + O(\e^4 |k|^{2}),\\
\label{nuk2-form'}&\nu_k^2  \gs |k|^2\text{ and, if } |k|\ll 1,\  \nu_k^2= c_2 (\tau)  |k|^2+ O(\e^2|k|^2) ,\\
	\label{nuk12-est'}& |\p_k^\al\nu_k^i|  \ls |k|^{2-|\al|}, i= 1, 2, |\al|\le 2,\\}
	\label{nuk34-est}&\nu_k^i \ge c_i (\tau) \e^2 ,\ c_i (\tau)\gs 1,\  i=3, 4.  
		\end{align}
 \end{proposition}

 \DETAILS{First, we introduce the functions \eqref{v012} and \eqref{v034}  \begin{equation}
\label{v012}	 v_{k 1}:=(\phi_{k}, 0, 0, 0),\ \quad  v_{k 2}:=(0,  \rf\bar{\phi}_{k}, 0, 0),	
	\end{equation}
where $\phi_{k}\in \NULL (-\COVLAP{a^0}-1)$ 
are described in Proposition \ref{prop:phik}, 
 and   
\begin{equation}
	\label{v034}	  v_{k 3}:=(0, 0,  e^{i k\cdot x}, 0),\ \quad    v_{k 4}:=(0,  0, 0,  e^{i k\cdot x}),	
			\end{equation}
which turn out to be  eigenfunctions of the operator $ \Kk^0:= \Kk \big|_{\e=0}$. Next,}

By Lemma \ref{lem:K0k-spec} below, the  operator $ \Kk^0:= \Kk \big|_{\e=0}$ has purely discrete spectrum containing the eigenvalues $0$ and $|k|^2$, with the remaining eigenvalues $\gs 1$. 
 To determine the fate under the perturbation of these two eigenvalues, we use the Feshbach-Schur map argument (see e.g. \cite{BFS, GS2} and Supplement I). 

Let  $\Pk$ be the orthogonal projector, onto $\NULL \Kk^0 \oplus \NULL (\Kk^0 -|k|^2)$ and $\bar P_k:= \one - \Pk$, and let $\bar \Kk:=\bar \Pk \Kk \bar \Pk$.  by a standard perturbation theory, the operators $\bar \Kk-\lam$ are invertible  for $\lam <c$, where $c:=\frac14 \min \{1, |q|^2: q\in \LAT^*/\{0\}\}$, and $\e$ sufficiently small. Then the  Feshbach - Schur map
\begin{equation}\label{fesh-k}
\mathcal{F}_{ k }(\lambda) := \left[\Pk \Kk \Pk - \Pk \Kk \bar \Pk ( \bar \Kk - \lambda)^{-1}\bar{\Pk} \Kk \Pk\right]\big |_{\RANGE \Pk},
\end{equation}
is well defined for $\lam <c$ and $\e$ sufficiently small. Therefore, by the isospectrality theorem for the Feshbach-Schur maps (see \cite{BFS, GS2} and Supplement I),
\begin{equation}\label{isosp}
 	\lam \in \sigma(\Kk)\cap (-\infty, c)\  \textnormal{ if and only if  }\ \lam \in \sigma(\cF_k (\lam))\cap (-\infty, c).
\end{equation}
By Lemma \ref{lem:K0k-spec}, the functions $v_{j k}, j=1, 2, 3, 4,$ defined in  \eqref{v012} and \eqref{v034}, form the basis in $\Ran \Pk = \NULL \Kk^0 \oplus \NULL (\Kk^0 -|k|^2)$. 
We define the matrix elements 
\[f_{ij }= \lan v_{i k }, \mathcal{F}_{ k }(0)v_{j k }\ran,\ \forall i, j =  1, 2, 3, 4,\]
 of $\mathcal{F}_{ k }(\lambda)$ in this basis. 
  We begin with  a general result which gives $\nu_k^1$  in an essentially close form suitable for 
  for the computation of $\eta(\tau, \kappa)$ appearing in \eqref{nuk1-form}. 
  \begin{proposition} \label{prop:nuk12-exact}  Let $\kappa^2>\frac12$. Then, (i) for $|k|\ll \eps$, the operator $\Kk$ has only two eigenvalues $\ll \e^2$ and  these eigenvalues are simple and of the form 
  \begin{align}\label{nuk12-exact} &\nu^{i}_k 
   =c_{i} |k|^2 \tilde \g_k^i (\tau, \kappa, \e) 
   + O(\e^4 |k|^2 ),\  i=1, 2,\\
\label{tilde-gamk12}  & \tilde  \g_k^{1/2} (\tau, \kappa, \e) :=  a_k\mp |b_k| -|c_k|^2/d_k, \end{align} 
 where $c_{i}\gs 1$,  $a_k:=f_{ 11}/\e^2$, $b_k:=f_{ 12}/\e^2$ , $c_k:=f_{ 13}/(\e|k|)$ and $d_k:=f_{ 33}/|k|^2;$ 
 
 (ii) for  $|k|\gs \e$, the operator $\Kk$ has only one eigenvalue $\ll \e^2$ and  this eigenvalue is simple and of the form
  \begin{align} \label{nuk1-exact} \nu^{1}_k & =c \e^2 \tilde \g_k^1 (\tau, \kappa, \e),\ 
  c\gs 1, \end{align}
 if $| \tilde \g_k^1 (\tau, \kappa, \e)|\ll 1$, and has all eigenvalues $\gs \e^2$, otherwise.\end{proposition}
\begin{remark} \label{rem:remnu1-gen} (a) We can also relate the remainder in \eqref{nuk12-exact} to the matrix elements $f_{ ij}(\lam)$, if one wishes to go to this order. 

(b)   In the domain  $|k|\sim \e$, expressions \eqref{nuk12-exact} and \eqref{nuk1-exact} for $\nu_k^1$ match (the factor $\frac{\e ^2|k|^{2}}{\e ^2 + |k|^{2}}$ in \eqref{nuk1-form} interpolates between the factors $|k|^2$ and $\e^2$ in \eqref{nuk12-exact} and \eqref{nuk1-exact}).   \end{remark}  
\DETAILS{ We show in Lemma \ref{lem:Fk-expan4} below that 
  \begin{align}\label{ak'} 
	&  a_k =(\kappa^2  - \frac{1}{2}) a_k' +   |c_k'|^2+  a_k''\e^2 + O(\e^4),\     a_k':=2\lan|\phi_0|^2|\phi_k|^2\ran  -\beta (\tau),\\ 
\label{bk'} & b_k=(\kappa^2  - \frac{1}{2})b_k'+ b_k''\e^2 + O(\e^4),\ b_k':=\lan \phi_0^2 \bar{\phi}_{- k}\bar{\phi}_{k} \ran ,\\ 
\label{ck'} & c_k = c_k' +  c_k''\e^2 + O(\e^4),\     
c_k' = - e^{-|k|^2/2}\hat k,\\  
\label{dk'} &  
d_k=1  + O(\e^2   |k|^2). 
\end{align}
 This gives}
\eqref{nuk12-exact} and \eqref{nuk1-exact} give the exact form of the two gapless spectral branches of $\Lt$.  Lemma \ref{lem:Fk-expan4} below implies 
\begin{lemma} \label{lem:tilde-gam12-expan}  Let $\eta(\tau, \kappa) := \p^2_\e  \tilde \g_k^1 (\tau, \kappa, \e)\big|_{\e=0, k=0} $. Then
\begin{align}\label{tilde-gam1-expan} 
&\tilde \g_k^1 (\tau, \kappa, \e)= (\kappa^2-\frac12) \g_k(\tau)+ \eta(\tau, \kappa)\e^2+O(|k|^{2}\e^2)+O(\e^{4}),\\
\label{tilde-gam2-expan} &\tilde \g_k^2 (\tau, \kappa, \e)= 2 (\kappa^2-\frac12) \beta (\tau)+O(\e^{2})+O(|k|^{2}).\end{align} 
\end{lemma} 
Proposition \ref{prop:nuk12-exact} and Lemma \ref{lem:tilde-gam12-expan} imply the statements \eqref{nuk1-form}, \eqref{nuk2-form} and \eqref{nuk34-est} of Proposition \ref{prop:Kk-spec}. The statement \eqref{nuk12-est} is straightforward. This proves 
Proposition  \ref{prop:Kk-spec}.

\begin{proof}[Proof of Proposition \ref{prop:Kk-spec-gen}] 
The proof of self-adjointness of $\Kk$ and discreteness of its spectrum are standard. 

To prove the remaining statement, 
 we write $\Kk= \Kk^0 +O(\e)$, where $ \Kk^0:= \Kk \big|_{\e=0}$, and use the perturbation theory in $\e$.
As follows from the explicit expression \eqref{K}, 
the operator $ \Kk^0$ is given by 
\begin{equation}\label{K0}
	\Kk^0 = \left( \begin{array}{cccc}
		-\COVLAP{a^0} - 1 & 0 & 0 & 0 \\
		0 & -\overline{\COVLAP{a^0}} - 1 & 0 & 0 \\
		0 & 0 & -\Delta & 0 \\
		0 & 0 & 0 & -\Delta
		\end{array} \right),
\end{equation}
 defined  on $\cK:=L^2(\R^2;\C)^4$, with the periodicity 
 conditions \eqref{gauge-per-vk}. The next lemma describes the spectrum of $ \Kk^0$. 
   \begin{lemma}\label{lem:K0k-spec} The operator   $ \Kk^0 $ is non-negative, has purely discrete spectrum (which can be described explicitly) and its lowest eigenvalues are $0$ (due to  the operator $-\COVLAP{a^0}-1$) with the eigenfunctions 	
   \begin{equation}
\label{v012}	 v_{k 1}:=(\phi_{k}, 0, 0, 0),\ \quad  v_{k 2}:=(0,  \rf\bar{\phi}_{k}, 0, 0),	
	\end{equation}
where $\phi_{k}\in \NULL (-\COVLAP{a^0}-1)$ 
are described in Proposition \ref{prop:phik}, 
related to the global gauge zero modes,
 and  $|k|^2$ (due to  the operator $-\Delta$), with the eigenvectors  
 \begin{equation}
	\label{v034}	  v_{k 3}:=(0, 0,  e^{i k\cdot x}, 0),\ \quad    v_{k 4}:=(0,  0, 0,  e^{i k\cdot x}),	
			\end{equation} 
related to the translational zero modes.  The rest of the eigenvalues of $K^0$ are $\gs 1$. 
 \end{lemma}
\begin{proof}   
  The operator   $\Kk^0$ is a direct sum the operators  $-\COVLAP{a^0} - 1,\ -\overline{\COVLAP{a^0}} - 1$ and $-\Del$ on $L^2(\R^2;\C)$, with the periodicity 
 conditions $e^{-\frac{i}{2}s\wedge x}\phi(x+s) = e^{ik\cdot s}\phi(x)$  and $\alpha(x + t) = e^{ik\cdot t}\alpha(x)$, respectively. By a standard theory (see e.g. \cite{RSII, GS2}), the latter  operators are self-adjoint, and have discrete spectrum, and therefore so is and does the operator $\Kk^0$. The relation between the spectra of  $\Kk^0$ and  $\COVLAP{a^0}$ and $\Del$ is given by
 \begin{equation*}\sigma(\Kk^0) = \sigma(-\COVLAP{a^0} - 1) \cup \sigma(-\Del). \end{equation*}
In fact, the spectra of the operators $\COVLAP{a^0}$ and $\Del$ and therefore of $\Kk^0$ can be found explicitly and are given in the lemma below, which completes the proof of the proposition. \end{proof}
  \begin{lemma}\label{lem:specDelta}
	 For each $k\in \Om^*$, the operators  $-\COVLAP{a^0}$ and $-\Del$ are  self-adjoint with discrete spectra given by
	\begin{equation*}
		\sigma(-\COVLAP{a^0} - 1) = \left\{ 0, 2, 4, \dots \right\},\
		\sigma(-\Del) = \left\{|k + q|^2 : q \in \LAT^* \right\}.
	\end{equation*}
	Moreover, $\Null(-\COVLAP{a^0}-1)$  is spanned by the function $\phi_k$,  
	described in Proposition \ref{prop:phik},  while the eigenfunctions of $-\Del$ corresponding to the eigenvalues $|k + q|^2 : q \in \LAT^*$ are given by 
	\[e_{k+q }(x) = 	e^{i(k+q)\cdot x},\  q \in \LAT^*.\] 
	\end{lemma}
\begin{proof}
	To describe the spectra of $-\COVLAP{a^0}$ and $-\Del$, we first consider the operator $-\COVLAP{a^0}$ on $L^2(\Omega;\C)$ with boundary conditions,
	$e^{-\frac{i}{2}s\wedge x}\phi(x+s) = e^{ik\cdot s}\phi(x)$ 
(see \eqref{phik-eqs-tau}). In Subsection \ref{sec:phik-etc}, we obtained  the representation	 $-\COVLAP{a^0} - 1 = c^*c$, where $c$ and $c^*$ are   the 
 annihilation and creation operators,	 introduced in \eqref{annih-creat-ops} and satisfying the commutation relations 
		 $[c, c^*] = 2$. This representation implies that the spectrum of $-\COVLAP{a^0}$ consists of the simple eigenvalues $2 n + 1, n= 0, 1, 2, \dots$, with the eigenfunctions $(c^*)^n \phi_k$, where $\phi_k$ solves $c\phi_k=0$ subject to $e^{-\frac{i}{2}s\wedge x}\phi(x+s) = e^{ik\cdot s}\phi(x)$. By Proposition \ref{prop:phik}, the latter problem has a unique (up a multiplicative constant) solution $\phi_k$, which we normalize as $\lan |\phi_k|^2\ran=1$. 
		
	We now turn to  the operator $-\Del$ acting on $L^2_k(\Omega; \C)$, which is $L^2(\R^2;\R^2)$, with the periodicity 
	$\alpha(x + t) = e^{ik\cdot t}\alpha(x)$. Standard methods show that this is a non-negative self-adjoint operator with discrete spectrum. 
Using the  orthonormal basis in $L^2_k(\Omega;\C)$, given by 		$e_{k+q}(x) = e^{i (k+q)\cdot x},\  q \in \LAT^*,$ 
one can show that the spectrum of $- \Del$ 
consists of the eigenvalues $|k+q|^2$ 
with the eigenvectors $e_{k+q }(x) =e^{i(k+q)\cdot x},\   q \in \LAT^*$.
  \end{proof}
Now, using the standard perturbation theory, we conclude the proof of Proposition \ref{prop:Kk-spec-gen}. 
\end{proof}

\DETAILS{\begin{proof}[Proof of Proposition \ref{prop:Kk-spec-gen}(B)]  
 Estimate \eqref{vkj-est} follows by  the elliptic regularity argument. \end{proof}}

As was mentioned after Lemma \ref{lem:tilde-gam12-expan}, Proposition  \ref{prop:Kk-spec} follows from Proposition \ref{prop:nuk12-exact}. 
 Proposition \ref{prop:nuk12-exact} follows from  \eqref{isosp}  and the following result, together with the implicit function theorem,
  \begin{proposition} \label{prop:Fk-spec} 
  Let $\kappa^2>\frac12$. Then for each $ \lam \in (-\infty, c),$ the operator $\cF_k(\lam)$ has four eigenvalues, $\nu_{k}^i(\tau, \lam), i=1, 2, 3, 4,$ these eigenvalues are simple and satisfy $\p_\lam \nu_{k}^i(\tau, \lam)=O(\e^2)$. In addition, the first two satisfy $\nu_{k}^i \equiv \nu_{k}^i(\tau, 0)$ of the form \eqref{nuk12-exact} - \eqref{nuk1-exact}, 
   while the remaining eigenvalues $\gs \e^2$. 
   \end{proposition}

\subsection{Proof of Proposition \ref{prop:Fk-spec}}\label{sec:Pf-Fk-spec} 
Before proceeding to the proof of Proposition \ref{prop:Fk-spec}, we describe general properties of the family $\cF_k(\lam)$. The map $\cF_k(\lam)$ inherits the $\rho-$ and $T^{\rm refl}-$symmetries of the operator $\Kk$ (see \eqref{commut-Trefl-tilderho-rho-Kk} and \eqref{commut-L-Lk-rho}):  
\DETAILS{in any basis $\{v^0_{k 0},  v^0_{k 1},   v^0_{k 2},  v^0_{k 3}\}$ satisfying 
\begin{align}\label{vtilde-parity}
\Span\{\tilde v_{k j}\}  \text{ is invariant under }  \rho\  \text{ and is mapped to } \Span\{\tilde v_{-k j}\}  \text{ by }  T^{\rm refl}.\end{align}}
%
 \begin{lemma}\label{lem:F-gen} 
\begin{align} \label{cFk-refl-sym}
&   \rho \mathcal{F}_{ k }(\lambda) =  \mathcal{F}_{ k }(\lambda) \rho 
 \ \quad \text{ and }\ \quad 
  T^{\rm refl} \mathcal{F}_{ k }(\lambda) =  \mathcal{F}_{- k }(\lambda) 
  T^{\rm refl}.
\end{align}
\end{lemma}
 \begin{proof} \DETAILS{Since $T^{\rm refl}$ and $ \cC\cS$ commute with $\Lt$ so does $\rho$.  Since $\rho$ maps $\cHk$ into itself, it commutes also with $\Lk$.        
 Due to \eqref{vtilde-parity}, the operator $P_k$, and therefore $\bar P_k$, commutes with $\rho$. 
  This and the definition \eqref{fesh-k} imply the first relation in \eqref{cFk-refl-sym}.  
  
Furthermore, since $r e^{i g_s}= e^{i g_{-s}} r$ and $r e^{i k \cdot x}= e^{i (- k) \cdot x} r$, we have $T^{\rm refl} L_{ k } =  L_{- k } T^{\rm refl}$ and $T^{\rm refl} P_{ k } =  P_{- k } T^{\rm refl}$,  which implies the second relation in \eqref{cFk-refl-sym}.}
By the equations \eqref{commut-Trefl-tilderho-Lk} and \eqref{commut-L-Lk-rho},  $T^{\rm refl}$ and  $\rho$ map $\cKk$ into  $\cKk$ and  $\cK_{-k}$, respectively, and have the following commutation relations $T^{\rm refl} K_{ k } =  K_{- k } T^{\rm refl}$ and  $[\rho, \Kk]=0$. 
The fact that the operator $P_k$ commutes with $\rho$ and satisfies  $T^{\rm refl} P_{ k } =  P_{- k } T^{\rm refl}$ (clearly, $\bar P_k :=\one -P_k$ has the same properties) and the definition \eqref{fesh-k} imply \eqref{cFk-refl-sym}.  \end{proof}
\DETAILS{Since the maps $\rho$ and $T^{\rm refl}$ satisfy
	$\rho^2 = 1$ and $(T^{\rm refl})^2=1$, the eigenfunctions of $ \cF_k$ with simple eigenvalues must satisfy 
	 \begin{align} \label{rho-Trefl-vk'}\rho v_{ k}=\pm v_{ k},\  \text{ and }\ T^{\rm refl} v_{ k}=\pm v_{- k}.\end{align}}	


 Equation \eqref{cFk-refl-sym} and the relation $\lan \rho v, \rho w\ran = \overline{\lan v, w\ran}$ imply that the matrix $F_k(\lambda)$ of $\mathcal{F}_{ k }(\lambda)$ in an orthonormal basis  $v_{k 1},  v_{k 2},   v_{k 3},  v_{k 4}$, 
  satisfying
\begin{align}\label{v0-parity} \rho v_{k i/i+1} = v_{k i+1/i},\  i=1, 3,\ \quad 
T^{\rm refl} v_{k i}=  v_{- k i},\end{align} 
 has the following properties  
\begin{align}\label{Fk-rho'-rho''-sym} & \rho' F_k(\lam)=  F_k(\lam) \rho'  \ \quad \text{ and }\ \quad  F_k (\lambda) = F_{-k}(\lambda), 
\end{align}
where $\rho'$ 
is the matrix of $\rho$ 
in the basis used, 
\begin{align}\label{rho'}
& 
 \rho': =
  \mathcal{C}  \cS',\  
 \text{ where }\   \cS' := \left( \begin{array}{cccc}	-1 & 0 & 0  & 0 \\0 & -1 & 0  & 0 \\ 0 & 0  & 0 &  1  \\ 0 & 0  & 1 &  0\end{array} \right).  
  \end{align}

\DETAILS{ The general structure of the matrix $F_{ k }(\lambda):=(f_{ ij}(\lam))$, where  $f_{k ij}(\lambda):= \lan v^0_{k i}, \mathcal{F}_{ k }(\lambda)v^0_{k j}\ran,$ $\forall i, j = 0, 1, 2, 3$, can be deduced from the symmetries of $\cF_{ k }(\lambda)$ (or of $F_{ k }(\lambda)$) inherited from the operator $K$ 
   and the fact that the  map $\cF_{ k }(\lam)$ is hermitian 
	 giving}
 Eqs \eqref{Fk-rho'-rho''-sym} and \eqref{v0-parity} and the fact that the  map $\cF_{ k }(\lam)$ is hermitian imply	  (omitting the argument $\lam$): 
\begin{align}  \label{fij-prop1} & f_{k i2}=-\overline{f_{ki3}}, i=0, 1,\   
   f_{k 22}= f_{k 33},\\   
      \label{fij-prop2} &
  f_{k i j}=\overline{f_{k j i}},\  \forall i, j,\\  
   \label{fij-prop3} &  f_{k i j}={f_{- k i j}},\  \forall i, j. \end{align}

Recall that $\Ran P_k=$ span $\{v_{j k}, j=1, 2, 3, 4\}$. Note that $v_{k j}$ given in Lemma \ref{lem:K0k-spec} satisfy \eqref{v0-parity} and therefore Eqs \eqref{fij-prop1} and \eqref{fij-prop3} hold.  
Introduce the matrix elements 
\[f_{ij }(\lam)= \lan v_{i k }, \mathcal{F}_{ k }(\lam)v_{j k }\ran,\ \forall i, j =  1, 2, 3, 4,\]
 of $\mathcal{F}_{ k }(\lambda)$ in the basis $\{v_{j k}, j=1, 2, 3, 4\}$.
%
The following lemma  plays a key role in our analysis. 
\begin{lemma}\label{lem:Fk-expan4} 
Let $ k:=k_1+i k_2,\ \hat k:= k/ |k|$ and $\lam=O(\e^2)$. Then the matrix elements $f_{ij }(\lam),\ \forall i, j =  1, 2, 3, 4,$ are of the form (omitting the argument $\lam$)
\begin{align}\label{f12}
	& f_{ 11 } =  f_{ 22} =   \e^2  a_k,\      f_{ 12} = \bar f_{ 21} =   \e^2  b_k, \\
		\label{f13} & 
 f_{ 13} =    f_{ 24}  
  = 0,\	 f_{ 14}  = - \bar f_{ 23} 
  =   \e |k| c_k,\\   
\label{f33}
&f_{ i j  } =
\big(1  + \e^2   f(|k|^2) \big) |k|^2\del_{ij}   + O(\e^4|k|^2 ),\   f(|k|^2)=O(1),\   i, j=3, 4, \\  
\label{ak} 
	&  a_k =(\kappa^2  - \frac{1}{2}) a_k' +   |c_k'|^2+ O(\e^2),\             a_k':=2\lan|\phi_0|^2|\phi_k|^2\ran  -\beta (\tau),\\ 
\label{bk} & b_k=(\kappa^2  - \frac{1}{2})b_k'+ O(\e^2),\ b_k':=\lan \phi_0^2 \bar{\phi}_{- k}\bar{\phi}_{k} \ran ,\\ 
\label{ck} & c_k = c_k' + O(\e^2),\     
c_k' :=- \lan \overline{\phi_{k}}\phi_0 e^{i k\cdot x}\ran \bar{\hat k} =- e^{-|k|^2/2}\bar{\hat k}.
\end{align}
    Furthermore,  $a_k,  b_k, c_k $ (which should not be confused with the similar objects introduced in Proposition \ref{prop:nuk12-exact})  are even in $k$ and
\begin{align}\label{dfij}\p_\lam f_{ i j}(\lam)= O(\e^{2}), \forall i, j. 
\end{align} 
\DETAILS{ and 
 \begin{align}       \label{tk}   t_k:=   \lan \bar\phi_0\phi_k e^{-i k\cdot x} \ran= e^{-|k|^2/2}. 
    \end{align}}   
    \end{lemma} 
    This lemma is proven in Appendix \ref{sec:Fk-expan4} below. 
 It shows that (the first statement follows, in fact, from Lemma \ref{lem:Fk-sym})
\begin{corollary}\label{cor:Fk-k0} For $k=0$, 
 (a)  the matrix $F_{ k=0}(\lambda)$ is block-diagonal with the $22-$block being identically zero and  the $11-$block having non-zero eigenvalues; (b) the $11-$block of the matrix $F_{ k=0}(\lambda)$ is invertible with a uniform in $\e$ bound. 
\end{corollary}


 \begin{proof}[Proof of Proposition \ref{prop:Fk-spec}] We have to find the eigenvalues of the $4\times 4$ matrix $F_k(\lam)$. 
To this end, we consider separately three regimes:
\begin{align} \label{3regimes} & 
 (i) \quad |k|\ll \eps,\   \quad  (ii) \quad \eps \ls  |k|\ls \eps,\   \quad  (iii) \quad \ |k|\gg \eps. \end{align}
In each of these regimes we apply the Feshbach-Schur methods but with a different projection. In each case we exploit the separation of the eigenvalues into two groups.

\medskip


We begin with the domain $|k|\ll \eps$ and the following precise result  (cf. Proposition \ref{prop:nuk12-exact})   \begin{proposition} \label{prop:nuk12-exact'}  Let $\kappa^2>\frac12$ and $|k|\ll \eps$. Then the two lowest eigenvalues, $\nu_{k}^i( \lam)$, of the matrix $F_k(\lam)$ satisfy $\p_\lam \nu_{k}^i(\lam)=O(\e^2)$, with $\nu_{k}^i \equiv \nu_{k}^i(0)$ of the form \eqref{nuk12-exact} - \eqref{tilde-gamk12}.   The remaining eigenvalues $\gs \e^2$.
 \end{proposition}
\begin{proof} 
 We  write the matrix  $F_k (\lam):=(f_{ ij}(\lam))$ 
   in the block form
\begin{align}\label{F-matr} 
	 F_{ k }(\lambda)&:= \left( \begin{array}{cc}     F_{ k 11}(\lambda) &  F_{k12}(\lambda)\\
        F_{k 21}(\lambda)   &   F_{ k 22}(\lambda)   \end{array} \right), 
	\end{align} 
where $F_{k ij}(\lambda)$ are $2\times 2-$matrices.  By Lemma \ref{lem:Fk-expan4},   
  the matrix $F_{ k 11}(\lambda)/\e^2$  has  the eigenvalues 
  \[\ak\pm |\bk| =(\kappa^2  - \frac{1}{2}) (\ak'\pm |\bk'|) +   |\ck'|^2+ O(\e^2)\]\[=(\kappa^2  - \frac{1}{2})\g_k^{1/2}(\tau) +   |\ck'|^2+ O(\e^2),\]
 where \begin{align}\label{gamk-pm}\g_k^{1/2}(\tau):= 2\lan|\phi_0|^2|\phi_k|^2\ran  -\beta (\tau) \pm \lan \phi_0^2 \bar{\phi}_{- k}\bar{\phi}_{k} \ran. \end{align} 
  Note that $\g_k^1 (\tau)= \g_k (\tau)$, where $\g_k (\tau)$ is defined in \eqref{gamk-tau}. {\it We assume for simplicity} 
 \begin{itemize} \item $ 
(\kappa^2  - \frac{1}{2})\g_k (\tau) +   e^{-|k|^2}>0$, \end{itemize}
which is surely satisfied in the regime we are interested in. Then 
 $F_{ k 11}(\lam) \gs \e^2$ and, for $\nu \ll \eps^2$, we can use  the Feshbach-Schur map with the projection on the $22-$block.	
This 
reduces the problem to finding the eigenvalues of the matrix 
\begin{align}\label{cV22}\mathcal{V}_{ k 22}(\lam):=F_{ k 22}(\lam) -F_{ k 21}(\lam) F_{ k 11}(\lam)^{-1} F_{ k 12}(\lam).\end{align} 

By Lemma \ref{lem:Fk-expan4}, the blocks $F_{k ij}(\lambda)$ are of the form 

\begin{align}\label{F11exp}
	& F_{ k 11}(\lambda) =  \e^2 
A_k,\ A:= \left( \begin{array}{cc}   a_k
	&  b_k\\  \overline{b_k}
             &  a_k   \end{array} \right),\\ 
             	\DETAILS{	\mathcal{F}_{2 k}:=(\kappa^2  - \frac{1}{2}) & \left( \begin{array}{cc}   2\lan|\phi_0|^2|\phi_k|^2\ran_\Om -\beta (\tau)
	&  \lan \phi_0^2 \bar{\phi}_{- k}\bar{\phi}_{k} \ran_\Om\\   \lan \bar\phi_0^2  {\phi}_{- k} {\phi}_{k} \ran_\Om
             &  2\lan|\phi_0|^2|\phi_k|^2\ran_\Om-\beta (\tau)   \end{array} \right) +
              |\lan \bar\phi_0\phi_k e^{-i k\cdot x} \ran_\Om|^2 \one ,} 
	\label{F12exp} & F_{ k 12}(\lambda)= F_{ k 21}^*(\lambda)=\e  |k| C,\  
	C:=\left(\begin{array}{cc}  	0 &   c_k\\  -  \overline{c_{k}}  & 0 \end{array} \right),\\ 
\label{F22exp}
&F_{ k 22}(\lambda) 
 = D  |k|^2,\  D:=d_k  \one + O(\e^4),\\
  \label{dk}
& d_k:=1  +   O(\e^2). 
 \end{align} 
Recalling  \eqref{cV22}, we compute $A^{-1}= \frac{1}{\det}\left( \begin{array}{cc}   a_k & - b_k\\  -\overline{b_k}    &  a_k   \end{array} \right),$  where $\det:=\det A = a_k^2-|b_k|^2$, and
$    C^* A^{-1} C=\frac{1}{\det}\left( \begin{array}{cc}   a_k  |c_k|^2	& - \overline{b_k}  c_k^2  \\  -b_k \overline{c_k^2 }             &  a_k  |c_k|^2 \end{array} \right) |k|^2.$
Keeping in mind  \eqref{F11exp} - \eqref{F22exp}, we insert this into \eqref{cV22} to obtain
\[\cV_{ k 22}(\lam) = 
D |k|^2 - \frac{1}{a_k^2-|b_k|^2}\left( \begin{array}{cc}   a_k  |c_k|^2	& - \overline{b_k}  c_k^2 \\  -b_k \overline{c_k^2 }             &  a_k  |c_k|^2  \end{array} \right) |k|^2.\]
Recalling the expression for $D$ in \eqref{F22exp} and \eqref{dk}, we compute the eigenvalues of this matrix and set $\lam=0$ to obtain  \eqref{nuk12-exact} - \eqref{tilde-gamk12}.    The remaining eigenvalues $\gs \e^2$.

Estimate \eqref{dfij} shows that $\p_\lam \nu^{ i }(\lam)= O(\e^{2})$.\end{proof}
Applying the implicit function theorem proves the first part of Proposition \ref{prop:nuk12-exact}.

Now we consider the domain  $|k|\gs \eps$. For simplicity, we assume $|k|\ll 1$. The case $|k|\gg \eps$ is much simpler, in this case, we use the Feshbach-Schur map with the orthogonal projection on the $11-$block, since the block $\mathcal{F}_{ k 22}(\lam)$ has the eigenvalues $ \gs |k|^2 \gg \e^2$ and therefore is invertible (for $\nu \ll \eps^2$).
The domain $\eps \sim  |k|$ is most cumbersome as here, three of the eigenvalues are expected to be of the same order, $O(\e^2)$ (while the fourth one is  $\ll \e^2$).
	 
\begin{remark} \label{rem:tech1}  Under the condition $a_k'  + \re b_k' \gs 1$, where $a_k' , b_k' $ are defined in \eqref{ak} and \eqref{bk}, which can be verified numerically, the treatment below can be extended to the domain $\eps \ls  |k|\ls 1$.
\end{remark}  


  \begin{proposition} \label{prop:nuk1-exact'}  Let $\kappa^2>\frac12$ and $|k|\gs \eps$. Then the lowest eigenvalue, $\nu_{k}^1 ( \lam)$, of the matrix $F_k(\lam)$ satisfy $\p_\lam \nu_{k}^1(\lam)=O(\e^2)$, with $\nu_{k}^1 \equiv \nu_{k}^1 (0)$ of the form \eqref{nuk1-exact}.  The remaining eigenvalues $\gs \e^2$. \end{proposition}
\begin{proof} 

Thus we consider the domain $\eps \ls  |k|\ll 1$. In this case, we use  the Feshbach-Schur map with the orthogonal projection on the vector $\frac{1}{\sqrt 2} (1, -1, 0, 0)$ 
 (related to  the eigenvectors of $\nu^1 _k$).

 It is a straightforward but tedious computation to show that the complementary $3\times 3$ matrix 
 $ \gs \e^2$. To do this, we transform the matrix $F_k$ as $V F_k V=: F'$, where $V$ is the orthogonal matrix given by
 \[V:= 
 \left( \begin{array}{cccc}
		1/\sqrt 2 & 1/\sqrt 2 & 0 & 0 \\
		1/\sqrt 2 & -1/\sqrt 2 & 0 & 0 \\
		0 & 0 & 1 & 0 \\
		0 & 0 & 0 & 1
		\end{array} \right).\]
As a result, we obtain 
 \[ F' = \left( \begin{array}{cccc}
		\al & \beta & -\bar\s & \s \\
		\bar \beta & \mu & \bar\s & \s  \\
		-\s  & \s  & \del & 0 \\
		\bar\s  &\bar\s  & 0 & \del
		\end{array} \right),\]
where (not to confuse the entry $\beta$ used only in this proof with the Abrikosov function $\beta(\tau)$)
\begin{align} \label{al-bet} 
 &\al:= f_{11} + \re f_{12},\  \beta:= - i \im f_{12},\ \mu:=  f_{11} - \re f_{12},\ \s:= f_{14}/\sqrt 2,\ \del:=  f_{33}. \end{align}
Now, we apply to $F'$ the Feshbach-Schur map with the orthogonal projection, $Q$, on the vector $ (0,  1, 0, 0)$. To show that the  Feshbach-Schur map is well defined we have to check that $\bar F':= \bar Q F' \bar Q\big|_{\Ran \bar Q}$ is invertible. 
 Clearly,
 \[\bar F' = \left( \begin{array}{ccc}
		\al &  -\bar\s & \s \\
		-\s    & \del & 0 \\
		\bar\s    & 0 & \del
		\end{array} \right).\]
The determinant of this matrix is $\det \bar F'= \del(\al \del - 2|\s|^2)$ and the eigenvalues are 
\[\lam_\pm := \frac12 (\al +\del)\pm \sqrt{\frac14 (\al +\del)^2 - (\al \del - 2|\s|^2)}\]\[\ge \frac{\al \del - 2|\s|^2}{\al +\del}.\]
Remembering the definitions of $\al,\ \s,\ \del$ above and using \eqref{ak} - \eqref{ck}, we find in the leading order in $|k|$, for  $|k|\ll 1$,
 \[\lam_+> \lam_- \ge \e^2 |k|^2 (\kappa^2  - \frac{1}{2})\beta/\xi+O( \e^4)+O( \e^2 |k|^2),\] 
 where $\xi:=\e^2 ( (\kappa^2  - \frac{1}{2})2 \beta +1) + |k|^2$.  Hence, for $\kappa > 1/\sqrt {2}$ and $|k|\gs \e$,  we have $\lam_\pm \gs \e^2$. Therefore $\bar F' -\nu$ is invertible for $\nu \ll \e^2$ and we can apply to  $\tilde F$ the Feshbach-Schur map with the orthogonal projection, $Q$, on the vector $ (0,  1, 0, 0)$. This gives that $F'$ has three eigenvalues satisfying $\gs \e^2$ and one, $\nu_k$, satisfying 
\begin{align} \label{nu}\nu = \mu -  \bar w \cdot (\bar F' -\nu)^{-1} w = \mu-\mu' +O(|\mu'|\nu),\end{align} 
where $\mu':= \bar w \cdot  \bar (F')^{-1} w$ and $w:=(\beta, \s, \bar\s)$. We compute
 \[\bar F' = \frac{1}{\del \al -2|\s|^2}\left( \begin{array}{ccc}
		\del &  \bar\s & - \s \\
		\s    & \al - |\s|^2/\del & - \s^2/\del \\
		- \bar\s    &  - \bar\s^2/\del & \al - |\s|^2/\del
		\end{array} \right),\]
which gives
\begin{align} \label{mu'}
&\mu' =\frac{1}{\del \al -2|\s|^2} (\del |\beta|^2 +2 \al |\s|^2 -4 |\s|^4/\del). 
\end{align}
Combining \eqref{nu} and \eqref{mu'}, recalling the definition of $\al,\  \beta,\ \mu,\ \s,\ \del$ in \eqref{al-bet} and $a_k$, $b_k$ , $c_k$ and $d_k$ in Proposition \ref{prop:nuk12-exact'} and  Eq. \eqref{dk} and omitting the subindex $k$, we find
\begin{align} \label{nu0}
\notag  \nu =\e^2[a-\re b &-\frac{|c|^2}{d}-\frac{|\im b|^2}{a+\re b -|c|^2/d}] +O(\e^2)\nu\\
& =\e^2\frac{(a -\frac{|c|^2}{d})^2-|b|^2}{a+\re b -|c|^2/d} +O(\e^2)\nu.\end{align}
Comparing this with the definition of $\tilde \g_k^1 (\tau, \kappa)$ in \eqref{tilde-gamk12}, we arrive at
\begin{align} \label{nu} \nu_k & =\e^2c \tilde \g_k^1 (\tau, \kappa) +O(\e^2)\nu_k,\end{align}
where $c:=\frac{a -\frac{|c|^2}{d} +|b|}{a+\re b -|c|^2/d}$.  Using  \eqref{ak} - \eqref{ck}, we find $c\gs 1$.
This gives \eqref{nuk1-exact}.
  
 Moreover, $\nu_k\ll \e^2$, by the condition $|k|\ll 1$, and it is the only eigenvalue which is $\ll \e^2$, the remaining three eigenvalues are $\gs \e^2$. \end{proof}
  This proves the second part of Proposition \ref{prop:nuk12-exact} 
  and with it 
 Proposition \ref{prop:Kk-spec} (see the paragraph after Lemma \ref{lem:tilde-gam12-expan}). 
\end{proof}



\subsection{Proof of 
Lemma \ref{lem:Fk-expan4}}\label{sec:Fk-expan4}

We begin with a general result. We write $P=P_1\oplus P_2$, where $P_1$ and $P_2$  are the orthogonal projections onto the subspaces 
$\NULL K^0_{k }$ and $\NULL (K^0_{k }-|k|^2)$, respectively, 
	   and write the operator $ \cF_{ k }(\lambda)$ 
	   in the block form
\begin{align}\label{cF-matr} 
	 \cF_{ k }(\lambda)&:= \left( \begin{array}{cc}    \cF_{ k 11}(\lambda) &  \cF_{k12}(\lambda)\\
          \cF_{k 21}(\lambda)   &   \cF_{ k 22}(\lambda)   \end{array} \right), 
	\end{align}
where $\cF_{k ij}(\lambda):= P_i \cF_{k}(\lambda) P_j$. 
 \DETAILS{Specifically,  the  $11-$block is defined as  $F_{ k 11}(\lambda) 
 =\big[ \lan v^0_{k i}, \mathcal{F}_{ k }(\lambda)v^0_{k j}\ran\big]$, etc.} 
These blocks have the following general properties
\begin{lemma}\label{lem:Fk-sym}
\begin{align}\label{F-k-sym} & \mathcal{F}_{ -k ii}(\lambda) =   \mathcal{F}_{ k ii}(\lambda),\  \quad
i=1, 2,\  \quad \mathcal{F}_{ -k 12}(\lambda) = - \mathcal{F}_{ k 12}(\lambda),\\  
\label{F22-k0}&\mathcal{F}_{ k=0 22}(\lambda=0)=0. 
\end{align} 
\end{lemma}   
 \begin{proof} 
To prove \eqref{F-k-sym}, we recall that $\rf$ is the reflection operator on scalar functions, changing $x$ to $-x$ (so that $T^{\rm refl}=\rf \oplus  \rf \oplus - \rf \oplus - \rf$) and let \[R=\rf \oplus  \rf \oplus  \rf \oplus  \rf.\] We write $K=K^{\rm even}+W^{\rm odd}$, where  $K^{\rm even}$ and $W^{\rm odd}$ satisfy  
\[R K^{\rm even}=K^{\rm even} R\ \text{ and }\ R W^{\rm odd}=- W^{\rm odd} R,\] and use the expansion (for $\lambda =o(1)$) 
\begin{align}\label{Rbar-expan} 
\bar P ( \bar P K_{ k }\bar P  - \lambda)^{-1}\bar{P}=\sum_{n=0}^\infty  \bar R_{ k}^{\rm even}(\lambda) [- \bar R_{ k}^{\rm even}(\lambda)\ W^{\rm odd}]^n \bar P ,\end{align}
  where $ \bar R_{ k}^{\rm even}(\lambda):=\bar P ( \bar P K_{ k }^{\rm even}\bar P - \lambda)^{-1}\bar{P}$.

Let $\pi_1$ and $\pi_2$ be the projections onto the upper and lower two entries of $v$, respectively. We show below that the projections $P_j$ have the following properties: 
\begin{align}\label{Pj-prop} \pi_i P_j =0,\ i \ne j\ \text{ and }\ R P_i = (-1)^{i-1} P_i, i=1, 2. \end{align}

To show the the first relation in \eqref{F-k-sym}, we use the fact that $W^{\rm odd}$ maps the first two components into the last two and vice versa, while $L^{\rm even}$ maps the first two components into the first two components and  the last two, into the last two, to obtain 
\begin{align}\label{odd-even-prop} \pi_i W^{\rm odd}\pi_i =0\ \text{ and }\ \pi_i  \bar R_{ k}^{\rm even}(\lambda)=\pi_i  \bar R_{ k}^{\rm even}(\lambda)\pi_i,\ i=1, 2.\end{align}
Then we use the expansion \eqref{Rbar-expan} in the definition of $\mathcal{F}_{ k 11}(\lambda)$ and use that 
for $n$ even, the properties \eqref{Pj-prop}, the relation $\pi_i W^{\rm odd}[- \bar R_{ k}^{\rm even}(\lambda)\ W^{\rm odd}]^n \pi_i=0$ 
(which follows from \eqref{odd-even-prop}), to find
\begin{align}\label{Fiiexpan}\mathcal{F}_{ k ii}(\lambda) =
  P_i [ K_{ k}^{\rm even} +\sum_{n\ \text{odd}}  W^{\rm odd}[- \bar R_{ k}^{\rm even}(\lambda)\ W^{\rm odd}]^n] P_i. \end{align}
Using the reflection symmetries mentioned above and the relations 
$R P_{1 k} = P_{1 -k}$ and $R P_{2 k} = - P_{2 -k}$, we obtain the first relation in \eqref{F-k-sym}. 

 To prove the second relation in \eqref{F-k-sym}, we use \eqref{Rbar-expan}, \eqref{Pj-prop} and \eqref{odd-even-prop} to expand   
\begin{align}\label{Fijexpan}\mathcal{F}_{ k ij}(\lambda) =\sum_{n\ \text{even}}  
P_i W^{\rm odd}[- \bar R_{ k}^{\rm even}(\lambda)\ W^{\rm odd}]^n P_j,\ i \ne j. \end{align}
 Since moreover $R W^{\rm odd} =- W^{\rm odd} R $ and $R \bar R_{ k}^{\rm even}=\bar R_{- k}^{\rm even} R$, we conclude from \eqref{Fijexpan} that the second relation in \eqref{F-k-sym} holds.

 To show 
  \eqref{F22-k0}, we observe that by the second relation in \eqref{F-k-sym},   the operator $\mathcal{F}_{ k= 0}(\lambda)$ is block - diagonal, $\mathcal{F}_{ k= 0}(\lambda)= \mathcal{F}_{ k= 0 11}(\lambda)\oplus \mathcal{F}_{ k= 0 22}(\lambda)$.  
  
  Furthermore, by   \eqref{Tj-transf}, 
  the operator $K_{k=0}$ has the doubly-degenerate eigenvalue $0$ with the eigenfunctions satisfying  $v^j_{0  }=v_{0 j}+O(\e), j=3, 4$, where $v_{k j }$ satisfy $\pi_1 v_{k j } = 0$. 
 Therefore the operator $\mathcal{F}_{ k=0}(\lambda=0)$ has also the doubly degenerate zero eigenvalue and the corresponding eigenfunctions are  $P_{ k= 0} v^j_{0  }=v_{0 j}+O(\e), j=3, 4$. Hence, since, by \eqref{F-k-sym}, $\mathcal{F}_{ k=0}(\lambda)$ is block diagonal, the $22-$block 
   of $\mathcal{F}_{ k=0}(\lambda=0)$ vanishes identically, 
  $\mathcal{F}_{ k=0, 22}(\lambda=0)=0$, and therefore \eqref{F22-k0}   holds.  \end{proof} 

 From now on we consider only fiber operator for a fixed $k-$fiber 
  and so we omit the subindex $k$ and write e.g. $K\equiv \Kk,\ P\equiv P_k,\ \cF\equiv \cF_k,\ F\equiv F_k,\ v_{  j} \equiv v_{ k j}$. 
(Recall that  the  vectors $v_{K j}, j=1, 2, 3, 4,$ 
 are defined in \eqref{v012} 
  and \eqref{v034}.)  We also omit the subindex $\LAT$ in the expectations and the inner products over a fundamental cell $\Om$.

In what follows, the magnetic potentials, $a$ are assumed to be in their {\it complex form}, $a=a_1-i a_2$. We derive the expansion of $\Kt$ in $\e$.  To this end, we use expansion of the solution branch $(\psit, \at, \lamt)$ 
obtained in \cite{TS}
\begin{equation}\label{psi-a-expan} 
	\begin{cases}
	\psit = \e \psi^0 
	+ O(\e^3), \\
	\at = a^0 + \e^2 a^1 + O(\e^4), \\
\lamt = 1 + \e^2 \lambda^1 + O(\e^4),
	\end{cases}
\end{equation}
where the first two remainders are in the sense of the norm in $H^2(\Om, \C)$ 
 and $a^0:=- \frac{1}{2} i (x_1-i x_2)$ (comming from $\frac{1}{2} x^\perp)$,  $\psi^0$ satisfies 
 \eqref{phik-eqs-tau} with $k=0$ and is normalized as $\lan |\psi^0|^2 \ran  = 1$, so that
  $\psi^0=\phi_0$ (see Proposition \ref{prop:phik}),  
 and $a^1$ satisfies 
\begin{equation} \label{curla1}
	i\bar\p a^1	= \frac{1}{2} (\lan |\phi_0|^2 \ran - |\phi_0|^2),\   \Delta a^1 = \frac{i}{2}\bar\phi_0 (\partial_{a^0}\phi_0),\end{equation}
\begin{equation}
  a^1(x+s) = a^1(x),\ \forall s\in \cL,
\end{equation}
and $\lam^1$ is given by
   \begin{equation} \label{lam1}
\lambda^1=\left[\frac{1}{2} 
+ \big(\kappa^2-\frac{1}{2}\big)\beta(\tau)\right]\langle |\phi_{0} |^2 \rangle .
\end{equation}
(The latter follows from the definitions $\lamt:=\frac{\kappa^2}{b}$ and \eqref{eps}.)

The expansion \eqref{psi-a-expan}, the explicit expression for the operator $K $, given in \eqref{K}, and the relation $\partial_{a^0}^*\phi_0 = 0$ imply 
  the following expansion of the operator $K$, 
\begin{equation}\label{K-expan} K  = K^0  + W^{\rm even}+ W^{\rm odd},\   W^{\rm odd}= \e W^1   + O(\e^3),\   W^{\rm even}= \e^2 W^2 + O(\e^4),\end{equation} 
 where  $O(\e^j)$ is understood in the sense of the operator norm, 
 $K^0$ is given in \eqref{K0} and the remaining terms are given by 

\begin{align} \label{W1}
&	W^1
	= \left( \begin{array}{cccc}
		0 & 0 & 0 & i(\partial_{a^0}\phi_0) \\
		0 & 0 & -i(\overline{\partial_{a^0}\phi_0}) & 0 \\
		0 & i(\partial_{a^0}\phi_0) & 0 & 0 \\
		-i(\overline{\partial_{a^0}\phi_0}) & 0 & 0 & 0
		\end{array} \right),\\
\end{align} \begin{align}
\label{W2}
	& W^{2} = \left( \begin{array}{cccc}
			B^0 -\lambda^1 & (\kappa^2 - \frac{1}{2})\phi_0^2 & 0 & 0 \\
			(\kappa^2 - \frac{1}{2})\bar{\phi}_0^2 & \overline{B}^0 -\lambda^1 & 0 & 0 \\
			0 & 0 & |\phi_0|^2 & 0 \\
			0 & 0 & 0 & |\phi_0|^2
		\end{array}	\right).
\end{align}
\DETAILS{
\begin{equation} \label{W3-2}
	W^3
	= \left( \begin{array}{cccc}
		0 & 0 & 0 & \tilde\phi_1\\
		0 & 0 &  \overline{\tilde\phi_1} & 0 \\
		0 & \tilde\phi_1 & 0 & 0 \\
		\overline{\tilde\phi_1}  & 0 & 0 & 0
		\end{array} \right).
\end{equation}
\begin{equation}\label{W4}
	W^{4} = \left( \begin{array}{cccc}
			B^1 -\lambda^2 & (\kappa^2 - \frac{1}{2})2\phi_0\phi_1 & 0 & 0 \\
			(\kappa^2 - \frac{1}{2})2\bar{\phi}_0\bar \phi_1 & \overline{B}^1 -\lambda^2 & 0 & 0 \\
			0 & 0 & 2\re (\bar\phi_0\phi_1) & 0 \\
			0 & 0 & 0 & 2\re (\bar\phi_0\phi_1)
		\end{array}	\right).
\end{equation} }
Here $\p_a$ is defined in \eqref{annih-creat-ops}, $\lam^1$ is given by \eqref{lam1} with $\lan |\phi_0|^2 \ran = 1$,
 and
\begin{equation}
	B^0 =  (2\kappa^2 + \frac{1}{2})|\phi_0|^2 - ia^1 \partial^*_{a^0}+ i\bar{a}^1 \partial_{a^0}. 
\end{equation}

\DETAILS{\paragraph{Remark.}  The operators $K^{\rm even} := K^0  + W^{\rm even}$ and $W^{\rm odd}$ are given explicitly as 
\begin{align}\label{K}
	&K^{\rm even}= \left( \begin{array}{cccc}
	\hta'   & (\kappa^2 - \frac12)\psit^2  & 0 &0 \\
(\kappa^2 - \frac12)\bar{\psi}_{\tau}^2 & \overline{\hta' } & 0 &0 \\
0 & 0 & \hta & 0 \\
		0 & 0			& 0 & \hta
		\end{array} \right),\\ 
&	W^{\rm odd} = \left( \begin{array}{cccc}
	0   & 0  & -i(\partial_{\at}^*\psit) & i(\partial_{\at}\psit) \\
0 & 0 & -i(\overline{\partial_{\at}\psit}) & i(\overline{\partial_{\at}^*\psit}) \\
i(\overline{\partial_{\at}^*\psit}) & i(\partial_{\at}\psit) & 0 & 0 \\
		-i(\overline{\partial_{\at}\psit})
			& -i(\partial_{\at}^*\psit)
			& 0 & 0
		\end{array} \right).
		\end{align} 
 
\medskip}

We begin with the block $F_{11} (\lambda)= (\lan v_{ i}, \cF (\lambda) v_{  j}\ran, i, j =1, 2).$ 
Denote $\bar  K^0:= \bar P  K^0\bar P $. By the spectral theorem, $\|(\bar  K^0- \lambda)^{-1}\| \lesssim 1$, provided $\lambda =o(1)$. Using this relation, 
   and the relations $K^{\rm even}= K^0 + W^{\rm even}$ and \eqref{K-expan} and assuming $\lambda =O(\e)$, we expand 
\begin{align}\label{Rbar0-expan}\bar P ( \bar K^{\rm even}  - \lambda)^{-1}\bar P  =\bar P ( \bar  K^0)^{-1}\bar P +O(\e^2).\end{align}
Furthermore, since $P$ is an eigenprojection of $K^0$, we have 
\[P K^{\rm even}  P= P (K^0  + \e^2 W^2)  P  + O(\e^4),\ P W^{\rm odd}  \bar P= \e P W^1  \bar P  + O(\e^3).\] This together with \eqref{Fiiexpan} gives
  \begin{equation}\label{Fiiexpan'}
	\mathcal{F}_{ii} (\lambda) =  P_i (K^0  + \e^2 W^2) P_i - \e^2 P_i W^1  \bar P  ( \bar  K^0)^{-1}\bar P W^1 P_i  +O(\e^4). \end{equation}
\DETAILS{By \eqref{Rbar-expan} and \eqref{K-expan}, we have the resolvent expansion
\begin{align}\label{resolv-expan}\bar R_{ k}^{\rm even}(\lambda) &=\bar R_{ k 0}(\lambda)  -\bar R_{ k 0}(\lambda)  W^{\rm even}\bar R_{ k 0}(\lambda)   +O(\e^4)\notag \\ & 
=\bar R_{ k 0} (\lambda) -\bar R_{ k 0}(\lambda)  \e^2 W^2\bar R_{ k 0}(\lambda)  +O(\e^4),\end{align} 
Using this expansion and assuming that $\lambda = o(1)$ and using  
 \eqref{Lv-exp2},  we find, similarly to \eqref{F11exp'}, 	
\begin{align}\label{F0ij} \lan w_{ k i}^0, F^0(\lambda) w_{ k j}^0\ran &= |k|^2 \del_{i j} +\e \lan w_{k i}^0, W^1 w_{k j}^0 \ran  
  -\e^2 \lan w_{ k i}^0, W^2 w_{ k j}^0\ran \notag\\ & -\e^2  \lan w_{ k i}^0, W^1\bar R_{ k 0}  (\lambda) W^1 w_{ k j}^0\ran + O(\e^4|k|^2). 
	\end{align}
Since, as evidenced by the definition \eqref{W1}, the operator $W^1$  switches the first two entries of the vectors it acts on with the last two and since the first two entries of $ w_{k i}$ are zero, we have that  $\lan w_{k i}^0, W^1 w_{k j}^0 \ran = 0$.
	
We compute the $\e^2$ order matrix elements in \eqref{F0ij}.} 
\DETAILS{Let $F^0(\lambda)$ be the matrix with the matrix elements $f_{ ij}(\lambda):=  \lan v_{i  }^0, \Lk v_{j }^0 \ran -  \e^2 \lan v^0_{j }, W^1   \bar{P}  \bar R_{ 0} \bar{P}   W^1 v^0_{j } \ran$ 
 $\forall i, j = 1, 2, 3, 4$. This $4\times 4$ matrix should be translated into the $3\times 3$ matrix using that 
\begin{align} \label{vke3'}&v_{3 k}= (0, 0,  e^{i k\cdot x}, - \bar k k^{-1}  e^{i k\cdot x})= v^0_{k, 3} -  \bar k k^{-1} v^0_{k, 4}.\end{align}

 The basis vectors $v^0_{ i}, i=1, 2, 3, 4$, defined in \eqref{v012} - \eqref{v034}, are orthonormal and satisfy \eqref{v0-parity}. Hence,  the matrix elements $f_{  ij}(\lambda)$   satisfy \eqref{fij-prop1} - \eqref{fij-prop3}. ??? } 
 Next, we have 
 \begin{align} \label{K0W2-expan}\lan v_{ i}, (K^0  + \e^2 W^2)  v_{  j}\ran=|k|^2 \del_{i, j\ge 3} + \e^2 \lan v_{ i}, W^2 v_{ j}\ran.\end{align}  
 
 To compute $\lan v_{ i}, W^2v_{ j}\ran, i, j =1, 2$, we use the definitions \eqref{W2} and \eqref{v012}, 
to obtain
        \begin{align}\label{F2firstterm1}
        \lan v_{ 1},  W^2v_{ 1}\ran  &=\overline{\lan v_{- k 2}, W^2 v_{-  2}\ran}= -\lambda^1\lan |\phi_k|^2\ran  \notag\\
        & + (2\kappa^2 +\frac{1}{2})\lan|\phi_0|^2|\phi_k|^2\ran   + \lan\bar{\phi}_k (i\bar{a}^1\partial_{a^0}-i a^1\partial_{a^0}^*)\phi_k\ran ,\\
            \lan v_{ 1}, W^2v_{ 2}\ran   & = 
            \overline{\lan v_{ 2}, W^2 v_{ 1}\ran}=    (\kappa^2  - \frac{1}{2})  \lan \phi_0^2 \bar{\phi}_{- k}\bar{\phi}_{k} \ran.
            \end{align}
   We integrate by parts and use the fact that $\phi_k$ satisfies \eqref{cphik=0}, which can rewritten explicitly as
	\begin{equation}	\label{phik-eq}
		\partial_{a^0}^*\phi_k = 0,
	\end{equation}   
 together with the identity \eqref{curla1} ($i\bar\p a^1	= \frac{1}{2} (\lan |\phi_0|^2 \ran - |\phi_0|^2)$), to obtain
    \begin{align}    \label{F2identity}
 \lan\bar{\phi}_k (i\bar{a}^1\partial_{a^0}-i a^1\partial_{a^0}^*)\phi_k\ran &= \lan\bar{\phi}_k i\bar{a}^1\partial_{a^0}\phi_k\ran = - \lan i |\phi_k|^2 \partial\bar{a}^1 \ran   \\
 &      = -\frac{1}{2}\lan |\phi_0|^2|\phi_k|^2 \ran + \frac{1}{2}\lan |\phi_0|^2 \ran \lan |\phi_k|^2 \ran .
	\end{align}
 The equations \eqref{F2firstterm1},  
 \eqref{F2identity}, the expression \eqref{lam1} for $\lambda_1$  and the normalization $\lan |\phi_k|^2 \ran = 1$    give
	\begin{align}  \label{W2matrelem-ii}
	 &  \lan v_{ i}, W^2 v_{  i}\ran   = -(\kappa^2 - \frac12)\beta + 2\kappa^2 \lan|\phi_0|^2|\phi_k|^2\ran, i=1, 2,\\  
 \label{W2matrelem-12} & \lan v_{ 1}, W^2v_{ 2}\ran   = \overline{  \lan v_{ 2}, W^2 v_{ 1}\ran}   = (\kappa^2  - \frac{1}{2})\lan \phi_0^2 \bar{\phi}_{- k}\bar{\phi}_{k} \ran.	\end{align}
	

    We compute 
    $g_{ij}:=\lan v_{ i}, W^{1} \bar{P} (\bar{P} K^{0 } \bar{P})^{-1} \bar{P} W^1 v_{  j} \ran,\ i, j\le 2$. Recall that   $P$ is the orthogonal projection onto $\Span \{v_j,  j=1, 2, 3, 4\}$ and  $ \bar P =\one - P $.  $\bar P $ can be written as  $ \bar P = \bar q_k \oplus  \bar q_k' \oplus  \bar p_k\oplus  \bar p_k$, where $ \bar q_k,\  \bar q_k' ,\  \bar p_k$ are the orthogonal projections in the space of scalar functions, $f$,  satisfying $f(x+s) = e^{ik\cdot s}f(x)$,  onto the orthogonal complements of $\phik,$ of $\refl\bar\phik,$ of $e^{ik\cdot s}$, respectively.        
    Using that \eqref{phik-parity}, \eqref{K0} and \eqref{W1}, we calculate
\begin{align}    \label{F2second'}
&g_{11}        = -\lan  \bar{\phi}_k ( \partial_{a^0}\phi_0) \Delta^{-1}\bar p_k ((\overline{\partial_{a^0}\phi_0}) \phi_k  ) \ran \\
                &g_{22}
        =  -\lan   \phi_{- k}  (\overline{\partial_{a^0}\phi_0}) \Delta^{-1} \bar p_k ((\partial_{a^0}\phi_0)  \bar{\phi}_{- k} ) \ran ,\\
        &g_{12}=g_{21} =0.
    \end{align}
        Now note that by \eqref{phik-eq} and the definition of $\bar\partial_{a^0}$ in \eqref{annih-creat-ops}, we have $\bar{\partial}\phi_k = -\frac12 (x_1+ix_2)\phi_k= i\bar{a}^0\phi_k$ and therefore
    \begin{equation*}
    	\bar{\partial}(\bar\phi_0\phi_k) = \bar\phi_0\bar{\partial}\phi_k + \phi_k\bar{\partial}\bar\phi_0 = i\bar{a}^0\bar\phi_0\phi_k + \phi_k\bar{\partial}\bar\phi_0
    	= \phi_k (\overline{\partial_{a^0}\phi_0}).
    \end{equation*}
   This gives
	\begin{align}\label{F2secondterm'}
 g_{11}	&= - \lan \phi_0\bar{\phi}_k \bar{\partial}^* \Delta^{-1} \bar p_k \bar{\partial}(\bar\phi_0\phi_k) \ran .
	\end{align}
Note that  the function $f:=\bar{\phi}_0\phi_k$ satisfies $f(x+s) = e^{ik\cdot s}f(x), \forall s\in \LAT$. Since $ \bar{\partial}^*  \bar{\partial}= -\Delta $, we have 
  \begin{equation} \label{LaplRel'}
	\bar{\partial}^* \Delta^{-1} \bar{\partial}=\Delta^{-1} \bar{\partial}^* \bar{\partial} =-\one. 
\end{equation}
 This  and the definition of $p_k$  yield $ g_{11}	 =   \lan \phi_0\bar{\phi}_k  \bar p_k  (\bar\phi_0\phi_k) \ran$ and therefore 
\begin{align}\label{F2secondterm'}
			g_{11}	&= \lan |\phi_0|^2|\phi_k|^2 \ran - |\lan \bar\phi_0\phi_k e^{-i k\cdot x} \ran |^2. 
	\end{align}
 Similarly we treat $g_{22} $ to obtain finally 
	\begin{align}    \label{gij1}
&g_{ij} =  (\lan |\phi_0|^2|\phi_k|^2 \ran - |\lan \bar\phi_0\phi_k e^{-i k\cdot x} \ran |^2) \del_{ij},\ i, j=1, 2.    \end{align}
Equations \eqref{Fiiexpan'}, \eqref{W2matrelem-ii}, \eqref{W2matrelem-12} and \eqref{gij1} give \eqref{f12}, with \eqref{ak} - \eqref{bk}. 


%
Now, we compute the blocks $F_{12} (\lambda)= (\lan v_{ i}, \cF (\lambda) v_{  j}\ran, i =1, 2, j=3, 4)$ and $F_{21} (\lambda)= (\lan v_{ i}, \cF (\lambda) v_{  j}\ran, i =3, 4, j=1, 2)$.  
By  \eqref{Fijexpan} and \eqref{K-expan}, we have 
\begin{align}\label{F12exp''}\mathcal{F}_{ i j}(\lambda)  =\e P_i W^{1} P_j + O(\e^3 |k|),\ i \ne j, \end{align}
where to obtain the right dependence of the error estimate on $k$, we used the second equation in \eqref{F-k-sym}. (We can also obtain this error estimate by showing that for $n$ even, $(P_i W^{\rm odd}[- \bar R^{\rm even}(\lambda)\ W^{\rm odd}]^n P_j)_k$$ = (P_i W^{\rm odd}[- \bar R^{\rm even}(\lambda)\ W^{\rm odd}]^n P_j)_{-k}$.) 
\DETAILS{Next, recall that $\rf$ denotes the reflection operator, changing $x$ to $-x$. By   the parity property \eqref{phik-parity} and the definitions of $v_{ k i}^0$, we have  $v_{ k i}^0=\rf v_{- k i}^0$ and $w_{ k i}^0=\rf w_{- k i}^0$. Since moreover $\rf W^3 =- W^3 \rf$ and $\rf \bar R_{ k 0}=\bar R_{- k 0} \rf$, we conclude 
 that $\lan v^0_{i,  k}, W^3  w_{j, k}^0\ran = - \lan v^0_{i, - k}, W^3  w_{j, - k}^0\ran$, which implies that $P W^3 P\big|_{k=0}=0$ and  therefore $P W^3 P=O(|k|)$. (Here we used that $a_{\om'}(-x)= -a_{\om'}(x)$ and therefore $a^{0}(-x)= -a^{0}(x)$ and $a_1(-x)= -a_{1}(x)$.)
This contributes to the second term on the r.h.s. of \eqref{F12exp}. 
This shows \eqref{F12exp}. 
}

Next,  
we have 
\begin{align}\label{viW1vj''}\lan v_{ i  }, W^1  v_{i +2 }\ran= 0,\ i=1, 2, \end{align} and 
 \[\lan v_{ 1 }, W^1  v_{4 }\ran= \lan \phi_{k}, i\p_{a^0}\phi_0e^{i k\cdot x}\ran.\]
			 We use the parity relation in \eqref{phik-parity} 
 to obtain
 \[\lan v_{ 2 }, W^1  v_{3 } \ran= - i \overline{\lan \phi_{-k},\p_{a^0}\phi_0 e^{- i k\cdot x}\ran}\]\[\qquad =i  \overline{\lan \phi_{k},\p_{a^0}\phi_0 e^{i k\cdot x}\ran}=- \overline{\lan v_{ 1 k}, W^1  v_{4 k} \ran}.\] 
  Now, integrating by parts, we find 
\[\lan v_{ 1 }, W^1  v_{4 } \ran= i \lan \p_{a^0}^*\phi_{k}, \phi_0e^{i k\cdot x}\ran+ i\lan \phi_{k}, \phi_0  (-\p) e^{i k\cdot x}\ran.\] Now, using \eqref{phik-eq} and $-(\partial_{x_1} - i\partial_{x_2}) e^{i k\cdot x}= - i (k_1-i k_2) e^{i k\cdot x}$, we find 
			\[\lan v_{ 1 }, W^1  v_{4 } \ran= - \overline{\lan v_{ 2 }, W^1 v_{3 } \ran}=(k_1-ik_2)\lan e^{i k\cdot x}\overline{\phi_{k}}\phi_0\ran.\]  
This together with \eqref{viW1vj''} and the notation $k=k_1 +i k_2$ gives \eqref{f13}, with \eqref{ck}. 


 Now, we compute the block 
 $F_{22} (\lambda)= (\lan v_{ i}, \cF (\lambda) v_{  j}\ran,  i, j =3,4)$. 
 Our starting point is the equations \eqref{Fiiexpan'} and \eqref{K0W2-expan}. 
 To compute $\lan v_{  i}, W^2 v_{ j} \ran, i, j =3,4$, we use the explicit form of $W^2$ and  $
   v_{ i}, i=3, 4,$ given in \eqref{W2} and \eqref{v034}, and  the fact that $\partial_{a^0}^*\phi_k = 0$, the normalization  $\lan |\phi_k|^2 \ran = 1$ and \eqref{lam1}, to calculate
        \begin{align}\label{wi-wj'}
        \lan v_{i  }, W^2 v_{i }\ran & = \lan |\phi_0|^2\ran \del_{ij},\ i, j=3, 4. 
    \end{align}
  
Let $ f(|k|^2) := e^{-|k|^2} \sum_1^\infty \frac{|k|^{2 (n-1)}}{n!}  [1 - (3 + \frac12 |k|^2 n^{-1}) 2^{- n} ]$. 
 We show below that
 \begin{align}    \label{wi-wj''}
&\lan v_{i }, W^{1} \bar R_{ 0}  W^1 v_{j }  \ran  = \{ 
1 -  |k|^2 f(|k|^2)\}  \lan |\phi_0|^2\ran \del_{ij},\ i, j = 3, 4.
    \end{align}
  This together with the last two computations and \eqref{Fiiexpan'} and \eqref{K0W2-expan} and the normalization condition $\lan |\phi_0|^2\ran=1$ gives
 \begin{align}\label{Fijexpan4}
\lan v_{i }, \cF(\lambda) v_{j }\ran =\big(1  + \e^2   f(|k|^2) \big) |k|^2 \del_{ij} 
+ 	O(\e^4),\ i, j = 3, 4.
	\end{align}     
 
 Next, by \eqref{F22-k0}, we have $\mathcal{F}_{ k=0 22}(\lambda=0)=0$, which, together with \eqref{Fijexpan4}, gives 
 \begin{align}\label{Fijexpan5}
\lan v_{i }, \cF(0) v_{j }\ran = \big(1  + \e^2   f(|k|^2) \big) |k|^2\del_{ij} 
+ 	O(\e^4 |k|^2),\ i, j = 3, 4.
	\end{align} 
This implies  \eqref{f33}. The estimates on the derivatives in $\lam$ are proven similarly but are much simpler. 
 
 To  complete the proof of Lemma \ref{lem:Fk-expan4}, it remains to prove  the equality $ \lan \overline{\phi_{k}}\phi_0 e^{i k\cdot x}\ran =  e^{-|k|^2/2}$ in \eqref{ck} and  \eqref{wi-wj''}. 
   
Proof of \eqref{wi-wj''}.  
To compute   $g_{ij}(k):=\lan v_{k i}, W^{1} \bar R_{ 0}  W^1 v_{k j }  \ran,\ i, j = 3, 4$, 
 we use the parity property \eqref{phik-parity}, 
  and  \eqref{K0}, \eqref{v034} and \eqref{W1} to calculate
\begin{align}    \label{g22} 
                &g_{22}(k)    = \overline{ g_{11} (-k)}    =  \lan  e^{- i k\cdot x} (\overline{\partial_{a^0}\phi_0}) (-\COVLAP{a^0} - 1)^{-1} P^\perp_0 (\partial_{a^0}\phi_0) e^{ i k\cdot x}\ran ,\\
 \label{g12}       &g_{12}(k)=g_{21}(k)=0, 
    \end{align}
where $P^\perp_0=\one - P_0$ and $P_0$ is the orthogonal projection onto the subspace spanned by $\phi_{ k}$. Now, we compute the r.h.s. of \eqref{g22}. Let $g_{\del} :=  \lan  e^{- i k\cdot x} (\overline{\partial_{a^0}\phi_0}) (-\COVLAP{a^0} - 1+\del)^{-1} P^\perp_0 (\partial_{a^0}\phi_0) e^{ i k\cdot x}\ran.$
Using $P^\perp_0=\one - P_0$, we write $g_{\del}$ as $g_{\del}  =  A_k' - \frac1\del A_k'',$ 
where 
 \begin{align}\label{A'1}
& A_k' : =  \lan    \partial_{a^0} \phi_0, e^{- i k\cdot x} (-\db-1+\del)^{-1} e^{i k\cdot x}    \partial_{a^0} \phi_0\ran,\\ 
& A_k'':=\lan    \partial_{a^0} \phi_0, e^{- i k\cdot x} P_0 e^{i k\cdot x}    \partial_{a^0} \phi_0\ran.\end{align} 
For $A_k'$, 
 we use that $e^{-i k\cdot x} \COVLAP{a^0} e^{i k\cdot x} =\COVLAP{a^0 - k}= \COVLAP{a^0} +2k\cdot i \n_{a^0} -|k|^2$, \[-\COVLAP{a^0} - 1= c^*  c,\  k\cdot \nao=\frac12 (k c^* - \bar k c),\ k \equiv k^c:=k_1+ i k_2,\] where $c := \partial^*_{a^0}$ and $c^*:=\partial_{a^0}$ are the annihilation and creation operators given in \eqref{annih-creat-ops}. \DETAILS{Recall the {\bf creation and annihilation operators},  $c^* :=\partial_{a^0}$ 
 and $c := \partial^*_{a^0}$, introduced in \eqref{annih-creat-ops}. } 
 This gives 
\begin{align}\label{A'2}
&e^{-i k\cdot x} (- \COVLAP{a^0} - 1) e^{i k\cdot x}  =  c^*  c+ (k c^* - \bar k c)+ | k|^2.
\end{align}
Since $e^{-\vphi}   c e^{\vphi}=c+i k$ and $e^{\vphi} c^*   e^{-\vphi}=c^* -i \bar k $, where $\vphi:= \frac i2 (k c^* + \bar k c)$,  this gives  $e^{-i k\cdot x} (\COVLAP{a^0} - 1) e^{i k\cdot x}=e^{-\vphi} c^*  c e^{\vphi}$. 
This gives
\begin{align}\label{A'3}
& A_k'  =  \lan    e^{\vphi} c^*  \phi_0,  (c^*  c+\del)^{-1} e^{\vphi} c^*  \phi_0\ran.\end{align} 
Using $e^{\vphi} c^*   e^{-\vphi}=c^* -i \bar k $ again, we find 
\begin{align}\label{A'4}
e^{\vphi} c^*  \phi_0=(c^* -i \bar k) e^{\vphi}   \phi_0=e^{-|k|^2/2} (c^* - i \bar k) \sum_0^\infty \frac{1}{2^n n!} (i k c^*)^n  \phi_0.
\end{align}
Furthermore, using 
$[ c, c^*]=2$, we find, by a standard computation,  $c^* c (c^*)^n\phi_0$ $=2 n (c^*)^n\phi_0$ and therefore $(c^* c+\del)^{-1} (c^*)^n\phi_0=(2n+\del)^{-1} (c^*)^n\phi_0$. This, together with \eqref{A'3}, gives

 \begin{align}\label{A'5}
A_k'  = e^{-|k|^2}  &\sum_0^\infty \frac{|k|^{2n}}{(2^n n!)^2}\{  \lan  ( c^*)^{n+1} \phi_0, (c^*  c+\del)^{-1} ( c^*)^{n+1} \phi_0\ran\notag\\
&+|k|^2\lan  ( c^*)^n \phi_0, (c^*  c+\del)^{-1} ( c^*)^n \phi_0\ran\}\notag\\
& + 2 e^{-|k|^2}  \sum_1^\infty \frac{2 n |k|^{2n}}{(2^n n!)^2}  \lan  ( c^*)^{n} \phi_0, (c^*  c+\del)^{-1} ( c^*)^{n} \phi_0\ran\notag\\
 =& e^{-|k|^2}  \sum_0^\infty \frac{|k|^{2n}}{(2^n n!)^2}\{(2(n+1)+\del)^{-1} \|( c^*)^{n+1} \phi_0\|^2 \notag\\
&  + |k|^2(2n+\del)^{-1} \|( c^*)^{n} \phi_0\|^2\}\notag\\
& + 2 e^{-|k|^2}  \sum_1^\infty \frac{2 n |k|^{2n}}{(2^n n!)^2} (2 n+\del)^{-1} \|( c^*)^{n} \phi_0\|^2. \end{align}
 Now, using $\|( c^*)^{n} \phi_0\|^2 = 2^n n! \|\phi_0\|^2$ and $\|\phi_0\|=1$, we obtain furthermore

 \begin{align}\label{A'}
A_k'  =e^{-|k|^2}&\{\sum_0^\infty \frac{|k|^{2n}}{2^n n!} [(2(n+1)+\del)^{-1} 2 (n+1) + |k|^2(2n+\del)^{-1} ] \notag\\
&+ 2  \sum_1^\infty \frac{ |k|^{2n}}{2^{n} n!} 2n (2 n+\del)^{-1}\} 
\notag\\
=e^{-|k|^2}&\{1  + |k|^2 \del^{-1} \ +\sum_1^\infty \frac{|k|^{2n}}{2^{n} n!} [3  + |k|^2(2n)^{-1} ]\} +O(\del). 
\end{align}

Next, using $e^{i k\cdot x} c^*   e^{-i k\cdot x}=c^* - i  k $, we compute
\begin{align}\label{A''0}
& \lan \phi_k, e^{ i k\cdot x} c^* \phi_0\ran=  \lan \phi_k, (c^* - i  k) e^{ i k\cdot x}  \phi_0\ran= - i  k \lan \phi_k, e^{ i k\cdot x}  \phi_0\ran\end{align} 
and therefore, by \eqref{ck} (see \eqref{ck-comp'} below) and $\|\phi_0\|=1$, 
\begin{align}\label{A''}
& A_k''=|\lan \phi_k, e^{ i k\cdot x} c^* \phi_0\ran|^2=|k|^2 |\lan e^{- i k\cdot x} \phi_k,  \phi_0\ran|^2 = |k|^2 e^{-|k|^2}.\end{align} 
The last relation, together with  
$g_{\del}  =  A_k' - \frac1\del A_k''$  and \eqref{A'},
 shows that the terms in front of $\frac1\del$ cancel 
  and therefore, after taking $\del\ra 0$ and using that $\sum_1^\infty \frac{|k|^{2n}}{2^{n} n!} n^{-1}=\frac12 |k|^2\sum_0^\infty \frac{|k|^{2n}}{n!}(n+1)^{-2}$, we arrive at
\begin{align}\label{a-y}
 g_{22}(k)=&e^{-|k|^2}\{1   +   \sum_1^\infty \frac{|k|^{2 n}}{2^{n} n!} [3  +  |k|^2 (2n)^{-1}  ]\}. 
  \end{align} 
Using the definition  $g_{ij}(k):=\lan v_{k i}, W^{1} \bar R_{ k 0}  W^1 v_{k j}  \ran
$, the relations \eqref{g22} and \eqref{g12} and that, by the computation above, $\overline{ g_{22}(- k)}= g_{22}(k)$, we conclude 
 that \eqref{wi-wj''} holds. $\Box$ 

Proof of the third equality in \eqref{ck}, i.e. of 
\begin{align}\label{ck-comp'}\lan \bar\phi_0\phi_k e^{-i k\cdot x} \ran = e^{-|k|^2/2}.\end{align}
Recall the definition of the annihilation and creation operators in \eqref{annih-creat-ops}. To begin with, we derive the following
  representation of the functions $\phi_k$ used below: 
		 \begin{proposition}\label{prop:L0k-spec} \begin{align}\label{phik-expr}
\phi_k=e^{i k\cdot x} e^{-i \vphi_k}  \phi_0,
\end{align}
where $\vphi_k:= \frac 12 (k c^* + \bar k c)$ (since $e^{i k\cdot x} e^{-i \vphi_k} $ is unitary it preserves the normalization). 
(Note that \eqref{phik-expr} is defined up to a constant phase factor $e^{i \al}$.) \end{proposition}
\begin{proof}   
 This formula should be compared with the Bloch theorem, which says that the functions $\phi_k$ can be written as $\phi_k=\tilde \phi_k e^{i k\cdot x} $, where $\tilde \phi_k$ is a gauge periodic function, i.e. the one satisfying the gauge-periodicity condition 
 $ \phi (x+s) =e^{i g_s(x)} 
 \phi (x),\ \quad \forall s\in \LAT_\tau, $ 
 with $g_s(x):=\frac12 s\wedge x+ c_s,$ with $c_s$ numbers satisfying $ c_{s+t} - c_s - c_t + \frac12 s \wedge t \in 2\pi\Z$ (see \eqref{gs-spec-tau}). 
  Our representation gives a detailed information about the function $\tilde \phi_k$.  

\DETAILS{Using the relations $e^{-i k\cdot x} \COVLAP{a^0} e^{i k\cdot x} =\COVLAP{a^0 - k}= \COVLAP{a^0} +2k\cdot i \n_{a^0} -|k|^2$, 
{\bf $-\COVLAP{a^0} - 1= c^*  c$ 
 and $ k\cdot \nao=\frac12 (k^c c^* - \bar k^c c)$,  where, recall,} $k^c:=k_1+ i k_2$, 
we obtain (omitting the superscript $c$) $e^{-i k\cdot x} (\COVLAP{a^0} - 1) e^{i k\cdot x}  =  c^*  c+ i (k c^* - \bar k c)+ | k|^2$. Furthermore,}
The relation  $e^{-i \vphi_k}   c e^{i \vphi_k}=c+i k$ gives $c^*  c+ i (k c^* - \bar k c)+ | k|^2=e^{-i \vphi_k} c^*  c  e^{i \vphi_k}$. This relation together with \eqref{A'2} implies 
 \begin{align}\label{cov-lapl-rel}
&e^{-i k\cdot x} (-\COVLAP{a^0} - 1) e^{i k\cdot x}  
=e^{-i \vphi_k}  (-\COVLAP{a^0} - 1)  e^{i \vphi_k}.
\end{align}
The operator $\COVLAP{a^0} - 1$ on the l.h.s. acts on functions satisfying the gauge-periodicity condition 
from the definition of $\phi_k$ in \eqref{phik-eqs-tau}, 
\begin{align}\label{gp'} 
\phi (x+s) =e^{i g_s(x)} e^{i k\cdot s}\phi (x),\ \quad \forall s\in \LAT_\tau,  \end{align}
 with  $ g_s(x)$ the same as above, 
 while on the r.h.s., satisfying 
 the same gauge-periodicity condition  with $k=0$. 

Denote by $h_k$ the operator $\COVLAP{a^0} - 1$ acting on functions satisfying the gauge-periodicity condition \eqref{gp'}. 
Then the relation \eqref{cov-lapl-rel} can be rewritten as $  e^{i \vphi_k} e^{-i k\cdot x} h_k e^{i k\cdot x} e^{-i \vphi_k} = h_{k=0}.$ Applying the latter relation to the function $\phi_0$ and using that $h_0\phi_0=0$ and that $\phi_k$ is the unique zero eigenvector of $h_k$, we conclude that   \eqref{phik-expr} holds.
\end{proof}

The Cambell-Baker-Housdorff relation $e^{X+Y}=e^{X} e^{Y}e^{-\frac12 [X, Y]}$, provided $[X, Y]$ is a multiple of the identity, implies $ e^{i\vphi_k}   \phi_0 =e^{-|k|^2/2} e^{ik c^*}   \phi_0$, and therefore $ e^{i\vphi_k}   \phi_0 =e^{-|k|^2/2} \sum_0^\infty \frac{1}{2^n n!} (i k c^*)^n  \phi_0$. Using this, we rewrite \eqref{phik-expr} as the series 
 \begin{align}\label{phik-expan}\phi_k =e^{-|k|^2/2} e^{i k\cdot x} \sum_0^\infty \frac{1}{2^n n!} (- i k c^*)^n  \phi_0.\end{align} 
 Using this series and the fact that $\lan \phi_0, (c^*)^n  \phi_0 \ran= 0$ for $n\ge 1$, we obtain \eqref{ck-comp'}. 
$\Box$ 

This completes the proof of Lemma \ref{lem:Fk-expan4}. \qquad $\Box$  
\section{Estimates of 
the nonlinearity $\tilde N (v)$}\label{sec:nonlin}

In this appendix, we analyze 
the nonlinearity $\tilde N (v):= \Nt (v)+V_{\s}v$, 
where $\Nt (v)$, $V_{\gamma'}$ and $\s$ are defined in 
Proposition \ref{prop:v-eq} and used in  Subsection \ref{sec:as-stab}.
The main result 
here is the following
\begin{proposition} Let $v=v'+v''$ as in Subsection \ref{sec:as-stab}. We have the estimates 
	\DETAILS{\begin{align}	
	\label{N-bnd}
	&\|N_\om(v)\|^2 \ls (\|v\|_{H^1}^3+\|v\|_{H^1}^5)\|v\|_{H^2},
\end{align}}
\begin{align}\label{tildeN-est'} &\| \tilde N (v)\|_{L^2} \ls 
 \e^{-2} \big( \sum_1^2\|v'\|_{H^1}^k \|v'\|_{L^\infty}+ \sum_3^4\|v'\|_{H^1}^k 
 + \|  v'' \|_{H^2} 
\sum_1^3 \|v\|_{H^1}^k\big),\\ 
 \label{xtildeN-est'} &\| x \tilde N(v)\|_{L^2}\ls   \e^{-3} \big(\| v'\|_{L^\infty_1}\sum_1^3\| v'\|_{H^1}^k 
 + \|\bar P x v''\|_{H^2} \sum_1^3\| v\|_{H^1}^k 
 \big),\\  
\label{UtildeN-est}  & \| U \tilde N(v) \|_{H^{-1}_{x}L^\infty_{k}}\ls \| U \Nt\|_{H^{-1}_{x}L^\infty_{k}}.\end{align} 
 \end{proposition}
\begin{proof}
	We first note that, due to the diamagnetic inequality for $a \in L^2_{loc}(\R^2)$ (see \cite{LL}), $|\nabla|f|| \leq |\COVGRAD{a}f|$, and by the standard Sobolev embedding theorem $H^1_{\textrm{cov}}$ is continuously embedded in $L^p(\R^2;\C\times\R^2)$ for all $p \in [2,\infty)$. 
We also note that since $\psit$ is a gauge-periodic smooth function, it is bounded. 
\DETAILS{So we have
	\begin{align*}
		\int |\xi|^2|\Re(\bar{\Psi}_\om\xi)|
		\leq \int |\xi|^3|\Psi_\om|
		\lesssim \int |\xi|^3
		\lesssim \|v\|_{H^1}^3.
	\end{align*}
	We also have
	\begin{align*}
		\int |\alpha \cdot \Im(\bar{\xi}\COVGRAD{A_\om}\xi)|
		\leq \left( \int |\alpha|^4\right)^\frac{1}{4} \left( \int |\xi|^4\right)^\frac{1}{4} \left( \int |\COVGRAD{A_\om}\xi|^2 \right)^\frac{1}{2}
		\lesssim \|v\|_{H^1}^3.
	\end{align*}
	The other terms of $R_\om$ are handled similarly.

\begin{align*}
	\|\bar{\xi}\n\xi\|^2 \leq \|\xi\|_{p}^2\|\n\xi\|_{q}\|\n\xi\|_{2},
\end{align*}
with $p^{-1}+(2q)^{-1}=4^{-1}$. Now, as long as $p, q < \infty$, by the Sobolev embedding theorem in dimension $2$, we have $\|\xi\|_{p}\lesssim \|\xi\|_{H^1}$ and $\|\n\xi\|_{q}\lesssim \|\xi\|_{H^2}$ and therefore
\begin{align*}
	\|\bar{\xi}\n\xi\|^2 \leq \|\xi\|_{H^1}^3\|\xi\|_{H^2}.
\end{align*}}

We begin with the  estimate 
$\|  \Nt (v)\|_{L^2} \ls 
 \e^{-2} ( \|v\|_{H^1} \|v\|_{L^\infty}+ \sum_2^3\|v\|_{H^1}^k).$ 
To establish 
this bound, we first write the explicit expression for
 the nonlinearity  $\No(v)$. 
Using the definition \eqref{N} of $\No(v)$ and the definition of $J(u)$ as the r.h.s. in the equation \eqref{GESresc-c} and letting $v=(\xi, \eta, \al)$, we find 	\begin{equation}\label{N-expl}
		 \Nt (v) = \ThreeByOne{-2i\alpha\cdot\nat\xi - |\alpha|^2\psit - |\alpha|^2\xi - i\xi\DIV\alpha - \kappa^2(\bar{\psit}\xi^2+2 \psit\xi\eta +\xi^2 \eta)}{2i\alpha\cdot\overline{\nat}\eta - |\alpha|^2\psit - |\alpha|^2\eta + i\eta\DIV\alpha - \kappa^2({\psit}\eta^2+2 \bar{\psit}\xi\eta +\eta^2 \xi)}{\eta\nat\xi +\xi\overline{\nat}\eta - 2\al(\bar{\psit}\xi+\eta\psit +\xi\eta)}.
	\end{equation}
  
We consider the worst terms 
  $\eta\nat\xi$ and $\xi^2 \eta$ in \eqref{N-expl}. We have trivially, $\|\eta\nat\xi\|_{L^2} \leq \|\eta\|_{L^\infty}\|\nat\xi\|_{L^2}\leq \|\eta\|_{L^\infty}\|\xi\|_{H^1}$. For the second term, we have, by the Sobolev embedding theorem in dimension $2$,
\begin{align*}
	\|\xi^2 \eta\|_{2} 
	\ls \|\xi\|_{H^1}^2 \|\eta\|_{H^1}.
\end{align*}
The remaining terms are simpler and treated similarly. 
Similarly, we show  the estimate 
\begin{align}\label{tildeN-est1'} &\| \Nt (v)-\Nt (v')\|_{L^2} \ls  \|  v'' \|_{H^2} \sum_1^2
 \|v\|_{H^1}^k.\end{align}
Indeed, e.g. considering the contribution of the term $\eta\COVGRAD{A_\om}\xi$, we have $\eta\nat\xi-\eta'\nat\xi'= \eta\nat\xi''+ \eta''\nat\xi'$, where $\xi'':=\xi' - \xi$ and $\eta'':=\eta' - \eta$, which is estimated as $\|\eta\nat\xi-\eta'\nat\xi'\|_{L^2} \leq \|\eta\|_{L^4}\|\nat\xi''\|_{L^4}+\|\eta''\|_{L^\infty}\|\nat\xi'\|_{L^2}$\  $ \leq \|\eta\|_{H^1}\|\xi''\|_{H^2}+ \|\eta''\|_{H^1}\|\xi'\|_{H^2}$.
Now, writing $\Nt (v)=\Nt (v')+ ( \Nt (v)- \Nt (v'))$ and using the above estimates, we arrive at 
\begin{align}\label{N-est1} 
&\|  \Nt (v)\|_{L^2} \ls \|v'\|_{H^1} \|v'\|_{L^\infty}+ 
\|v'\|_{H^1}^3 
+   \|  v'' \|_{H^2} \sum_1^2 \|v\|_{H^1}^k.\end{align}
This proves \eqref{tildeN-est'} for $ \Nt (v)$. 
 

%
\DETAILS{
Next, we obtain the  bounds 
 involving $V_
{\dot\g}$ and $G_{\dot\gamma }$ used in  Section \ref{sec:as-stab}.
\begin{proposition} We have the estimates 
\begin{align}	&\lan L_\om v,  V_{\dot\g} v  \ran  \ls \e^{-2} (\|v\|_{H^1}^2+\|v\|_{H^1}^4)\|v\|_{H^2}^2, \label{LV-bnd-gauge}
\end{align}
 \end{proposition}
\begin{proof}}
To estimate the term $V_{\s} v$, we begin with 
an estimate $\s:=\dot \g+\phi$.  

\begin{lemma} \label{lem:est-dotgam}  For $\s$ satisfying the equation \eqref{sig-eq}, we have the estimates  
	\DETAILS{\begin{align}	\label{dotgam-bnd1}
	\|\dot \g\|_{H^1} \lesssim \e^{-2} \big(\|v'\|_{H^1} \|v'\|_{L^\infty}+ \|v'\|_{H^1}^3 
+   \|  v'' \|_{H^2} \sum_1^2 \|v\|_{H^1}^k\big). 
	\end{align}}
\begin{align}	\label{dotgam-bnd1}\|\s\|_{H^s} \lesssim  \|\Nt (v)\|_{H^{s-1}}.\end{align}
\end{lemma}
\begin{proof}
We use the equation \eqref{sig-eq} for $\s$ and therefore first we have to show that the operator $-   \Delta+ |\psit|^2 + \bar{\psit}\xi + \psit\eta$ in \eqref{sig-eq}, considered from $H^{s}$ to  $H^{s-2}$, is invertible, for $\|\xi\|_{L^2}$ sufficiently small and $s=0, 1, 2$.  
To show this, we use the following 
\begin{lemma} \label{lem:h-low-bnd}  We have, for some constant $c>0$ independent of $\e$,
\begin{align}	\label{h-low-bnd} -   \Delta+ |\psit|^2 \ge c \e^2. \end{align} 
\end{lemma}
\begin{proof} 
First, we notice first that $h:=-   \Delta+ |\psit|^2\ge 0$. 
To show that this operator has a gap at $0$, we 
pass from $h$ to  its Bloch-Floquet representation $\int^\otimes_{\Om^*} h_k \hat{d k}$, acting on $ \int^\otimes_{\Om^*} \cH_k \hat{d k}$, where $\cH_k:=\{\al\in L^2(\R^2, \C^2): \al(x+s)= e^{ik\cdot x}\al(x), \forall s\in \LAT\}$, with the inner product $\lan \al, \beta\ran:=\int_\Om \bar\al\cdot\beta$, for some some elementary cell $ \Om$ of $\LAT$. Then, using the map $e^{ik\cdot x}: L^2( \Om^*, L^2_{\rm per})\ra \int^\otimes_{\Om^*} \cH_k \hat{d k}$, where $L^2_{\rm per}:=\{\al\in L^2(\R^2, \C^2): \al(x+s)= \al(x), \forall s\in \LAT\}$, with the inner product as above, further to the operator $$g: f_k(x)\ra [(-i \n -k)^2+ |\psit|^2] f_k(x)$$ on $L^2( \Om^*, L^2_{\rm per})$. We consider $g$ as a perturbation of the operator $g_0:  f_k(x)\ra (-i \n -k)^2 f_k(x)$, whose spectrum consist of the bands, $\{|\nu-k|^2: k\in \Oms\}, \nu \in\LAT^*$, with the band eigenfunctions, $e^{i \nu \cdot x}$. We apply to $g$ the Feshbach-Schur map with the projection, $P$, on the eigenspace of the operator $g_0$ corresponding to  its lowest spectral branch, $\{|k|^2: k\in \Oms\}$, i.e. $P f_k(x)\ra \lan 1, f_k\ran =\lan f_k\ran_{\Om}\in L^2(\Om)$. 
The corresponding the Feshbach-Schur map is the multiplication operator
$$f(\mu):=\lan (-i \n -k)^2+ |\psit|^2 - |\psit|^2 \bar r_k(\mu)|\psit|^2\ran_{\Om},$$
 where $\bar r_k(\mu):= \bar P (\bar P g_k \bar P -\mu)^{-1}\bar P$, with $g_k:=(-i \n -k)^2+ |\psit|^2$ and $\bar P:=\one - P$, acting on $L^2(\Om^*)$. It can be rewritten as $f(\mu)= | k|^2+\lan |\psit|^2\ran_{\Om} -\lan |\psit|^2 \bar r_k(\mu)|\psit|^2\ran_{\Om}$ and, since $\lan |\psit|^2\ran_{\Om}\gs \e^2$ and $ |\psit|^2=O(\e^2)$, it satisfies $f(\mu)\gs \e^2$. Since $\mu\in \s(f(\mu))\Leftrightarrow \mu \in \s(h)$, the last estimate proves \eqref{h-low-bnd}.
\DETAILS{(Another way to prove this bound is to expand 
\begin{align}\label{R-expan} (-   \Delta+ |\psit|^2& + \re(\bar\Psi_\om\xi) )^{-1}\notag\\
&=\sum_0^\infty (-   \Delta+ |\psit|^2)^{-1} [-(-   \Delta+ |\Psi_\om|^2)^{-1} \re(\bar\Psi_\om\xi) ]^n\end{align}
 and using that $ (-   \Delta+ |\psit|^2)^{-1}$ is bounded on $H^s, s=0, 1$ (by $C\e^{-2}$).)}
\end{proof} 
Now,  the bound \eqref{h-low-bnd} and the condition  $\|\xi\|_{L^2}\ll \e^2$ imply that $-   \Delta+ |\psit|^2 + \bar{\psit}\xi + \psit\eta \ge \frac12 c \e^2$ and therefore $-   \Delta+ |\psit|^2 + \bar{\psit}\xi + \psit\eta$ is invertible and its inverse is bounded as $\|(-   \Delta+ |\psit|^2 + \bar{\psit}\xi + \psit\eta)^{-1}\|_{H^{s-2}\ra H^s} \ls \e^{-2}$.

   Since  the operator $-   \Delta+ |\psit|^2 + \bar{\psit}\xi + \psit\eta$ is invertible,  we can rewrite \eqref{sig-eq} as 
\begin{align}\label{dotgam-eq'} \s  =R b,\ \text{ where }\ R:=(-   \Delta+ |\psit|^2 + \bar{\psit}\xi + \psit\eta)^{-1}\end{align}  
  and $b(v):=- i \bar\psit  N_\xi(v) - i \psit  N_\eta(v) 
  +\divv  N_\al(v)$ 
   (with $ \Nt (v)= ( N_\xi(v),   N_\eta(v), N_\al(v))$).
  Then  \eqref{dotgam-eq'}, together with the  estimate $R\ls \e^{-2}$,  gives that 
\begin{align}	\label{dotgam-bnd2}\|\s\|_{H^s} \lesssim  \e^{-2}(\| i \bar\psit  N_\xi(v) + i \psit  N_\eta(v)-\divv  N_\al(v)\|_{H^{s-2}}).
\end{align}
This 
implies  
\eqref{dotgam-bnd1}.  
   \end{proof}
%
\DETAILS{ The definitions of $ N_\om(v)$ and $G_{\dot\gamma }$ and integration by parts give	\begin{align}			
	&\lan L_\om v,  V_{\dot\g}v  \ran  \ls \|\dot \g\|_{H^2} \|v\|_{H^1}^2.\label{LV-bnd}
\end{align} 
\eqref{LV-bnd} - \eqref{NF-bnd} follow from \eqref{LV-bnd} - \eqref{NF-bnd} and Lemma \ref{lem:est-dotgam}.}

Now, we prove the bound \eqref{tildeN-est'} for $V_{\s}v$.  
\DETAILS{Due to the decomposition $V_{\dot\g (v)}v=V_{\dot\g (v')}v'+(V_{\dot\g (v)}v-V_{\dot\g (v')}v')$, this bound is equivalent to the following estimates 
\begin{align}\label{Vdotg-est}  
&\| V_{\dot\g}v \|_{L^2} \ls \e^{-2}(\|v\|_{L^\infty} \sum_1^3\|v\|_{H^1}^k+  \|v\|_{H^1}^3),\\
\label{Vdotg-est2} & \|V_{\dot\g (v)}v- V_{\dot\g (v')}v' \|_{L^2}\ls \e^{-2} \sum_1^4\|v\|_{H^1}^k\|v''\|_{H^1},\end{align} 
which we now prove.}
  By 
the definition of $V_{\s}$ (namely, $V_{\s} v:= (i\s\xi, - i\s\eta, 0)$, for $v = (\xi, \eta, \al)$, see \eqref{Vgam}), we have $\| V_{\s}v \|_{L^2}\ls \|  \s\|_{H^1}\|v\|_{H^1}+ \|\s\|_{L^2}$, which, together with \eqref{dotgam-bnd1} and \eqref{N-est1}, gives the estimate  \eqref{tildeN-est'} for $V_{\s}v$. This together with \eqref{N-est1}, completes the proof of \eqref{tildeN-est'}. 

Next, we prove the estimate \eqref{xtildeN-est'}. 
 By the explicit form of  $\Nt (v)$, given in  \eqref{N-expl}, $\Nt (v)$ can be  written as 
 \begin{align}\label{Ntau-deco}\Nt (v)= N_2 (v, v)+ N_3 (v, v, v),\end{align}
  where $N_2 (v, v)$ and $N_3 (v, v, v)$ are bilinear and trilinear contributions to $\Nt (v)$. We agree that it is  the first arguments in $N_2 (v, v)$ and $ N_3 (v, v, v)$ which have no derivatives. Then $x \Nt (v)= N_2 (x v, v)+ N_3 (x v, v, v)$. 
 Now, using the estimate similar to \eqref{N-est1} and recalling the definition of the space $L^\infty_1$, we arrive at the estimare 
 \begin{align} \label{xN-est} &\| x  N(v)\|_{L^2}\ls   \| v'\|_{L^\infty_1}\sum_1^2\| v'\|_{H^1}^k  + \|\bar P x v''\|_{H^2} \sum_1^2\| v\|_{H^1}^k .\end{align}

Now we turn to the term $x V_{\s} v$. 
 By the definition  $V_{\s} v:= (i\s\xi, - i\s\eta, 0)$, for $v = (\xi, \eta, \al)$, we have  
 $\| x V_{\s} v|_{L^2}\ls   \|x v\|_{L^\infty}\| \s\|_{L^2}$, which implies  the desired result. 
 
 For the second component, we write $x_i\n_j\s= \n x\s-\del_{i, j}\s$. Furthermore,  
using the equation \eqref{dotgam-eq'} and the relation $[x, R]=-R 2\n R$,  which follows from $[x, -\Delta]=2\n$,  we find  $x\s=R x b-R 2\n R b$. To bound the latter expression, we use the third of the following bounds 
\begin{align} \label{R-bnds} &\|R \| \ls \e^{- 2},\ \quad \|\n R \|  \ls \e^{- 2}, \quad \|R \n R\|\ls \e^{-3},\end{align}  first of which follows from \eqref{h-low-bnd}  the second,  from  
the first one as  $\|\n R f\|^2=\lan f, R (-\Delta) R f\ran =\lan f, R (R^{-1} +O(\e^{2})) R f\ran \ls \e^{- 2} \| f\|^2$, and the third, from the first two. 
Remembering the definition of $b(v)$ 
 and using  the above bound on $R \n R$ and the estimate \eqref{dotgam-bnd1}, we arrive  at the estimate 
\begin{align} \label{xVv-est1} &\| x\n\s\|_{L^2}\ls 
 \e^{-2} \|x \Nt (v)\|_{H^{-1}} + \e^{-3}  \|  \Nt (v)\|_{H^{-1}}.\end{align}
This, together with the estimate of $\s x\xi$ and the estimates \eqref{N-est1} 
 and \eqref{xN-est}, gives
\begin{align} \label{xVv-est2} &\| x V_{\s} v\|_{L^2}\ls   \e^{-3} \big(\| v'\|_{L^\infty_1}\sum_1^3\| v'\|_{H^1}^k  + \|\bar P x v''\|_{H^2} \sum_1^3\| v\|_{H^1}^k  \big).\end{align}
 This completes the proof of  the estimate \eqref{xtildeN-est'}. 
\DETAILS{Next, we prove the estimates \eqref{xtildeNv'-est'} - \eqref{xtildeN1-est'}. 
Since we keep only the bi-linear term, $\tilde N_2 (v, v)$, in $\tilde N (v)$, we have $x \tilde N (v) = \tilde N_2(x v, v) +\dots$, where the dots stand for the terms arising from pulling $x$ through various differential operators and their inverses, these terms have stronger estimates. (Indeed, for the more singular term $V_{\dot g} v$,  to estimate $x \dot g$, we use the equation \eqref{dotgam-eq'} and pull $x$ through $R$. Since $[x, -\Delta]=2\n$, this leads to the operator $R 2\n R$, 
which is bounded (by $\e^{-3}$). Consequently, $x$ drops out and the resulting term is estimated similarly to $\Nt$.) 
This implies the estimates \eqref{xNv'-est'} - \eqref{xN1-est'}}
 %
 
 Now, we prove the estimate \eqref{UtildeN-est}.
\DETAILS{ Indeed,  by \eqref{dotgam-eq}, we can write  $ \dot g  =R b,$  where, recall,  $R:=(-  2 \Delta+ |\Psi_\om|^2 + \re(\bar\Psi_\om\xi) )^{-1}$ and $b(v):=(-\im(\bar\Psi_\om   N_\xi(v))+\divv  N_\al(v))$ (with $ N_\om(v)= ( N_\xi(v),   N_\al(v))$). Expanding \[(-  2 \Delta+ |\Psi_\om|^2 + \re(\bar\Psi_\om\xi) )^{-1}=\sum_0^\infty (-  2 \Delta+ |\Psi_\om|^2)^{-1} [-(-  2 \Delta+ |\Psi_\om|^2)^{-1} \re(\bar\Psi_\om\xi) ]^n\]}
Using the expansion 
\begin{align}\label{R-expan} (-   \Delta+ |\psit|^2& + \bar{\psit}\xi + \psit\eta)^{-1}\notag\\
&=\sum_0^\infty (-   \Delta+ |\psit|^2)^{-1} [-(-   \Delta+ |\psit|^2)^{-1} (\bar{\psit}\xi + \psit\eta) ]^n\end{align}
and using that, by the second relation in \eqref{k-deriv-Uv} (or the first relation in \eqref{norm-rel1}), $U (-   \Delta+ |\psit|^2)^{-1}  =(-   \Delta+ |\psit|^2)^{-1}  U$ and that $ (-   \Delta+ |\psit|^2)^{-1}$ is bounded on $H^{-1}_{x}L^\infty_{k}$ ( which is shown similarly to \eqref{h-low-bnd}), 
we conclude that $U(-   \Delta+ |\psit|^2+ \bar{\psit}\xi + \psit\eta)^{-1}  =\hat R  U$, where $\hat R$ is a bounded operator on $H^{-1}_{x}L^\infty_{k}$. 
   This, together with the equation \eqref{dotgam-eq'} and the definition of $\tilde N$, gives \eqref{UtildeN-est}.
\end{proof}

\begin{proposition} \label{prop:sec:nonlin} The nonlinearity  $\cN_\vpp(f)=V^* \tilde N (Vf+ h)$ satisfies the esimates
 \begin{align}  \label{cN-est'}
&\left\|\p_k^m\cN_{h, k } (f)\right\|_{L^\infty}\ls 
\sum_{r=2}^3 ( \| f\|_{H^{m/2}}^r  +  \| h\|_{H^1_{m/2}}^r),\ m=0, 1. 
 \end{align}
 \end{proposition} \begin{proof} 
\DETAILS{Here we prove the weaker and simpler estimate
\begin{align} \label{dcN-bnd'} \|\p_k^m \cN_{k} (f, h)\|_{L^\infty}\ls  
\|f\|_{H^{1}}\|f\|_{H^{m}} +\|v''\|_{H^1}\|v''\|_{H^m}  +\|\bar P |x |^{m/2} v''\|_{H^1}^2,\ m=0, 1, \end{align}
which implies the decay $<t>^{-5/4}$ and $<t>^{-3/4}$, rather than $<t>^{-3/2}$ and $<t>^{-1}$, encoded in  the Banach space $X_{3/2}$, but which suffices for our purposes (it leads to  the space $X_{5/4}$). The estimate \eqref{cN-est}, which requires more explicit information about the nonlinearity, is proven in Appendix \ref{sec:cN-est}. }
%
 To concentrate on the essentials and keep notation from running amok, we present the estimates  only for the bilinear part $N_2 (v, v)$ of $\Nt (v)$, defined after \eqref{Ntau-deco}. The cubic term $N_3 (v, v, v)$ is treated similarly. (The  estimates for $N_2 (v, v)$ hold also for $N_3 (v, v, v)$, provided the norms involved are less than some constant.)

We observe that the bilinear part, $N_2 (v, v)$, of $\Nt (v)$ can be written as a sum of products -- in the sense of the definition in Appendix \ref{U-prod-transf}  -- of two vectors. 
 For instance, one of  the two most singular contributions to $\Nt (v)$ is $( \al\cdot  \n_{\at} \xi, \al\cdot  \overline{\n_{\at}} \eta,  \eta \n_{\at} \xi)$ and it can be written as a product of the vectors $v_1=( \n_{\at} \xi, \eta, \al)$ and $v =(\xi, \eta, \al)$.  

For $m=0$, 
by virtue of \eqref{V*est} and \eqref{UtildeN-est}, we have to estimate $\| U\Nt\|_{H^{-1}_{x}L^\infty_{k}}$, i.e. $\| UN_2 (v, v)\|_{H^{-1}_{x}L^\infty_{k}}$, in our case. 
Using the equation 
  \eqref{BFZT-prod2-est} , 
we find that
\begin{equation} \label{UN-est}
\| U  N_2 (v, v)\|_{H^{-1}_{x}L^\infty_{k}} \ls  \|\hat v_{1} \|_{L^2}  \|\hat v  \|_{L^2}, 
\end{equation}
 where $v_1=( \n_{\at} \xi, \eta, \al)$ and $v =(\xi, \eta,\al)$ 
and $\hat v_{1} =U v_{1}$ and $\hat v =U v$. 
 The unitarity of $U$ gives $\|\hat v_{2} \|_{L^2}=\|v_{} \|_{L^2}$. Next, by 
 the second equation in \eqref{k-deriv-Uv}, 
we have $\hat v_{1}=(\n_{\at}\oplus \one) U v$. By  the second relation in \eqref{k-deriv-Uv} or in \eqref{norm-rel1}, this gives $\|\hat v_{1} \|_{L^2}\le \|(\n_{\at}\oplus \one)  v\|_{L^2}$. Now, 
\DETAILS{$v_2=v=V f+v''= \U^{-1}  \int_{\Om^*}^\oplus \hat{dk}  f_k  \hw_k +v''$
 and therefore, $g_{1}=\int_{\Om^*}^\oplus \hat{dk}  f_k (\n_{A_\om}\oplus \one)   \hw_k+\U(\n_{A_\om}\oplus \one)  v''$. This, by the differentiability property \eqref{hwk-parity} of $\hw_k$  and the second relation in \eqref{norm-rel1},} 
 %
$v_2=v=V f+v''$ and therefore, by the first relation in \eqref{V-bnd-L2},  implies $\|\hat v_{1} \|_{L^2}\le \|f \|_{L^2} +  \|v_{}'' \|_{H^1}$, which, together with \eqref{UtildeN-est} and \eqref{UN-est}, gives
\begin{equation} \label{UtildeN-est'}\| U \tilde N_2 \|_{H^{-1}_{x}L^\infty_{k}} \ls  \|f \|_{L^2}^2 +  \|v_{}'' \|_{H^1}^2, \end{equation}
where $ \tilde N_2$ is obtained from $ \tilde N$ by dropping the trilinear terms in $\Nt$. Using the definition  $\cN_\vpp(f)=V^* \tilde N (Vf+ h)$ and the estimates \eqref{V*est} and \eqref{UtildeN-est'}, 
 leads to the quadratic part of  \eqref{cN-est'} (or \eqref{cN-est'} for $ \| f\|_{H^{m/2}}^r \ls 1$ and $ \| h\|_{H^1_{m/2}}^r\ls 1$) for $m=0$.

 Now, to prove differentiability of $\cN_{\vpp, k} (f)$ in $k$, we use the definition $\cN_\vpp(f)=V^* \tilde N (Vf+ h)$ and the first relation in \eqref{xV-Vk-deriv}, which gives $ i \n'_j V^* =V^* x_j  - V_j^*$, to compute $\p_{k_j} \cN_{\vpp} (f)= (V^* x_j  - V_j^*)  \tilde N (Vf+ h)$. 
 Now, letting $v=Vf+ h$, we have 
 \begin{align} \label{Nk-deriv1} &\p_{k_j} \cN_{\vpp} (f)= (V^* x_j  - V_j^*)  \tilde  N (v) . \end{align} 
Using \eqref{V*est} and proceeding as with the estimate \eqref{UtildeN-est}  above, 
to obtain 
\begin{align} \label{Nk-deriv2} \|V^* x  \tilde  N_2 (v, v)\|_{L^\infty}\ls   \| U x \tilde  N_2 (v, v)\|_{H^{-1}_{x}L^\infty_{k}} 
 \ls   \| U  x N_2 (v, v)\|_{H^{-1}_{x}L^\infty_{k}}. 
 \end{align}
By the explicit expression for the nonlinearity, \eqref{N-expl}, 
	we have 	$ x_j N_2 (v, v)= N_2 (x_j^{1/2} v, x_j^{1/2}v) +$ a simpler term, where the simpler term comes from pulling $x_j^{1/2}$ through $\n$ and is of the same form as $N_2 (v, v)$. 
\DETAILS{\begin{align} \label{Nk-deriv} 
&\p_{k_j} \cN_{\vpp} (f)= (V^* x  - \tilde V^*)  \tilde  N_2 (v, v)    
 = V^*  \tilde N_2 (x_j^{1/2} v, x_j^{1/2} v) - \tilde V^* \tilde N_2 (v, v) + \text{a simpler term}. \end{align}}
Now, using \eqref{Nk-deriv1} and \eqref{Nk-deriv2} and proceeding as in \eqref{UN-est}, we arrive at the estimate 
 \begin{align} \label{dcN-bnd}\|\n_k \cN_{k 2} (f, h)\|_{L^\infty}\ls 
 \|\hat v_{1} \|_{L^2}  \|\hat v_{2} \|_{L^2}+  \|\hat v_{3} \|_{L^2}  \|g_{4} \|_{L^2}, \end{align}
where $\cN_{k 2} (f, h)$ is obtained from $\cN_{k } (f, h)$ by dropping the trilinear terms in $\Nt$, $\hat v_{i} =U v_{i}  $ and $v_1$ and $v_2$ are as above and $v_3= x_j^{1/2} v_1$ and $v_4= x_j^{1/2} v_2$. The first summand on the r.h.s. is the same as in \eqref{UN-est} and leads eventually to the same contributions to the final estimate. To estimate the second summand on the r.h.s.,  
we have to analyze the function $x v (x)$. 

\DETAILS{
 Furthermore, we use the definition of $P$, which we write as   $P= \U^{-1} W$, where $Wv:= \int_{\Omega^*}^\oplus \lan \hw_k, (\U v)_k \ran \hw_k\hat {dk}$, and the equation \eqref{k-deriv-Uv} below, which implies that $[x, \U^{-1}]=U^{-1}i \n'$, where $ \n'=\n_k$, to obtain $[x, P]=\U^{-1}i \n' W+ \U^{-1} [x, W]$. Next, by \eqref{k-deriv-Uv}, $W x v=\int_{\Omega^*}^\oplus \lan \hw_k, (\U x v)_k \ran \hw_k\hat {dk}=  \int_{\Omega^*}^\oplus \lan \hw_k, (i \n_k + x )(\U  v)_k \ran \hw_k\hat {dk}=  \int_{\Omega^*}^\oplus i \n_k\lan \hw_k, (\U  v)_k \ran \hw_k\hat {dk}-  \int_{\Omega^*}^\oplus \lan i \n_k\hw_k, (\U  v)_k \ran \hw_k\hat {dk}+  x\int_{\Omega^*}^\oplus \lan \hw_k, (\U  v)_k \ran \hw_k\hat {dk}=i\n' W+ x W -  \int_{\Omega^*}^\oplus \lan i \n_k\hw_k, (\U  v)_k \ran \hw_k\hat {dk}$.  Collecting these equalities, we find  $[x, P]=\U^{-1}\{i \n' W+  x W-  i\n' W-  x W +  \int_{\Omega^*}^\oplus \lan i \n_k\hw_k, (\U  v)_k \ran \hw_k\hat {dk}\}=\U^{-1}\ \int_{\Omega^*}^\oplus \lan i \n_k\hw_k, (\U  v)_k \ran \hw_k\hat {dk}$.
Since $\hw_k$ satisfies the estimates \eqref{hwk-parity}, we see that $[x, P]$ is a bounded operator on $L^2$.}

Using $v=v'+v''=Vf +v''$, we write $x v=xVf+P x v''+\bar P x v''$, where $f:=V^* v$. Recall that by \eqref{xV-Vk-deriv}, $x_jVf= (V i \n'_j + V_j) f$, where, recall, $\n' =\n_k $.  The term $\bar P x v''$ is fine as we have shown in the previous subsection how to control it. To treat the remaining term, we use \eqref{Px} and $Pv''=0$ to obtain  $P x_j v''=V  V^*_j v''$. Collecting these expression, we find 
\begin{align} \label{x v-deco} x_j v= (V i \n'_j + V_j)  f+V V^*_j v''+\bar P x_j v'',\end{align}
By the relations $\|V f\|_{L^2} =  \|f \|_{L^2}$,  $\|   V_j f\|_{L^2} \ls  \|f \|_{L^2}$ and $ \|  V^*_j v\|_{L^2} \ls  \|v \|_{L^2}$ (see Lemma \ref{lem:V} and the line after \eqref{xV-Vk-deriv}), this gives  
\begin{align}\notag \|x^{m} v \|_{L^2} \ls \|f\|_{H^{m}}  +\|v''\|_{L^2}  +\|\bar P |x |^{m2} v''\|_{L^2},\end{align}
 for $m=0, 1$,  which by the interpolation gives also $m=1/2$.
Then remembering \eqref{dcN-bnd} and  proceeding as in the paragraph after the equation \eqref{UN-est}, we arrive at  the quadratic part of  \eqref{cN-est'} (or \eqref{cN-est'} for $ \| f\|_{H^{m/2}}^r \ls 1$ and $ \| h\|_{H^1_{m/2}}^r\ls 1$) for $m=1$. 
\end{proof}
\paragraph{Remark 16.}  In the estimates \eqref{V*est} and \eqref{UtildeN-est}, we can replace $H^{-1}_{x}L^\infty_{k}$ and $H^{-1}_{x}L^\infty_{k}$ by $L^{1}_{x}L^\infty_{k}$ and  $L^{1}_{x}L^\infty_{k}$.
This would effect the estimate \eqref{Nk-deriv2} giving $\|V^* x  \tilde  N (v)\|_{L^\infty}\ls   \| U x \tilde  N (v)\|_{H^{-1}_{x}L^\infty_{k}\cup L^{1}_{x}L^\infty_{k}}$ $ \ls   \| U  x \Nt (v)\|_{H^{-1}_{x}L^\infty_{k}\cup L^{1}_{x}L^\infty_{k}}$. Moreover, the estimate \eqref{UN-est} can be replaced by the estimate $\| U  \Nt (v) \|_{L^{r}_{x}L^\infty_{k}} \ls   \|\hat v_{1} \|_{L^p_k L^q_x}  \|\hat v_{2} \|_{L^{p'}_k L^{q'}_x},$ $ p^{-1}+ (p')^{-1} =1,\  q^{-1}+ (q')^{-1} =r^{-1}$.

\section{Proof of Theorem \ref{thm:gammak} 
} \label{sec:gammaq-series}

In this appendix we 
prove the explicit representation \eqref{gamk-series} of the functions $\g_k(\tau)$. 
We introduce the functions  
 \begin{equation}\label{vphiq} \vphi_q (z)= \phi_k(x),\ \quad  x_1+i x_2=\sqrt{\frac{2\pi}{\im\tau}}    z,\ 
  k=- \sqrt{\frac{2\pi}{\im \tau}}   i q 
 \end{equation} 
   Now, 
    rewrite the functions $\g_k(\tau)$, 
defined in  \eqref{gamk-tau}, in terms of the functions $\vphi_q$, introduced in \eqref{vphiq}:
\begin{equation}\label{gammak-varphiq}
  \g_k( \tau) := 2\lan|\vphi_0|^2|\vphi_q|^2\ran_{\Om_\tau} + |\lan \vphi_0^2\bar{\vphi}_q \bar{\varphi}_{-q}\ran_{\Om_\tau} | - \lan |\vphi_0|^4 \ran_{\Om_\tau} .
	\end{equation}
Theorem \ref{thm:gammak} follows from \eqref{gammak-varphiq}, 
Proposition \ref{prop:gammaq-comp}  below 
and fact that $|\Om_\tau |= \Im\tau $. $\Box$

\DETAILS{ It is straightforward to verify 
 the periodicity relation in \eqref{phik-eqs}  implies that the functions $\vphi_q (z)$ 
 satisfy  the periodicity relations 
 \begin{equation}\label{vphiq-per} \vphi_q (z+s) =e^{\frac{i\pi}{\im\tau}(\im (\bar s z)+2\im (\bar s  q))+i c_s}\vphi_q (x),\ \forall s\in 
 \Z+\tau \Z, \end{equation}
where $s\in ??$, $c_s$ are the constants which enter \eqref{gs-spec} (for $\LAT=\LAT_\tau$??). 
Indeed, by \eqref{vphiq} and \eqref{phik-eqs}, we have $\vphi_q(z+s) =\phi_k(x+
 s)=e^{i\al}\phi_k (x)$, where $\al:= 
 \frac{\pi}{\im\tau}  s\cdot J x+ s\cdot  k$. Using the identifications in \eqref{vphiq}, we find furthermore, $\al:=  \frac{2\pi}{\im\tau}  \im (\frac12\bar s z+\bar s q)$, which, together with the previous relation, gives \eqref{vphiq-per}. }
 %

By \eqref{vphiq}    and Proposition \ref{prop:phik},  the functions  $\vphi_q (z)$ are related to the theta-functions $\theta_{q}(z, \tau)$ as 
\begin{align}\label{vphiq-thetaq} \vphi_q(z)=c 
e^{\pi i (a^2\tau-a b)} e^{\frac{\pi}{2 \im\tau}(z^2 -|z|^2)} \theta_{q}(z, \tau). 
\end{align} 
Here $c$ is such that  
	 $\lan |\vphi_q|^2 \ran_{\Om^\tau} = \lan |\vphi_0|^2 \ran_{\Om^\tau} =1$, where $\Om^\tau$ is a fundamental domain of the lattice $\Z+\tau \Z$, and, since the function $\vphi_q(z)$ is defined up to a phase factor, we inserted the factor $e^{\pi i (a^2\tau-a b)}$,  which makes some expressions below simpler.

To  compute the integrals entering \eqref{gammak-varphiq}, we use the relation \eqref{vphiq-thetaq} and the explicit series representation  \eqref{thetaq} for  $\theta_{q}(z, \tau)$.
As was mentioned above, {\bf we can set} $c_1 = c_\tau=0$ in \eqref{thetaq}, which we do from now on. 
	
\begin{proposition}\label{prop:gammaq-comp}  Recall $q=-a\tau +b$ and $
\im\tau=\tau_2$ and  let $\Om=\Omega^\tau:=\{u_1 +\tau u_2 :  - 1/2 \le u_1, u_2\le 1/2\} $. We have 
\begin{align} 
\label{phi0phik-int0}&\int_\Om \bar\vphi_0 \vphi_q e^{\frac{2\pi i}{\tau_2}\im (\bar q z)} dz=\frac{{c}^2}{2} \sqrt{\tau_2} e^{-\frac{\pi }{2\tau_2}   |- a \tau+b| ^2},\\
\label{phi0phik-int1}
&\int_\Om|\vphi_0|^2|\vphi_q|^2 dz = \frac{c^4}{ 2} 
 \sum_{p, n=-\infty}^{\infty} e^{-\frac{\pi}{\tau_2}   |n-  p   \tau-  a   \tau+b| ^2},\\ 
&\int_\Om \bar\vphi_0^2 {\vphi}_q {\vphi}_{-q} dz=  \frac{c^4}{ 2} 
 e^{-2\pi  i   ab}     \sum_{p, n=-\infty}^{\infty}  
      e^{-\frac{\pi}{\tau_2}   |n-  p   \tau-  a   \tau+b| ^2   - 2\pi  i   [b p- n a]},\label{phi0phik-int2}
\end{align}
where 
$c$ is the constant given in \eqref{vphiq-thetaq}, which, by \eqref{phi0phik-int0} with $q=0$, is $c= \sqrt{2}(\tau_2)^{1/4}$. 
\end{proposition}
\begin{proof} 

Introduce the function $f_{q}(z):=e^{\frac{2\pi i}{\tau_2}\im (\bar q z)}(\bar\vphi_0{\vphi}_q)(z)$. 
The functions   $f_{q} $ 
are periodic functions w.r.to the lattice $\LAT$.   To convert this to standard periodicity (w.r.to  the square lattice), we write 
$z =z_1+i z_2=u_1+ u_2\tau : -1/2\le u_i\le 1/2,\ i=1, 2$, or in coordinates, $z=z(u)$, given by
 \begin{align}\label{zu} z_1= u_1+  \tau_1 u_2\ \quad  \mbox{and}\  \quad    z_2=  \tau_2 u_2.\end{align}
Then the functions   $ f_{q}(z(u))\equiv f_{q}(u_1+ u_2\tau)$ are periodic functions w.r.to $u_i$ with the period $1$.
We begin with 
\begin{lemma} \label{lem:fq}  
The function $f_{q}(z(u))$ is of the form
\begin{align}\label{fq} f_{q}(z(u)):=e^{\frac{2\pi i}{\tau_2}\im (\bar q z)}\bar\vphi_0\vphi_q=  \sum_{m, n=-\infty}^{\infty}e^{2\pi\al_{m, n}(u)}, \end{align}
  where $\al_{m, n}(u)$ is given by 
 \begin{align}\label{almm'22}  
\al_{m, n}(u)&= -  \tau_2  (u_2+r)^2+   i [(n-   a)\tau_1 +b](u_2 +r) 
 +  i n u_1+ \beta_{n},
	\end{align}
	  with  
	  $r= m-\frac12 n-\frac12  a$, 
	   and
	  \begin{align} \label{betap}  
\beta_{n }&=    i  \frac12 b n  -\frac14 ( n-a)^2\tau_2.	\end{align}
\end{lemma}
\begin{proof} First, \eqref{vphiq-thetaq} and the series representation  \eqref{thetaq} for $\theta_q(z, \tau)$ and \eqref{zu} 
yield  \begin{align}f_{q}(z(u)) &:=e^{\frac{2\pi i}{\tau_2}\im (\bar q z)}\bar\vphi_0\vphi_q=  \sum_{m, m'=-\infty}^{\infty}e^{2\pi\al_{m, m'}'(u)},\ 
 \text{  with}\\
\label{almm'12} 
\al_{m, m'}'(u)&=\frac{ i}{\tau_2}\im (\bar q z) -    \frac{1}{\tau_2}  z_2^2 -i a z+ i  q m +\frac12 i (m^2\tau-m'^2\bar\tau)\notag \\
&  + i (m z- m' \bar z)
+\frac12 i (a^2\tau  - ab). 
	\end{align}
We use that
$\tau:=\tau_1  +i\tau_2 $ and $q=-a \tau +b$ and \eqref{zu}, to obtain
\begin{align}\label{eq3} \frac{ i}{\tau_2}\im (\bar q z) -  i a  z&= \frac{  i}{\tau_2} ((-a\tau_1 +b) z_2+a\tau_2 z_1) -   a   (i z_1- z_2)\notag \\
&=\frac{1}{\tau_2}(i(-a\tau_1 +b)+ 
 a\tau_2)z_2=i \frac{1}{\tau_2} q z_2
= i  q  u_2. \end{align}
Next,  we use the relations \eqref{zu} and   
 \begin{align}\label{mz} m  z- m' \bar z &=(m -m' ) z_1 
  + i (m +m') z_2\notag \\   
  &=(m-m')(u_1+  \tau_1 u_2) + i (m+m')\tau_2 u_2,  
    \end{align}
and  the notation  $n=m-m'$ and $r= \frac12 (m+m')-\frac12  a= m-\frac12 n-\frac12  a$, so that $m=r + \frac12 n+\frac12  a$ and $m'=r - \frac12 n+\frac12  a$ (and $r=m-\frac12 n-\frac12  a$), 
 to obtain
   \begin{align}\label{12}
\frac{ i}{\tau_2}\im (\bar q z)-  &  \frac{1}{\tau_2}  z_2^2 -i a z + i (m z- m' \bar z)= -  \tau_2  u_2^2 + a\tau_2 u_2+i (-a\tau_1+b) u_2\notag \\
&  \qquad+ i n(u_1+  \tau_1 u_2) -  2 (r +\frac12  a)\tau_2 u_2\notag \\
&= -  \tau_2 [ u_2^2 -a u_2+ 2 r u_2] +  i n u_1+   i [(n-   a)\tau_1 +b]u_2\notag \\
&= -  \tau_2  (u_2+r)^2 +  i n u_1+   i [(n-   a)\tau_1 +b]u_2\notag \\
& \qquad  +\tau_2\frac14 (2 r)^2.	\end{align}
Now,  we use that 
  \begin{align} 
&  \frac12  i( m^2\tau-(m')^2\bar\tau) = i(m^2-{m'}^2)\tau_1 - (m^2+{m'}^2)\tau_2 \label{eq2'}\\
&= in (r+\frac12  a)\tau_1 - ((r+\frac12  a)^2  + \frac14 n^2)\tau_2.
\end{align}
 The last three relations together with \eqref{almm'12} give
 \begin{align*} 
\al_{m, m-n}'(u)&= -  \tau_2  (u_2+r)^2+   i [(n -   a)\tau_1 +b](u_2  +r) 
 +  i n u_1+ \beta_{n},
	\end{align*}
which is the same as \eqref{almm'22},	with
\begin{align} 
\beta_{n }&= -  i  [(n-   a)\tau_1  + b] r +\tau_2 r^2 \notag\\
&+[ i  (-a\tau_1+b) +  a\tau_2] m +  i n (r+\frac12  a) \tau_1 - (\frac14 n^2+(r+\frac12  a)^2)\tau_2+\frac12 i  (a^2\tau  - ab). 
  \notag 
  \end{align}
Using $i q   = i  (-a\tau_1+b) +  a\tau_2 $, we simplify this expression to \eqref{betap}. Now, we change the summation variable from $m'$ to $n=m-m'$ and use that 
\[\sum_{m, m'=-\infty}^{\infty}g_{m, m'}=\sum_{m=-\infty}^{\infty} \sum_{m'=-\infty}^{\infty}g_{m, m'}=\sum_{m=-\infty}^{\infty} \sum_{n=-\infty}^{\infty}g_{m, m-n}=\sum_{m, n=-\infty}^{\infty}g_{m, m-n}\] to find \eqref{fq} - \eqref{betap}.
 \end{proof} 

Let 
$c_{n_1, n_2} (f)$ denote the Fourier coefficients of a function $f$  (w.r.to $u_i$), i.e. $c_{\al} (f):=\int_{[0, 1]^2} f(u) e^{2\pi i \al\cdot u} du$, $\al=(n_1, n_2)\in \Z^2$.

 To compute the Fourier coefficients $c_{n_1, n_2}(f_{q})$, we compute the FTs of    $f_{q}$ on the entire $\R^2$ and use 
  that for any function $f(u)$ of the period $1$, we have
  \begin{align}\label{FT-FS}\int_{\R^2} f(z(u)) e^{2\pi i \xi \cdot u} du = \sum_{n\in \Z^2} 
  c_n(f)\del( \xi_1-n_1)\del( \xi_2-n_2).\end{align}

Now, the Fourier transform of $f_{q}$ is given by  \[FT(f_{q})(\xi)=  {c'}^2\sum_{m, n=-\infty}^{\infty}\int_{\R^2} du e^{2\pi ( i\xi_1u_1+ i\xi_2u_2 +\al_{m, n}(u))}.\]
 Using \eqref{almm'22} and  passing to the new variables $y_1=u_1$ and $y_2=u_2 -\frac12 a+r$, 
 we find
  \begin{align}\label{FT22}
 i\xi_1u_1+ i\xi_2u_2 +\al_{m, n}(u) &=  i\xi_1y_1+ i\xi_2(y_2-r)   -  \tau_2  y_2^2+  i  (n-   a)  \tau_1 y_2 \notag \\ &\qquad +i b y_2+  i  p y_1+\beta_{n }\notag\\
&= -i\xi_2 (r+\frac12 a) +\frac12 i \xi_2 a + i(\xi_1+  n) y_1 -  \tau_2  y_2^2  \notag \\ &\qquad + i(\xi_2+ (n-   a)  \tau_1+b) y_2+\beta_{n}. \end{align}
Now, changing $n$ to $-n$, remembering that $r=m-\frac12 n-\frac12 a$, with $m, n\in \Z$, 
and using the standard formulae, the first of which is  the Poisson summation formula,
\begin{align}\label{Poisson}&\sum_{m=-\infty}^{\infty}  e^{- 2\pi i m \xi_2} =
\sum_{n'=-\infty}^{\infty}  \del(\xi_2- n'),\\	
\label{FT-eq1}
&\int_{\R} dy_1 e^{2\pi i \xi y_1}=  
 \del(\xi),\\ 
 \label{FT-eq2}
& \int_{\R} dy_2 e^{ 2\pi ( -  \tau_2  y_2^2 +  i \eta y_2)}  =\frac{1}{\sqrt{2\tau_2}}  e^{-2\pi \frac{1}{4\tau_2} \eta^2 } ,\end{align}
 we obtain
 \begin{align}\label{FT32}
{c'}^{-2} FT(f_{q})&=  \sum_{n, n'=-\infty}^{\infty}\int_{\R^2} dy e^{2\pi ( i(\xi_1-  n) y_1 -  \tau_2  y_2^2 + i(\xi_2- (n+   a)  \tau_1+b) y_2 +i\frac12\xi_2 (a -n) +\beta_{- n })}  \del(\xi_2- n')\notag\\
&=  \sum_{n, n'=-\infty}^{\infty}\int_{\R} dy_2 e^{ 2\pi ( -  \tau_2  y_2^2 +   i[\xi_2- (n+   a)  \tau_1+b] y_2 -i\xi_2\frac12  (n-a) +   \beta_{- n })}  \del(\xi_2- n')  \del(\xi_1- n)\notag\\
&=  \frac{1}{\sqrt{\tau_2}}\sum_{n, n'=-\infty}^{\infty} e^{-2\pi \Phi}  \del(\xi_2- n')  \del(\xi_1- n).	 \end{align}
where $\Phi:=\frac{1}{4\tau_2}  [n'- (n+   a)  \tau_1+b] ^2   + i\frac12 n  (n-a) -   \beta_{- n }$. Remembering \eqref{betap} , we simplify this expression as
\begin{align}\label{comput}
\Phi &=\frac{1}{4\tau_2}  [n'- (n+   a) \tau_1+b] ^2   +\frac12  i   [ n' (n-  a) + b n]  + \frac14 ( n+a)^2\tau_2\notag\\
&= \frac{1}{4\tau_2}  |n'- (n+   a) \tau+b| ^2   +\frac12  i   [ n' n- n' a  +  b n]. 
\end{align}
The latter expression and \eqref{FT-FS} and the relation $e^{-\pi i n' n}=(-1)^{n' n}$, gives the Fourier coefficients for $f_q$
 \begin{align}\label{Fcoef2}
c_{n, n'}(f_{q}) &=  \frac{1}{\sqrt{\tau_2}} (-1)^{n' n} e^{-\pi (\frac{1}{2\tau_2}   |n'- (n+   a)  \tau+b| ^2   + i   [b n- n' a  ]) }. 
     \end{align}
Passing to the variables $u_i$ with the Jacobian  $ \det\left( \begin{array}{cc} 1 & \tau_1 \\ 0 & \tau_2 \end{array} \right) = \tau_2$, we obtain  \[\int_\Om \bar\vphi_0 \vphi_q e^{\frac{2\pi i}{\tau_2}\im (\bar q z)} dz=\int_\Om f_{q}(z) dz  =\tau_2\int_{[0, 1]^2}f_{q}(z(u)) du=\tau_2c_{00}(f_{q}).\] 
This, together with \eqref{Fcoef2}, 
 gives \eqref{phi0phik-int0}.
 
To compute the integral \eqref{phi0phik-int1}, we use that it can be written,  in terms of the function $f_q$, as $\int_\Om |\vphi_0|^2|\vphi_q|^2  dz=\int_\Om \bar f_{q} f_{q} dz$.   Using the change of variables and the Plancherel theorem, we find
  \begin{align}\label{int1}
&
\int_\Om |\vphi_0|^2|\vphi_q|^2  dz =  \tau_2\sum_{n, n'=-\infty}^{\infty} \bar c_{n, n'}(f_{q}) c_{n, n'}(f_{q}) .\end{align} 
This equation, together with \eqref{Fcoef2}, yields
\DETAILS{\begin{align}\label{int1'}
\int_\Om \bar f_{q} f_{q} dz &= 
{c'}^{4} \sum_{p, n=-\infty}^{\infty}  e^{-2\pi \frac{1}{2\tau_2}   |n- (p+   a)    \tau+b| ^2} ,\end{align}
which in turn gives} \eqref{phi0phik-int1}. 

Now, we compute the integral \eqref{phi0phik-int2}, which in terms of the function $f_q$ can be written as $\int_\Om \bar\vphi_0^2 \vphi_q {\vphi}_{-q} dz=\int_\Om f_{-q} f_{q} dz$.   Using the change of variables and the Plancherel theorem, we find
  \begin{align}\label{int2}\int_\Om 
 f_{-q} f_{q} dz &  =  \tau_2\sum_{n, n'=-\infty}^{\infty} c_{-n, -n'}(f_{-q}) c_{n, n'}(f_{q}) .\end{align}
This equation, together with \eqref{Fcoef2}, yields
\begin{align}\label{int32}
\int_\Om f_{-q} f_{q} dz &= 
{c'}^{4} \sum_{n, n'=-\infty}^{\infty}  e^{-2\pi (\frac{1}{2\tau_2}   |n'- (n+   a)    \tau+b| ^2   + i    [b n - n' a ])} ,\end{align}
which in turn gives \eqref{phi0phik-int2}. \end{proof}

\DETAILS{$$*********$$
	
\begin{proposition}\label{prop:gammaq-comp}  Recall $q=-a\tau +b$ and $
\im\tau=\tau_2$ and  let $\Om=\Omega_\tau:=\{u_1 +\tau u_2 :  0 \le u_1, u_2\le 1\} $. We have 
\begin{align}\label{phik-norm}&\int_\Om|\vphi_q|^2 dz =  c \sqrt{\tau_2},\\ 
\label{phi0phik-int0}&\int_\Om \bar\vphi_0 \vphi_q e^{\frac{2\pi i}{\tau_2}\im (\bar q z)} dz=c \sqrt{\tau_2} e^{-\frac{\pi }{2\tau_2}   |a \tau+b| ^2   -\pi i  ab },\\
\label{phi0phik-int1}
&\int_\Om|\vphi_0|^2|\vphi_q|^2 dz = c^2 
 \sum_{p, n=-\infty}^{\infty} e^{-  \frac{\pi}{\tau_2}   |n-p \tau|^2  }\cos 2\pi(  b p-n a),\\
&\int_\Om \bar\vphi_0^2 {\vphi}_q {\vphi}_{-q} dz=  c^2 
 e^{-2\pi  i   ab}     \sum_{p, n=-\infty}^{\infty}  
      e^{-\frac{\pi}{\tau_2}   |n-  (p+    a)   \tau+b| ^2   - 2\pi  i   [b p- n a]},\label{phi0phik-int2}
\end{align}
where $c=\frac{{c'}^2}{\sqrt 2}$, with the constant $ c'$ is given in \eqref{vphiq}. 
\end{proposition}
\begin{proof}The functions   
$|\vphi_q|^2$ are periodic functions w.r.to the lattice $\LAT$. To convert this to standard periodicity (w.r.to  the square lattice), we write 
$z =z_1+i z_2=u_1+ u_2\tau : 0\le u_i\le 1,\ i=1, 2$, or in coordinates,
 \begin{align}\label{zu} z_1= u_1+  \tau_1 u_2\ \quad  \mbox{and}\  \quad    z_2=  \tau_2 u_2.\end{align}
Then the functions   $|\vphi_q|^2$ are periodic functions w.r.to $u_i$ with the period $1$.
Let 
$c_{n_1, n_2} (f)$ denote the Fourier coefficients of a function $f$  (w.r.to $u_i$), i.e. $c_{n} (f):=\int_{[0, 1]^2} f(u) e^{2\pi i n\cdot u} du$, $n=(n_1, n_2)\in \Z^2$.
First, we compute the Fourier coefficients $c_{n}(|\vphi_q|^2)$. To this end, we compute the FTs of  $|\vphi_q|^2$ on the entire $\R^2$ and use that for any function $f(u)$ of the period $1$, we have
  \begin{align}\label{FT-FS}\int_{\R^2} f(u) e^{2\pi i \xi \cdot u} du = \sum_{n\in \Z^2} 
  c_n(f)\del( \xi_1-n_1)\del( \xi_1-n_1).\end{align}  

  Using \eqref{vphiq-thetaq} and the series representation  \eqref{thetaq} for $\theta_q(z, \tau)$, 
  we obtain   $|\vphi_q|^2=  {c'}^2\sum_{m, m'=-\infty}^{\infty}$ $e^{2\pi\al_{m, m'}(u)}$, where
\begin{align}\label{almm'1} 
\al_{m, m'}(u)= -  \frac{1}{\tau_2}  z_2^2 +2a z_2- a^2\tau_2+ i  (q m-\bar q m' +\frac12  (m^2\tau-m'^2\bar\tau)  + m z- m' \bar z) .	\end{align}
Now,  we use the relations
  \begin{align}\label{mz} m  z- m' \bar z &=(m -m' ) z_1 
  + i (m +m') z_2\notag \\   
  &=(m-m')(u_1+  \tau_1 u_2) + i (m+m')\tau_2 u_2,  
    \end{align}
    and  \eqref{mz} to derive
    \begin{align}\label{1}
-  \frac{1}{\tau_2}  z_2^2& -i a (z-\bar z) + i (m z- m' \bar z)= -  \tau_2  u_2^2 +2a\tau_2 u_2 + i (m-m')(u_1+  \tau_1 u_2)\notag \\
& \qquad \qquad -   (m+m')\tau_2 u_2\notag \\
&= -  \tau_2 [ u_2^2 -2a u_2+ (m+m') u_2] +  i (m-m')(u_1+  \tau_1 u_2)\notag \\
&= -  \tau_2 [ (u_2-a +\frac12 (m+m'))^2-a^2 + a(m+m')-\frac14 (m+m')^2 ]\notag \\
& \qquad \qquad  +  i(m-m')(u_1+  \tau_1 u_2)\notag \\
&= -  \tau_2  (u_2-a+\frac12 (m+m'))^2 +  i (m-m')(u_1+  \tau_1 u_2) +\tau_2\frac14 (m+m')^2\notag \\
& \qquad \qquad +\tau_2 a(a-m-m').	\end{align}
Next,  we use that
  \begin{align}\label{eq1}& i(q m-\bar q m')  = i  (-a\tau_1+b)( m-m') +  a\tau_2( m+m'), \\   
&    i( m^2\tau-(m')^2\bar\tau)= i(m^2-{m'}^2)\tau_1 - (m^2+{m'}^2)\tau_2,\label{eq2}\\
 &   - \frac12  (m^2+{m'}^2) +\frac14 (m+m')^2 =- \frac14 (m-m')^2\label{eq3}.\end{align}
 The last four relations together with \eqref{almm'1} and the notation  $p=m-m'$ and 
$m'=m-p$ give
 \begin{align}\label{almm'2}  
\al_{m, m'}(u)&= -  \tau_2  (u_2-a+m-\frac12 p)^2 +  i p(u_1+  \tau_1 u_2)  +   i  (-a\tau_1+b)p\notag\\
& +  a\tau_2(2m- p)+  i p(m-\frac12 p)\tau_1 - \frac14 p^2\tau_2+\tau_2 a(a-2m + p)\notag\\
 &= -  \tau_2  (u_2-a+m-\frac12 p)^2 +  i p u_1+  i p \tau_1 (u_2 -  a +m-\frac12 p) +\beta_p,	 \end{align}
where $\beta_p:= i  b p  - \frac14 p^2\tau_2 .$   Now, the Fourier transform of $|\vphi_q|^2$ is given by  
 \begin{align}FT(|\vphi_q|^2)=  {c'}^2\sum_{m, m'=-\infty}^{\infty}\int_{\R^2} du e^{2\pi ( i\xi_1u_1+ i\xi_2u_2 +\al_{m, m'}(u))}, \notag\end{align}
 which together with \eqref{almm'2} gives, after  passing to the new variables $y_1=u_1$ and $y_2=u_2 -a+m-\frac12 p$, 
  \begin{align}\label{FT2}
FT(|\vphi_q|^2)=  {c'}^2 \sum_{m, p=-\infty}^{\infty}\int_{\R^2} dy e^{2\pi ( i\xi_1y_1+ i\xi_2(y_2+a-m+\frac12 p)   -  \tau_2  y_2^2 +  i p y_1+  i p \tau_1 y_2 +  \beta_p)} .	 \end{align}
Now, using the standard formulae, the first of which is  the Poisson summation formula,
\begin{align}\label{Poisson}\sum_{m=-\infty}^{\infty}  e^{- 2\pi i m \xi_2} =
\sum_{n=-\infty}^{\infty}  \del(\xi_2- n),	\end{align}
 \begin{align}\label{FT-eq1}
\int_{\R} dy_1 e^{2\pi i \xi y_1}=  
 \del(\xi),	\end{align}
 \begin{align}\label{FT-eq2}
 \int_{\R} dy_2 e^{ 2\pi ( -  \tau_2  y_2^2 +  i \eta y_2)}  =\frac{1}{\sqrt{2\tau_2}}  e^{-2\pi \frac{1}{4\tau_2} \eta^2 } ,\end{align}
and changing $p$ to $-p$, we obtain
 \begin{align}
FT(|\vphi_q|^2)&= {c'}^2
\sum_{p, n=-\infty}^{\infty}\int_{\R^2} dy e^{2\pi ( i\xi_1y_1+ i\xi_2(y_2+a+\frac12 p)   -  \tau_2  y_2^2 -  i p y_1-  i p \tau_1 y_2 +   \beta_p)}  \del(\xi_2- n)\notag\\ &= {c'}^2 
\sum_{p, n=-\infty}^{\infty}\int_{\R} dy_2 e^{ 2\pi ( -  \tau_2  y_2^2 +  i ( \xi_2-p \tau_1 )y_2+ia\xi_2  +\frac12 i  \xi_2 p +   \beta_p)}  \del(\xi_2- n)  \del(\xi_1- p) 
\notag \\
&  =  {c'}^2 
\frac{1}{\sqrt{2\tau_2}}\sum_{p, n=-\infty}^{\infty} e^{-2\pi (\frac{1}{4\tau_2}  (n-p \tau_1)^2  - i n a  +\frac12 i  n p -   \beta_p)}  \del(\xi_2- n)  \del(\xi_1- p).\label{FT2}	\end{align}
Since  $\beta_p:=- i  b p  - \frac14 p^2\tau_2$, we have $\frac{1}{4\tau_2}  (n-p \tau_1)^2 -\beta_p=\frac{1}{4\tau_2}  |n-p \tau|^2 +i  b p $. Due to \eqref{FT-FS}, this gives the Fourier coefficients
 \begin{align}\label{Fcoef}
c_{p n}(|\vphi_q|^2) &=   {c'}^2\frac{1}{\sqrt{2\tau_2}} e^{-2\pi ( \frac{1}{4\tau_2}  |n-p \tau|^2   -\frac12 i  n p +   i  b p- i n a )}.	\end{align}
Passing to the variables $u_i$ with the Jacobian  $ \det\left( \begin{array}{cc} 1 & \tau_1 \\ 0 & \tau_2 \end{array} \right) = \tau_2$, we obtain $\int_\Om|\vphi_q|^2 dz  =\tau_2\int_{[0, 1]^2}|\vphi_q|^2 du=\tau_2c_{00}(|\vphi_q|^2)$. This the relation, together with the expression \eqref{Fcoef},  yields \eqref{phik-norm}

Next, passing to the variables $u_i$  and using the Plancherel theorem, we find
  \begin{align}\label{int1}\int_\Om|\vphi_0|^2|\vphi_q|^2 dz =\tau_2\int_{[0, 1]^2}|\vphi_0|^2|\vphi_q|^2 du=  \tau_2\sum_{n_1, n_2=-\infty}^{\infty}\bar c_{n_1, n_2}(|\vphi_0|^2)c_{n_1, n_2}(|\vphi_q|^2).\end{align}
 This relation  and  \eqref{Fcoef} give
\begin{align}\label{int3}
\int_\Om|\vphi_0|^2|\vphi_q|^2 dz &= \frac{{c'}^4}{2}
 \sum_{p, n=-\infty}^{\infty}  e^{-2\pi (  \frac{1}{2\tau_2}   |n-p \tau|^2   +   i  b p-  i n a  )}.\end{align}
Separating the summation over positive and negative $p,\ n$, we see that this expression is real and therefore equals \eqref{phi0phik-int1}.

We compute now the integral \eqref{phi0phik-int0}. Introduce the function $f_{q}:=e^{\frac{2\pi i}{\tau_2}\im (\bar q z)}\bar\vphi_0{\vphi}_q$ and write this integral as $\int_\Om \bar\vphi_0 \vphi_q e^{\frac{2\pi i}{\tau_2}\im (\bar q z)} dz=\int_\Om f_{q} dz$. 
The functions   $f_{q} $ 
are periodic functions w.r.to the lattice $\LAT$. As above, we convert this to standard periodicity (w.r.to  the square lattice), by using \eqref{zu} so that the functions   $f_{q}$ are periodic functions w.r.to $u_i$ with the period $1$.
  To compute the Fourier coefficients $c_{n_1, n_2}(f_{q})$, we compute the FTs of    $f_{q}$ on the entire $\R^2$ and use the relation \eqref{FT-FS} again.
First, \eqref{vphiq-thetaq} and the series representation  \eqref{thetaq} for $\theta_q(z, \tau)$ 
yield    \begin{align}f_{q}:=e^{\frac{2\pi i}{\tau_2}\im (\bar q z)}\bar\vphi_0\vphi_q=  \sum_{m, m'=-\infty}^{\infty}e^{2\pi\al_{m, m'}(u)}, \notag\end{align}
   where
\begin{align}\label{almm'12} 
\al_{m, m'}(u)&=\frac{ i}{\tau_2}\im (\bar q z) -    \frac{1}{\tau_2}  z_2^2 -i a z+ i  q m +\frac12 i (m^2\tau-m'^2\bar\tau)\notag \\
&  + i (m z- m' \bar z)
+\frac12 i a^2\tau - i ab .	\end{align}
We use that
$\tau:=\tau_1  +i\tau_2 $ and $q=-a \tau +b$ and \eqref{zu}, to obtain
\begin{align}\label{eq3} \frac{ i}{\tau_2}\im (\bar q z) -  i a  z&= \frac{  i}{\tau_2} ((-a\tau_1 +b) z_2+a\tau_2 z_1) -   a   (i z_1- z_2)\notag \\
&=\frac{1}{\tau_2}(i(-a\tau_1 +b)+ 
 a\tau_2)z_2=i \frac{1}{\tau_2} q z_2
= i  q  u_2. \end{align}
Next,  we use the relations \eqref{zu} and   \eqref{mz} and  the notation  $p=m-m'$ and 
$m'=m-p$ to obtain
   \begin{align}\label{12}
\frac{ i}{\tau_2}\im (\bar q z)-    \frac{1}{\tau_2}  z_2^2& -i a z + i (m z- m' \bar z)= -  \tau_2  u_2^2 + a\tau_2 u_2+i (-a\tau_1+b) u_2\notag \\
& \qquad + i p(u_1+  \tau_1 u_2) -  (2m-p)\tau_2 u_2\notag \\
&= -  \tau_2 [ u_2^2 -a u_2+ \ell u_2] +  i p u_1+   i [(p-   a)\tau_1 +b]u_2\notag \\
&= -  \tau_2 [ (u_2-\frac12 a +m-\frac12 p)^2-\frac14 (2m-p-a)^2 ] +  i p u_1\notag \\
& \qquad +   i [(p-   a)\tau_1 +b]u_2\notag \\
&= -  \tau_2  (u_2-\frac12 a+m-\frac12 p)^2 +  i p u_1+   i [(p-   a)\tau_1 +b]u_2\notag \\
& \qquad  +\tau_2\frac14 (2m-p-a)^2.	\end{align}
Now,  we use that (see \eqref{eq2})
  \begin{align}\label{eq12}& i q m  =[ i  (-a\tau_1+b) +  a\tau_2] m, \\   
&  \frac12  i( m^2\tau-(m')^2\bar\tau) 
= ip (m-  \frac12 p)\tau_1 - \frac12 (m^2 - mp + \frac12 p^2)\tau_2.
\end{align}
 The last three relations together with \eqref{almm'12} give
 \begin{align}\label{almm'22}  
\al_{m, m'}(u)&= -  \tau_2  (u_2-\frac12  a+m-\frac12 p)^2+   i [(p-   a)\tau_1 +b](u_2 -\frac12 a +m-\frac12 p) \notag \\
& \qquad  +  i p u_1+ \beta_{p, m},
	\end{align}
	where
\begin{align}\label{betmm'}  
\beta_{p, m}&= -  i  [(p-   a)\tau_1  + b] \frac12 (-a + 2m-p)  +\tau_2\frac14 (-a+ 2m-p)^2\\
&+[ i  (-a\tau_1+b) +  a\tau_2] m +  i p(m-\frac12 p)\tau_1 - \frac14 (p^2+(2m-p)^2)\tau_2+\frac12 i a^2\tau - i ab  \notag\\
 &=    i  \frac12 b (p-a)  -\frac14 ( p-a)^2\tau_2  .	\end{align}

   Now, the Fourier transform of $f_{q}$ is given by  \[FT(f_{q})=  {c'}^2\sum_{m, m'=-\infty}^{\infty}\int_{\R^2} du e^{2\pi ( i\xi_1u_1+ i\xi_2u_2 +\al_{m, m'}(u))}.\]
 Using \eqref{almm'22} and  passing to the new variables $y_1=u_1$ and $y_2=u_2 -\frac12 a+m-\frac12 p$, 
 we find
  \begin{align}\label{FT22}
 i\xi_1u_1+ i\xi_2u_2 +\al_{m, m'}(u) &=  i\xi_1y_1+ i\xi_2(y_2+\frac12 a-\frac12 \ell)   -  \tau_2  y_2^2+  i  (p-   a)  \tau_1 y_2 \notag \\ &\qquad +i b y_2+  i  p y_1+\beta_{p, \ell}\notag\\
&= -i\xi_2m +\frac12 i \xi_2(p+ a) + i(\xi_1+  p) y_1 -  \tau_2  y_2^2  \notag \\ &\qquad + i(\xi_2+ (p-   a)  \tau_1+b) y_2+\beta_{p, m}. \end{align}
Now, changing $p$ to $-p$, using the Poisson summation formula \eqref{Poisson} and
the standard formulae \eqref{FT-eq1}- \eqref{FT-eq2},
 we obtain
 \begin{align}\label{FT32}
{c'}^{-2} FT(f_{q})&=  \sum_{p, n=-\infty}^{\infty}\int_{\R^2} dy e^{2\pi ( i(\xi_1-  p) y_1 -  \tau_2  y_2^2 + i(\xi_2- (p+   a)  \tau_1+b) y_2 +i\frac12\xi_2 (a -p) +\beta_{- p, m})}  \del(\xi_2- n)\notag\\
&=  \sum_{p, n=-\infty}^{\infty}\int_{\R} dy_2 e^{ 2\pi ( -  \tau_2  y_2^2 +   i[\xi_2- (p+   a)  \tau_1+b] y_2 -i\xi_2\frac12  (p-a) +   \beta_{- p, m})}  \del(\xi_2- n)  \del(\xi_1- p)\notag\\
&=  \frac{1}{\sqrt{\tau_2}}\sum_{p, n=-\infty}^{\infty} e^{-2\pi (\frac{1}{4\tau_2}  [n- (p+   a)  \tau_1+b] ^2   + i\frac12 n  (p-a) -   \beta_{- p, m})}  \del(\xi_2- n)  \del(\xi_1- p).	 \end{align}
This expression, together with the computations
\begin{align}\label{comput}
\frac{1}{4\tau_2}  [n- (p+   a) \tau_1+b] ^2  & +i\frac12 n  (p-a) -   \beta_{-p, m}\notag\\
&=\frac{1}{4\tau_2}  [n- (p+   a) \tau_1+b] ^2   +\frac12  i   [ n (p-  a) + b (p+a) ]  + \frac14 ( p+a)^2\tau_2\notag\\
&= \frac{1}{4\tau_2}  |n- (p+   a) \tau+b| ^2   +\frac12  i   [ np- n a  +  b p+ab] 
\end{align}
and \eqref{FT-FS}, gives the Fourier coefficients for $f_q$
 \begin{align}\label{Fcoef2}
c_{p, n}(f_{q}) &=  \frac{1}{\sqrt{\tau_2}} e^{-\pi (\frac{1}{2\tau_2}   |n- (p+   a)  \tau+b| ^2   + i   [n p - n a  +  b p+ab]) }. 
     \end{align}
Passing to the variables $u_i$ with the Jacobian  $ \det\left( \begin{array}{cc} 1 & \tau_1 \\ 0 & \tau_2 \end{array} \right) = \tau_2$, we obtain  $\int_\Om \bar\vphi_0 \vphi_q e^{\frac{2\pi i}{\tau_2}\im (\bar q z)} dz=\int_\Om f_{q} dz  =\tau_2\int_{[0, 1]^2}f_{q} du=\tau_2c_{00}(f_{q})$. 
This, together with \eqref{Fcoef2}, 
 gives \eqref{phi0phik-int0}.

Now, we compute the integral \eqref{phi0phik-int2}, which in terms of the function $f_q$ can be written as $\int_\Om \bar\vphi_0^2 \vphi_q {\vphi}_{-q} dz=\int_\Om f_{-q} f_{q} dz$.   Using the change of variables and the Plancherel theorem, we find
  \begin{align}\label{int12}\int_\Om \bar\vphi_0^2{\vphi}_q{\vphi}_{-q}  dz =  \tau_2\sum_{n_1, n_2=-\infty}^{\infty} c_{-n_1, -n_2}(f_{-q}) c_{n_1, n_2}(f_{q}) .\end{align}
Since $e^{-2\pi i np}=1$, this equation, together with \eqref{Fcoef2}, yields
\begin{align}\label{int32}
\int_\Om f_{-q} f_{q} dz &= 
{c'}^{4} \sum_{p, n=-\infty}^{\infty}  e^{-2\pi (\frac{1}{2\tau_2}   |n- (p+   a)    \tau+b| ^2   + i   [ b p- n a  + ab])} ,\end{align}
which in turn gives \eqref{phi0phik-int2}. \end{proof}}
%

\section{Energy of fluctuations}\label{sec:fluct-en}

\begin{lemma}\label{lem:En-fluct-expr}
	For all $v \in H_{\textnormal{cov}}^1$, we have
	\begin{equation}	\label{Lambda-expres'}
 \lim_{Q\ra\R^2} \big(\E_{Q}(u_\om + v) - \E_{Q}(u_\om)\big) 
 = \frac{1}{2}\lan v, L_\om v \ran_{L^2} + R_\om(v),
	\end{equation}
	where $R_\om$ is given by
	\begin{equation}	\label{remainder'}
	R_\om(v) = \int (|\alpha|^2 + \kappa^2|\xi|^2)\Re(\bar{\Psi}_\om\xi) - \alpha \cdot \Im(\bar{\xi}\COVGRAD{A_\om}\xi) + \frac{1}{2}(|\alpha|^2 + \frac{\kappa^2}{2}|\xi|^2) |\xi|^2.
	\end{equation} 
\end{lemma}
\begin{proof}
	We first considers smooth $v$ with compact support. Choose any $Q \subset \R^2$ that is bounded and contains the support of $v$.
	Using definition \eqref{gle}, we expand $\E_Q(u_\om + v)$ in $v$, collecting terms that are linear in $v$, $\frac{1}{2} \int_Q 2\Re(\overline{\COVGRAD{A_\om}\xi}\cdot\COVGRAD{A_\om}\Psi_\om + i\alpha\cdot\bar{\Psi}_\om\COVGRAD{A_\om}\Psi_\om)
				+ 2(\CURL A_\om)(\CURL\alpha)
				- 2\kappa^2(1 - |\Psi_\om|^2)\Re(\bar{\xi}\Psi_\om)$, and integrate by parts (boundary terms vanish due to $v$) to see that they are equal to
	\begin{align*}
		\Re \int_Q \bar{\xi}(-\COVLAP{A_\om}\Psi_\om - \kappa^2(1 - |\Psi_\om|^2)\Psi_\om)
			+ \alpha\cdot(\CURL^*\CURL A_\om - \Im(\bar{\Psi}_\om\COVGRAD{A_\om}\Psi_\om)) = 0,
	\end{align*}
	since $(\Psi_\om, A_\om)$ is a solution of the Ginzburg-Landau equations.
	For the quadratic terms we again integrate by parts and use the fact that the terms vanish outside $Q$ to see that they give us exactly $\frac{1}{2}\lan v, L_\om v \ran_{L^2}$. Similarly the higher order terms give us $R_\om(v)$. We thus have
	\begin{equation*}
		\E_Q(u) - \E_Q(u_\om) = \frac{1}{2}\lan v, L_\om v \ran_{L^2} + R_\om(v).
	\end{equation*}
	The right hand side is independent of $Q$, and therefore taking the limit proves \eqref{Lambda-expres'} for smooth compactly supported $v$. The general result follows from the fact that $\Lambda_\om$ is a continuous functional on the space $H^1_{\textrm{cov}}$.
\end{proof}
\DETAILS{\begin{equation}	\label{Lambda-expres}
	\Lambda_\om(v) = \frac{1}{2}\lan v, L_\om v \ran_{L^2} + R_\om(v),
	\end{equation}
	where $R_\om(v)$ satisfies $R_\om'(v)= N_\om(v)$, where $N_\om(v)$ is defined in 
 \eqref{N} and is given explicitly in \eqref{N-expl}. $R_\om(v)$ is given explicitly  by
	\begin{equation}	\label{remainder}
	R_\om(v) = \int (|\alpha|^2 + \kappa^2|\xi|^2)\Re(\bar{\Psi}_\om\xi) - \alpha \cdot \Im(\bar{\xi}\COVGRAD{A_\om}\xi) + \frac{1}{2}(|\alpha|^2 + \frac{\kappa^2}{2}|\xi|^2) |\xi|^2.
	\end{equation}
 $\Lambda_\om(v)$ is nothing else but the energy of the fluctuation $v$ given in  \eqref{Lambda}:}
 %
%
%




\DETAILS{\section*{Supplement I. 
Refined  theta functions} 

This supplement contains a description of parameterization of lattices and several elementary computations related to modified theta functions.
\paragraph{Parametrization of  classes of lattices.} Here we present standard results on parametrization of lattices. 
Recall that we identify $\R^2$ with $\C$, via the map $(x_1, x_2)\ra x_1+i x_2$. Given a  basis $ (\nu_1,\ \nu_2)$  in $\LAT$ (so that $ (\nu_1,\ \nu_2)$  as $\LAT=\Z\nu_1+\Z\nu_2$), define the complex number $\tau=\nu_2/\nu_1$, called the shape parameter. 
We can choose a basis so that  $\Im\tau>0$, which we assume from now on.  
 Clearly, $\tau$ is  independent of  rotations and dilatations of the lattice. 

 Any two bases, $ (\nu_1,\ \nu_2)$ and $ (\nu_1',\  \nu_2')$ span the same lattice $\LAT$  iff they are related 
 as $ (\nu_1',\  \nu_2')=(\al\nu_1+\beta \nu_2,\ \g\nu_1 +\del \nu_2)$, where  $\al, \beta, \g, \del\in \Z$,  and 
$\al \del-\beta \g=1$, i.e. the matrix  $\left( \begin{array}{cc} \al & \beta \\ \g & \del \end{array} \right)$ is an element  of the modular group  $SL(2, \Z)$. Under this map, the shape parameter $\tau=\nu_2/\nu_1$ is being mapped into  $\tau'=\nu'_2/\nu'_1$, which gives $ \tau'=g\tau$, where $g\tau:=\frac{\g+\del\tau}{\al+\beta\tau}$. Thus, up to rotation and dilatation, the lattices are in one-to-one correspondence with points $\tau$ in the fundamental domain, $\Pi^+/SL(2, \Z)$, of  the modular group  $SL(2, \Z)$ acting on   the Poincar\'e half-plane  $\Pi^+$. 


\paragraph{Theta functions.} In this paragraph we prove the following 

\medskip \noindent {\bf Lemma I.2.} \emph {
The functions $\theta_{q}(z, \tau)$ are given by \eqref{thetaq}
if and only if they
are entire functions  
 and  satisfy the periodicity conditions  \eqref{thetaq-per1} -- \eqref{thetaq-per2}.} 
\begin{proof}  It is easy to see that the functions given by  \eqref{thetaq}  are entire functions satisfying the periodicity conditions  \eqref{thetaq-per1} -- \eqref{thetaq-per2}. Now we show the converse.	
	 The relation \eqref{thetaq-per1} shows that  the function
	  	  $ e^{2\pi i a z -i c_1 z}\theta_q(z, \tau)$ 
		  is periodic w.r.to $\Z$ and therefore it has an absolutely convergent Fourier expansion 
		  \begin{align}  \tag{I.1} 
e^{ 2\pi i a z-i c_1 z} \theta_q(z, \tau) = 
  \sum_{m=-\infty}^{\infty} c_m e^{2\pi i m z},\ 
\end{align} 	
		while the  relation \eqref{thetaq-per2}, on the other hand, leads 
\begin{align*}
	e^{-2\pi i a \tau+i c_1 \tau} \sum_{m=-\infty}^{\infty} c_m e^{2\pi i m  (z + \tau)} &=  e^{- 2\pi i (z+ b+\frac12\tau)+i c_\tau}\sum_{m=-\infty}^{\infty} c_m e^{2\pi i m z}. 
	\end{align*}	
Since the term on the r.h.s. can be written as $e^{- 2\pi i (b-\frac12\tau)+i c_\tau}\sum_{m=-\infty}^{\infty}   c_{m+1} e^{2\pi i m z}$,	this implies $c_{m + 1} =e^{ 2\pi i (b+\frac12 \tau- a \tau+ m  \tau)+i c_1 \tau -i c_\tau}	 c_m=e^{2\pi i ( q+\frac12 \tau +m\tau)+i c_1 \tau -i c_\tau} c_m$.	Iterating this relation and using 
$\sum_1^{m-1} n = \frac12 (m-1) m$, we find \[c_{m }=e^{2\pi i  ((q+\frac12\tau) m + \frac12 m (m-1)\tau)+ic_\tau'm} =e^{2\pi i  (q m + \frac12 m^2\tau) + ic_\tau' m}  ,\]
where $
c_\tau':=c_1 \tau - c_\tau$, which  implies the series representation 
	 for $\theta_q$ given by \eqref{thetaq}.
 \end{proof}
%
 %
 \DETAILS{$$*********$$
\medskip \noindent {\bf Corollary I.3.} \emph { 
 The functions  $\vphi_q$, defined as  $\vphi_q (z)= \phi_k(x)$, where, recall,  $x_1+i x_2=\sqrt{\frac{2\pi}{\im\tau} }    z\ \quad \mbox{and} \ \quad k=\sqrt{\frac{2\pi}{\im\tau} }   i q$,  satisfy the periodicity relation \eqref{vphiq-per}.
 In the opposite direction, \eqref{vphiq-per} implies  \eqref{thetaq-per1} - \eqref{thetaq-per2}.}
\begin{proof}  The relations \eqref{vphiq} - \eqref{thetaq}  and  \eqref{thetaq-per1} imply $\vphi_{q}(z+1, \tau)  = e^{\al_1} \vphi_{q}(z, \tau),$
where $\al_1:=\frac{\pi}{ \im\tau}(z -z_1)-2\pi i a =\frac{\pi i}{ \im\tau}( \im  z+2\im  q   )= \frac{\pi i}{ \im\tau}(\im (\bar s z)+2\im (\bar s  q))$, for $s=1$. Similarly, the relations \eqref{vphiq}  and  \eqref{thetaq-per2} imply $\vphi_{q}(z+\tau, \tau)  = e^{\al_\tau} \vphi_{q}(z, \tau),$
where \begin{align}  
\al_\tau&:=\frac{\pi}{ \tau_2}(z\tau +\frac12 \tau^2 -z_1\tau_1 -\frac12\tau_1^2- z_2\tau_2 -\frac12\tau_2^2  )- \pi(2 i b+i\tau+2 i z)\notag \\
 &=\frac{\pi}{ \tau_2}( i z_2\tau_1+i z\tau_2-z_2\tau_2 + i\tau_1\tau_2-\tau_2^2) - \pi i\tau- 2\pi i z- 2\pi i b\notag \\
 &=\frac{\pi}{ \tau_2}(i z_2\tau_1-i z\tau_2-z_2\tau_2 ) - 2\pi i b=\frac{\pi}{ \tau_2}(i z_2\tau_1-i z_1\tau_2) - 2\pi i b \notag\\
& = \frac{\pi i}{ \tau_2}(\im (\bar s z)+2\im (\bar s  q)),	 \tag{I.3}\end{align}
 for $s=\tau$. The latter relation implies \eqref{thetaq-per1} and \eqref{thetaq-per2}.

In the opposite direction, \eqref{vphiq-per} implies  $\theta_{q}(z+s, \tau)  = e^{\beta_s} \theta_{q}(z, \tau),$
where
\begin{align}  
 &=-\frac{\pi}{ \tau_2}( i z_2 s_1+i z_1 s_2-2z_2 s_2 + i s_1 s_2-s_2^2 -i (s_1 z_2-s_2z_1) - i2 \im (\bar s  q))\notag \\
 &=-\frac{2\pi i}{ \tau_2}( z s_2 +\frac12  s s_2  - \im (\bar s  q))\notag \\
 &=\frac{2\pi i}{\tau_2}  [\im (q \bar  s)- (\im s)\  (   z + \frac{1}{2}s)].	 \tag{I.5}\end{align}
The latter relation implies \eqref{thetaq-per1} and \eqref{thetaq-per2}.
\end{proof} 	
 
$$********$$}
\medskip

\noindent
{\bf Remarks}. 
1) Alternatively, one can define the refined theta function  $\theta_q(z, \tau)$ as an entire function satisfying the gauge-periodicity conditions \eqref{thetaq-per1} - \eqref{thetaq-per2}. 
 
 2) It is easy to verify directly that the function $e^{\frac{2\pi i}{\tau_2}\im (\bar q z)}\theta_q(z, \tau)$ has the periodicity properties of $\theta (z, \tau)$.

3)  In the terminology of Sect 13.19, eqs 10-13 of \cite{Erdel}, our theta function $\theta (z, \tau)$ is called  $\theta_{3}(z, \tau)$. The choice of the original  theta function determines the location of zeros of $\phi_k (z)$: The zeros of   $\theta_{3}(z, \tau)$ are located at the points of $\Z +\tau \Z+\frac12 + \frac12 \tau$, while the zeros of   $\theta_{1}(z, \tau)$ (in the terminology of \cite{Erdel}) are located at the points of $\Z +\tau \Z$.  To compare, $\theta_{1}(z, \tau)$ is defined as 
$\theta_{1}(z, \tau):= 
\sum_{m=-\infty}^{\infty} e^{\pi i   m} e^{\pi i (m-\frac12 )^2\tau}  e^{\pi i (2m -1)z} .$ 
	
	4) One can easily show that the refined and standard theta functions, $\theta_{q}$ and $ \theta$, are related as $\theta_{q}=  T_a S_b\theta,$ where $q=-a\tau+b$ and  $(S_b f)(z):=f(z+b)$ and  $ (T_a f)(z):=e^{-2\pi i a z}f(z-a\tau)$.}

\DETAILS{$$*********$$

We investigate the transformation properties of $ \theta_q(z, \tau)$ under the action, $(z, \tau)\ra (\frac{z}{c\tau+d}, \frac{a\tau+b}{c\tau+d})$, of the group  $SL(2, \Z)$ of the unimodular matrices. 


\begin{lemma} Let  $q=-a\tau  +b$, where  $a, b$ are real numbers.  For any $g=\left( \begin{array}{cc} \a & \b \\ \g & \d \end{array} \right)\in SL(2, \Z)$, s.t. $\a\b, \g\d$ are even, we have
\begin{align}\label{thetaq-transf}	&\theta_q(\frac{z}{\g\tau+\d}, \frac{\a\tau+\b}{\g\tau+\d}) = \g(\tau) \zeta (\g\tau+\d)^{1/2}
e^{i\pi \Phi} \theta_q(z, \tau), \end{align}
where 
 $\zeta$ is the complex number defined in \cite{Mum} and $\Phi:=\frac{\g z^2+2( \d a+\g b)z}{\g\tau+\d}$. 
	\end{lemma}
\begin{proof}
 Consider the function \begin{align}\label{psi}	\psi (y):=e^{i\pi [\g(\g\tau+\d)y^2-2( -\d a-\g b) y]} \theta_q((\g\tau+\d)y, \tau). \end{align}
 Check its periodicity properties. First note that \eqref{thetaq-per1'} - \eqref{thetaq-per2'} can be combined into one equation
\begin{align}\label{thetaq-per}	\theta_q(z + \g\tau +\d, \tau) = e^{-\pi i [\g^2\tau+2 \g z+2a \d+2b \g ]}\theta_q(z) . \end{align}
Using this equation and the definition of $\psi (y)$, we obtain $\psi (y + 1) = e^{\pi i \al_1 } \psi (y)$,
where $\al_1:= \g (\g\tau+\d)(2y+1)+ 2\d a+2\g b-\g^2\tau-2\g(\g\tau+\d)y -2\d a -2b \g=0$, and therefore
\begin{align}\label{psiq-per1}	&\psi (y + 1) = \psi (y).\end{align}
Next, we let $\tau':=\frac{\a\tau+\b}{\g\tau+\d}$ and use $\theta_q((\g\tau+\d)(y+\tau'), \tau)=\theta_q((\g\tau+\d)y+\a\tau+\b, \tau)$ to compute
 $\psi (y + \tau') = e^{\pi i \al_2 } \psi (y)$,
where \begin{align*} 
\al_2&:= 2\g(\a\tau+\b)y +\g\frac{(\a\tau+\b)^2}{\g\tau+\d} -2( -\d a-\g b) \tau'-\a^2\tau-2\a(\g\tau+\d)y -2\b a -2b \a\\
&=-2y+ \frac{\b\g(\a\tau+\b)-\a\tau}{\g\tau+\d}-\frac{2 q}{\g\tau+\d},		\end{align*} and therefore
\begin{align}\label{psiq-per2}	&\psi (y + \tau') =e^{i\pi [-2y+ \frac{\b\g(\a\tau+\b)-\a\tau-2 q}{\g\tau+\d}]} \psi (y). \end{align}
Now, since $\theta_q(y, \tau')$  is a unique, up a $y-$independent factor, function satisfying  \eqref{psiq-per1} - \eqref{psiq-per2}, we have that $\psi (y)=c(\tau, \g, \d)\theta_q(y, \tau')$, for some  $c(\tau, \g, \d)$, independent  of $y$, which due to \eqref{psi}, can be rewritten as \eqref{thetaq-transf}.  The $y-$independent factor, $c(\tau, \g, \d)$, is determined by a normalization (see \cite{Mum}). 
\end{proof}

{\bf Using  \eqref{thetaq-transf}, one should be able to prove \eqref{gamq-transf}.
\paragraph{Problem:} Alternative way of showing \eqref{thetaq-transf} is to prove it first for $q=0$ and then apply the transformation as in \eqref{thetaq-theta}. However, this way seems not to lead to the same result!}

 The family of functions   $\theta_q(z, \tau),\ q\in   \Omega'_\tau$,  
 $z\in \C,\ \im\tau >0,$ defined  in \eqref{thetaq}, are   obtained from standard theta function, $ \theta$,  as
\begin{align}\label{thetaq-theta}	&\theta_{q}=  T_a S_b\theta, 
 \end{align}
where $(S_b f)(z):=f(z+b)$ and  $ (T_a f)(z):=e^{-2\pi i a z}f(z-a\tau)$ and $q=-a\tau+b$. (This formula implies in particular that they give representation of the Heisenberg group.) 
However, unlike the number theory, where $a, b\in \ell^{-1}\Z$, for some positive integer $\ell$, in our case $q\in \Omega'_\tau$, 
which is a rescaled and rotated fundamental cell the dual lattice $\cL_\tau^*$.

We can also write 
$\theta_{q}=e^{2\pi i a b}S_bT_a \theta, $ 
where $(S_b f)(z):=f(z+b)$ and  $ (T_a f)(z):=e^{\pi a^2\tau-2\pi i a z}f(z-a\tau)$ and  the real numbers $a$ and $b$ are defined by the relation $q= a \tau +b$  (see \cite{Mum} and Subsection \ref{sec:hess} and Appendix \ref{sec:theta} below).

$$**********$$}
%

%
\DETAILS{ \section*{Supplement II.  Translational zero modes and spectrum} 
  In this supplement we consider the translational zero modes  and their relation to the spectrum of $K$. 
We begin  
by writing out  the  rescaled gauge and translation modes 
\begin{align} \tag{II.1} %
    G_{\gamma'}^{\rm resc} = (i\gamma'\psi_{\om'}, \nabla\gamma'),
 &    \quad
S_{j}^{\rm resc} = ( (\n_{a_{\om'} j} )\psi_{\om'}, (\curl  a_{\om'} ) J e_j),  
\end{align} 
  and observing how the shifted and rescaled operator, $L^{\rm resc}_{\om'}$,  acts on them as
   \begin{align} \tag{II.2} 
   L^{\rm resc}_{\om'} G_\g^{\rm resc} =G_{h\g}^{\rm resc}, \quad \mbox{where}\ \quad h:=-\Delta +|\psi_{\om'}|^2,\end{align}  and
 \begin{align} 
\tag{II.3} %
    L^{\rm resc}_{\om'} S_{j}^{\rm resc} = G_{\g_j}^{\rm resc}, 
\quad \mbox{where}\ \quad \g_j:= \im (\bar \psi_{\om'} \n_{a_{\om'} j} \psi_{\om'}) - \divv (\curl  a_{\om'} J e_j).
\end{align} 
Though $S_{j}^{\rm resc}$ are not zero modes of $L^{\rm resc}_{\om'}$, the functions 
\begin{align} \tag{II.4} 
T_j^{\rm resc} := S_{j}^{\rm resc} - G_{h^{-1}\g_j}^{\rm resc}\end{align}
 are (generalized) eigenfunctions with the eigenvalue $0$. Indeed, using (II. 2) and (II.4), 
we obtain $L^{\rm resc}_{\om'}  ( S_{j}^{\rm resc} - G_{h^{-1}\g_j}^{\rm resc})=G_{\g_j}^{\rm resc}  - G_{\g_j}^{\rm resc}=0$  and therefore we still have 
$0\in \s_{ess}(L^{\rm resc}_{\om'} ) $. 
Now, using $\lan S_{j}^{\rm resc} , G_{\g}^{\rm resc} \ran=\lan \g_j, \g \ran$ and $\lan  G_{h^{-1}\g_j}^{\rm resc}, G_{\g}^{\rm resc} \ran= \lan  h^{-1}h \g \ran$, we compute
\begin{align} \tag{II.5} 
\lan T_{j}^{\rm resc} , G_{\g}^{\rm resc} \ran= 0.\end{align}

The relations $L^{\rm resc}_{\om'} G_\g^{\rm resc} =G_{h\g}^{\rm resc} $ and $\lan G_\g^{\rm resc}, G_{\g'}^{\rm resc}\ran =\lan \g, h\g'\ran  $ show that
 \begin{equation} \tag{II.6} 
    \inf_{\g, \|G_\g^{\rm resc} \|=1} \langle L^{\rm resc}_{\om'} G_\g^{\rm resc}, G_\g^{\rm resc}  \rangle_{L^2}
    = \inf h.
\end{equation}
Moreover, using  (II.1) and (II.3), 
we compute 
\begin{equation} \tag{II.7} 
\lan S_j^{\rm resc}, L^{\rm resc}_{\om'} S_j^{\rm resc} \ran_{L^2(\Om)}= \lan \im(\bar\psi_{\om'}{ \n_{a_{\om'} j} } \psi_{\om'}) \g_j-\divv (\curl  a_{\om'}  J e_j)\g_j \ran_{\Om}= \lan \g_j^2 \ran_{\Om}. \end{equation}  To find the asymptotics of $\g_j$,  we use the expansions  \eqref{exp} below to obtain
\begin{equation} \tag{II.8} 
 \g_j:= \e^2 [\im (\bar \phi_0 \n_{a^0 j} \phi_0 ) - \divv (\curl  a^1 J e_j)] +  O(\e^3). \end{equation}

Now we complexify the translational modes discussed above. 
To this end   we recall the notation $\partial_{a} = \partial - ia$ and use the relation $\bar{\partial}a=- i \curl a + \div a$ (here $a=a^\C$ are the complexified vector fields) to write the complexified version of \eqref{transl-zero-mode} as
 \begin{equation} \tag{II.9} 
    \tilde{S}_1 = (\partial_{a_{\om'}}\psi_{\om'}, -\overline{\partial^*_{\bar{a}_{\om'}}\psi_\om}, \bar{\partial}a_{\om'},  \partial\bar{a}_{\om'}), \
   \tilde{S}_2 = (\partial^*_{a_{\om'}}\psi_{\om'}, -\overline{\partial_{\bar{a}_{\om'}}\psi_{\om'}}, -i\bar{\partial}a_{\om'},  i\partial\bar{a}_{\om'}) 
\end{equation}
One can easily check that,  as in (II.3), 
  $\tilde S_{j} $ 
 are not  zero modes of $K$. 
 However, the complexifications,  $\tilde{T}_i $, of $T_i^{\rm resc} $, given in (II.4), 
are zero modes of $K$. 
Since the vectors $\tilde{T}_i $ 
are periodic w.r.to $\cL$  and therefore belong to $\cK_0$, they are zero modes of $K_0$. Moreover,  $e^{ik\cdot x}\tilde{T}_i  $, for $|k|$ small, are almost zero modes of $K_k$. Indeed, $e^{-ik\cdot x}K e^{ik\cdot x}=K+O(|k|)$ and therefore $K e^{ik\cdot x}\tilde{T}_i = 
O(|k|) e^{ik\cdot x}\tilde{T}_i $. Hence there is a positive 
 branch of the spectrum of  $K_{}$ on $\mathcal{K}_{}$, starting at $0$, corresponding to translations of the lattice.
This branch  is ruled out by the condition \eqref{parity}. 

Using the expansion $\psi_\om = O(\e ), \	a_\om = a^0 + O(\e^2),$ where $a^0:=\frac{1}{2}J x$ and the computation $- J a^0 + J x \cdot\nabla  a^0=0$, the tangent vectors (II.1) 
can be expanded as $G_{\gamma'}^{\rm resc}  = G_{\gamma'}^0 + O(\e),\
    S_{h'}^{\rm resc}  = S_{h'}^0 + O(\e),$ 
    where 
$    G_{\gamma'}^0 = (0, \nabla\gamma'),
    \quad
    S_{h'}^0 = ( 0, J h') , $ 
or $\tilde S_{j}  = \tilde S_{j}^0 + O(\e)$, where 
\begin{align} \tag{II.10} 
    \tilde S_{1}^0 = ( 0, 0, 1, 1),\ \quad   \tilde S_{2}^0 = ( 0, 0, i, -i).
\end{align} 
Hence they are related to the eigenvectors $w_{10}^0 ,\   w_{20}^0$ described in Corollary \ref{cor:K0k-spec}:  $\tilde S_{1}^0=w_{10}^0 +  w_{20}^0$ and $\tilde S_{2}^0=i w_{10}^0 - i  w_{20}^0$ and the branch of the spectrum mentioned above originates from the subspace 
\begin{align}  
\{f_k w_{1 k}^0 &+ g_k  w_{2 k}^0: f_k,\ g_k\in L^2(\Om^*)\}\notag \\ &= \{(f_k w_{1 k}^0 +\bar f_k w_{2 k}^0)+ i (g_k  w_{2 k}^0-\bar g_k w_{2 k}^0): f_k,\ g_k\in L^2(\Om^*)\}. \tag{II.11}\end{align}}
%

 \section*{Supplement I. Feshbach-Schur perturbation theory} 

In this appendix we present for the reader's convenience the main result of the Feshbach-Schur perturbation theory. Let $P$ and $\oP$ be orthogonal projections
on a separable Hilbert space $X$, satisfying $P + \oP = \bfone$.
Let $H$ be a self-adjoint operator on $X$.
We assume that $\Ran P \subset D(H)$, that
$H_{\oP} := \oP H \oP \restriction_{\Ran \oP}$ is invertible, and
\begin{equation}  \tag{I.1} 
  \| R_{\oP} \| < \infty \ , \quad\quad
  \| PHR_{\oP}\| < \infty \ \hbox{\quad and \quad }
  \| R_{\oP} H P \| < \infty \ ,
\end{equation}
where $R_{\oP} = \oP H_{\oP}^{-1} \oP$.
We define the operator
\begin{equation} \tag{I.2} 
  F_P(H) := P(H - HR_{\oP} H)P \restriction_{\Ran\, P} \ .
\end{equation}
The key result for us is the following:

\medskip \noindent {\bf Theorem I.1.}(Isospectrality theorem for the Feshbach-Schur maps)  
{\em
Assume (I.1) 
hold. Then
\begin{equation} \tag{I.3} 
0 \in \sigma(H) \Leftrightarrow 0 \in \sigma(F_P(H))  \end{equation}
and
\begin{equation} \tag{I.4} 
  H\psi = 0 \Leftrightarrow F_P(H)\phi = 0,
\end{equation}
where $\psi$ and $\phi$ are related by $\phi = P\psi$ and
$\psi = Q\phi$, with the (bounded) operator $Q$ given by}
\begin{equation} \tag{I.5} 
  Q = Q(H) := P - R_{\oP} H P.
\end{equation}
\begin{proof} Both relations are proven similarly so we prove only the second one which suffices for us. 
First, in addition to (I.5), we define the operator
\begin{equation} \tag{I.6} 
  Q^\# = Q^\#(H) := P-PH R_{\oP}.
\end{equation}
The operators $P$,  $Q$ and $Q^\#$ satisfy
\begin{equation} \tag{I.7} 
  H Q = P H' \ , \qquad
  Q^\# H = H' P,
 \end{equation}
 where $H' = F_{P}(H)$.  
Indeed, using the definition of $Q$, we transform
\begin{equation} \tag{I.8} 
\begin{split}
  H Q &= HP - H \oP H^{-1}_{\oP} \oP H P  \\
  &= PHP + \oP HP - PH \oP H^{-1}_{\oP} \oP HP
  - \oP H \oP H^{-1}_{\oP} \oP H P \\
  &= P HP - P H\oP H^{-1}_{\oP} \oP HP \\
  &= P F_P (H).
\end{split}
\end{equation}
Next, we have
\[
\begin{split}
  Q^\# H &= PH-PH \oP H^{-1}_{\oP} \oP H \\
  &= PHP + PH\oP - PH\oP H^{-1}_{\oP} \oP HP - PH\oP H^{-1}_{\oP} \oP H\oP \\
  &= PHP - PH\oP H^{-1}_{\oP} \oP HP \\
  &= F_P (H) P.
\end{split}
\]
This completes the proof of (I.7). 

Now, we show
\begin{equation} \tag{I.9} 
   \Null Q \cap \Null H' = \{ 0 \}\quad \hbox{ and } \quad
   \Null P \cap \Null H = \{ 0 \},
\end{equation} The first relation in (I.9) follows from the fact that the projections $P$ and $\oP$ are orthogonal, which implies the inequality
\[
  \| Qu \|^2 = \| Pu \|^2 + \| R_{\oP} H Pu \|^2 \geq \| Pu \|^2,
\]
and the relation $\Null P\subset \Null H'$, which follows from the definition of $H'$.
To prove the second relation  in (I.9)
we use the equation $P + \oP = \bfone$ and the definitions $H_{\oP}  = \oP H\oP$ and (I.5) to obtain
\begin{equation} \tag{I.10} 
  \bfone = QP + R_{\oP} H \ ,
\end{equation}
which, in turn, is implies  the second 
relation  in (I.9). Indeed, applying~(I.10) to a vector
$\phi \in \Null P \cap \Null H$, we obtain
$\phi = QP\phi + R_{\oP} H \phi = 0$.

Now the statement (I.4) follows from relations~(I.7) and (I.9).  \end{proof}


\end{document}